\newtheorem{notation}{Notation}
\newtheorem{lemma}{Lemma}
\newtheorem{fact}{Fact}
\newtheorem{claim}{Claim}
\newtheorem{remark}{Remark}
\newtheorem*{claim*}{Claim}
\newtheorem{theorem}{Theorem}
\newtheorem{example}{Example}
\newtheorem{corollary}{Corollary}
\newtheorem{proposition}{Proposition}
\newtheorem{definition}{Definition}
\newenvironment{assumptiona}[1]{
  
  \assumptionalt
}{\endassumptionalt}
\pgfplotsset{compat=1.16}
\newcommand{\citeapos}[1]{\citeauthor{#1}'s (\citeyear{#1})}
\newcommand{\co}{\mathrm{co}}
\newcommand{\ext}{\mathrm{ext}}
\newcommand{\marg}{\mathrm{marg}}
\newcommand{\ddd}{\mathrm{ d}}
\newcommand{\dd}{\mathrm{d}}
\newcommand{\supp}{\mathrm{supp}}
\newcommand{\bbE}{\mathbb E}
\newcommand{\bbExp}[1]{\bbE \left[#1\right]}
\newcommand{\real}{\mathbb{R}}
\newcommand{\sequence}[1]{(#1)_{n=1}^{\infty}}
\newcommand{\longsquiggly}{\xymatrix{{}\ar@{~>}[r]&{}}}
\newcommand{\feas}{\mathrm{feas}}
\newcommand{\neigh}{\mathcal{N}}
\newcommand{\Menu}{{\mathcal{A}}}
\newcommand{\menu}{B}
\newcommand{\scr}{s}
\newcommand{\scrb}{t}
\newcommand{\scrc}{r}
\newcommand{\SCR}{S}
\newcommand{\SCRset}{T}
\newcommand{\SCRrat}{R}
\newcommand{\SCRU}{{\mathcal S}}
\newcommand{\scru}{t}
\newcommand{\ut}{u}
\newcommand{\utb}{v}
\newcommand{\UT}{{\mathcal U}}
\newcommand{\UTB}{{\mathcal V}}
\newcommand{\posteriorut}{v}
\newcommand{\prior}{{\mu_0}}
\newcommand{\posterior}{\mu}
\newcommand{\posteriorb}{\tilde\mu}
\newcommand{\Info}{{\mathcal P}}
\newcommand{\maxval}{V}
\newcommand{\Infof}{{\Info^F}} 
\newcommand{\DO}{{\Delta\Omega}}
\newcommand{\andinf}{\cup\{\infty\}}
\newcommand{\ereal}{\real\andinf}
\newcommand{\extreal}{\overline{\real}}
\newcommand{\erealp}{\real_+\andinf}
\newcommand{\extrealp}{\overline{\real}_+}
\newcommand{\cost}{\kappa}
\newcommand{\dcost}[1]{{d^{+}_{#1}\cost}}
\newcommand{\icost}{C}
\newcommand{\icostd}{c}
\newcommand{\icostdd}{{\nabla c}}
\newcommand{\icostD}{\mathcal{C}}
\newcommand{\KL}{K}
\newcommand{\transf}{\psi}
\newcommand{\icostdset}{\tilde{\icostD}}
\newcommand{\ip}{p}
\newcommand{\ipb}{q}
\newcommand{\mps}{m}
\newcommand{\mip}{Q}
\newcommand{\lp}[1]{{L^{#1}(\prior)}}
\newcommand{\Dom}{\SCR^\cost}
\newcommand{\Domicost}{\Info^{\icost}}
\newcommand{\suppf}{\sigma}
\newcommand{\mult}{\gamma}
\newcommand{\wt}{\tau}
\newcommand{\func}{f}
\newcommand{\funcb}{g}
\newcommand{\Func}{{\mathcal F}}
\newcommand{\nuis}{\lambda}
\newcommand{\dscr}{\beta}
\newcommand{\data}{\mathcal{D}}
\newcommand{\dgraph}{G}
\newcommand{\dgraphv}{V}
\newcommand{\dgraphE}{E}
\newcommand{\DMenu}{\mathcal{B}}
\newcommand{\param}{\theta}
\newcommand{\paramb}{\eta}
\newcommand{\paramc}{\zeta}
\newcommand{\refact}{\underline{a}}
\newcommand{\cycle}{\chi}
\newcommand{\Cycle}{\mathcal{X}}
\newcommand{\cyclev}{\ell}
\newcommand{\Cyclev}{\mathcal{L}}
\newcommand{\acomment}[1]{}
\newcommand{\samplepoints}{10}
  \renewcommand\@seccntformat[1]{\csname the#1\endcsname.{\hskip.7em\relax}} %Gets period after section title
\title{{\bf Predicting Choice from Information Costs
}\footnote{We would like to thank Ben Brooks, Mark Dean, Tommaso Denti, Mira Frick, Daniel Gottlieb, Ryota Iijima, Navin Kartik, R. Vijay Krishna, Ce Liu, Daniel Martin, Stephen Morris, Pietro Ortoleva, John Rehbeck, Marzena Rostek, Kareen Rozen, and Omer Tamuz for their comments. Tianhao Liu provided excellent research assistance.}
}
\author{
\begin{minipage}{0.3\textwidth}\centering  
Elliot Lipnowski\footnote{\texttt{e.lipnowski@columbia.edu}} \\ \centering \it \small Columbia University
\end{minipage}                  
\begin{minipage}{0.3\textwidth}\centering 
Doron Ravid\footnote{\texttt{dravid@uchicago.edu}}  \\ \centering \it \small University of Chicago 
\end{minipage} 
}
\newcommand{\argmax}{\mathrm{argmax}}
\date{\vspace{0.8cm} \today}
\begin{document}

\maketitle
%\vspace{-0.7cm}
\begin{abstract}
%{\small \noindent  }\\

\noindent 

An agent acquires a costly flexible signal before making a decision. We explore to what degree knowledge of the agent's  information costs helps predict her behavior. We establish an impossibility result: learning costs alone generate no testable restrictions on choice without also imposing constraints on actions' state-dependent utilities. By contrast, choices from a menu often uniquely pin down the agent's decisions in all submenus. To prove the latter result, we define \emph{iteratively differentiable} cost functions, a tractable class amenable to first-order techniques. Finally, we construct tight tests for a multi-menu data set to be consistent with a given cost.
\vspace{.2cm}

%\noindent {\it Keywords}: \vspace{0.2cm}\\
%{\it JEL Classification}:.
\end{abstract}
\newpage 

\begin{spacing}{1}
\onehalfspacing

%\pagenumbering{arabic} 

\section{Introduction}
Sparked by \citeauthor{sims2003implications}' (\citeyear{sims1998stickiness,sims2003implications,sims2006rational}) studies on rational inattention, the last two decades have seen a growing interest in models of costly flexible learning. Compared with the traditional framework of decision-making under uncertainty, these models postulate that the agent's behavior depends on an additional parameter: the cost of information acquisition. The appropriate values for this parameter have been the subject of intense inquiry. Some studies, such as \cite{caplin2015revealed}, \cite{de2017rationally}, \cite{dean2019experimental}, \cite{dewan2020estimating},  and \cite{caplin2020rational}, seek to answer this question empirically, developing tools for identifying, measuring, and testing the agent's cost function in experiments. Other studies take an axiomatic approach, advocating for classes of parameterized cost functions based on their characterizing features \citep[e.g.,][]{pomatto2020cost,hebert2021neighborhood,caplin2021rationally}. An alternative string of papers explores which cost functions can be microfounded via dynamic learning \citep[e.g.,][]{morris2019wald,bloedel2020cost, hebert2021time} or as a physical cost of performing experiments \citep{denti2021experimental}. In this paper, we ask a complementary question: In what ways can one use the agent's information-acquisition cost to predict her behavior?

To answer our question, we study a canonical flexible-learning model in which an agent chooses from a finite set of alternatives, the benefits from which depend on a stochastic state. Before making her decision, the agent chooses what signal to acquire about this state. Learning comes at a cost, which the agent subtracts from the expected benefit she derives from her final decision. 
After choosing her information, the agent observes a signal realization and takes an action. Following the literature \citep{caplin2015revealed}, we refer to the resulting probability in which the agent takes each action conditional on the state as the agent's \emph{stochastic choice rule} (SCR).

We begin by studying one's ability to forecast the agent's choices using only her cost of information acquisition. Proposition~\ref{prop: dense set of rationalizable SCRs} shows this exercise is essentially futile: holding the agent's costs fixed, one can approximate any finite-cost SCR arbitrarily well with SCRs that are optimal for some utility function. Therefore, other than ruling out some SCRs as technologically infeasible, the agent's learning costs make no predictions that could be falsified with finite data. 

Proposition~\ref{prop: dense set of rationalizable SCRs} leaves open the possibility that the approximating SCRs are rationalized via indifference or with a small set of utilities. For example, if learning is free, the agent optimally takes all actions with positive probability in all states only if her action does not influence her payoffs. In such cases, an analyst may be able to obtain meaningful predictions by ruling out a knife-edge set of utilities or by using other considerations to refine the set of optimal SCRs. Theorem~\ref{thm: dense set of uniquely rationalizable SCRs} shows such refinements cannot meaningfully increase the set of restrictions imposed on the agent's behavior by her cost function, provided that this function is strictly increasing in informativeness and the set of feasible information structures is sufficiently flexible. 

The above-mentioned results imply the agent's cost function on its own imposes few restrictions on the agent's behavior in a single menu. However, we show the agent's costs can significantly restrict the way the agent behaves \emph{across} menus. To make this point, we focus on a novel class of smooth cost functions that we call \emph{iteratively differentiable}. We prove such cost functions allow one to solve for the agent's optimal SCR given a fixed utility function using a first-order condition (Lemma~\ref{lem: posterior separable approximation characterizes optimality}). One can also apply this condition to do the reverse, namely, to solve for the utility functions that make a fixed SCR optimal. In Theorem~\ref{thm: cross-choice predictions}, we use the ability to invert utility from choices to show that, for finite-valued iteratively differentiable costs satisfying an infinite-slope condition, one can often use the agent's actions in one menu to pin down her behavior in all submenus. Thus, under the theorem's conditions, once the analyst knows the agent's cost of learning, the agent's choice when facing the grand menu is the sole remaining degree of freedom in the agent's behavior. 

That the agent's cost function restricts her behavior across menus is useful not only for predicting her actions, but also for testing whether or not her decisions are consistent with a given cost function. In section~\ref{sec: cycle test} we develop such a test, assuming the given cost function is iteratively differentiable and finite-valued. Our test takes as its starting point a data set describing the agent's chosen SCR in a collection of menus. From this data set, our test constructs a bipartite graph with nodes labeled by actions and menus, and with an edge between an action and a menu if the agent sometimes takes said action when faced with that menu. We identify each cycle in this graph with a set of equations, and show these equations are satisfied for all cycles if and only if the data set is compatible with the agent having a common utility and the pre-specified cost function. Moreover, it suffices to check the equations associated with a small set of cycles, called a \emph{cycle basis}. We also prove our test is tight for cost functions satisfying the prerequisites of Theorem~\ref{thm: cross-choice predictions}: testing the equations corresponding to any collection of cycles that does not include a cycle basis is insufficient for proving consistency of a data set. 

Several studies consider the problem of testing whether a data set is consistent with costly information acquisition, and if so, whether one can use these data to make inferences about the agent's cost of information \citep[e.g.,][]{caplin2015revealed,caplin2017rationally,de2017rationally,caplin2020rational,chambers2020costly,dewan2020estimating,denti2022posterior,lin2022stochastic}.
This literature typically assumes that, in addition to observing the agent's choices, the analyst also knows the agent's utility function, which is either observed directly, or inferred from decision problems in which information acquisition plays no role. Our paper contributes to this literature by delineating the kind of data needed for testing various hypotheses regarding the agent's information-acquisition costs. 
In particular, we show that without utility information, one cannot test such hypothesis using the agent's behavior in a single menu. We also provide a method for using variation in  the agent's menu to check whether or not the agent faces a given hypothetical cost function. 

Other papers have advocated the use of various cost functions based on non-behavioral justifications. For example, \cite{mensch2018cardinal}, \cite{pomatto2020cost}, and \cite{hebert2021neighborhood}  pin down particular functional forms for the agent's learning costs by stating properties directly on the agent's cost function rather than on the decisions that it generates.\footnote{Relatedly, \cite{frankel2019quantifying} study axiomatically the appropriate measures of the ex-post value of information.} Other papers, such as \cite{morris2019wald}, \cite{bloedel2020cost}, and \cite{hebert2021time}, pin down specific functional forms by asking which static cost functions can be micro-founded as coming from a dynamic learning process.\footnote{A related study is the one by \cite{denti2021experimental}, which obtain conditions under which a cost function defined over the agent's distribution of posterior beliefs for all priors can be micro-founded as a prior-independent cost over experiments.} Our paper complements this literature by studying the restrictions imposed by a given cost function, and providing a blueprint for generating predictions and testing hypotheses on the agent's information acquisition costs without the need for a comprehensive revealed preference analysis.

The paper also provides various technical results on models with costly information acquisition, including: regularity properties of the indirect cost function over SCRs (Lemma~\ref{lemma: Properties of Indirect Cost}); an algebraic sufficient condition for an optimal SCR to be uniquely optimal (Proposition~\ref{prop: linear independence implies unique rationalizability}); a Langrangian first-order condition for optimality of a given SCR under smooth costs (Proposition~\ref{prop: multiplier result for iteratively differentiable case}); a reduction of the agent's problem with differentiable costs to that under its posterior-separable approximations (Lemma~\ref{lem: posterior separable approximation characterizes optimality}); and a generic uniqueness result for optimal SCRs (Lemma~\ref{lem: unique prediction with known benefits}). These results apply to cost functions and environments not covered by the extant literature. Despite the potential usefulness of these technical results, we nevertheless see our main contribution as the study of the testable content of information costs.

\section{Model}\label{sec: model}

An agent makes a decision from a finite set $A$ of actions with $|A|>1$. The payoff from each action depends on a payoff state $\omega$ belonging to some compact metric space $\Omega$ and distributed according to some probability measure $\prior\in\DO$.\footnote{We view any Polish space $Y$ as a measurable space, endowed with its Borel sigma-algebra. Let $\Delta Y$ denote the set of probability measures on this measurable space, and interpret linear combinations of elements of $\Delta Y$ as pointwise linear combinations. Unless otherwise stated, we endow $\Delta Y$ with the weak* topology generated by continuous bounded functions, which in turn makes $\Delta Y$ a compact metrizable space if $Y$ is compact. We let $\supp(\gamma)$ denote the support of $\gamma$ for any $\gamma\in\Delta Y$. 
}
Without loss of generality, we take $\prior$ to have full support. 
The agent's utility from choosing action $a\in A$ in state $\omega\in\Omega$ is $\ut_a(\omega)$. We assume $\ut_{a} \in \lp{1}$, 
meaning the agent's expected payoff $\bbExp{\ut_a}$ from every action $a\in A$ is well defined, and refer to $\ut:=(\ut_{a})_{a\in A} \in \UT:=\lp{1}^A$ as the agent's \textbf{utility function}.\footnote{For any random variable $f:\Omega\to\real$, let $\bbE [f]:=\int f(\omega)\ \prior(\dd\omega)$ whenever the latter Lebesgue integral is well defined. 
Recall the space $\lp{1}$ is the set of all measurable functions $f:\Omega \to \real$ such that $\bbE |f|<\infty$, 
modulo $\prior$-almost sure equivalence, equipped with the $\lp{1}$ norm, $\Vert f \Vert_{1} := \bbE |f|.$
} 

\paragraph*{\emph{Information Policies.}}

Before taking her action, our agent chooses what signal to observe in order to learn about $\omega$. We assume our agent is Bayesian, and model the agent's information via the distribution of her posterior beliefs, $\ip \in \Delta\DO$. Specifically, we allow the agent to choose any $\ip$ whose average equals the prior, $\int \posterior \ \ip (\dd\posterior) = \prior$, which is equivalent to letting her select any $\ip$ that originates from some signal structure.\footnote{See, for example, \cite{Aumann1995}, \cite{Kamenica2011}, or \cite{benoit2011apparent}. For the sake of completeness, Appendix~\ref{app: sec: bayes plausibility} provides a proof that applies to the case of infinite states.\label{footnote: splitting lemma}} We refer to any $\ip$ that averages back to the prior as an \textbf{information policy}, and denote the set of all information policies by $\Info$. Occasionally, we will refer to the set $\Infof\subseteq\Info$ of \textbf{simple information policies}, namely those with finite support. Relatedly, a \textbf{simply drawn posterior} is a posterior that belongs to the support of some simple information policy.\footnote{Equivalently, simply drawn posteriors are the elements of $\{\posterior\in\DO:\ \epsilon\posterior\leq\prior \text{ for some } \epsilon>0\}$. To see the equivalence, note any element $\posterior$ of this set (as witnessed by $\epsilon\in(0,1)$) is in the support of $\ip=\epsilon\delta_\posterior+(1-\epsilon)\delta_{\tfrac{\prior-\epsilon\posterior}{1-\epsilon}}$.} 

We order information policies via informativeness in the sense of \cite{Blackwell1953}. Specifically, we say $\ip \in \Info$ is \textbf{more informative} than $\ipb \in \Info$ (written as $\ip \succeq \ipb$) if $\ip$ is a mean-preserving spread of $\ipb$.\footnote{That is, some measurable $\mps:\DO \to \Delta \DO$ exists such that $\int \posteriorb \ \mps(\dd\posteriorb|\posterior) = \posterior$ for all $\posterior \in \DO$, and $\ip(D) = \int \mps(D|\posterior) \ \ipb(\dd\posterior)$ for all Borel $D\subseteq \DO$.} An information policy $\ip$ is \textbf{strictly more informative} than $\ipb$ (written $\ip \succ \ipb$) if $\ip \succeq \ipb$ and $\ip \neq \ipb$. Intuitively, $\ip$ is more informative than $\ipb$ if observing $\ip$ is equivalent to observing $\ipb$ and an additional signal. For a review of the connection between this information ranking and other notions of informativeness, see \cite{khan2019information}.

\paragraph*{\emph{Information Costs.}}
Information comes at a cost that is summarized via a function taking values in the extended nonnegative reals,
\[
\icost:\Info\to\extrealp:=\erealp.
\]
By letting $\icost$ take a value of infinity, we allow it to encode constraints on the set of feasible information policies. We assume $\icost$ is a lower-semicontinuous convex function that is \textbf{proper}---that is, not globally equal to $\infty$---and let $\Domicost = \icost^{-1}(\mathbb{R})$ denote its \textbf{(effective) domain}, namely, the nonempty set of information policies that can be induced at finite cost. We also require $\icost$ to be \textbf{monotone}, meaning $\icost(\ip) \geq \icost(\ipb)$ whenever $\ip \succeq \ipb$. Sometimes, we focus on costs that are \textbf{strictly monotone}, by which we mean $\icost(\ip) > \icost(\ipb)$ holds whenever $\ip \succ \ipb$ and $\ipb \in \Domicost$. As we explain in section~\ref{sec: discussion}, assuming $\icost$ is convex and monotone is without loss of generality. 

\paragraph*{\emph{The Agent's Problem.}}
After choosing her information, the agent observes her signal realization and takes an action. An \textbf{action strategy} is a measurable mapping, $\alpha:\DO\to\Delta A$, where $\alpha(a|\posterior)$ is the probability the agent chooses action $a$ when her posterior belief is $\posterior$. A  \textbf{strategy} is an information policy $\ip$ paired with an action strategy $\alpha$. The agent's payoff from the strategy $(\ip,\alpha)$, given benefit $\ut\in\UT$, is 
\[
\int_\DO \int_A \ut_a \ \alpha(\dd a|\posterior)\ \ip(\dd \posterior)-\icost(\ip)
.\]
Observe some strategy yields a finite value, because some information policy yields finite cost. We say $(\ip,\alpha)$ is $\ut$\textbf{-optimal} if it maximizes the above objective among all possible strategies.

\paragraph*{\emph{Stochastic Choice Rules.}} 
We follow \cite{caplin2015revealed} and summarize the agent's behavior via a state-dependent stochastic choice rule. Within our more general setting, such a rule consist of a vector of (essentially bounded) mappings,
\[
\scr = (\scr_{a})_{a\in A},
\] 
where $s_{a}\in \lp{\infty}$ gives the conditional probability the agent takes action $a$ given the state.\footnote{Recall $\lp{\infty}$ is the space of all measurable functions $f:\Omega \rightarrow \mathbb{R}$ (identified up to $\prior$-almost sure equality) that are bounded $\prior$-almost surely, equipped with the $\lp{\infty}$ norm, $\Vert f \Vert_{\infty}:=\text{ess }\sup |f|.$} Because probabilities are positive and add up to $1$, we must have $\scr_{a} \geq \mathbf{0}$ for all $a$ and $\sum_{a\in A} \scr_{a} = \mathbf1.$ We refer to every $\scr \in \SCRU:=\lp{\infty}^{A}$ that satisfies these constraints as a (state-dependent) \textbf{stochastic choice rule (SCR)},\footnote{
Because $A$ is finite, the Radon-Nikodym Theorem implies that the probability $\scr_{a}(\omega)$ of action $a$ conditional on state $\omega$ is well-defined up to $\prior$-almost sure equivalence. Consequently, a stochastic choice rule naturally lives in $\SCRU=\lp{\infty}^{A}$, which does not distinguish between functions that agree $\prior$-almost surely. An equivalent formalism (given a disintegration theorem) would have stochastic choice rules living in the set of probability measures over $A\times\Omega$ with marginal $\prior$ on $\Omega$.} and denote the set of all SCRs by $\SCR$. 

The support of an SCR $\scr$ is the set of actions it generates with positive probability, $\supp \ s = \{a\in A: \scr_{a} \neq \mathbf{0}\}.$ An SCR $\scr$ has \textbf{full support} if it uses all actions, that is, if $\supp \ s = A.$ The SCR has \textbf{conditionally full support} if it uses all actions in all states, meaning $\scr_a$ is $\prior$-almost surely strictly positive for every $a\in A$.

A strategy $(\ip,\alpha)$ \textbf{induces} an SCR $\scr$ if for every action $a\in A$ and event $\hat\Omega \subseteq \Omega$,
$$\int \alpha(a|\posterior) \ \posterior(\hat\Omega) \ \ip(\dd \posterior) =\bbExp{\mathbf1_{\hat\Omega}\ \scr_a}.
$$ 
An information policy $\ip\in\Info$ \textbf{can induce} $\scr\in\SCR$ if $\ip$ can describe the information the agent receives in some strategy that results in $\scr$, that is, if $(\ip,\alpha)$ induces $\scr$ for some $\alpha$.

\paragraph*{\emph{Rationalizability}} We are interested in understanding which stochastic choice rules are optimal for a \emph{given} objective, which are optimal for \emph{some} objective, and which can be \emph{uniquely} optimal. Given a utility function $\ut \in \UT$, we say $\scr$ is $\ut$-\textbf{rationalizable} if it is induced by some $\ut$-optimal strategy $(\ip,\alpha)$. A stochastic choice rule is \textbf{uniquely} $\ut$-\textbf{rationalizable} if it is the only $\ut$-rationalizable SCR. An SCR that is $\ut$-rationalizable for \emph{some} $\ut$ is \textbf{rationalizable}. We also say $\scr$ is \textbf{uniquely rationalizable} if it is uniquely $\ut$-rationalizable for some $\ut$.\footnote{In other words, $\scr$ is uniquely rationalizable if some $\ut$ exists such that $\scr$ is the only $\ut$-rationalizable SCR; this notion is distinct from requiring that $\ut$ be the unique utility that rationalizes $\scr$.}

\subsection*{Example Cost Functions}\label{sec: examples}
In this subsection, we provide some example cost functions satisfying our hypotheses.

\begin{example}[Mutual Information Costs] \label{ex: Mutual Info Costs}
Let $\KL: \Delta \Omega \rightarrow \extreal$ be the Kullback-Leibler divergence from the prior $\prior$,
\begin{equation*}
\KL (\posterior) =
\begin{cases}
\int \log \frac{\dd \posterior}{\dd \prior}(\omega) \ \posterior (\dd \omega) & \text{if }\posterior \ll \prior,\\
\infty & \text{otherwise.}
\end{cases}
\end{equation*}
By \cite{posner1975random}, $\KL$ is lower semicontinuous. The \textbf{mutual information} cost function is given by 
\begin{equation*}
    \icost(\ip) = \int \KL (\posterior) \ \ip (\dd \posterior).    
\end{equation*}
This cost function was first introduced by Sims (\citeyear{sims1998stickiness,sims2003implications,sims2006rational}) and has served as the workhorse cost function of the literature on rational inattention. See \cite{Csiszar1974}, \cite{matejka2015rational}, \cite{caplin2019rational}, and \cite{denti2020note} for characterizations of optimal behavior under mutual information costs.
\end{example}

\begin{example}[Log-Likelihood Ratio Costs] \label{ex: LLR costs}
\cite{pomatto2020cost} proposed and axiomatized the Log-likelihood ratio (LLR) cost function. The LLR cost function is defined for a finite $\Omega$, and is parameterized by a matrix of positive numbers, $\boldsymbol{\param}\in \mathbb{R}^{\Omega \times \Omega}_{+}$ such that $\boldsymbol{\param}_{\omega,\omega}=0$ for all $\omega\in \Omega$. Given $\boldsymbol{\param}$, define the function $\icostd^{LLR}_{\boldsymbol{\param}}: \Delta\Omega \rightarrow \extreal$ via
\[
\icostd^{LLR}_{\boldsymbol{\param}}(\posterior) = \sum_{\omega,\omega'}\mathbf{\param}_{\omega,\omega'} \frac{\posterior(\omega)}{\prior(\omega)} \log \frac{\posterior(\omega)}{\posterior(\omega')}.
\]
Then the $\boldsymbol{\param}$-LLR cost function is given by 
\[ 
\icost^{LLR}_{\boldsymbol{\param}}(\ip) = \int \left[\icostd^{LLR}_{\boldsymbol{\param}}(\posterior) - \icostd^{LLR}_{\boldsymbol{\param}}(\prior)\right] \ \ip(\dd\posterior).
\]
\end{example}

\begin{example}[Posterior Separable Costs] \label{ex: Posterior Separable}
Posterior separable costs are a generalization of mutual information costs introduced by \cite{caplin2017rationally}. Let $\icostD$ be the set of convex, lower semicontinuous functions from $\DO$ to $\extreal$ that assign $\prior$ a finite nonnegative value.\footnote{Given the other assumptions on elements of $\icostD$, the assumption that $\icostd(\prior)$ is finite is equivalent to the induced information cost $\icost$ being proper. Also note, when $\Omega$ is finite and $\icostd$ is globally finite, $\icostd$ is lower semicontinuous if and only if it is continuous.} We say $\icost$ is \textbf{posterior separable} if some $\icostd \in \icostD$ exists such that every $\ip\in\Info$ has
\[
\icost(\ip) = \int \icostd (\posterior)\ \ip (\dd \posterior).
\]
In addition to mutual information (Example~\ref{ex: Mutual Info Costs}), the class of posterior separable costs includes the log-likelihood ratio cost function from Example~\ref{ex: LLR costs}, and the neighborhood-based cost function studied by \cite{hebert2021neighborhood}. 
\end{example}

\begin{example}[Transformed Costs]\label{ex: Transformed Costs}
Let $\transf:\real_+ \rightarrow \extrealp$ be a nondecreasing, proper, convex, lower semicontinuous function. Then,
\[
\icost(\ip) = \transf\left(\int \icostd (\posterior) \ \ip(\dd \posterior) \right)
\]
satisfies our assumptions for any $\icostd\in\icostD$.
\end{example}

\begin{example}[Quadratic Costs]\label{ex: Quadratic Costs}
Let $\tilde\icostd:\DO \times \DO \rightarrow \real$ be a symmetric, lower semicontinuous function that is convex in each argument and has $\tilde\icostd(\prior,\prior)\geq0$. Let
\[
\icost(\ip) = \int \int \tilde\icostd(\posteriorb,\posterior) \ \ip (\dd \posteriorb) \ \ip(\dd \posterior).
\]
Then, $\icost$ satisfies our assumptions as long as $\tilde\icostd$ is \emph{positive semidefinite}, namely, as long as
\[
\int\int \tilde\icostd(\posteriorb,\posterior) \ (\ip - \ipb)(\dd \posteriorb) \ (\ip - \ipb)(\dd \posterior) \geq 0
\]
holds for all $\ip, \ipb \in \Info$.
\end{example}

\begin{example}[Maximum over a Set]\label{ex: Maximum Costs}
Let $\icostdset\subseteq\icostD$ be a compact (with respect to the supremum norm) set of continuous functions. Then,
\[
\icost(\ip) = \max_{\icostd\in\icostdset}\int \icostd (\posterior) \ \ip(\dd \posterior)
\]
satisfies our assumptions.
\end{example}

\section{Cost Minimization}\label{sec: cost minimization}

We begin our analysis by studying the cheapest way to induce a given SCR. Specifically, we solve
\begin{equation*}
\cost(\scr) = \inf_{\ip \in \Info} \icost(\ip) \text{ s.t. }\ip \text{ can induce }\scr.
\end{equation*}
Note $\scr$ can be rationalizable only if the above program admits a solution: otherwise, no $(\ip,\alpha)$ that induces $\scr$ can ever be optimal, because one can always attain the same utility with lower costs. 

To solve the cost-minimization program, observe that every $\scr$ can be viewed as a signal structure whose realizations take the form of recommended actions. Therefore, one can apply Bayes rule to transform any $\scr$ into its associated information policy, $\ip^{\scr} \in \Info$. Formally, let
\[
\ip^{s}_{a} := \bbExp{\scr_a}
\]
be the ex-ante probability that $\scr$ generates the recommendation $a$. Whenever $\ip_{a}^{s}>0$, Bayes' rule dictates the agent's posterior belief $\posterior_{a}^{\scr}$ conditional on the realized action recommendation being $a$ is given by\footnote{When $\ip_{a}^{\scr}=0$, we can let $\posterior_{a}^{\scr}\in\DO$ be arbitrary wherever the term appears.}
\[
\posterior_{a}^{\scr}(\dd\omega) = \frac{\scr_{a}(\omega)}{\ip_{a}^{\scr}}\prior(\dd\omega).
\]
It follows one can write $\ip^{s}$ as
\[
\ip^{\scr} = \sum_{a\in A} \ip^{\scr}_{a} \delta_{\mu_{a}^{\scr}},
\]
where $\delta_{\mu_{a}^{\scr}}\in \Delta\DO$ denotes the distribution that generates $\mu_{a}^{\scr}$ with probability $1$. If every $a$ is associated with a different $\posterior_{a}^{\scr}$, then $\ip^{\scr}$ is the information policy that yields the posterior $\mu_{a}^{\scr}$ with probability $\ip^{\scr}_{a}$. We follow the literature \citep{caplin2015revealed} and refer to $\posterior_{a}^{\scr}$ as the \textbf{revealed posterior} of $a$ given $\scr$, and $\ip^{\scr}$ as $\scr$'s \textbf{revealed information policy}.

\begin{lemma}\label{lem: which policies can induce}
Policy $\ip \in \Info$ can induce $\scr\in \SCR$ if and only if $\ip \succeq \ip^{s}$. Therefore, $\cost(\scr) = \icost(\ip^{\scr}).$
\end{lemma}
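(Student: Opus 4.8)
The plan is to prove the stated equivalence — that $\ip$ can induce $\scr$ exactly when $\ip\succeq\ip^{\scr}$ — and then read off the cost formula from monotonicity of $\icost$. The conceptual content is that $\ip^{\scr}$ is the \emph{least informative} policy capable of inducing $\scr$: it carries precisely the information contained in the action recommendation and nothing more. I would first record the easy half that $\ip^{\scr}$ itself induces $\scr$, via the ``honest'' action strategy that obeys each recommendation, $\alpha(a|\posterior_a^{\scr})=1$. Substituting this $\alpha$ into the inducing identity and using $\ip_a^{\scr}\posterior_a^{\scr}(\hat\Omega)=\bbExp{\mathbf1_{\hat\Omega}\scr_a}$ verifies it at once.

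For the ``if'' direction, suppose $\ip\succeq\ip^{\scr}$ and let $\mps$ be a garbling witnessing the mean-preserving spread, so $\int\posteriorb\,\mps(\dd\posteriorb|\posterior)=\posterior$ for all $\posterior$ and $\ip(D)=\int\mps(D|\posterior)\,\ip^{\scr}(\dd\posterior)$ for all Borel $D$. Since $\ip^{\scr}$ is supported on the finite set $\{\posterior_a^{\scr}\}_{a}$, this exhibits $\ip$ as the finite sum $\ip=\sum_a\ip_a$ of the nonnegative component measures $\ip_a:=\ip_a^{\scr}\,\mps(\cdot|\posterior_a^{\scr})$. I would then define the action strategy by letting $\alpha(a|\cdot):=\frac{\dd\ip_a}{\dd\ip}$; these derivatives are nonnegative and sum to $1$ $\ip$-almost surely, so $\alpha$ is valid. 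The inducing identity reduces to computing $\int\posteriorb(\hat\Omega)\,\ip_a(\dd\posteriorb)$, which by the mean-preservation property of $\mps$ equals $\ip_a^{\scr}\posterior_a^{\scr}(\hat\Omega)$, i.e.\ the measure with density $\scr_a$ against $\prior$, namely $\bbExp{\mathbf1_{\hat\Omega}\scr_a}$.

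For the ``only if'' direction I would run the same construction in reverse. Given any $(\ip,\alpha)$ inducing $\scr$, set the component measure $\ip_a:=\alpha(a|\cdot)\,\ip$, so $\sum_a\ip_a=\ip$. Taking $\hat\Omega=\Omega$ in the inducing identity shows $\ip_a$ has total mass $\ip_a^{\scr}$, and the full identity shows $\int\posteriorb(\hat\Omega)\,\ip_a(\dd\posteriorb)=\ip_a^{\scr}\posterior_a^{\scr}(\hat\Omega)$. I would then define a garbling $\mps$ by $\mps(\cdot|\posterior_a^{\scr}):=\ip_a/\ip_a^{\scr}$ on the support of $\ip^{\scr}$ (a probability measure whose barycenter is $\posterior_a^{\scr}$) and the identity $\delta_\posterior$ elsewhere. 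This $\mps$ preserves means and satisfies $\int\mps(D|\posterior)\,\ip^{\scr}(\dd\posterior)=\sum_a\ip_a(D)=\ip(D)$, witnessing $\ip\succeq\ip^{\scr}$. Degenerate actions with $\ip_a^{\scr}=0$ force $\scr_a=0$ and $\ip_a=0$ $\prior$-almost surely, so their kernels may be assigned arbitrarily.

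Finally, the cost formula is immediate: by the equivalence, the feasible set of the cost-minimization program is $\{\ip:\ip\succeq\ip^{\scr}\}$, which contains $\ip^{\scr}$ and on which monotonicity of $\icost$ gives $\icost(\ip)\geq\icost(\ip^{\scr})$; hence the infimum is attained at $\ip^{\scr}$, so $\cost(\scr)=\icost(\ip^{\scr})$. I expect the main obstacle to be measure-theoretic rather than conceptual: since $\Omega$ is merely a compact metric space and the policies live on the infinite-dimensional space $\DO$, I must ensure the component measures admit the claimed Radon--Nikodym derivatives, that $\alpha$ and $\mps$ are genuinely \emph{measurable} maps rather than pointwise prescriptions, and that the barycenter identities are justified by Fubini's theorem rather than by finite-state intuition. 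Discharging these points carefully, together with the $\ip_a^{\scr}=0$ edge case, is where the real work lies.
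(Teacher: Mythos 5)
Your proposal is correct and takes essentially the same approach as the paper: in the ``if'' direction your $\alpha(a|\cdot)=\tfrac{\dd\ip_a}{\dd\ip}$ with $\ip_a=\ip^{\scr}_a\,\mps(\cdot|\posterior^{\scr}_a)$ is exactly the paper's scaled Radon--Nikodym action strategy, and in the ``only if'' direction your component measures $\ip_a=\alpha(a|\cdot)\,\ip$, renormalized to have barycenter $\posterior^{\scr}_a$, are exactly the paper's garbling kernel. The cost formula then follows from monotonicity of $\icost$ in both treatments.
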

Similar results are prevalent in the literature for less general settings \citep[e.g.,][]{caplin2015revealed}. Despite the difference in generality, the intuition is identical. If $(\ip,\alpha)$ induces $\scr$, one can generate $\ip^{\scr}$ by first drawing a posterior-action pair $(\posterior,a)$ according to $(\ip,\alpha)$ and then revealing only the realized action to the agent. Clearly, seeing $a$ alone is less informative than seeing both $a$ and $\posterior$. But because $a$ is independent of the state conditional on $\posterior$, seeing $\posterior$ and $a$ together is just as informative as observing $\posterior$ on its own. In other words, $\ip$ is more informative than $\ip^{\scr}$. Thus, the lowest-cost way of inducing $\scr$ is given by $\ip^{\scr}$.

Our next lemma uses the connection between SCRs and their revealed information policies to establish several useful properties of the agent's indirect cost function $\cost$.

\begin{lemma}\label{lemma: Properties of Indirect Cost}
The indirect cost $\cost$ is proper, convex, and weak* lower semicontinuous.
\end{lemma}

That $\cost$ is proper follows directly from existence of a finite-cost information policy. To prove lower semicontinuity, we establish that weak* convergence of SCRs implies convergence of their revealed information policies. To show $\cost$ is convex, we establish that the information policy revealed by a convex combination of two SCRs is less informative than the convex combination of the two policies. Using this fact, one can show $\cost$ is convex using convexity and monotonicity of $\icost$.

We now rephrase the agent's problem so that it takes the agent's SCR $\scr$ as its decision variable. Given $\ut$, the benefit from choosing $\scr$ is given by\footnote{Observe the expectation $\bbE[ \ut \cdot \scr ]$ is well-defined because $\ut_{a}$  and $\scr_{a}$ belong to the dual spaces $\lp{1}$ and $\lp{\infty}$, respectively.} 
\[
\bbE [ \ut \cdot \scr ] = \bbE \left[\sum_{a\in A} \ut_{a} \scr_{a}\right].
\]
Because $\cost(\scr)$ gives the minimum cost at which the agent obtains $\scr$, we get that $\scr$ is $\ut$-rationalizable if and only if it solves the program
\begin{equation}\label{SCRProg}
\max_{\scr \in \SCR} \ \big[ \bbE [\ut \cdot \scr] - \cost(\scr)\big].
\end{equation}
Observe the above program always admits a solution: 
because $\cost$ is weak* lower semicontinuous, the program maximizes a weak*-upper-semicontinuous objective over a weak*-compact set.\footnote{Compactness follows from the Banach-Aloaglu theorem \citep[see, e.g., Theorem 6.21 in][]{aliprantis2006infinite}.} In other words, every utility $\ut$ rationalizes some SCR. In the next section, we ask the converse question: which SCRs can be rationalized by some utility function? We show the answer is almost all of them.

\section{Knowing Costs Only: An Impossibility Result}\label{sec: knowing costs only}

In this section, we study the analyst's ability to predict the agent's behavior without knowing anything about her utility. Thus, we take the agent's cost function as given and ask which SCRs can be optimal for some utility function, that is, which SCRs are rationalizable. 

Clearly, the agent's cost function places some restrictions on the set of rationalizable SCRs. For instance, $\scr$ can only be rationalizable if it is feasible; that is, only if it is in the set of SCRs that are attainable at a finite cost,
\[
\Dom := \cost^{-1}(\mathbb{R}).
\]
The agent's cost function can also rule out some SCRs whose costs are finite. For example, whenever $\Omega$ is finite and $\icost$ equals mutual information (Example~\ref{ex: Mutual Info Costs}), only SCRs that use all actions in their support in all states are rationalizable; that is, $\scr$ is rationalizable only if $\scr_{a}$ is strictly positive almost surely whenever it is not identical to zero \citep{caplin2019rational}. Note, however, that this restriction has little practical relevance, because every SCR violating the restriction can be approximated arbitrarily well by SCRs that satisfy it. 

The above discussion raises the following question: Does $\icost$ impose any meaningful restrictions on the set of rationalizable SCRs beyond feasibility? The following result shows the answer is negative. To state the result, call a set of SCRs $\tilde{\SCR}$ \textbf{uniformly dense} in $\Dom$ if every $\scr\in\Dom$ and $\epsilon>0$ admit some $\scrb$ in $\tilde{\SCR}$ such that each $a\in A$ and $\prior$-almost every $\omega\in\Omega$ have $|\scr_a(\omega)-\scrb_a(\omega)|\leq\epsilon$.

\begin{proposition}\label{prop: dense set of rationalizable SCRs}\label{imposs:dense} The set of rationalizable SCRs is uniformly dense in $\Dom$.\footnote{If $\Omega$ is finite, every SCR in the relative interior of $\Dom$ is rationalizable.\label{imposs:finite}}
\end{proposition}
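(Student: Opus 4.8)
The plan is to recast rationalizability in convex-analytic terms and then treat the finite- and general-state cases separately. Throughout, regard $\cost$ as a function on all of $\SCRU=\lp{\infty}^A$, equal to $+\infty$ outside $\SCR$; by Corollary~\ref{cor: Properties of Indirect Cost} it is then proper, convex, and weak* lower semicontinuous, with effective domain $\Dom$. As noted before Proposition~\ref{prop: unique prediction with known benefits}, $\scr$ is $\ut$-rationalizable precisely when $\ut$ is a subgradient of $\cost$ at $\scr$; hence the rationalizable SCRs are exactly those $\scr\in\Dom$ at which $\cost$ admits a subgradient belonging to $\UT=\lp{1}^A$.

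For part~\eqref{imposs:finite}, finiteness of $\Omega$ makes $\SCRU=\lp{\infty}^A$ finite dimensional and equal to $\UT$, so that every subgradient of $\cost$ is automatically a utility. As $\cost$ is then a proper convex function on a finite-dimensional space, it is subdifferentiable at every point of the relative interior of its domain \citep[Theorem 23.4]{rockafellar1970convex}, and so every $\scr$ in the relative interior of $\Dom$ is rationalizable. The only bookkeeping is that the operative pairing is the $\prior$-weighted one $\bbE[\ut\cdot\scr]$; because $\prior$ has full support, this is merely a choice of inner product and leaves subdifferentiability intact.

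For part~\eqref{imposs:dense} the difficulty is genuinely infinite dimensional: a subgradient of $\cost$ a priori lives in the norm dual of $\lp{\infty}^A$, which strictly contains $\UT$, so a typical subgradient may fail to be an honest utility (it could carry a singular component). I would circumvent this by passing to the predual. Let $\cost^\ast:\lp{1}^A\to\extreal$ be the convex conjugate of $\cost$ in the pairing between $\lp{1}^A$ and $\lp{\infty}^A$. It is proper ($\cost$ is bounded below by $\icost(\delta_{\prior})$, which is finite since $\delta_{\prior}$ is least informative, so $\cost^\ast$ is finite at the origin and never $-\infty$) and lower semicontinuous, and weak* lower semicontinuity of $\cost$ yields $\cost=\cost^{\ast\ast}$ by the Fenchel--Moreau theorem. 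The point of this move is that subgradients of $\cost^\ast$ automatically lie in the dual of $\lp{1}^A$, namely $\lp{\infty}^A$, and that $\scr\in\partial\cost^\ast(\ut)$ is equivalent to $\ut\in\partial\cost(\scr)$, i.e.\ to $\scr$ being $\ut$-rationalizable. Consequently the rationalizable SCRs are exactly the range of $\partial\cost^\ast$.

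It then remains to show this range is dense, in the supremum norm, in $\Dom$. Here I would invoke the Br{\o}ndsted--Rockafellar theorem for the proper lower-semicontinuous convex function $\cost^\ast$ on the Banach space $\lp{1}^A$. Fix $\scr^0\in\Dom$ and $\epsilon>0$. Finiteness of $\cost(\scr^0)=\cost^{\ast\ast}(\scr^0)=\sup_{\ut}[\bbE[\ut\cdot\scr^0]-\cost^\ast(\ut)]$ means $\scr^0$ is an approximate subgradient of $\cost^\ast$ at a near-maximizer $\ut_0$; the Br{\o}ndsted--Rockafellar theorem, with its parameters tuned so that both the base point and the subgradient are perturbed by at most $\epsilon$, then produces $\ut_\epsilon\in\lp{1}^A$ and $\scr'\in\partial\cost^\ast(\ut_\epsilon)$ with $\|\scr'-\scr^0\|_{\lp{\infty}^A}\le\epsilon$. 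Since $\partial\cost^\ast(\ut_\epsilon)\subseteq\dom\cost=\Dom\subseteq\SCR$, the object $\scr'$ is a bona fide SCR; it is $\ut_\epsilon$-rationalizable and uniformly within $\epsilon$ of $\scr^0$, which is exactly the asserted density. The crux of the whole argument is this predual issue—securing a subgradient that is a genuine utility rather than a singular functional—and applying Br{\o}ndsted--Rockafellar to $\cost^\ast$ instead of to $\cost$ is precisely what delivers it.
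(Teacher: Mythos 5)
Your proof is correct and takes essentially the same route as the paper's: part (ii) is the identical appeal to Rockafellar's Theorem 23.4, and for part (i) you recognize $\cost$ as the biconjugate of $\cost^\ast=\maxval$ on the predual $\lp{1}^A$ and obtain dense subdifferentiability with subgradients in $\UT$ via Br{\o}ndsted--Rockafellar. The only difference is presentational: the paper cites the dual form of the Br{\o}ndsted--Rockafellar theorem directly, whereas you re-derive it from the approximate-subdifferential version applied to $\cost^\ast$.
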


For intuition, recasting the agent's maximization program in geometric terms is useful. To that end, recall $\scr$ is $\ut$-rationalizable if and only if 
\[
\bbE[\ut \cdot \scr] - \cost(\scr) \geq \bbE[\ut \cdot \scrb] - \cost(\scrb) \text{ for all }\scrb\in \SCR,
\]
which is clearly equivalent to 
\[
\cost(\scrb) \geq \cost(\scr) + \bbE[\ut \cdot (\scrb-\scr)]\text{ for all }\scrb\in \SCR.
\]
The above condition has a geometric interpretation: $\scr$ being $\ut$-rationalizable is equivalent to $\ut$ being a subgradient of $\cost$ at $\scr$. It follows $\scr$ is rationalizable if and only if the set of all such subgradients
\[
\partial\cost(\scr) := \bigg\{\ut \in \UT:\ \cost(\scrb) \geq \cost(\scr) + \bbE[\ut \cdot (\scrb-\scr)]\text{ for all }\scrb\in \SCR \bigg\},
\]
also known as $\cost$'s \textbf{subdifferential} at $\scr$, is nonempty. The proposition then follows from observing that $\cost$ is the convex conjugate of a continuous convex function, 
and so one can apply a dual version of the Br{\o}ndsted-Rockafellar theorem \citep{brondsted1965subdifferentiability}. 

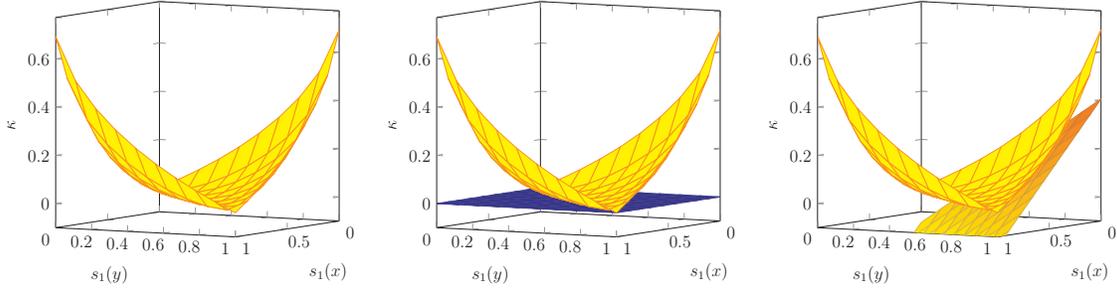
\begin{figure}[tbp]\centering
  \begin{subfigure}{0.32\textwidth}\centering
    \scalebox{0.55}{
    \begin{tikzpicture}
        \begin{axis}[
        samples=\samplepoints,
        domain=0:1,y domain=0:1, zmin=-0.1, view = {120}{5}, colormap/hot,
        xlabel={$s_{1}(x)$},
        ylabel={$s_{1}(y)$},
        zlabel={$\cost$},
        ]

        \addplot3[surf, fill = yellow, faceted color= orange] {0.5*(
        x*ln(x/(0.5*(x+y))) + (1-x)*ln((1-x)/(1-0.5*(x+y))) +
        y*ln(y/(0.5*(x+y))) + (1-y)*ln((1-y)/(1-0.5*(x+y)))) };
        \end{axis}
    \end{tikzpicture}
    }
    \label{fig: 3D Graph MI Costs}
  \end{subfigure}
  \begin{subfigure}{0.32\textwidth}\centering
    \scalebox{0.55}{
    \begin{tikzpicture}
        \begin{axis}[
        samples=\samplepoints,
        domain=0:1,y domain=0:1, zmin=-0.1, view = {120}{5}, colormap/hot,
        xlabel={$s_{1}(x)$},
        ylabel={$s_{1}(y)$},
        zlabel={$\cost$},
        ]
	    \addplot3[surf] {0};
 
        \addplot3[surf, fill = yellow, faceted color= orange] {0.5*(
        x*ln(x/(0.5*(x+y))) + (1-x)*ln((1-x)/(1-0.5*(x+y))) +
        y*ln(y/(0.5*(x+y))) + (1-y)*ln((1-y)/(1-0.5*(x+y)))) };
        \end{axis}
    \end{tikzpicture}
    }
    \label{fig: 3D Graph MI Costs with zero subgradient}
  \end{subfigure}
  \begin{subfigure}{0.32\textwidth}\centering
    \scalebox{0.55}{
    \begin{tikzpicture}
        \begin{axis}[
        samples=\samplepoints,
        domain=0:1,y domain=0:1, zmin=-0.1, view = {120}{5}, colormap/hot,
        xlabel={$s_{1}(x)$},
        ylabel={$s_{1}(y)$},
        zlabel={$\cost$},
        ]

	        \addplot3[surf] {0.5*(0.5*ln(0.25/0.5) + 1.5*ln(0.75/0.5)
            + (x-0.25)*ln(1/3) + (y-0.75)*ln(3)) };
            \addplot3[surf, fill = yellow, faceted color= orange] {0.5*(
            x*ln(x/(0.5*(x+y))) + (1-x)*ln((1-x)/(1-0.5*(x+y))) +
            y*ln(y/(0.5*(x+y))) + (1-y)*ln((1-y)/(1-0.5*(x+y)))) };
        \end{axis}
    \end{tikzpicture}
    }
    \label{fig: 3D Graph MI Costs with subgradient}
  \end{subfigure}
  \caption{Graphs of the indirect cost function $\cost$ and some of its subgradients for the case in which $\Omega = \{x,y\}$, $A = \{0,1\}$, and $\icost$ is given by mutual information (see Example~\ref{ex: Mutual Info Costs}). The cost function is drawn as a function of $\scr_{1}(x)$ and $\scr_{1}(y)$. }
  \label{fig: 3D Graphs}
\end{figure}

The relationship between subgradients and optimality is a standard fact from convex analysis \citep[see, e.g., Theorem 23.5 in][]{rockafellar1970convex}.\footnote{This relationship is analogously used to connect additive random utility models \citep[e.g.,][]{mcfadden1974measurement} with models of additive perturbed utility \citep[e.g.,][]{fudenberg2015stochastic}. In particular, \cite{hofbauer2002global} and \cite{norets2013surjectivity} use convex analysis to show every additive random utility model admits an additive perturbed utility representation.} Geometrically, $\ut$ is a subgradient of $\cost$ at $\scr$ if the function $\scrb\mapsto \cost(\scr) +\bbE[\ut \cdot (\scrb-\scr)]$ defines a hyperplane whose graph supports the epigraph of $\cost$ at the point $(\scr,\cost(\scr))$. Figure~\ref{fig: 3D Graphs} visualizes this condition when $\Omega=\{x,y\}$, $A=\{0,1\}$, and $\icost$ is given by mutual information. In this case, each $\scr$ can be summarized by the probability it takes action $1$ in each state, $(\scr_{1}(x),\scr_{1}(y))$. The left panel illustrates $\cost$ as a function of these probabilities. When $\scr_{1}(x)=\scr_{1}(y)$, the action generated by $\scr$ is uninformative about $\Omega$, and so the agent's learning costs are minimized. Such SCRs are rationalizable by any utility that is constant across both actions. Figure~\ref{fig: 3D Graphs}'s middle panel depicts the supporting hyperplane that results from such utility functions. As can be seen, the resulting hyperplane lies everywhere below the graph of $\cost$, touching the graph only at the SCRs that are rationalizable via such utilities. For non-constant utilities, the agent usually finds it optimal to collect some information about the state. The figure's right panel depicts a supporting hyperplane that corresponds to such a situation.

\cite{caplin2021rationally} prove a specialization of Proposition~\ref{prop: dense set of rationalizable SCRs} for the case in which $\icost$ is posterior separable and when all SCRs in the interior of $\SCR$ have a finite cost. Their argument constructs a utility function that rationalizes an interior $\scr$ by choosing $\ut_{a}$ to be an appropriate member of $\icostd$'s subdifferential at $\posterior_{a}^{\scr}$. This construction suggests a connection between the subdifferentials of $\icostd$ and $\cost$ whenever $\icost$ happens to be posterior separable. In section \ref{sec: discussion}, we discuss this connection in greater detail.

One potential concern with Proposition \ref{prop: dense set of rationalizable SCRs} is that it could be driven by indifference. For an extreme example, suppose learning is free; that is, $\icost(\ip) = 0$ for all $\ip$. In this case, only utility functions that do not depend on the agent's action (meaning $\ut_{a} = \ut_{a'}$ for all $a,a'$) can rationalize a conditionally full-support SCR. Such indifference seems problematic, because it leaves open the possibility of obtaining meaningful restriction on choices via the introduction of an appropriate way to refine optimal behavior.

Next, we state some assumptions under which one can rationalize essentially every SCR without relying on indifference. 

\begin{samepage}
\begin{assumptiona}{A1}\label{ass: regularity} {\color{white}{.}}
\begin{enumerate}[(i)]
\item The cost function $\icost$ is strictly monotone. \label{ass: regularity, part monotone}
\item The cardinality ranking $|\Omega|\geq|A|$ holds.\label{ass: regularity, part cardinality}
\item The domain $\Dom$ has a nonempty (norm) interior in $\SCR$.\label{ass: regularity, part domain}
\end{enumerate}
\end{assumptiona}
\end{samepage}

Part \eqref{ass: regularity, part monotone} means learning more always comes at a strictly positive cost. This part rules out indifference driven by some information being free. Part \eqref{ass: regularity, part cardinality} requires the set of states to be richer than the set of actions. Whenever part \eqref{ass: regularity, part cardinality} fails, the information policy $\ip^{\scr}$ associated with any full-support $\scr$ can be written as the convex combination of two other information policies, each of which is associated with a different SCR. Hence, if $\icost$ is posterior separable (i.e., affine) and Assumption \ref{ass: regularity}\eqref{ass: regularity, part cardinality} fails, $\scr$ cannot possibly be uniquely rationalizable when $\scr$ has full support.\footnote{In the appendix, we show that without Assumption \ref{ass: regularity}\eqref{ass: regularity, part cardinality}, a full support $\scr$ can be uniquely rationalizable only if $\icost$ is not affine around $\ip^{\scr}$.} Assumption \ref{ass: regularity}\eqref{ass: regularity, part domain} plays a similar role to \ref{ass: regularity}\eqref{ass: regularity, part cardinality}. Without Assumption \ref{ass: regularity}\eqref{ass: regularity, part domain}, the cost $\icost$ can restrict the agent to a low-dimensional set of information policies, a restriction with similar implications to violations of Assumption \ref{ass: regularity}\eqref{ass: regularity, part cardinality}.\footnote{Assumption \ref{ass: regularity}\eqref{ass: regularity, part domain} holds, for example, if the agent can obtain a signal that reveals the state with probability $\epsilon$, and provides no information otherwise. It also holds for mutual information costs (Example~\ref{ex: Mutual Info Costs}).}

Our next result shows that, under Assumption \ref{ass: regularity}, the inability to predict the agent's behavior using her information costs alone is not driven by indifference. 

\begin{theorem}\label{thm: dense set of uniquely rationalizable SCRs}
Under Assumption~\ref{ass: regularity}, the set of uniquely rationalizable SCRs is uniformly dense in $\Dom$.\footnote{We also show this set of SCRs admits a dense subset that is open when $\Omega$ is finite, and that the set of utilities that rationalize an $\scr$ in this subset is open (regardless of whether $\Omega$ is finite).}
\end{theorem}

The above result is based on a particular set of SCRs that we call linearly independent. Formally, an $\scr \in \SCR$ is \textbf{linearly independent} if 
$\{\scr_{a}\}_{a\in A} \subseteq \lp{\infty}$ consists of $|A|$ linearly independent elements. In the appendix, we show such SCRs satisfy two useful properties. First, if $\scr$ is linearly independent and $\icost$ is strictly monotone (i.e., Assumption~\ref{ass: regularity}\eqref{ass: regularity, part monotone} holds), then $\scr$ is optimal only if it is uniquely optimal.\footnote{This fact generalizes results by \cite{caplin2013behavioral} and \cite{matejka2015rational} about uniqueness of an optimum under mutual information costs. See the appendix for more details.} Second, the set of linearly independent SCRs is open and dense whenever parts \eqref{ass: regularity, part cardinality} and \eqref{ass: regularity, part domain} of Assumption \ref{ass: regularity} both hold. Since the set of rationalizable SCRs is dense, and the intersection of a dense set with an open and dense set is dense, we obtain that, under Assumption \ref{ass: regularity}, the set of linearly independent SCRs that is rationalizable is dense, and that every such SCR is, in fact, uniquely rationalizable. 

In addition to showing learning costs alone impose no substantial restrictions on behavior, the above analysis also highlights the futility of searching for cost functions that require the agent's behavior to satisfy certain desiderata. To illustrate, consider the following property, suggested by \cite{morris2021coordination}: a cost function $\icost$ satisfies \textbf{continuous choice} if only continuous SCRs are rationalizable.\footnote{An SCR $\scr$ is \textbf{continuous} if it admits a continuous version; that is, every $a$ admits a continuous function $\func_{a}:\Omega \rightarrow [0,1]$ such that $\func_{a} = \scr_{a}$ almost surely.} \cite{morris2021coordination} suggest continuous choice as a way to model agents who have a hard time distinguishing between similar states. Below, we show the only way to guarantee this property is to make discontinuous information structures infeasible. To show this result, we observe that continuity of an SCR is a closed property, and so can be satisfied by a dense subset of $\Dom$ only if it is satisfied by all of $\Dom$'s elements.

\begin{samepage}
\begin{proposition} \label{prop: continuous choice is impossible} Every rationalizable SCR is continuous if and only if every SCR in $\Dom$ is continuous.
\end{proposition}
\end{samepage}

Thus, our results suggest that except in cases where continuous choice is vacuous, it can only hold if discontinuous SCRs are infeasible, a property referred to by \cite{morris2021coordination} as \emph{infeasible perfect discrimination}.\footnote{We should highlight that we are abusing terminology slightly: \cite{morris2021coordination} require $\Omega$ to be a real interval, and say that a cost function satisfies infeasible perfect discrimination if it assigns finite cost only to SCRs that are \emph{absolutely} continuous.}
In terms of $\icost$, this property is equivalent to $\icost$ assigning infinite cost to every simple information policy that generates a $\posterior$ with a discontinuous Radon-Nikodym derivative with respect to $\prior$. In section \ref{sec: discussion}, we note one can circumvent the need for infinite costs by requiring continuous choice to hold only for a restricted set of objectives. For example, one may require the agent's choice to be continuous for any bounded objective. As we explain later, such a requirement is equivalent to a slight weakening of \citeapos{morris2021coordination} \emph{expensive perfect discrimination} property.

\section{Cross-Menu Predictions}\label{sec: cross-menu}
The previous section shows learning costs alone impose no meaningful restrictions on the agent's behavior given a fixed menu. In this section, we highlight these costs do constrain how the agent behaves \emph{across} menus. The reason is that the agent's choices from one menu allow the analyst to use the agent's cost function to extract information about her preferences. In fact, we show the information revealed by the agent's behavior is particularly sharp when her cost function is smooth in a manner we make precise below. This sharpness allows us to prove the following result: whenever our smoothness conditions holds, most choices in one menu pin down the agent's actions in all submenus.

\subsection{Smooth Costs}

We introduce our smoothness condition in stages. We begin by defining a differentiability notion for $\icost$. To that end, we recall some standard notation. Given a convex set $X$ in a real vector space, a convex function $\func:X\rightarrow \extreal$, and $x\in\func^{-1}(\real)$, define the \textbf{directional derivative} of $\func$ at $x$ as $d^+_{x}\func:X\to\real\cup\{\pm\infty\}$ via 
\[
d^+_{x}\func(x'):= \lim_{\epsilon \searrow 0}\tfrac{1}{\epsilon}\left[\func(x + \epsilon (x' - x)) - \func(x) \right].
\]
Say $\icostd \in \icostD$ is \textbf{a derivative} of $\icost$ at $\ip\in\Domicost$ if $\int \icostd (\posterior) \ \ip(\dd \posterior)$ is finite and if every $\ip' \in \Domicost$ has
\[ 
d^{+}_{\ip}\icost(\ip')= \int \icostd(\posterior)\ (\ip' - \ip)(\dd\posterior).
\]
The cost function $\icost$ is \textbf{differentiable at} $\ip\in\Domicost$ if it admits a derivative at $\ip$.\footnote{Note that we require the derivative to be convex (because $\icostd\in\icostD$). In the appendix, we show one can omit this requirement whenever $\icostd$ is finite and continuous.} We omit the dependency on the information policy and say $\icost$ is \textbf{differentiable} whenever it is differentiable at all $\ip \in \Domicost$.\footnote{
Our notion of differentiability is commonly used in the decision theory literature, where it is often called \emph{G\^{a}teaux differentiability} \citep[e.g.,][]{hong1987risk,cerreia2017stochastic}. The definition is slightly different than the way G\^{a}teaux differentiability is defined in convex analysis: whereas the latter requires the convergence to occur from all possible directions \citep[e.g.,][]{phelps2009convex,borwein2010convex}, we require convergence only from directions within the domain of $\icost$.
} 

Intuitively, a cost function is differentiable if it prices small shifts in its information in a posterior-separable manner. Indeed, observe that the function 
\[
\icost_{\icostd} (\ipb) = \int \icostd(\posterior)\ \ipb(\dd\posterior)
\]
defines a posterior-separable cost function.
In fact, this cost equals $\icost$ whenever the latter is already posterior separable: in this case, $\icost$ has a common derivative at all information policies. 

In general, the derivative of a differentiable cost function that is not posterior separable depends on $\ip$. For a demonstration, consider the cost function in Example~\ref{ex: Transformed Costs}, $\icost(\ip) = \transf\left(\int \icostd(\posterior)\ \ip(\dd \posterior) \right).$ Whenever $\transf$ is differentiable, this cost function admits 
\[
\transf'\left(\int \icostd(\posterior)\ \ip(\dd \posterior) \right)\icostd(\cdot)
\]
as its derivative at $\ip$, which depends on $\ip$ whenever $\transf$ is not affine. For another example, consider the cost function from Example~\ref{ex: Quadratic Costs}. This cost function is differentiable, with a derivative at $\ip \in \Domicost$ given by 
\(
2\int \tilde{\icostd}(\cdot,\posterior) \ \ip(\dd\posterior). 
\)
Clearly, this derivative typically changes as $\ip$ varies.\footnote{For a non-differentiable cost function, consider Example~\ref{ex: Maximum Costs} when $\icostdset$ is finite and not equivalent to a singleton. In this case, one obtains a point of non-differentiability at any $\ip$ on the boundary between two regions where the set of maximizers differs.} 

Our next result tightens the connection between posterior separability and differentiability. To state this result, define the indirect cost function associated with $\icost_{\icostd}$,
\[
\cost_{\icostd} (\scrb) := \icost_{\icostd}(\ip^{\scrb}).
\]
Lemma~\ref{lem: posterior separable approximation characterizes optimality} below shows that, whenever $\icost$ has full domain, $\scr$ is $\ut$-rationalizable if and only if it is $\ut$-rationalizable when costs are given by the cost function's posterior separable approximation at $\ip^{\scr}$. 

\begin{lemma}\label{lem: posterior separable approximation characterizes optimality}
Fix some $\scr \in \SCR$ and $\ut \in \UT$. Suppose $\icost$ is finite on $\Infof$
and that $\icostd$ is a derivative of $\icost$ at $\ip^{\scr}$.\footnote{
One can weaken the requirement that $\icost$ is finite on $\Infof$, assuming only that $\icost$ is finite for simple information policies around $\ip^{\scr}$. See the appendix for more details. Proposition~\ref{prop: multiplier result for iteratively differentiable case} below can be similarly strengthened. 
} Then,
\[
\scr \in \argmax_{\scrb \in \SCR} \left[\bbExp{\ut\cdot\scrb} -\cost(\scrb) \right]
\]
if and only if
\[
\scr \in \argmax_{\scrb \in \SCR} \left[\bbExp{\ut\cdot\scrb} -\cost_{\icostd}(\scrb) \right].
\]
\end{lemma}
Note the above result does not tell us $\cost$ and $\cost_{\icostd}$ have the same set of maximizers. The reason is that the cost function's derivative depends on the SCR around which the cost is approximated. In other words, $\ut$-rationalizability of $\scr$ is equivalent to $\scr$ being $\ut$-rationalizable under $\icost_{\icostd}$ only if $\icostd$ is a derivative of $\icost$ at $\ip^\scr$. 
Unless $\icost$ is posterior separable (in which case all SCRs admit a common derivative), different SCRs usually admit different derivatives.

Next, we define a differentiability notion for derivatives of $\icost$. 
Given $\icostd\in\icostD$ and a simply drawn $\posterior$, we say  $\icostdd_{\posterior}\in \lp{1}$ is a \textbf{derivative} of $\icostd$ at $\posterior$ if $\int \icostdd_{\posterior}(\omega) \ \posterior(\dd\omega) =\icostd(\posterior)$, and every simply drawn $\posterior' \in \DO$ has
\begin{equation*}
d_{\posterior}^{+}\icostd(\posterior') = \int \icostdd_{\posterior}(\omega)\ (\posterior' - \posterior)(\dd\omega).
\end{equation*}
The function $\icostd$ is \textbf{differentiable} at $\posterior$ if it admits a derivative there. Thus, $\icostd$ is differentiable if it can be locally approximated by an affine function.  

Our key notion of smoothness requires $\icost$ to admit a differentiable derivative. Specifically, we say $\icost$ is \textbf{iteratively differentiable} at $\ip\in\Infof$ if it admits a derivative $\icostd$ at $\ip$ that is differentiable at every $\posterior \in \supp \ \ip$. 
Thus, an iteratively differentiable cost is locally similar to a smooth, posterior-separable cost function. 

The benefit of having iteratively differentiable costs is summarized in the following proposition, which tightly characterizes when an interior SCR is optimal.

\begin{samepage}
\begin{proposition}\label{prop: multiplier result for iteratively differentiable case}
Suppose $\scr$ has full support, and $\icost$ is finite on $\Infof$ and iteratively differentiable at $\ip^{\scr}$ with derivative $\icostd$. The following are equivalent for $\ut\in\UT$:
\begin{enumerate}[(i)]
\item SCR $\scr$ is $\ut$-rationalizable; that is, $\scr \in \argmax_{\scrb \in \SCR} \left[\bbExp{\ut\cdot\scrb} - \cost(\scrb)\right]$.
\item\label{IterDiffFOC} Some $\nuis\in \lp{1}$ and $\mult\in\lp{1}_+^A$ exist such that every $a\in A$ have $$\ut_a = \nuis-\mult_a +\icostdd_{\posterior^{\scr}_{a}} \text{ and } \mult_a\scr_a=\mathbf{0}.$$ 
\end{enumerate}
\end{proposition}
\end{samepage}
One can view the proposition as establishing a Lagrange multiplier result for iteratively differentiable costs, with $\icostdd_{\posterior_{a}^{\scr}}$ serving the role of the derivative of $\cost$. To better see this interpretation, suppose $\Omega$ is finite and that $\cost$ is differentiable. In this case, $\scr$ being $\ut$-rationalizable is equivalent to $\scr$ solving the program
\begin{equation*}
\begin{aligned}
        \max_{\scrb \in \mathbb{R}^{A\times\Omega}} \quad & \big[\bbExp{\ut \cdot \scrb} - \cost(\scrb)\big] \\
        \textrm{s.t.} \quad & \scrb_{a}(\omega) \geq 0 & &  \forall a\in A, \ \forall\omega\in \Omega,\\
        & \sum_{a} \scrb_{a}(\omega) =1 & & \forall \omega \in \Omega. 
\end{aligned}    
\end{equation*}
Applying a standard Lagrangian result \citep[e.g.,][]{pourciau1983multiplier} gives a multiplier $\mult_{a}(\omega)$ for every instance of the first constraint and a multiplier $\nuis(\omega)$ for every instance of the second constraint such that $\scr$ is optimal if and only if 
\[
\ut_{a}(\omega) - \frac{1}{\prior(\omega)}\frac{\partial \cost}{\partial \scr_{a}(\omega)} + \mult_{a}(\omega) - \nuis(\omega)=0.
\]
Observe the above display equation specializes to condition \eqref{IterDiffFOC} of the proposition, but with $\frac{1}{\prior(\omega)}\frac{\partial \cost}{\partial \scr_{a}(\omega)}$ replacing $\icostdd_{\posterior_{a}^{\scr}}$.  

In addition to characterizing optimality of a full-support $\scr$, Proposition \ref{prop: multiplier result for iteratively differentiable case} also enables one to recover the agent's utility function from their behavior, up to a nuisance term. For an explanation, suppose $\scr$ has conditionally full support and that $\icost$ is iteratively differentiable at $\ip^{\scr}$ with derivative $\icostd$. Let $\ut^{\scr}$ be the utility function defined via $\ut_{a}^{\scr}:=\icostdd_{\posterior_{a}^{\scr}}$ for all $a$. Then, Proposition \ref{prop: multiplier result for iteratively differentiable case} implies a utility $\ut$ rationalizes $\scr$ if and only if $\ut$ generates the same optima for all menus. The reason is that $\ut$ equals $\ut^{\scr}$ plus an action-independent shift, $\nuis$, the addition of which has no impact on the set of maximizers.\footnote{Note $\mult_a\scr_a=\mathbf0$ means $\mult_a=\mathbf0$ when $\scr$ has conditionally full support.} The next section uses this observation to identify conditions under which the agent's choices in one menu are sufficient for pinning down her behavior for all submenus.

\subsection{Unique Subset Predictions}

We now use the results of the previous section to show the agent's learning costs restrict her behavior across menus. To model these restrictions, let $\Menu$ be all nonempty subsets of $A$. For any $\menu \in \Menu$, let $\SCR_\menu = \{\scr\in \SCR:\ \sum_{a\in\menu}\scr_a = \mathbf1 \}$ be the set of SCRs that only use actions in $\menu$. Given a utility function $\ut$, we say $\scr\in \SCR_{\menu}$ is $\ut$\textbf{-rationalizable over} $\menu$ if $\scr$ solves the agent's problem when the agent is restricted to only using actions in $\menu$; that is,
\[
\scr \in \argmax_{\scrb\in \SCR_{\menu}} \left[\bbExp{\ut \cdot \scrb} - \cost(\scrb) \right].
\]
Say $\scr \in \SCR$ \textbf{yields unique subset predictions} if every $\menu \in \Menu$ has a unique $\scrb \in \SCR_{\menu}$ that admits a utility that both rationalizes $\scrb$ over $\menu$ and rationalizes $\scr$ (over $A$).  Thus, if the agent chooses from $A$ according to $\scr$, and $\scr$ yields unique subset predictions, we can deduce exactly what must be the agent's chosen SCR when she's restricted to $\menu$.

Next, we introduce assumptions that enable us to conclude many SCRs yield unique subset predictions. To state the assumption, say an information policy is \textbf{fully mixed} if all posteriors in its support have a strictly positive Radon-Nikodym derivative with respect to $\prior$.

\begin{samepage}
\begin{assumptiona}{A2}\label{ass: smoothness} \textcolor{white}{.}
\begin{enumerate}[(i)]
\item $\icost$ is finite at every simple information policy.\footnote{This condition implies $\Dom=\SCR$, and, in particular, implies \ref{ass: regularity}\eqref{ass: regularity, part domain}.} \label{ass: smoothness, part finite}

\item $\icost$ is iteratively differentiable at every simple and fully mixed information policy. \label{ass: smoothness, part differentiability}

\item If $\ip \in \Infof$ is not fully mixed,  
$d_{\ip}^{+}\icost(\delta_{\prior}) = -\infty$. 
\label{ass: smoothness, part inada}
\end{enumerate}
\end{assumptiona}
\end{samepage}

The first two parts of the assumption require $\icost$ to be smooth, as explained in the previous section. The third part requires $\icost$ to have infinite slopes at simple information policies that are not fully mixed. This property, which is satisfied by mutual-information costs (Example \ref{ex: Mutual Info Costs}), implies an SCR can be rationalizable only if it uses all actions in its support in all states. In particular, a full-support SCR is rationalizable only if it has conditionally full support. Therefore, Assumption~\ref{ass: smoothness} implies one can use Proposition~\ref{prop: multiplier result for iteratively differentiable case} to recover the agent's utility function, up to a choice-irrelevant nuisance term, whenever she employs a full-support SCR. As Lemma~\ref{lem: unique prediction with known benefits} below shows, the agent's utility function enables the analyst to precisely predict the agent's conditional action distribution, except for a knige-edge set of utilities.

\begin{lemma}\label{lem: unique prediction with known benefits}
The set of utilities with multiple rationalizable SCRs is meager and shy.\footnote{A set is \textbf{meager} if it is a  countable union of nowhere dense sets. A set $\UTB \subseteq \UT$ is \textbf{shy} if some probability measure $\nu \in \Delta \UT$ with compact support assigns zero measure to every translation of $\UTB$, that is, if $\nu (\UTB+\ut)=0$ for all $\ut \in \UT$. Shy sets generalize Lebesgue null sets beyond the case of finite dimensions \citep[see][]{hunt1992prevalence}. Thus, the proposition implies that for finite $\Omega$, the set of utilities that admit multiple rationalizable SCRs is Lebesgue null.}
\end{lemma}

Lemma \ref{lem: unique prediction with known benefits} essentially follows from the fact that most hyperplanes that support the graph of a well-behaved convex function admit a unique support point. Figure \ref{fig: 1D convex function} illustrates a sense in which this fact holds for a convex function $\phi$ defined over a closed subinterval of $\mathbb{R}$. In one dimension, a subgradient is defined via the slope of the corresponding line. For a line to support $\phi$ in multiple points, the line's slope must equal the slope of $\phi$ in an interval over which $\phi$ is affine. Clearly, at most countably many such intervals can exist, and so the set of gradients that admit multiple support points must be countable as well. Being countable, the set is also meager and Lebesgue null, two properties that generalize to higher dimensions.

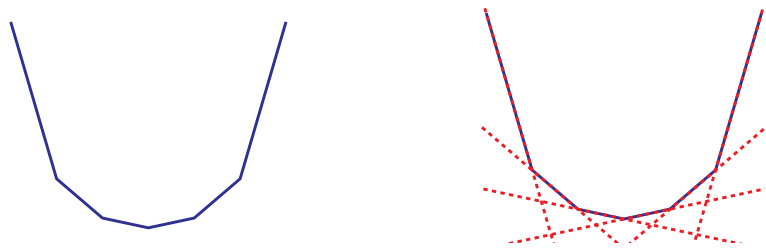
\begin{figure}[tbp]\centering
  \begin{subfigure}{0.32\textwidth}\centering
    \scalebox{0.55}{
    \begin{tikzpicture}%[scale=0.75]
			\begin{axis} 
				[
					samples= 10, % 500,
					ticks=none,
					ymax=0.9,
					ymin=-0.1,
					xmin=-0.02,
					xmax=1.01,
					axis on top=false,
%					axis x line = middle,
					axis x line = none,
					axis y line = none,
					every inner x axis line/.append style={|-|},
%					axis y line = middle,
%					axis line style={gray},	
%					xlabel=$(x,1-x)$,
%					ylabel=$\cost$,
%					every axis x label/.style= {at={(ticklabel cs:0.5)},anchor=center},					
				    xtick={0,1},
				    xticklabels={$0$,$1$},
				]

\addplot [blue, line width=2] coordinates {(0,0.875) (0.1667,0.2083) (0.3333,0.04167) (0.5,0) (0.6667,0.04167) (0.8333,0.2083) (1,0.875)};
				
			\end{axis}
\end{tikzpicture}
    }
    \label{fig: 1D convex function no subgradients}
  \end{subfigure}
  \begin{subfigure}{0.48\textwidth}\centering
    \scalebox{0.55}{
    \begin{tikzpicture}%[scale=0.75]
			\begin{axis} 
				[
					samples=10, % 500,
					ticks=none,
					ymax=0.9,
					ymin=-0.1,
					xmin=-0.02,
					xmax=1.01,
					axis on top=false,
%					axis x line = middle,
                    axis x line = none,
					axis y line = none,
					every inner x axis line/.append style={|-|},
%					axis y line = middle,
%					axis line style={gray},	
%					xlabel=$(x,1-x)$,
%					ylabel=$\cost$,
%					every axis x label/.style= {at={(ticklabel cs:0.5)},anchor=center},					
				    xtick={0,1},
				    xticklabels={$0$,$1$},
				]

\addplot [blue, line width=2] coordinates {(0,0.875) (0.1667,0.2083) (0.3333,0.04167) (0.5,0) (0.6667,0.04167) (0.8333,0.2083) (1,0.875)};
				
				\addplot[red, line width = 2, dashed]{0.875 + (-4)*(x - 0)};
				\addplot[red, line width = 2, dashed]{0.2073 + (-1)*(x - 0.1667)};
				\addplot[red, line width = 2, dashed]{0.04167 + (-0.25)*(x - 0.3333)};
				
				\addplot[red, line width = 2, dashed]{0.875 + (4)*(x - 1)};
				\addplot[red, line width = 2, dashed]{0.2073 + (1)*(x - 0.8333)};
				\addplot[red, line width = 2, dashed]{0.04167 + (0.25)*(x - 0.6667)};
				
			\end{axis}
\end{tikzpicture}
    }    
    \label{fig: 1D convex function w subgradients}
  \end{subfigure}
  \caption{A convex function over an interval with the set of its subgradients that attain multiple support points.}
  \label{fig: 1D convex function}
\end{figure}

Using the above facts, our next result demonstrates that the agent's learning costs often impose strong restrictions on the agent's behavior across menus. 

\newpage

\begin{theorem}\label{thm: cross-choice predictions}
Under \ref{ass: regularity} and \ref{ass: smoothness}, SCRs yielding unique subset predictions are weak* dense.
\end{theorem}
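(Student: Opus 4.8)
The plan is to reduce the cross-menu uniqueness property to a genericity statement about utilities, and then transport that genericity from utility space back to SCR space. The key reduction is the recovery observation following Proposition~\ref{prop: multiplier result for iteratively differentiable case}. Call an SCR \emph{regular} if it is linearly independent and $\icost$ is iteratively differentiable at $\ip^{\scr}$. For a regular $\scr$, linear independence forces full support, so Assumption~\ref{ass: smoothness}\eqref{ass: smoothness, part inada} upgrades this to conditionally full support; hence $\ip^{\scr}$ is fully mixed and (Assumption~\ref{ass: smoothness}\eqref{ass: smoothness, part differentiability}) $\icost$ is iteratively differentiable there, so Proposition~\ref{prop: multiplier result for iteratively differentiable case} applies with all multipliers $\mult_a$ vanishing. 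Thus the utilities rationalizing $\scr$ are exactly $\{\ut^{\scr}+\nuis\mathbf1:\nuis\in\lp{1}\}$, where $\ut^{\scr}_a:=\icostdd_{\posterior^{\scr}_{a}}$. Because $\sum_{a\in\menu}\scrb_a=\mathbf1$ for every $\scrb\in\SCR_{\menu}$, adding an action-independent shift $\nuis\mathbf1$ changes the submenu objective $\bbExp{\ut\cdot\scrb}-\cost(\scrb)$ only by the constant $\bbExp{\nuis}$, so it leaves each submenu's maximizers unchanged. Consequently, for each $\menu$ the SCRs rationalizable over $\menu$ by a utility that also rationalizes $\scr$ are precisely $\argmax_{\scrb\in\SCR_{\menu}}[\bbExp{\ut^{\scr}\cdot\scrb}-\cost(\scrb)]$, so a regular $\scr$ yields unique subset predictions if and only if $\ut^{\scr}$ belongs to the set $\UTB$ of utilities whose restricted problem over $\SCR_{\menu}$ has a singleton argmax for \emph{every} $\menu\in\Menu$.

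Next I would show $\UTB$ is comeager, hence dense. Each submenu $\menu$ is a legitimate instance of our model, with action set $\menu$, the same $\icost$, and indirect cost $\cost|_{\SCR_{\menu}}(\scrb)=\icost(\ip^{\scrb})$ by Lemma~\ref{lem: which policies can induce}; by Corollary~\ref{cor: Properties of Indirect Cost} this restricted cost is proper, convex, and weak* lower semicontinuous, so Proposition~\ref{prop: unique prediction with known benefits} applies and the utilities in $\lp{1}^{\menu}$ admitting multiple maximizers over $\SCR_{\menu}$ form a meager set. Taking the product with the coordinates $\lp{1}^{A\setminus\menu}$ keeps it meager in $\UT$, and since $\Menu$ is finite, the union over all submenus remains meager. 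Hence $\UT\setminus\UTB$ is meager, and because $\UT=\lp{1}^A$ is Banach (thus Baire), $\UTB$ is dense.

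The final step transports density through the correspondence between SCRs and their recovered utilities. From Theorem~\ref{thm: dense set of uniquely rationalizable SCRs} I take the uniformly dense set $\mathcal{L}$ of linearly independent (hence regular and, by Proposition~\ref{prop: linear independence implies unique rationalizability}, uniquely rationalizable) SCRs, together with the open set $U\subseteq\UT$ of utilities rationalizing them, so that each $\ut\in U$ has a unique maximizer $\scr(\ut)\in\mathcal{L}$. Fix a target $\scr^{0}$ and a weak* neighborhood $\mathcal{N}$. Uniform density of $\mathcal{L}$ yields some $\scr^{1}\in\mathcal{L}\cap\mathcal{N}$ with a corresponding $\ut^{1}\in U$ satisfying $\scr(\ut^{1})=\scr^{1}$. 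Since $\scr^{1}$ is the unique maximizer of $\ut^{1}$, the argmax correspondence is weak* upper hemicontinuous there, so a norm-neighborhood $U'\subseteq U$ of $\ut^{1}$ exists with $\scr(\ut)\in\mathcal{N}$ for all $\ut\in U'$; this uhc claim follows from weak* compactness of $\SCR$, weak* lower semicontinuity of $\cost$, and the estimate $|\bbExp{(\ut-\ut^{1})\cdot\scrb}|\leq\sum_{a}\Vert\ut_a-\ut^{1}_a\Vert_1$, valid for every $\scrb\in\SCR$ because $0\leq\scrb_a\leq\mathbf1$. Density of $\UTB$ supplies $\ut^{2}\in U'\cap\UTB$, and then $\scr^{*}:=\scr(\ut^{2})\in\mathcal{L}\cap\mathcal{N}$ is regular. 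Since $\ut^{2}$ rationalizes $\scr^{*}$, the recovered utility $\ut^{\scr^{*}}$ differs from $\ut^{2}$ only by an action-independent shift and hence also lies in $\UTB$; by the reduction of the first paragraph, $\scr^{*}$ yields unique subset predictions, establishing weak* density.

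The main obstacle is this last step: controlling the perturbation so the new maximizer remains regular—above all, conditionally full support, which is exactly what makes the recovery map $\scr\mapsto\ut^{\scr}$ well-defined and Proposition~\ref{prop: multiplier result for iteratively differentiable case} applicable. This is where Assumption~\ref{ass: smoothness}\eqref{ass: smoothness, part inada} and the openness of the rationalizing-utility set in Theorem~\ref{thm: dense set of uniquely rationalizable SCRs} do the work: the former promotes full support to conditionally full support for any rationalizable SCR, and the latter keeps $\scr(\ut)\in\mathcal{L}$ throughout a neighborhood of $\ut^{1}$, so I never leave the regular regime while invoking density of $\UTB$. A secondary technical point—to be verified rather than waved through—is that Proposition~\ref{prop: unique prediction with known benefits} applies verbatim to each submenu model and that meagerness survives the product extension to $\UT$ and the finite union over $\Menu$.
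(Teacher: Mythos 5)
Your proposal is correct and follows essentially the same route as the paper's proof: a dense set of uniquely rationalizable (linearly independent, full-support) SCRs from Theorem~\ref{thm: dense set of uniquely rationalizable SCRs}, a co-meager set of utilities with unique predictions in every menu via Proposition~\ref{prop: unique prediction with known benefits}, perturbation of the rationalizing utility controlled by norm-to-weak* upper hemicontinuity of the argmax, Assumption~\ref{ass: smoothness}\eqref{ass: smoothness, part inada} to upgrade to conditionally full support, and Proposition~\ref{prop: multiplier result for iteratively differentiable case} to pin down rationalizing utilities up to an action-independent shift. The paper packages the last steps as Lemma~\ref{lem: dense unique subset predictions from dense uniquely rationalizable} using a sequential argument, but your neighborhood-based version and explicit menu-by-menu meagerness check are the same argument in substance.
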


To prove the theorem, we first use Lemma~\ref{lem: unique prediction with known benefits} to find a dense set $\UT_{\Menu}$ of utilities that attain a unique optimum at every menu. The proof then proceeds by identifying a dense subset of SCRs that are rationalizable by utilities in $\UT_{\Menu}$. To identify this subset, we observe Theorem~\ref{thm: dense set of uniquely rationalizable SCRs} and Assumption~\ref{ass: smoothness}\eqref{ass: smoothness, part inada} imply the set of uniquely rationalizable SCRs with conditionally full support is dense in $\Dom$. Using a continuity property of the subdifferential, we then show one can approximate any $\scr$ in this set with SCRs that are rationalizable by some utility in $\UT_{\Menu}$. Moreover, because $\scr$ has conditional full support, the approximating SCRs can be taken to have the same property.\footnote{This property of the approximating SCRs relies on Assumption~\ref{ass: smoothness}\eqref{ass: smoothness, part inada}. We note this assumption is not needed when $\Omega$ is finite, because in this case, the set of conditionally full-support SCRs is open.}  
Focusing on one of the approximating SCRs, Proposition~\ref{prop: multiplier result for iteratively differentiable case} implies all utilities that rationalize it generate the same optima for all menus. Because one of these rationalizing utilities comes from $\UT_{\Menu}$, it follows that, for any given menu, all of these utilities uniquely rationalize the same SCR. In other words, each of the SCRs approximating $\scr$ yields unique subset predictions. Since $\scr$ is an arbitrary element of a dense subset of $\Dom$, the result follows.

\section{Cross-Menu Tests}\label{sec: cycle test}

As pointed out in the previous section, the agent's information acquisition costs impose significant restrictions on her behavior across menus. In this section, we use these restrictions to develop a test for whether the agent's behavior is consistent with a given cost function. Throughout this section, we focus on cost functions that satisfy Assumption~\ref{ass: smoothness}\eqref{ass: smoothness, part finite} and \ref{ass: smoothness}\eqref{ass: smoothness, part differentiability}, which are satisfied by virtually all models used in the literature. 

We consider an analyst who can estimate the agent's chosen SCRs in a subset of all possible menus. Specifically, we define a \textbf{data set} to be a pair, $\data=(\DMenu,\dscr)$, where the set $\DMenu \subseteq \Menu$ contains the menus on which the analyst can estimate the agent's behavior, and $\dscr:\DMenu \rightarrow \SCR$ maps each such menu to the SCR describing the agent's behavior in that menu, and so has $\dscr^\menu:=\dscr(\menu)\in\SCR_\menu$ for each $\menu\in\DMenu$. Such data sets have been produced in the literature, see \cite{dean2019experimental}, and \cite{caplin2020rational}, for example. Our main goal is to develop necessary and sufficient tests for a data set to be compatible with a given cost function and a common objective. Specifically, we are interested in testing whether a data set $\data$ is \textbf{consistent}, meaning a utility function $\ut \in \UT$ exists such that, for every $\menu \in \DMenu$, the SCR $\dscr^\menu$ is $\ut$-rationalizable over $\menu$. We find it convenient to concentrate on data sets in which the agent takes all feasible actions with positive probability conditional on the state. These are data sets that have \textbf{conditionally full support}, namely $\dscr^{\menu}_{a}(\omega)>0$ holds almost surely for all $a \in \menu\in \DMenu$. We discuss data sets without this property at the end of this section. 

Our tests are based on a particular class of cycles generated by the data. To describe these cycles, note that every data set induces a bipartite graph $\dgraph_{\data}$ with vertex classes $\DMenu$ and $A$, and for which an edge exists between $\menu \in \DMenu$ and $a\in A$ if and only if $a \in \supp \ \dscr^{\menu}$. A \textbf{testable cycle} is a cycle in this graph that begins (and therefore ends) with an action. In other words, a testable cycle is a sequence $a_0 \menu_1 a_1 \menu_2 a_2 \ldots \menu_N a_N$ such that $a_0 = a_N$, and $\{a_{n-1},a_n\}\subseteq \supp \ \dscr^{\menu_n}$ for all $n\in\{1,\ldots,N\}$. 

To better understand the concept of testable cycles, consider the following example. The agent has three actions $A=\{x,y,z\}$, the utility of which depends on a state that is uniformly distributed over $\{1,2,3\}$. The analyst can estimate the agent's chosen SCRs on three menus, $\DMenu=\left\{\{x,y,z\}, \{x,y\}, \{y,z\}\right\}$, with the agent's behavior in these menus is given by
\begin{equation}\label{eq: example dataset}
\begin{split}
\left(\dscr^{\{x,y,z\}}_{x},\dscr^{\{x,y,z\}}_{y},\dscr^{\{x,y,z\}}_{z}\right)(\omega) 
& = \begin{cases}
\left(\frac{1}{5},\frac{2}{5},\frac{2}{5}\right) & \text{if }\omega = 1, \\ 
\left(\frac{2}{5},\frac{1}{5},\frac{2}{5}\right) & \text{if }\omega = 2, \\ 
\left(\frac{2}{5},\frac{2}{5},\frac{1}{5}\right) & \text{if }\omega = 3, \\ 
\end{cases} 
\\
\left(\dscr^{\{x,y\}}_{x},\dscr^{\{x,y\}}_{y},\dscr^{\{x,y\}}_{z}\right)(\omega) 
& = \begin{cases}
\left(\frac{1}{3},\frac{2}{3},0\right) & \text{if }\omega = 1, \\ 
\left(\frac{2}{3},\frac{1}{3},0\right) & \text{if }\omega = 2, \\ 
\left(\frac{1}{2},\frac{1}{2},0\right) & \text{if }\omega = 3, \\ 
\end{cases}
\\ 
\left(\dscr^{\{y,z\}}_{x},\dscr^{\{y,z\}}_{y},\dscr^{\{y,z\}}_{z}\right)(\omega) 
& = \begin{cases}
\left(0,\frac{1}{2},\frac{1}{2}\right) & \text{if }\omega = 1, \\ 
\left(0,\frac{1}{3},\frac{2}{3}\right) & \text{if }\omega = 2, \\ 
\left(0,\frac{2}{3},\frac{1}{3}\right) & \text{if }\omega = 3. \\ 
\end{cases}
\end{split}
\end{equation}
In this example, the data set $\dscr$ has conditionally full support. 
Therefore, the bipartite graph induced by $\dscr$ has an edge between an action $a$ and a menu $\menu \in \DMenu$ if and only if $a \in \menu$. In other words, this data set induces the graph drawn in Figure~\ref{fig: bipartite graph example}. This graph contains three simple testable cycles: $x \ \{x,y\} \ y \ \{x,y,z\} \ x$,\ \ \ \ $y \ \{x,y,z\} \ z \ \{y,z\} \ y$, and $x \ \{x,y\} \ y \ \{y,z\} \ z \ \{x,y,z\} \ x$. Every other testable cycle involves concatenations of these three cycles (possibly in reversed order or starting at a different action). 

\begin{figure}[t]
\centering

%%% Subfigure 1
\begin{subfigure}{0.22\textwidth}\centering
    \scalebox{0.7}{
\begin{tikzpicture}[scale=2.5]
%\tikzstyle{every node}=[draw,shape=circle];
\node [draw,circle,fill=black,scale=0.5, label=left:$x$] at (0,0) (x) {};
\node [draw,circle,fill=black,scale=0.5, label=left:$y$] at (0,-0.5) (y) {};
\node [draw,circle,fill=black,scale=0.5, label=left:$z$] at (0,-1) (z) {};

\node [draw,circle,fill=gray,scale=0.5, label=right:{$\{x,y\}$}] at (1,0) (xy) {};
\node [draw,circle,fill=gray,scale=0.5, label=right:{$\{x,y,z\}$}] at (1,-0.5) (xyz) {};
\node [draw,circle,fill=gray,scale=0.5, label=right:{$\{y,z\}$}] at (1,-1) (yz) {};

\draw (x) -- (xy); 
\draw (x) -- (xyz);
\draw (y) -- (xy);
\draw (y) -- (xyz);
\draw (y) -- (yz);
\draw (z) -- (xyz);
\draw (z) -- (yz);
\end{tikzpicture}
}
\end{subfigure}
%
%%% Subfigure 2
\begin{subfigure}{0.22\textwidth}\centering
    \scalebox{0.7}{
\begin{tikzpicture}[scale=2.5]
%\tikzstyle{every node}=[draw,shape=circle];
\node [draw,circle,fill=black,scale=0.5, label=left:$x$] at (0,0) (x) {};
\node [draw,circle,fill=black,scale=0.5, label=left:$y$] at (0,-0.5) (y) {};
\node [draw,circle,fill=black,scale=0.5, label=left:$z$] at (0,-1) (z) {};

\node [draw,circle,fill=gray,scale=0.5, label=right:{$\{x,y\}$}] at (1,0) (xy) {};
\node [draw,circle,fill=gray,scale=0.5, label=right:{$\{x,y,z\}$}] at (1,-0.5) (xyz) {};
\node [draw,circle,fill=gray,scale=0.5, label=right:{$\{y,z\}$}] at (1,-1) (yz) {};

\draw [-stealth,line width=1] (x) -- (xy); 
\draw [stealth-,line width=1] (y) -- (xy);
\draw [-stealth,line width=1] (y) -- (xyz);
\draw [stealth-,line width=1] (x) -- (xyz);
\draw [style=dashed] (y) -- (yz);
\draw [style=dashed] (z) -- (xyz);
\draw [style=dashed] (z) -- (yz);
\end{tikzpicture}
}
\end{subfigure}
%
%%% Subfigure 3
\begin{subfigure}{0.22\textwidth}\centering
    \scalebox{0.7}{
\begin{tikzpicture}[scale=2.5]
%\tikzstyle{every node}=[draw,shape=circle];
\node [draw,circle,fill=black,scale=0.5, label=left:$x$] at (0,0) (x) {};
\node [draw,circle,fill=black,scale=0.5, label=left:$y$] at (0,-0.5) (y) {};
\node [draw,circle,fill=black,scale=0.5, label=left:$z$] at (0,-1) (z) {};

\node [draw,circle,fill=gray,scale=0.5, label=right:{$\{x,y\}$}] at (1,0) (xy) {};
\node [draw,circle,fill=gray,scale=0.5, label=right:{$\{x,y,z\}$}] at (1,-0.5) (xyz) {};
\node [draw,circle,fill=gray,scale=0.5, label=right:{$\{y,z\}$}] at (1,-1) (yz) {};

\draw [style=dashed] (x) -- (xy); 
\draw [style=dashed] (x) -- (xyz);
\draw [style=dashed] (y) -- (xy);
\draw [-stealth, line width=1] (y) -- (xyz);
\draw [stealth-, line width=1] (z) -- (xyz);
\draw [-stealth, line width=1] (z) -- (yz);
\draw [stealth-, line width=1] (y) -- (yz);

\end{tikzpicture}
}
\end{subfigure}
%
%%% Subfigure 4
\begin{subfigure}{0.22\textwidth}\centering
    \scalebox{0.7}{
\begin{tikzpicture}[scale=2.5]
%\tikzstyle{every node}=[draw,shape=circle];
\node [draw,circle,fill=black,scale=0.5, label=left:$x$] at (0,0) (x) {};
\node [draw,circle,fill=black,scale=0.5, label=left:$y$] at (0,-0.5) (y) {};
\node [draw,circle,fill=black,scale=0.5, label=left:$z$] at (0,-1) (z) {};

\node [draw,circle,fill=gray,scale=0.5, label=right:{$\{x,y\}$}] at (1,0) (xy) {};
\node [draw,circle,fill=gray,scale=0.5, label=right:{$\{x,y,z\}$}] at (1,-0.5) (xyz) {};
\node [draw,circle,fill=gray,scale=0.5, label=right:{$\{y,z\}$}] at (1,-1) (yz) {};

\draw [style=dashed](y) -- (xyz);
\draw [-stealth,line width=1](x) -- (xy); 
\draw [-stealth,line width=1](xy) -- (y);
\draw [-stealth,line width=1](y) -- (yz);
\draw [-stealth,line width=1](yz) -- (z);
\draw [-stealth,line width=1](z) -- (xyz);
\draw [-stealth,line width=1](xyz) -- (x);
\end{tikzpicture}
}
\end{subfigure}

\caption{The left panel depicts the bipartite graph induced by a conditionally full support data set $(\DMenu,\dscr)$ with $\DMenu=\{\{x,y\},\{x,y,z\},\{y,z\}\}$. The right three panels depict the three simple cycles contained in this graph.}
\label{fig: bipartite graph example}
\end{figure}
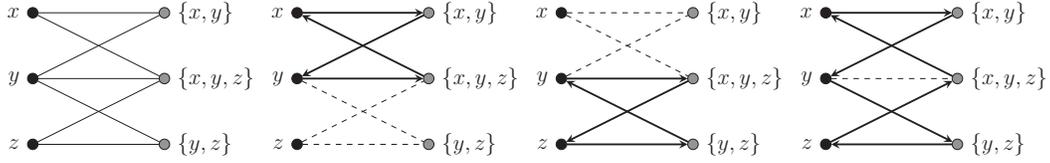

It turns out to be convenient to focus on a special subset of testable cycles called a \emph{cycle basis}. To define such a subset, let $\dgraphE_{\data}$ be the set of all edges in the graph induced by $\data$---that is, the set of all pairs, $\{a,\menu\}$ such that $\menu \in \DMenu$ and $a \in \supp \ \dscr^{\menu}$. Associate each testable cycle $\cycle=a_0 \menu_1 a_1 \menu_2 a_2 \ldots \menu_N a_N$ with the cycle vector $\cyclev^{\cycle}\in \mathbb{R}^{\dgraphE_{\data}}$ given by
\[
\cyclev^{\cycle}_{\{a,\menu\}} = \left|\left\{n: (a,\menu) = (a_{n-1},\menu_n)\right\}\right| - \left|\left\{n: (\menu,a) = (\menu_{n},a_{n})\right\}\right|.
\]
Let $\Cyclev$ be the vector subspace of $\real^{\dgraphE_{\data}}$ spanned by the set of all cycle vectors. A \textbf{cycle basis} is a set of cycles whose corresponding vectors form a basis for $\Cyclev$. Informally, a cycle basis is a minimal set of cycles that is sufficient for generating any other cycle in the graph. For more information on different types of cycle basis and algorithms that generate them, see a standard reference, e.g., \cite{bollobas2012graph}.\footnote{\cite{bollobas2012graph} Chapter~II, $\S$3 defines cycle bases, characterizes their cardinality, and provides an explicit construction of one.} For the current purpose, it suffices to note that several algorithms exist for finding a cycle basis, and that one can easily calculate a basis' cardinality: any cycle basis of the graph induced by a given data set $\data$ contains 
\begin{equation}\label{eq: cycle basis cardinality}
\sum_{\menu \in \DMenu}|\supp \ \dscr^{\menu}| - |A| - |\DMenu| + k
\end{equation}
cycles, where $k$ is the number of connected components of the graph. 

To demonstrate the definition of cycle basis, consider again the data set presented in equation~\eqref{eq: example dataset}. As noted earlier, every testable cycle in this example is essentially a concatenation of some combination of the following three simple cycles: $\cycle_{1} = x \ \{x,y\} \ y \ \{x,y,z\} \ x$,\ \ \ \ $\cycle_2 = y \ \{x,y,z\} \ z \ \{y,z\} \ y$, and $\cycle_3 = x \ \{x,y\} \ y \ \{y,z\} \ z \ \{x,y,z\} \ x$. However, the calculation from equation~\eqref{eq: cycle basis cardinality} says that only $2$ cycles are necessary for obtaining a cycle basis. We now explain that $\{\cycle_1,\cycle_2\}$ is a cycle basis. To do so, observe first that the corresponding set of vectors $\{\cyclev^{\cycle_1},\cyclev^{\cycle_2}\}$ is linearly independent, because each cycle contains an edge that is not included in the other. Second, recall every cycle is built from the three previously-mentioned cycles. Therefore, for every cycle $\cycle$, one can find three integers $k_1,k_2,k_3 \in \mathbb{Z}$ such that $\cyclev^{\cycle} = k_1 \cyclev^{\cycle_1} + k_2 \cyclev^{\cycle_2} + k_3 \cyclev^{\cycle_3}$. And third, observe $\cyclev^{\cycle_3}= \cyclev^{\cycle_1}-\cyclev^{\cycle_2}.$ It follows $\{\cyclev^{\cycle_1},\cyclev^{\cycle_2}\}$ is a linearly independent set of vectors that spans $\Cyclev$---i.e., $\{\cyclev^{\cycle_1},\cyclev^{\cycle_2}\}$  is a cycle basis. 

Testable cycles are important because they trace all the restrictions imposed by the consistency requirement. To introduce these restrictions, we require some additional notation. Fix some data set $\data=(\DMenu,\dscr)$. For each $\menu\in\DMenu$, let $\icostd^\menu$ be some derivative of $\icost$ at $\ip^{\dscr^{\menu}}$, and take $\posterior^{\menu}_{a}=\posterior^{\dscr^\menu}_{a}$ to be posterior revealed by action $a$ given SCR $\dscr^{\menu}$. Then for every $\menu$ and action $a \in \supp \ \dscr^{\menu}$, we can define the function 
\[
\func^{\data}_{a,\menu}:=\icostdd^{\menu}_{\posterior_a^{\menu}}.
\]
Armed with these definitions, we can associate each testable cycle $a_0 \menu_1 a_1 \menu_2 a_2 \ldots \menu_N a_N$ with the following equation\footnote{Although the definition of $(\func^{\data}_{a,\menu})_{a\in B}$ depends on the choice of derivative $\icostd^\menu$ for each $\menu$, this dependence shifts $\func^{\data}_{a,\menu}$ in the same way for every $a\in \menu$. Because each testable cycle has exactly one outgoing edge from $\menu$ for each incoming one, the left-hand side of equation~\eqref{eq: testable cycle equation} turns out not to depend on this choice of derivative.}
\begin{equation}\label{eq: testable cycle equation}
    \sum_{n=1}^{N} \left( \func^{\data}_{a_{n},\menu_{n}}-\func^{\data}_{a_{n-1},\menu_{n}} \right) = \mathbf{0}.
\end{equation}
Corollary~\ref{cor: testable cycles} below states that a conditionally full-support data set is consistent only if equation~\eqref{eq: testable cycle equation} holds for all testable cycles. This corollary also shows that any conditionally full-support data set satisfying equation~\eqref{eq: testable cycle equation} for all testable cycles is consistent. In fact, the corollary shows something stronger: to prove that a data set is consistent, it suffices to establish equation~\eqref{eq: testable cycle equation} only for the testable cycles contained in a given cycle basis.

\begin{samepage}
\begin{corollary}\label{cor: testable cycles}
Suppose $\icost$ satisfies Assumptions~\ref{ass: smoothness}\eqref{ass: smoothness, part finite} and \ref{ass: smoothness}\eqref{ass: smoothness, part differentiability}, and let $\data$ be a data set with conditionally full support. Given any cycle basis, the following are equivalent: \begin{enumerate}[(i)]
    \item The data set $\data$ is consistent. \label{cor: testable cycles - consistency}
    \item Equation~\eqref{eq: testable cycle equation} holds for every testable cycle. \label{cor: testable cycles - all cycles}
    \item Equation~\eqref{eq: testable cycle equation} holds for every testable cycle in the cycle basis. \label{cor: testable cycles - basis condition}
\end{enumerate}
\end{corollary}
\end{samepage}

We now briefly discuss the logic behind Corollary~\ref{cor: testable cycles}. The equivalence between \eqref{cor: testable cycles - all cycles} and \eqref{cor: testable cycles - basis condition} follows from observing that equation~\eqref{eq: testable cycle equation} requires all cycle vectors to be in the kernel of a fixed linear map. Thus, the crux of the corollary is in the equivalence between parts \eqref{cor: testable cycles - consistency} and \eqref{cor: testable cycles - all cycles}. To understand this equivalence, revisiting Proposition~\ref{prop: multiplier result for iteratively differentiable case} is useful. According to this proposition, a conditionally full support data set $\data=(\DMenu,\dscr)$ is consistent if and only if we can find two vectors of $\lp{1}$ functions, $(\ut_a)_{a\in A}$, and $(\nuis_{\menu})_{\menu \in \DMenu}$, such that for all $a \in A$ and $\menu \in \DMenu$,
\begin{equation}\label{eq: func equals ut minus nuis}
\func^{\data}_{a,\menu} = \ut_a - \nuis_\menu.
\end{equation}
Thus, Proposition~\ref{prop: multiplier result for iteratively differentiable case} reduces the question of whether a data set is consistent to a specific instance of a standard problem in electrical networks: when can one assign each $a\in A$ an electric potential $\ut_{a}$, and each $\menu \in \DMenu$ an electric potential $\nuis_{\menu}$, so that $\func^{\data}_{a,\menu}$ gives the difference in potentials between $a$ and $\menu$? The answer is that such an assignment is feasible if and only if $\func^{\data}$ satisfies \emph{Kirchhoff's voltage law}, which corresponds exactly with equation~\eqref{eq: testable cycle equation} holding for all testable cycles. 

As noted above, Part~\eqref{cor: testable cycles - basis condition} of Corollary~\ref{cor: testable cycles} tells us that to establish consistency of a data set, it suffices to test equation~\eqref{eq: testable cycle equation} only for the cycles contained in some cycle basis. One might wonder whether it is possible to do the same with a different, potentially smaller set of cycles. The next result establishes the answer is negative. 

\begin{samepage}
\begin{proposition}\label{prop: testable cycles converse}
Suppose $\icost$ satisfies Assumptions~\ref{ass: regularity} and~\ref{ass: smoothness}. Let $\DMenu\subseteq\Menu$ be any set, and let $\Cycle$ be a set of testable cycles (for full-support data sets on $\DMenu$) that contains no cycle basis. Then some inconsistent, conditionally full-support data set $(\DMenu,\dscr)$ exists such that equation~\eqref{eq: testable cycle equation} holds for every cycle in $\Cycle$.
\end{proposition}
\end{samepage}

The above results allow analysts to test for a particular cost function without making assumptions on the agent's payoffs. Consider the task of testing whether costs are given by mutual information, for example. The experimental literature focuses on the Independent Likelihood Ratio (ILR) condition \citep[see][]{caplin2013behavioral,dean2019experimental}, which states that for finite states, $\scr$ is optimal when costs are given by $\icost(\ip)=\param \int \KL (\posterior)\ \ip(\dd \posterior)$ for some $\param >0$ only if 
\[
\frac{\posterior^{\scr}_{a}(\omega)}{e^{\ut_a(\omega)/\param}} = \frac{\posterior^{\scr}_{b}(\omega)}{e^{\ut_b(\omega)/\param}}
\]
for all $a$ and $b$ that $\scr$ generates with positive probability. Testing this condition requires analysts to know the agent's payoffs from different actions. To pin down these payoffs, experimenters either assume participants' payoffs are linear in money (see Caplin and Dean 2013), or craft their experiment so that rewards are given via ``probability points'' (e.g., \cite{caplin2020rational}, or Dean and Neligh 2022). By contrast, Corollary~\ref{cor: testable cycles} allows one to test for mutual information costs using the agent's decisions across menus. Specifically, Corollary~\ref{cor: testable cycles} implies a data set with conditionally full-support $\dscr$ is consistent with the cost function $\icost(\ip)=\param \int \KL (\posterior)\ \ip(\dd \posterior)$ if and only if every testable cycle satisfies equation \eqref{eq: testable cycle equation}, with\footnote{To obtain this condition, observe that if $\scr$ has full support, the iterated derivative of mutual information costs is given by $\icostdd_{\posterior_{\scr}^a} = \log \frac{\dd\posterior^{\scr}_{a}}{\dd \prior} = \log \frac{\scr_{a}}{\ip^{\scr}_a}.$} 
\[
\func^{\data}_{a,\menu} = \theta \log \frac{\dscr^{\menu}_{a}}{\ip^{\dscr^{\menu}}_{a}}.
\]
For a demonstration, consider again the data set from equation~\eqref{eq: example dataset}. As explained earlier, the two cycles $\cycle_{1} = x \ \{x,y\} \ y \ \{x,y,z\} \ x$ and $\cycle_2 = y \ \{x,y,z\} \ z \ \{y,z\} \ y$ form a cycle basis. Hence, checking whether equation~\eqref{eq: testable cycle equation} holds for these two cycles is sufficient for testing consistency of that data set. Since there are three states, each of these cycles generates three equations, giving a total of six equations to check. For example, evaluating equation~\eqref{eq: testable cycle equation} for the first cycle $\cycle_{1}$ at state $\omega=1$ gives
\begin{equation*}
\begin{split}
& \func^{\data}_{x,\{x,y\}}(1) - \func^{\data}_{y,\{x,y\}}(1) + \func^{\data}_{y,\{x,y,z\}}(1) - \func^{\data}_{x,\{x,y,z\}}(1) \\ 
=& \param \left[\log \frac{1/3}{1/2} - \log \frac{2/3}{1/2} +  \log \frac{2/5}{1/3} - \log \frac{1/5}{1/3} \right] \\
=& 0.
\end{split}
\end{equation*}
More generally, one can show all instances of equation~\eqref{eq: testable cycle equation} holds for both cycles. Hence, one can use Corollary~\ref{cor: testable cycles} to show that the data set from \eqref{eq: testable cycle equation} is consistent with the agent having mutual information costs without having any payoff information.\footnote{Observe this test is independent of $\param$. This independence comes from the fact that an $\scr$ is $\ut$-rationalizable when costs are given by $\param\icost$ if and only if it is $(\ut/\param)$-rationalizable when costs are given by $\icost$.}

Corollary~\ref{cor: testable cycles} is also useful for testing for cost functions that do not have a revealed preference axiomatization. For an example, suppose one is interested in testing whether costs are given by mutual information to the power of $\param>1$,
\begin{equation*}
\icost(\ip) = \left[\int \KL(\posterior)\ \ip(\dd \posterior) \right]^{\param}. 
\end{equation*}
In this case, the function $\func^{\data}_{a,\menu}$ in equation \eqref{eq: testable cycle equation} becomes
\[
\func^{\data}_{a,\menu} = \param \left[\cost_{MI} (\dscr^{\menu})\right]^{\param-1}\log \frac{\dscr^{\menu}_{a}}{\ip^{\dscr^{\menu}}_{a}},
\]
where $\cost_{MI}(\scr) = \bbE \left[ \sum_{a \in A} \scr_{a}\log\frac{\scr_a}{\ip^{\scr}_{a}}\right]$ is the indirect cost function for the mutual information case. We now use our test to show the example from equation~\eqref{eq: example dataset} is inconsistent with the cost $\icost(\ip) = \left[\int \KL(\posterior)\ \ip(\dd \posterior) \right]^{\param}$ any $\param > 1$. To see this, evaluate the left hand side of equation~\eqref{eq: testable cycle equation} for the cycle $\cycle_{1} = x \ \{x,y\} \ y \ \{x,y,z\} \ x$ at state $\omega=1$ to get
\begin{align*}
\func^{\data}_{x,\{x,y\}}(1) - \func^{\data}_{y,\{x,y\}}(1) + \func^{\data}_{y,\{x,y,z\}}(1) - \func^{\data}_{x,\{x,y,z\}}(1) = \\ 
\left\{\left[\cost_{MI} (\dscr^{\{x,y,z\}})\right]^{\param-1} - \left[\cost_{MI} (\dscr^{\{x,y\}})\right]^{\param-1} \right\} \param \log 2,
\end{align*}
which differs from zero for all $\param>1$, because $\cost_{MI} (\dscr^{\{x,y,z\}}) \neq \cost_{MI} (\dscr^{\{x,y\}})$ and $\cost_{MI}\geq 0$. Thus, this example data set is inconsistent with a strictly convex power of mutual information.

So far, we focused on data sets with conditionally full support. We conclude the section by discussing what happens when this assumption is violated. For this purpose, we introduce two definitions. First, say a data set $\data=(\DMenu,\dscr)$ is \textbf{fully mixed} if $\dscr^{\menu}_{a}$ is strictly positive almost surely whenever $\dscr^{\menu}_{a}$ is not identical to zero. In other words, a fully-mixed data set has SCRs that use all actions \emph{in their support} in all states, but may not use all available actions in some menus. Second, refer to $\data=(\DMenu,\dscr)$ as having \textbf{full support} if it always uses all available actions; that is, if $\supp \ \dscr^{\menu} = \menu$ for all $\menu \in \DMenu$. Observe that a conditionally full-support data set is exactly one that is fully mixed and has full support. 

For a fully-mixed data set without full support, the restrictions imposed by \eqref{eq: testable cycle equation} are necessary for consistency, but not sufficient. The lack of sufficiency arises because \eqref{eq: testable cycle equation} does not incorporate the fact that certain actions are not taken in some menus. In Online Appendix~\ref{app: aux}, we present an example of a data set that satisfies equation \eqref{eq: testable cycle equation} for all cycles but is inconsistent. 

For full-support data sets that are not fully mixed, our test is sufficient for consistency, but is not necessary. For intuition let us return to Proposition~\ref{prop: multiplier result for iteratively differentiable case}. This proposition implies that a full support data set is consistent if and only if we can find a utility $\ut \in \UT$, and multipliers $\nuis \in \lp{1}^{\DMenu}$ and $\mult \in \lp{1}_{+}^{A \times \DMenu}$, such that
\begin{equation}\label{eq: func equals ut minus nuis plus mult}
\forall a \in A \text{ and } \menu \in \DMenu: \func^{\data}_{a,\menu} = \ut_{a} - \nuis_{\menu} + \mult_{a,\menu} \ \ \text{and} \ \ \mult_{a,\menu}(\cdot)\dscr^{\menu}_{a}(\cdot) = 0 \text{ almost surely}.
\end{equation}
As explained earlier, our test is equivalent to satisfying the above with the additional requirement that $\mult$ is zero. Thus, our test is clearly sufficient for consistency of a full-support data set. When the data set is also fully-mixed, our test is also necessary, because $\mult_{a,\menu}(\cdot)\dscr^{\menu}_{a}(\cdot)$ can be zero for all $a\in A$ and $\menu\in \DMenu$ if and only if $\mult=\mathbf{0}$. However, without the fully-mixed requirement, our test is too strong, because $\mult$ need not equal zero for \eqref{eq: func equals ut minus nuis plus mult} to hold. And indeed, Online Appendix~\ref{app: aux} presents an example of a consistent data set that fails the test outlined in Corollary~\ref{cor: testable cycles}.\footnote{Observe that any such example must violate Assumption~\ref{ass: smoothness}. Indeed, if 
Assumption~\ref{ass: smoothness}\eqref{ass: smoothness, part inada} were satisfied, every consistent full-support data set would have conditionally full support.}

\section{Discussion}\label{sec: discussion}
In this section, we discuss our model's assumptions, additional results, and the relationship to existing literature.

\paragraph*{\emph{Partial Knowledge of Benefits.}}
For a fixed menu, we studied the restrictions imposed on the agent's behavior by her information acquisition costs with complete knowledge and complete ignorance of her preferences. In the appendix, we develop tools for analyzing an intermediate case in which the analyst knows the agent's utility belongs to a well-behaved set, $\UTB\subseteq \UT$.  In particular, we characterize when an SCR $\scr$ can be rationalized by some utility in $\UTB$. The characterization involves comparing the directional derivative of $\cost$ at $\scr$ with the \emph{support function} of $\UTB$. Intuitively, the directional derivative gives the marginal cost of shifting behavior away from $\scr$, whereas the support function gives the maximum benefit of shifting one's SCR towards $\scr$. Our result shows that, whenever the latter is lower than the former, some utility in $\UTB$ rationalizes $\scr$. 

To apply the above-mentioned result, one needs to calculate the support function of the set of utilities $\UTB$, and the directional derivative of the agent's indirect cost $\cost$. Calculating the support function of $\UTB$ is a standard optimization problem. In the appendix we solve this problem explicitly for a few economically relevant sets of utilities. 

In general, calculating the directional derivative of $\cost$ can be a complicated problem. To simplify it, we show that for differentiable costs, one can replace $\cost$ with its posterior separable approximation, $\cost_{\icostd}$. This replacement is useful, because one can calculate the directional derivative of $\cost_{\icostd}$ using the directional derivative of $\icostd$, which is often easier to derive. We refer the reader to the appendix for the exact statement of these results.

\paragraph*{\emph{Partial Knowledge of Costs.}} Our model assumes the analyst knows the agent's learning costs exactly. Maintaining this stylized assumption enables us to study the degree to which the agent's learning costs pin down her behavior. In practice, many analysts may not know $\icost$ exactly, but are instead capable of restricting it to belong to some set $\mathfrak C$. Some of our results also speak to this case. For example, Proposition~\ref{prop: dense set of rationalizable SCRs} implies the set of SCRs that is rationalizable by some cost function in $\mathfrak C$ is uniformly dense in the set of SCRs that are feasible for some $\icost \in \mathfrak C$. Similarly, if each $\icost\in\mathfrak C$ satisfies \ref{ass: regularity},  Theorem~\ref{thm: dense set of uniquely rationalizable SCRs} immediately implies the set of uniquely rationalizable SCRs is uniformly dense in the set of SCRs that can be induced at finite $\icost$-cost for some $\icost\in\mathfrak C$. Lemma~\ref{lem: unique prediction with known benefits} also extends somewhat: if $\mathfrak C$ is countable, the set of utilities that does not generate a unique prediction for some $\icost \in \mathfrak C$ is meager and shy. Hence, for most utility functions, the analyst's uncertainty about the agent's behavior reduces to the uncertainty about $\icost$---provided $\mathfrak C$ is countable. The reason is that a countable union of meager and shy sets is itself meager and shy. By contrast, we do not know of an immediate way to extend Theorem~\ref{thm: cross-choice predictions} to accommodate multiple cost functions. 

Generalizing Corollary~\ref{cor: testable cycles} to a set $\mathfrak C$ of cost functions satisfying Assumption~\ref{ass: smoothness}\eqref{ass: smoothness, part finite} and \eqref{ass: smoothness, part differentiability} is straightforward: a data set is consistent with the set $\mathfrak C$ if and only if the Corollary~\ref{cor: testable cycles}'s test holds for some cost function in the set. Even if the data set is consistent with $\mathfrak C$, one can still use Corollary~\ref{cor: testable cycles}'s test to identify which subset of $\mathfrak C$ could have generated a given data set. For an example, consider an analyst equipped with the data set from \eqref{eq: example dataset} who believes costs are given by the LLR cost function (see Example~\ref{ex: LLR costs}), but is unsure about the value of the parameter $\boldsymbol{\param}.$ Using the test in Corollary~\ref{cor: testable cycles}, the analyst can deduce $\boldsymbol{\param}$ must be symmetric. Moreover, every symmetric $\boldsymbol{\param}$ is consistent with the data set in \eqref{eq: example dataset} (see Appendix~\ref{app: sets of costs} for more details).

\paragraph*{\emph{Unique Rationalizability and Strict Convexity.}} Our analysis showed that, under \ref{ass: regularity}, one can rule out indifference as the source of the analyst's inability to predict behavior using the agent's learning costs. We focused on \ref{ass: regularity} because it accommodates the important case of posterior separable costs. An alternative way to obtain a similar result is to look at costs that are strictly convex. We now provide such a result.

\begin{proposition}\label{prop: dense set of uniquely rationalizable SCRs under strict convexity}
Suppose $\icost$ is strictly convex on $\Domicost$, and $\Domicost\neq\{\delta_\prior\}$. Then, a set of uniquely rationalizable SCRs exists that is uniformly dense in $\Dom$ and is open if $\Omega$ is finite. Moreover, this set of SCRs is rationalized by an open set of utilities. 
\end{proposition}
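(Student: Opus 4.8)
The plan is to follow the architecture of the proof of Theorem~\ref{thm: dense set of uniquely rationalizable SCRs}, but to replace the role played there by strict monotonicity (via Proposition~\ref{prop: linear independence implies unique rationalizability}) with a strict-convexity argument. Recall that Theorem~\ref{thm: dense set of uniquely rationalizable SCRs} had two ingredients: (a) a criterion identifying SCRs at which $\cost$ is strictly convex, so that such SCRs are uniquely rationalizable whenever rationalizable, and (b) a density/openness argument showing these SCRs are plentiful. Since the conclusion here is verbatim the conclusion of Theorem~\ref{thm: dense set of uniquely rationalizable SCRs}, I would reuse ingredient (b) essentially unchanged and supply a new ingredient (a) adapted to strict convexity of $\icost$.

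For ingredient (a), I would revisit the convexity chain~\eqref{costconvex}. Fix an SCR $\scr$ that \emph{uniquely induces} its revealed information policy up to the cases covered by Lemma~\ref{lem: SCR basic properties from IP basic properties}\eqref{lem: SCR basic properties from IP basic properties-strictness} (linearly independent SCRs qualify, since distinctness of their revealed posteriors forces any $\scrb\neq\scr$ sharing $\ip^\scr$ to differ on a common-support action's posterior). For any $\scrb\neq\scr$ in $\Dom$ and $\wt\in(0,1)$, I would split into two cases. If $\ip^{\scr}\neq\ip^{\scrb}$, then strict convexity of $\icost$ on $\Domicost$ makes the second inequality in~\eqref{costconvex} strict. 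If instead $\ip^{\scr}=\ip^{\scrb}$, then by the good-class hypothesis there is an action $a\in\supp(\scr)\cap\supp(\scrb)$ with $\posterior^{\scr}_a\neq\posterior^{\scrb}_a$, so Lemma~\ref{lem: SCR basic properties from IP basic properties}\eqref{lem: SCR basic properties from IP basic properties-strictness} gives $(1-\wt)\ip^{\scr}+\wt\ip^{\scrb}\succ\ip^{(1-\wt)\scr+\wt\scrb}$ and monotonicity of $\icost$ makes the first inequality strict. Either way $\cost((1-\wt)\scr+\wt\scrb)<(1-\wt)\cost(\scr)+\wt\cost(\scrb)$, so $\cost$ is strictly convex through $\scr$. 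The subgradient characterization of rationalizability then yields unique rationalizability exactly as in Proposition~\ref{prop: linear independence implies unique rationalizability}: a supporting hyperplane of a function strictly convex through $\scr$ touches the graph only at $\scr$.

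For ingredient (b), I would transport the density, openness, and open-set-of-utilities conclusions of Theorem~\ref{thm: dense set of uniquely rationalizable SCRs}. Uniform density of rationalizable SCRs in $\Dom$ is Proposition~\ref{prop: dense set of rationalizable SCRs}\eqref{imposs:dense}; the task is to show one can stay within the good class. Here the hypothesis $\Domicost\neq\{\delta_\prior\}$ guarantees $\Dom$ is nontrivial, and I would argue that among rationalizable SCRs those with distinct revealed posteriors are dense: whenever a rationalizable $\scr$ has two actions sharing a revealed posterior, $\cost$ is affine along the corresponding redistribution segment, and a small perturbation of the rationalizing utility selects, under strict convexity of $\icost$, a nearby optimum that separates those posteriors. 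When $\Omega$ is finite, the good-class condition is open, so intersecting it with the relative interior of $\Dom$ (every point of which is rationalizable by Proposition~\ref{prop: dense set of rationalizable SCRs}\eqref{imposs:finite}) delivers the claimed open, dense set; the open set of rationalizing utilities follows because strict convexity makes these SCRs exposed points, whose subgradients persist under small utility perturbations.

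The main obstacle is the density step within the good class. In Theorem~\ref{thm: dense set of uniquely rationalizable SCRs} this was underwritten by Assumption~\ref{ass: regularity}, whose cardinality and nonempty-interior parts guaranteed enough rationalizable SCRs with distinct (indeed affinely independent) revealed posteriors. Under the present hypotheses $\Domicost$ may be low-dimensional, so I cannot invoke linear independence directly and must instead verify that the weaker good-class property (distinct revealed posteriors, sufficient for ingredient (a)) is still generic \emph{relative to} $\Dom$. Making the perturbation argument precise — that redistributing probability among posterior-sharing actions can be broken by an arbitrarily small utility change while preserving rationalizability and proximity — is the delicate point and the place where strict convexity of $\icost$, rather than strict monotonicity, does the real work.
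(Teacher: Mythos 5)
Your proposal reproduces the paper's high-level architecture (a ``good class'' of full-support SCRs with pairwise distinct revealed posteriors, strict convexity of $\cost$ through such SCRs, then a density/openness argument plus an open-set-of-utilities step), but the step you yourself flag as delicate---density of the good class in $\Dom$---is a genuine gap, and the mechanism you sketch for it would fail. Perturbing the rationalizing utility does not ``separate'' tied posteriors: when $\posterior^\scr_a=\posterior^\scr_b$, the cost $\cost$ is constant along the redistribution segment between $a$ and $b$, so perturbing $\ut_a$ relative to $\ut_b$ makes the objective strictly monotone along that segment and drives the optimum to an endpoint---i.e., it \emph{drops} one of the tied actions (exiting the full-support class) rather than splitting the posteriors; and even granting a well-behaved perturbed optimum, norm-to-weak* upper hemicontinuity of $\partial\maxval$ yields only weak* proximity, whereas the proposition demands \emph{uniform} (norm) density. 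The paper instead perturbs the SCR, not the utility: it proves that the set $\SCR^1$ of SCRs with $\supp(\scr)=A$ and $|A|$ distinct revealed posteriors is weak* open and norm dense in $\Dom$ by explicit mixing constructions---mix toward the uniform (uninformative) SCR to get full support; then, for a pair $a,b$ sharing revealed posterior $\posterior$, mix toward the SCR with $\scr^*_a=\mathbf1$ if $\posterior\neq\prior$, and, if $\posterior=\prior$, toward an SCR built from a nondegenerate binary-support policy in $\Domicost$, which exists precisely because $\Domicost\neq\{\delta_\prior\}$. That hypothesis does its real work only in this last case; your proposal never uses it beyond asserting $\Dom$ is ``nontrivial.'' The weak* openness of $\SCR^1$ (which the paper establishes for all $\Omega$, not just finite $\Omega$) is also load-bearing: it is what makes the intersection with the dense set of rationalizable SCRs from Proposition~\ref{prop: dense set of rationalizable SCRs} dense, and it is required by Lemma~\ref{lem: open set of utilities for weak* open set of uniquely rationalized SCRs} for the open-set-of-utilities conclusion, where your ``exposed points'' remark is an assertion rather than an argument.

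Two further points on your ingredient (a). First, in the case $\ip^{\scr}=\ip^{\scrb}$ you claim ``monotonicity of $\icost$ makes the first inequality strict''; as stated this is false, since the paper's monotonicity is weak and strict monotonicity is not among this proposition's hypotheses. The step is repairable: strict convexity plus weak monotonicity implies strict monotonicity (if $\ip\succ\ipb$ with both in $\Domicost$, then $\tfrac12\ip+\tfrac12\ipb\succeq\ipb$, so $\icost(\ipb)\leq\icost(\tfrac12\ip+\tfrac12\ipb)<\tfrac12\icost(\ip)+\tfrac12\icost(\ipb)$, giving $\icost(\ipb)<\icost(\ip)$), but you must supply this argument. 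The paper sidesteps the issue via Lemma~\ref{lem: indirect cost is almost strictly convex if icost is strictly convex}: any two $\ut$-rationalizable SCRs share the same information policy (strict convexity applied to their midpoint) and have proportional components (the contrapositive of Lemma~\ref{lem: SCR basic properties from IP basic properties}\eqref{lem: SCR basic properties from IP basic properties-strictness} applied to the rationalizable midpoint), and uniqueness on the good class follows by matching atoms. Second, your good class must explicitly include full support, not just distinct revealed posteriors: if some action is unused, relabeling a used action onto the unused one produces a different SCR with the identical information policy and identical posteriors on the common support, so $\cost$ is affine through $\scr$ and unique rationalizability fails.
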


As the above result highlights, some rationalizable SCRs need not be uniquely rationalizable even when $\icost$ is strictly convex. The reason is that some convex combinations of SCRs change the way the agent randomizes over actions conditional on her information without changing the information itself. For example, suppose the action set is binary, $A=\{0,1\}$, and take $\scr$ and $\scrb$ to be the SCRs that respectively take action $1$ and $0$ regardless of the state. Clearly, both SCRs reveal an uninformative information policy, $\ip^{\scr} = \ip^{\scrb} = \delta_{\prior}.$ Moreover, the same is true for any convex combination of $\scr$ and $\scrb$, because any such combination results in the agent's actions being independent of the state. It follows the cost of any such convex combination is identical to the cost of $\scr$ and $\scrb$. In other words, even when $\icost$ is strictly convex, $\cost$ is still affine over some line segments. 

To prove Proposition~\ref{prop: dense set of uniquely rationalizable SCRs under strict convexity}, we use strict convexity of $\icost$ to identify regions where $\cost$ is strictly convex. Specifically, we show $\cost$ is strictly convex over any line segment with an end point that satisfies the following property: every action is used with positive probability, and no two actions reveal the same posterior. Therefore, any rationalizable SCR with this property is uniquely rationalizable. The proposition's proof then proceeds as the proof of Theorem~\ref{thm: dense set of uniquely rationalizable SCRs}, but with SCRs with the previously mentioned property taking the role of the set of linearly independent SCRs.

Next, we note substituting strict convexity of $\icost$ for \ref{ass: regularity} does not alter the conclusions of Theorem~\ref{thm: cross-choice predictions}. 
\begin{proposition}\label{prop: cross-choice predictions under strict convexity}
If $\icost$ is strictly convex on its domain, $|\Omega|>1$, and \ref{ass: smoothness} holds, the set of SCRs that yield unique subset predictions is weak* dense.
\end{proposition}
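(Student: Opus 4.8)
The plan is to replay the proof of Theorem~\ref{thm: cross-choice predictions} almost verbatim, with Proposition~\ref{prop: dense set of uniquely rationalizable SCRs under strict convexity} playing the role that Theorem~\ref{thm: dense set of uniquely rationalizable SCRs} plays there. The only genuinely new bookkeeping is to confirm the hypotheses line up. Assumption~\ref{ass: smoothness}\eqref{ass: smoothness, part finite} forces $\Dom=\SCR$, and since $\prior$ has full support and $|\Omega|>1$, splitting $\prior$ into two distinct posteriors is a nondegenerate simple information policy, which Assumption~\ref{ass: smoothness}\eqref{ass: smoothness, part finite} places in $\Domicost$; hence $\Domicost\neq\{\delta_\prior\}$ and Proposition~\ref{prop: dense set of uniquely rationalizable SCRs under strict convexity} applies, yielding a uniformly dense set of uniquely rationalizable SCRs that is rationalized by an open set of utilities. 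The SCRs it produces have full support and distinct revealed posteriors, and Assumption~\ref{ass: smoothness}\eqref{ass: smoothness, part inada}'s infinite-slope condition upgrades any rationalizable full-support SCR to conditionally full support. This ``distinct posteriors'' property is the substitute for the ``linearly independent'' SCRs used under \ref{ass: regularity}, and it is exactly why $|\Omega|>1$ replaces the cardinality clause \ref{ass: regularity}\eqref{ass: regularity, part cardinality}: distinctness of $|A|$ posteriors is available already in the one-dimensional simplex $\DO$, whereas affine independence is not.

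With that input secured, I would run the cross-menu argument unchanged. First, applying Proposition~\ref{prop: unique prediction with known benefits} to each of the finitely many restricted problems over $\SCR_\menu$, $\menu\in\Menu$, and using that a finite union of meager and shy sets is meager and shy, the set $\UT_\Menu$ of utilities attaining a unique optimum in every submenu has meager and shy complement, hence is dense. Next, using the continuity property of the subdifferential together with density of $\UT_\Menu$ and the openness of the rationalizing utilities from Proposition~\ref{prop: dense set of uniquely rationalizable SCRs under strict convexity}, I would approximate any uniquely rationalizable $\scr$ with conditionally full support by SCRs $\scr'$ that are each rationalized by some $\ut'\in\UT_\Menu$, keeping each $\scr'$ conditionally full support via the infinite-slope condition (or, when $\Omega$ is finite, openness of the conditionally full-support SCRs).

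To finish, I would apply Proposition~\ref{prop: multiplier result for iteratively differentiable case} at each such $\scr'$. Since $\scr'$ has conditionally full support, $\ip^{\scr'}$ is simple and fully mixed, so $\icost$ is iteratively differentiable there with some derivative $\icostd$; the proposition then shows every utility rationalizing $\scr'$ equals $\ut^{\scr'}$ (with $\ut^{\scr'}_a=\icostdd_{\posterior^{\scr'}_a}$) plus an action-independent nuisance term, whose addition changes no menu's optimizer. Because one such rationalizing utility is $\ut'\in\UT_\Menu$, which pins down a unique optimizer in each $\menu\in\Menu$, all utilities rationalizing $\scr'$ agree on a single optimizer per submenu; hence $\scr'$ yields unique subset predictions. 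As the $\scr'$ weak* approximate a uniformly dense family of SCRs, the set of SCRs yielding unique subset predictions is weak* dense.

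The hard part will be the verification in the first paragraph rather than any new analysis: confirming that Proposition~\ref{prop: dense set of uniquely rationalizable SCRs under strict convexity}'s output still feeds the multiplier machinery, i.e., that its uniquely rationalizable SCRs---characterized by full support and merely distinct (not affinely independent) revealed posteriors---can be promoted to conditionally full support by Assumption~\ref{ass: smoothness}\eqref{ass: smoothness, part inada} before Proposition~\ref{prop: multiplier result for iteratively differentiable case} is invoked. Once this and the reduction $\Domicost\neq\{\delta_\prior\}$ are in hand, the approximation and multiplier steps are identical to those in the proof of Theorem~\ref{thm: cross-choice predictions}.
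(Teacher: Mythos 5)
Your proposal is correct and follows essentially the same route as the paper: the paper's own proof verifies that Assumption~\ref{ass: smoothness}\eqref{ass: smoothness, part finite} together with $|\Omega|>1$ yields $\Domicost\neq\{\delta_\prior\}$, invokes Proposition~\ref{prop: dense set of uniquely rationalizable SCRs under strict convexity} for a norm-dense set of uniquely rationalizable SCRs, and then cites Lemma~\ref{lem: dense unique subset predictions from dense uniquely rationalizable}, whose proof is exactly the cross-menu argument (dense $\UT_\Menu$ via Proposition~\ref{prop: unique prediction with known benefits}, subdifferential upper hemicontinuity, the conditionally-full-support upgrade via Assumption~\ref{ass: smoothness}\eqref{ass: smoothness, part inada}, and the multiplier/nuisance-term step) that you replay inline. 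The only difference is packaging: the paper factors that argument into a reusable lemma, while you re-derive it with Proposition~\ref{prop: multiplier result for iteratively differentiable case} in place of Corollary~\ref{cor: multiplier result for conditionally full-support, iteratively differentiable case}, which is equivalent for conditionally full-support SCRs.
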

The argument for Proposition~\ref{prop: cross-choice predictions under strict convexity} follows the same reasoning as Theorem~\ref{thm: cross-choice predictions}. We refer the reader to the appendix for the specific details.

\paragraph*{\emph{Subdifferentials, Rationalizability, and Posterior Separable Costs.}}

Among their many contributions, \cite{caplin2021rationally} also identify a connection between rationalizability and subdifferentiability for the case in which $\icost$ is posterior separable. More specifically, \cite{caplin2021rationally} show that if $\Omega$ is finite and $\icost$ is posterior separable, $\scr$ is rationalizable if and only if $\icostd$ has a nonempty subdifferential at all beliefs in the support of $\scr$'s revealed information policy; that is, $\partial \icostd(\posterior_{a}^{\scr})\neq \varnothing$ for all $a \in \supp\ \scr$.\footnote{See the definition of $\partial \icostd(\posterior_{a}^{\scr})$ in Appendix \ref{app: On Iterative Differentiability}.} 
Because rationalizability of $\scr$ is equivalent to $\partial \cost(\scr)$ being nonempty, \citeapos{caplin2021rationally} result delivers the following conclusion: whenever $\Omega$ is finite and $\icost$ is posterior separable, $\partial \cost(\scr)$ is nonempty if and only if $\partial\icostd(\posterior_{a}^{\scr})$ is nonempty for all $a \in \supp \ s$. 

One can decompose \citeapos{caplin2021rationally} argument into two. First, they show one can construct a utility function that rationalizes $\scr$ by setting $\ut_{a}$ to be an appropriately normalized member of $\partial \icostd(\posterior_{a}^{\scr})$. In the appendix, we show \citeapos{caplin2021rationally} construction extends to infinite states and the case in which costs are merely differentiable. In other words, we show $\partial \cost(\scr)$ is nonempty whenever $\icost$ admits a derivative $\icostd$ at $\ip^{\scr}$ for which $\partial \icostd(\posterior_{a}^{\scr})$ is nonempty for all $a \in \supp (\scr)$. 

\cite{caplin2021rationally} also establish a converse: when costs are posterior separable and the state is finite, $\scr$ is rationalizable \emph{only if} $\partial \icostd(\posterior_{a}^{\scr})$ is nonempty for all $a$ in $\scr$'s support. To prove this claim, \cite{caplin2021rationally} prove a duality result to obtain a Kuhn-Tucker-like necessary condition for $\scr$ to be $\ut$-rationalizable, and show adding the relevant multiplier to $\ut_{a}$ witnesses  $\partial \icostd(\posterior_{a}^{\scr})$ being nonempty. Our results imply this approach generalizes to differentiable costs as well. The reason is that, under Lemma~\ref{lem: posterior separable approximation characterizes optimality}'s conditions, $\scr$ is rationalizable if and only if it is rationalizable by $\icost$'s  posterior separable approximation at $\ip^{\scr}$. Thus, given $\scr$ and a finite-valued cost that admits a derivative $\icostd$ at $\ip^{\scr}$, the SCR $\scr$ is rationalizable only if $\partial \icostd(\posterior_{a}^{\scr})$ is nonempty at all $a \in \supp (\scr)$---\emph{provided} the state is finite. Finite states are necessary because \citeapos{caplin2021rationally} duality result may not apply with infinite states. To establish similar duality results for the infinite state case, one usually needs additional regularity conditions \citep[see, e.g.,][]{gretsky2002subdifferentiability,dworczak2019persuasion}. Lacking such a result or an alternative proof method, we do not know whether $\partial \cost(\scr)$ being nonempty implies the nonemptiness of $\partial \icostd(\posterior^{\scr}_{a})$ for all $a\in \supp (\scr)$ when the state is infinite.

The construction from the first part of \citeapos{caplin2021rationally} argument delivers an alternative way of proving Proposition~\ref{prop: dense set of rationalizable SCRs} for the case in which the state is finite and costs are posterior separable. Given our above-mentioned result, the same argument extends to costs that are differentiable. For an explanation, recall the subdifferential of a convex function is nonempty over the relative interior of its domain \citep[see, e.g.,][ Theorem 23.4]{rockafellar1970convex}, which is always nonempty in finite dimensions. Because an interior $\scr$ reveals only posteriors that have full support, and because all such posteriors are interior when the state is finite, one gets that, when $\Omega$ is finite, $\partial\icostd (\posterior^{\scr}_{a})\neq \varnothing$ for all $a$ whenever $\scr$ is interior. Therefore, when costs are differentiable, one can use \citeapos{caplin2021rationally} construction to prove Proposition~\ref{prop: dense set of rationalizable SCRs} for the finite-state case. However, with infinite states,  \citeapos{caplin2021rationally} approach does not deliver an immediate proof for Proposition~\ref{prop: dense set of rationalizable SCRs}. The reason is that the relative interior of $\icostd$'s domain is empty, and so $\partial \icostd(\posterior^{\scr}_{a})$ may be empty as well. By focusing on the subdifferential of $\cost$ (which is well defined even when $\icost$ is not differentiable), our proof not only avoids this issue, but also establishes Proposition~\ref{prop: dense set of rationalizable SCRs} for a more general class of cost functions.

\paragraph*{\emph{Convexity and Monotonicity.}}

In our analysis, we assumed $\icost$ is monotone and convex. We now explain these two assumptions are essentially without loss. In particular, we argue the indirect cost function generated by every lower semicontinuous and proper $\hat{\icost}$ is identical to the indirect cost generated by a convex and monotone cost function.

To get the result, we must first redefine the agent's indirect cost function so as to allow randomization over information policies. Specifically, we let the agent choose a distribution over information policies, $\mip \in \Delta \Info$. We say such a distribution \textbf{can induce} an SCR $\scr$  if some action strategy $\alpha : \DO \rightarrow \Delta A$ is such that, for every $a \in A$ and every event $\tilde{\Omega}$,
\[
\int \alpha (a|\posterior) \posterior(\hat{\Omega}) \ \ip(\dd \posterior) \ \mip(\dd \ip) = \bbExp{\mathbf{1}_{\tilde{\Omega}}\scr_{a}}.
\]
The indirect cost function is then given by 
\begin{equation}\label{eq:Indirect Cost with Mixtures}
\cost(\scr) = \inf_{\mip \in \Delta\Info} \int \hat{\icost}(\ip)\ \mip(\dd\ip) \text{ s.t. }\mip \text{ can induce }\scr.
\end{equation}
Note randomization is not necessary when $\icost$ is convex, in which case the above reduces to our previous definition of $\cost$. The next result shows a sense in which such convexity always holds. 
\begin{proposition}\label{prop: monotonicity and convexity are wlog}
Let $\cost$ be the indirect cost function induced by a lower semicontinuous and proper $\hat{\icost}:\Info \rightarrow \extreal$. Then, the infimum in \eqref{eq:Indirect Cost with Mixtures} is attained. Moreover, $\cost$ is also induced by some cost function $\icost$ that is lower semicontinuous, proper, convex, and monotone.
\end{proposition}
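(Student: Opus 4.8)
The plan is to reduce the mixture problem to an ordinary information policy and then build $\icost$ explicitly as a value function. The key first step is the observation that randomizing over policies and then revealing which one was drawn is informationally equivalent to observing their barycenter. Concretely, given $\mip\in\Delta\Info$ write $\bar\ip:=\int\ip\,\mip(\dd\ip)\in\Info$ for its barycenter; this lies in $\Info$ because each $\ip$ averages back to $\prior$, so $\bar\ip$ does too. Since the action strategy $\alpha$ in the definition of ``can induce'' depends only on the posterior $\posterior$, the defining identity
\[
\int\int\alpha(a|\posterior)\,\posterior(\hat\Omega)\,\ip(\dd\posterior)\,\mip(\dd\ip)=\bbExp{\mathbf1_{\hat\Omega}\scr_a}
\]
is exactly the statement that the single policy $\bar\ip$ can induce $\scr$, which by Lemma~\ref{lem: which policies can induce} holds if and only if $\bar\ip\succeq\ip^{\scr}$. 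Hence the feasible sets coincide and I may rewrite $\cost(\scr)=\inf\{\int_{\Info}\hat\icost\,\dd\mip:\ \bar\ip\succeq\ip^{\scr}\}$. I then \emph{define}, for every $\ipb\in\Info$,
\[
\icost(\ipb):=\inf\Big\{\textstyle\int_{\Info}\hat\icost\,\dd\mip:\ \mip\in\Delta\Info,\ \bar\ip\succeq\ipb\Big\},
\]
so that $\cost(\scr)=\icost(\ip^{\scr})$ holds by construction.

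Next I would assemble the topological ingredients. Since $\Omega$ is compact metric, $\DO$ is compact metrizable, $\Info$ is a compact convex subset of $\Delta\DO$, and $\Delta\Info$ is weak* compact. The barycenter map $\mip\mapsto\bar\ip$ is weak* continuous; the order $\succeq$ is a closed subset of $\Info\times\Info$, using the characterization $\ip\succeq\ipb$ iff $\int f\,\dd\ip\ge\int f\,\dd\ipb$ for every continuous convex $f$ (Strassen's theorem); and $\hat\icost$, being lsc on the compact set $\Info$, is bounded below, so $\mip\mapsto\int\hat\icost\,\dd\mip$ is weak* lsc. For a fixed $\ipb$ the feasible set $\{\mip:\bar\ip\succeq\ipb\}$ is the continuous preimage of a closed up-set, hence compact, and it is nonempty since it contains $\delta_{\ipb}$. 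An lsc objective over a nonempty compact set attains its infimum; taking $\ipb=\ip^{\scr}$ gives attainment in \eqref{eq:Indirect Cost with Mixtures}. Lower semicontinuity of $\icost$ then follows by a subsequence argument: given $\ipb_n\to\ipb$ along which $\icost(\ipb_n)$ converges, I pick minimizers $\mip_n$, extract a weak* limit $\mip^{*}$, use continuity of the barycenter and closedness of $\succeq$ to get $\bar\ip^{*}\succeq\ipb$, and use lsc of the objective to conclude $\icost(\ipb)\le\int\hat\icost\,\dd\mip^{*}\le\liminf_n\icost(\ipb_n)$.

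I would then verify the remaining structure. Monotonicity is immediate: if $\ipb\succeq\ipb'$, the feasible set for $\ipb$ is contained in that for $\ipb'$, so $\icost(\ipb)\ge\icost(\ipb')$. Properness holds because $\hat\icost$ is bounded below (so $\icost>-\infty$) and finite at some policy (so $\icost\not\equiv\infty$). For convexity I take near-optimal $\mip_1,\mip_2$ for $\ipb_1,\ipb_2$ and set $\mip=\wt\mip_1+(1-\wt)\mip_2$; its barycenter is $\wt\bar\ip_1+(1-\wt)\bar\ip_2$, which dominates $\wt\ipb_1+(1-\wt)\ipb_2$ because the convex-function characterization of $\succeq$ is preserved under convex combinations, so $\mip$ is feasible and witnesses $\icost(\wt\ipb_1+(1-\wt)\ipb_2)\le\wt\icost(\ipb_1)+(1-\wt)\icost(\ipb_2)$. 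Finally, to confirm this $\icost$ genuinely induces $\cost$, note that for convex $\icost$ Jensen's inequality yields $\int\icost\,\dd\mip\ge\icost(\bar\ip)$, and monotonicity gives $\icost(\bar\ip)\ge\icost(\ip^{\scr})$ whenever $\bar\ip\succeq\ip^{\scr}$, while $\mip=\delta_{\ip^{\scr}}$ attains $\icost(\ip^{\scr})$; hence the mixture-indirect cost generated by $\icost$ equals $\icost(\ip^{\scr})=\cost(\scr)$.

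I expect the main obstacle to be the informational reduction of the first step---checking that mixing over policies adds nothing beyond the barycenter, so that Lemma~\ref{lem: which policies can induce} applies verbatim---together with the correct handling of the Blackwell order's topology, namely its closedness and its stability under convex combinations. Both of these I would route through the convex-function (Strassen) characterization of $\succeq$, which is what makes convexity and lower semicontinuity of the value function $\icost$ fall out cleanly.
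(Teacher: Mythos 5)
Your proposal is correct and follows essentially the same route as the paper: the same reduction of mixtures to their barycenters via Lemma~\ref{lem: which policies can induce}, the same definition of $\icost$ as the constrained value function over mixtures (the paper's Lemma~\ref{lem: convex monotone envelope}), and the same reliance on the convex-function (HLPBSSC/Strassen) characterization of $\succeq$ for closedness, transitivity, and convexity. The only differences are presentational: where you establish attainment and lower semicontinuity by hand with compactness and subsequence arguments and spell out the concluding Jensen-plus-monotonicity verification, the paper packages the former into a citation of Berge's maximum theorem and leaves the latter implicit in its earlier remark that randomization is unnecessary once the cost is convex.
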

Our argument begins by observing $\mip$ can induce $\scr$ if and only if its mean, $\ipb=\int \ip \ \mip(\dd\ip)$, is more informative than $\scr$'s revealed information policy $\ip^{\scr}$. It follows the agent's indirect cost function remains unchanged if we replace $\hat{\icost}$ with a cost function $\icost$ that assign each $\ip$ with the expected cost (under $\hat{\icost}$) of the cheapest randomization whose mean is more informative than $\ip$. We then show Berge's theorem guarantees $\icost$ is lower semicontinuous and that monotonicity and convexity of $\icost$ follow from $\succeq$ being a transitive order that respects convex combinations. 

\cite{caplin2015revealed} use a similar construction to show every behavior generated from costly flexible learning in a finite collection of menus can be rationalized using a convex and monotone cost function. \cite{de2017rationally} prove a representation result for preferences over menus with similar implications. In particular, they show one can always take the cost of $\ip$ to equal the maximum difference between the agent's benefit from using $\ip$ in some menu and her certainty equivalent for that menu. Moreover, the resulting cost function is the unique cost function that is simultaneously convex, monotone, zero at no information, and consistent with the agent's preferences over menus.

\paragraph*{\emph{Continuous Choice with Bounded Utilities.}}
Proposition~\ref{prop: continuous choice is impossible} shows the only way to guarantee continuous choice across all objective functions is to require all discontinuous SCRs to have infinite cost. We now explain one can avoid the use of infinite costs if one is willing to require the agent's choice to be continuous for all \emph{bounded} utility functions. In particular, $\icost$ generates continuous choice for all bounded utility functions if and only if it satisfies an infinite-slope condition. 
\begin{samepage}
\begin{proposition}
\label{prop: continous choice with bounded utilities}
An SCR $\scr$ is not rationalizable by any bounded utility function if and only if 
\begin{equation}\label{eq: infinite L1 steepness}
    \inf_{\scrb\in \Dom\setminus \{\scr\}} \frac{\cost(\scr) - \cost(\scrb)}{\Vert \scr - \scrb\Vert_{1}} = - \infty. 
\end{equation}
In particular, only continuous SCRs are rationalizable by a bounded utility function if and only if \eqref{eq: infinite L1 steepness} holds for all discontinuous $\scr$.
\end{proposition}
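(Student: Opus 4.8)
The plan is to recast the statement in terms of subgradients of $\cost$ and then read boundedness of the rationalizing utility off infinite-dimensional convex calculus. Recall that $\scr$ is $\ut$-rationalizable exactly when $\ut\in\partial\cost(\scr)$, so $\scr$ is rationalizable by a \emph{bounded} utility iff $\partial\cost(\scr)$ contains an element of $\lp{\infty}^A$. To make ``bounded'' automatic, I would view $\cost$ as a proper convex function $\bar\cost$ on the Banach space $\lp{1}^A=\UT$ (equal to $\cost$ on the convex set $\Dom$, and $+\infty$ off it, using Corollary~\ref{cor: Properties of Indirect Cost}); since $(\lp{1}^A)^\ast=\lp{\infty}^A$, a subgradient of $\bar\cost$ at $\scr$ in the Banach-space sense is precisely a bounded utility whose subgradient inequality holds on all of $\Dom$. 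The case $\scr\notin\Dom$ is immediate, since then $\scr$ is rationalizable by no utility and $\cost(\scr)=\infty$ makes \eqref{eq: infinite L1 steepness} hold; so assume $\scr\in\Dom$, and the task reduces to showing $\partial\bar\cost(\scr)\neq\varnothing$ iff \eqref{eq: infinite L1 steepness} fails.

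For the ``only if'' direction I would argue by contraposition. If $\ut\in\partial\bar\cost(\scr)$ with $M:=\max_{a}\Vert\ut_a\Vert_\infty<\infty$, then the subgradient inequality $\cost(\scrb)\geq\cost(\scr)+\bbExp{\ut\cdot(\scrb-\scr)}$ combined with the H\"older bound $\bbExp{\ut\cdot(\scrb-\scr)}\geq -M\Vert\scrb-\scr\Vert_1$ gives $\cost(\scrb)-\cost(\scr)\geq -M\Vert\scrb-\scr\Vert_1$ for every $\scrb\in\Dom$. Thus the one-sided $\lp{1}$-difference quotient of $\cost$ at $\scr$ stays bounded, which is the negation of the infinite-steepness condition \eqref{eq: infinite L1 steepness}. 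Hence \eqref{eq: infinite L1 steepness} forces $\partial\bar\cost(\scr)=\varnothing$, i.e.\ non-rationalizability by any bounded utility.

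The substantive direction is the converse. Negating \eqref{eq: infinite L1 steepness} furnishes a finite modulus $M\geq 0$ with $\cost(\scrb)\geq\cost(\scr)-M\Vert\scrb-\scr\Vert_1$ for all $\scrb\in\Dom$ (and trivially for $\scrb\notin\Dom$). Equivalently, $\scr$ minimizes $\bar\cost(\cdot)+M\Vert\cdot-\scr\Vert_1$ over $\lp{1}^A$, so $0\in\partial\!\left[\bar\cost+M\Vert\cdot-\scr\Vert_1\right](\scr)$. Since $M\Vert\cdot-\scr\Vert_1$ is finite and continuous on the whole space, the Moreau--Rockafellar sum rule applies (no lower semicontinuity of $\bar\cost$ is needed, only continuity of the norm summand at a point of $\dom\bar\cost$), giving $0\in\partial\bar\cost(\scr)+M\,\partial(\Vert\cdot-\scr\Vert_1)(\scr)$. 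The subdifferential of a norm at its center is the dual unit ball, $\partial(\Vert\cdot-\scr\Vert_1)(\scr)=\{\ut\in\lp{\infty}^A:\max_a\Vert\ut_a\Vert_\infty\leq 1\}$, so I obtain some $\ut\in\partial\bar\cost(\scr)$ with $\max_a\Vert\ut_a\Vert_\infty\leq M$: a bounded utility rationalizing $\scr$.

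The main obstacle is exactly this last step, the infinite-dimensional subdifferential calculus: one must invoke the sum rule under its qualification condition and correctly identify $\partial(\Vert\cdot-\scr\Vert_1)(\scr)$ with the $\lp{\infty}$ dual ball---it is this identification that converts the calmness modulus $M$ into a \emph{uniform} bound on the rationalizing utility, producing a \emph{bounded} utility rather than a merely integrable one. (Equivalently, one can run a Hahn--Banach sandwich between $\bar\cost$ and the Lipschitz minorant $\scrb\mapsto\cost(\scr)-M\Vert\scrb-\scr\Vert_1$, the continuity of the latter guaranteeing a non-vertical, hence $\lp{1}$-continuous, separating functional.) The concluding ``in particular'' clause is then immediate: ``only continuous SCRs are rationalizable by a bounded utility'' is, by contraposition, the statement that every discontinuous $\scr$ fails to be rationalizable by a bounded utility, which by the equivalence just proved holds iff \eqref{eq: infinite L1 steepness} holds at every discontinuous $\scr$.
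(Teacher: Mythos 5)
You prove the intended statement correctly, and your setup coincides with the paper's: its Lemma~\ref{lem: rationalizability by utilities in a subspace} likewise views $\cost$ as a proper convex function on $\lp{1}^A$, identifies the continuous linear functionals on that space with the bounded utilities, and equates rationalizability by a bounded utility with nonemptiness of the resulting subdifferential. The difference is at the crux. The paper outsources the equivalence ``subdifferential nonempty if and only if bounded steepness'' to \citeapos{gale1967geometric} duality theorem; you instead prove it from scratch---the easy direction by H\"{o}lder, and the substantive direction by noting that a finite steepness modulus $M$ makes $\scr$ a global minimizer of $\bar\cost+M\Vert\cdot-\scr\Vert_1$, then applying the Moreau--Rockafellar sum rule (your qualification check is right: the norm summand is everywhere continuous, so no lower semicontinuity of $\bar\cost$ is needed) and identifying $\partial\left(\Vert\cdot-\scr\Vert_1\right)(\scr)$ with the dual unit ball of $\lp{\infty}^A$. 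In effect you re-derive exactly the special case of Gale's theorem that the paper cites. What each route buys: yours is self-contained and makes explicit how the calmness modulus $M$ turns into the sup-norm bound on the rationalizing utility; the paper's lemma, being stated for an arbitrary subspace $\UTB$ paired with an arbitrary norm on a superspace of $\SCRU$, also delivers the $\lp{2}$- and Lipschitz-utility variants discussed in its online appendix at no extra cost.

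One caveat. Throughout, you negate \eqref{eq: infinite L1 steepness} as ``$\cost(\scrb)-\cost(\scr)\geq -M\Vert\scrb-\scr\Vert_1$ for all $\scrb\in\Dom$,'' i.e., you read the condition as unbounded \emph{descent} steepness, with numerator $\cost(\scrb)-\cost(\scr)$. That reading matches the paper's own Lemma~\ref{lem: rationalizability by utilities in a subspace} and is the correct one, but the display \eqref{eq: infinite L1 steepness} as printed has the numerator reversed, which is unbounded \emph{ascent} steepness; as literally printed the proposition would be false (under binary-state mutual information, a fully revealing $\scr$ maximizes $\cost$, so the printed infimum is nonnegative, yet no bounded utility rationalizes $\scr$). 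The same issue arises in your treatment of $\scr\notin\Dom$, where $\cost(\scr)=\infty$ makes the corrected condition hold but the printed one fail. This is a sign typo in the paper rather than a gap in your argument, but your write-up should state explicitly that what it establishes (and what is true) is the appendix's version of the steepness condition.
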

\end{samepage}

The result is an immediate consequence of \cite{gale1967geometric}, who shows bounded steepness is a necessary and sufficient condition for the subdifferential of a convex function to contain some linear function that is continuous with respect to a given norm.\footnote{Using identical reasoning, one can replace $\Vert \cdot \Vert_{1}$ in Proposition~\ref{prop: continous choice with bounded utilities}'s statement to analogously characterize which SCRs are rationalizable by other subspaces of utilities. See Online Appendix \ref{app: Continuous Choice with Bounded Utilities} for details.} 

Proposition~\ref{prop: continous choice with bounded utilities}'s infinite-slope condition is reminiscent of a different condition by \cite{morris2021coordination}, who introduce the notion of continuous choice to study equilibrium selection in global games. Holding other players' strategies fixed, one can view the problem of each agent in their game as an instance of our model in which $\Omega\subseteq \mathbb{R}$ is an interval, $A=\{0,1\}$, payoffs are bounded, and $\Dom$ is contained in the space of all SCRs for which $\scr_{1}$ is nondecreasing. \cite{morris2021coordination} show multiplying $\cost$ by a vanishing constant leads to a sharp equilibrium-selection result, provided only SCRs in $\SCR_{\text{AC}}:=\{\scr\in\SCR:\ \scr_1 \text{ absolutely continuous}\}$ can be rationalized by some monotone and bounded utility function. They also prove this latter property holds whenever $\cost$ satisfies a condition called \emph{expensive perfect discrimination}, which states that for every $\scr \in \Dom\setminus \SCR_{\text{AC}}$, the cost function $\cost$ exhibits unbounded $\Vert\cdot\Vert_1$-steepness in the direction of SCRs in $\SCR_{\text{AC}}$. By contrast, Proposition~\ref{prop: continous choice with bounded utilities}'s condition allows $\cost$ to exhibit unbounded steepness from \emph{any} direction, and shows allowing for these additional directions results in a condition that is both necessary and sufficient for ruling out discontinuous SCRs as rationalizable by any bounded utility, including utilities that are not monotone. 

\paragraph*{\emph{Costly Stochastic Choice.}} 
In our model, we assumed the agent faces a cost to acquire information, which we then used to derive an indirect cost function over the set of SCRs. By contrast, some models formulate a cost function $\tilde\cost$ on SCRs directly, without micro-founding it via information acquisition \citep[e.g.,][]{mattsson2002probabilistic,fosgerau2020discrete,flynn2021strategic,morris2021coordination}. Some of our results apply to those models as well, provided  $\tilde\cost$ is convex, proper, and weak* lower semicontinuous. Because the arguments for Lemma~\ref{lem: unique prediction with known benefits} and Proposition~\ref{prop: dense set of rationalizable SCRs} rely only on properties of the agent's indirect cost function, both of these results also apply to models in which the cost of an SCR does not originate from information acquisition. The same holds for parts \eqref{prop: continuous choice is impossible: feasible is continuous} and \eqref{prop: continuous choice is impossible: interior means finite state} of Proposition~\ref{prop: continuous choice is impossible}, as well as Proposition~\ref{prop: continous choice with bounded utilities}. 

To get an analogue of Theorem~\ref{thm: dense set of uniquely rationalizable SCRs}, the cost $\tilde{\cost}$ must satisfy additional properties. The most obvious such property is strict convexity: if $\tilde{\cost}$ is strictly convex, every rationalizable SCR is uniquely rationalizable, and so the logic behind Proposition~\ref{prop: dense set of rationalizable SCRs} delivers a uniformly dense set of SCRs that are uniquely rationalizable. 

With strict convexity, one can also get an analogue of Theorem~\ref{thm: cross-choice predictions}, provided $\tilde\cost$ is sufficiently smooth. Say $\ut \in \UT$ is a derivative of $\tilde\cost$ at $\scr$ if for all $\scrb$,
\[
d^{+}_{\scr}\tilde{\cost}(\scrb) = \bbExp{\ut \cdot (\scrb - \scr)}.
\]
Similar to the case in which $\icost$ is iteratively differentiable, we show in the appendix that given an interior $\scr$, the cost $\tilde\cost$ admits $\ut$ as a derivative at $\scr$ only if all utilities that rationalize $\scr$ differ from $\ut$ by a nuisance term; that is, $\utb$ rationalizes $\scr$ if and only if some $\nuis\in \lp{1}$ is such that $\utb_{a} = \ut_{a} + \nuis$ for all $a$. Armed with this observation, one can repeat the arguments guaranteeing Theorem~\ref{thm: cross-choice predictions} to show a weak*-dense set of SCRs exists that induce unique subset predictions, provided $\tilde \cost$ is finite-valued, admits a derivative at any interior SCR, and has infinite slope at the edges. Thus, whereas $\tilde \cost$ imposes few restrictions on the agent's behavior in a given menu, across menus, one can still use $\tilde \cost$ to make meaningful predictions about the agent's choices.

% Bibliography
\bibliographystyle{jpe}
\bibliography{InfoCosts}

\newpage

\appendix

\begin{center}
{\LARGE \sc Online Appendix}
\end{center}

\section{Proof Appendix}

\subsection{Section~\ref{sec: cost minimization} Proofs}

\begin{proof}[Proof of Lemma~\ref{lem: which policies can induce}]
First, suppose $\ip\succeq\ip^\scr$, as witnessed by $\mps:\DO\to\Delta\DO$. Any $a\in \supp(\scr)$ then has $\mps(\cdot|\posterior^\scr_a)\ll\ip$ because $\ip$ is a proper weighted average of the finitely many measures $\left\{\mps(\cdot|\posterior^\scr_a) \right\}_{a\in\supp(\scr)}$. So let $\alpha_a:\DO\to[0,1]$ be some version of the scaled Radon-Nikodym derivative $\ip^\scr_a\ \tfrac{\dd \mps(\cdot|\posterior^\scr_a)}{\dd\ip}$ for each $a\in A$ with $\ip^\scr_a>0$; and let $\alpha_a=\mathbf0$ for every other $a\in A$. By construction, $\sum_{a\in A} \alpha_a=_{\ip\text{-a.e.}}\mathbf1$, so we can change $\{\alpha_a\}_{a\in A}$ on a $\ip$-null set to ensure the equation holds globally. Let us now show the strategy $(\ip,\alpha)$ induces $\scr$, where $\alpha:=\sum_{a\in A} \alpha_a \delta_a$. Indeed, for any action $a\in A$ and event $\hat\Omega\subseteq\Omega$, the strategy's induced probability of action $a$ being played and event $\hat\Omega$ occurring is zero (like under $\scr$) if $\ip^\scr_a$ is zero, and is otherwise equal to \begin{eqnarray*}
\int \posterior(\hat\Omega)\alpha_a(\posterior) \ \ip(\dd\posterior) 
&=& \int \posterior(\hat\Omega) \ip^\scr_a\ \tfrac{\dd r(\cdot|\posterior^\scr_a)}{\dd\ip}(\posterior) \ \ip(\dd\posterior)  \\
&=& \ip^\scr_a\int \posterior(\hat\Omega) \ \mps(\dd\posterior|\posterior^\scr_a) \\
&=& \ip^\scr_a \posterior^\scr_a(\hat\Omega) \\
&=& \bbExp{\mathbf1_{\hat\Omega} \ \scr_a}.
\end{eqnarray*}
Therefore, $\ip$ can induce $\scr$.

Conversely, suppose some strategy $(\ip,\alpha)$ induces $\scr$. 
For any $a\in A$ with $\ip^\scr_a>0$, we can define $\ipb^a\in\Delta\DO$ by letting $\ipb^a(D):=\tfrac1{\ip^\scr_a} \int_D \alpha(a|\posterior)\ \ip(\dd\posterior)$ for every Borel $D\subseteq\DO$. Every $a\in\supp(\scr)$ and every event $\hat\Omega\subseteq\Omega$ then have
$$
\int \posterior(\hat\Omega)\ \ipb^a(\dd\posterior) 
= \tfrac1{\ip^\scr_a} \int\posterior(\hat\Omega) \alpha(a|\posterior)\ \ip(\dd\posterior) 
= \tfrac1{\ip^\scr_a} \int_{\hat\Omega} \scr_a(\omega) \ \prior(\dd\omega)
= \posterior^\scr_a(\hat\Omega).
$$
Said differently, every $a\in\supp(\scr)$ has $\int \posterior\ \ipb^a(\dd\posterior) = \posterior^\scr_a$. Hence, 
$\ip = \sum_{a\in A} \ip^\scr_a \ipb^a \succeq \sum_{a\in A} \ip^\scr_a \delta_{\posterior^\scr_a}=\ip^\scr.$
\end{proof}

For the proof that follows, recall the Hardy-Littlewood-Polya-Blackwell-Stein-Sherman-Cartier theorem \citep[][p. 94]{phelps2001lectures}---hereafter, the \textbf{HLPBSSC theorem}---says $\ip,\ipb\in\Info$ satisfy $\ip\succeq\ipb$ if and only if $\int f\ \ip(\dd\posterior)\geq \int f\ \ipb(\dd\posterior)$ for every convex continuous $f:\DO\to\real$.

Our next lemma establishes several connections between stochastic choice rules and their revealed information policies. This lemma is the crucial step required for proving Lemma~\ref{lemma: Properties of Indirect Cost}.

\begin{samepage}
\begin{lemma}\label{lem: SCR basic properties from IP basic properties}
The following hold:
\begin{enumerate}[(i)]
\item\label{lem: SCR basic properties from IP basic properties-continuity} If $\sequence{\scr^{n}}$ weak* converges to $\scr$, then $\sequence{\ip^{\scr^{n}}}$ converges to $\ip^{\scr}$.\footnote{A sequence $\sequence{\scr^{n}}$ weak* converges to $\scr$ if $\bbExp{\ut_a \scr^n_a}$ 
converges to $\bbExp{\ut_a\scr_a}$ 
for all $\ut \in \UT$ and $a \in A$. Note the weak* topology on $\SCR\subset\SCRU$ is determined by its convergent sequences because the predual $\UT$ is separable.} 

\item\label{lem: SCR basic properties from IP basic properties-convexity} For any $\scr, \scrb$ in $\SCR$ and $\wt \in (0,1)$,
\begin{equation}\label{convex_info}
(1-\wt)\ip^{\scr} + \wt\ip^{\scrb} \succeq \ip^{(1-\wt)\scr + \wt \scrb}.
\end{equation}

\item\label{lem: SCR basic properties from IP basic properties-strictness} Moreover, the information ranking in \eqref{convex_info} is strict whenever some $a \in \supp (\scr) \cap \supp (\scrb)$ exists such that $\posterior_{a}^{\scr} \neq \posterior_{a}^{\scrb}$.

\end{enumerate}

\end{lemma}
\end{samepage}

\begin{proof}
Let us first prove part \eqref{lem: SCR basic properties from IP basic properties-continuity}. 
Recall $\scr^n\to \scr$ weak* in $\SCR$ tells us $$\ip^{\scr^n}_a\int f(\omega) \ \posterior^{\scr^n}_a(\dd\omega)\to \ip^{\scr}_a\int \func(\omega) \ \posterior^{\scr}_a(\dd\omega) \text{ as } n\to\infty, \ \forall a\in A \text{ and } \func\in\lp{1}.$$
Consequently, any $a\in A$ (specializing to $\func=\mathbf1$) has $\ip^{\scr^n}_a\to \ip^{\scr}_a$; and any $a\in \supp(\scr)$---scaling a given $\func\in\lp{1}$ by $\tfrac{\ip^{\scr^n}_a}{\ip^{\scr}_a}$, which converges to $1$---has $\int \func(\omega) \ \posterior^{\scr^n}_a(\dd\omega)\to \int \func(\omega) \ \posterior^{\scr}_a(\dd\omega)$ for every $\func\in\lp{1}$. Because every continuous $\func:\Omega\to\real$ represents some element of $\lp{1}$, the latter property tells us $\posterior^{\scr^n}_a\to \posterior^{\scr}_a$ in $\DO$ if $a\in\supp(\scr)$. Hence, $\ip^{\scr^n}\to \ip^{\scr}$ in $\Info$, as desired.

Now, we turn to parts \eqref{lem: SCR basic properties from IP basic properties-convexity} and \eqref{lem: SCR basic properties from IP basic properties-strictness}]
Let $\scrc:=(1-\wt)\scr+\wt\scrb$ and $\ip:=(1-\wt)\ip^\scr+\wt\ip^\scrb$ Direct computation shows $\supp(\scrc)= \supp(\scr)\cup \supp(\scrb)$ and, for every $a\in \supp(\scrc)$, \begin{eqnarray*}
\ip^\scrc_a &=& (1-\wt)\ip^{\scr}_a+\wt\ip^{\scrb}_a \\
\posterior^\scrc_a &=& \tfrac{(1-\wt)\ip^{\scr}_a}{\ip^{\scrc}_a} \posterior^{\scr}_a + \tfrac{\wt\ip^{\scrb}_a}{\ip^{\scrc}_a } \posterior^{\scrb}_a.
\end{eqnarray*}
Observe now that any convex continuous $f:\DO\to\real$ has \begin{eqnarray*}
\int f(\posterior) \ \ip(\dd\posterior) - \int f(\posterior) \ \ip^\scrc(\dd\posterior) &=& \sum_{a\in \supp(\scrc)} \left[ 
(1-\wt)\ip^{\scr}_a f\left(\posterior^{\scr}_a\right) + \wt\ip^{\scrb}_a f\left(\posterior^{\scrb}_a\right) - \ip^{\scrc}_a f\left(\posterior^{\scrc}_a\right)
 \right] \\
&=& \sum_{a\in \supp(\scrc)} \ip^{\scrc}_a \left[ 
\tfrac{(1-\wt)\ip^{\scr}_a}{\ip^{\scrc}_a} f\left(\posterior^{\scr}_a\right) + \tfrac{\wt\ip^{\scrb}_a}{\ip^{\scrc}_a } f\left(\posterior^{\scrb}_a\right) - f\left(\posterior^{\scrc}_a\right)
 \right] \\
&\geq& \sum_{a\in \supp(\scrc)} \ip^{\scrc}_a \left[ 
f\left(\tfrac{(1-\wt)\ip^{\scr}_a}{\ip^{\scrc}_a} \posterior^{\scr}_a + \tfrac{\wt\ip^{\scrb}_a}{\ip^{\scrc}_a } \posterior^{\scrb}_a\right) - f\left(\posterior^{\scrc}_a\right)
 \right] \\
 &=& \sum_{a\in \supp(\scrc)} \ip^{\scrc}_a \left[ f\left(\posterior^{\scrc}_a\right)- f\left(\posterior^{\scrc}_a\right)
 \right] \\
 &=&0.
\end{eqnarray*}
The HLPBSSC theorem then implies $\ip\succeq\ip^\scrc$, delivering part (ii). 

Toward~\eqref{lem: SCR basic properties from IP basic properties-strictness}, suppose some $a\in \supp(\scr)\cap\supp(\scrb)$ is such that $\posterior^{\scr}_{a}\neq \posterior^{\scrb}_{a}$. Now, specialize the above algebra to the case in which $\func|_{\co\left\{\posterior^{\scr}_{a}, \posterior^{\scrb}_{a}\right\}}$ is strictly convex.\footnote{For instance, $\func$ could be given by $\func(\posterior):=\left[\int \funcb(\omega)\ \posterior(\dd\omega)\right]^2$ for some continuous $\funcb:\Omega\to\real$ with $\int \funcb(\omega)\ \posterior^{\scr}_{a}(\dd\omega)\neq \int \funcb(\omega)\ \posterior^{\scrb}_{a}(\dd\omega)$.} Then, the inequality in the above chain is strict, witnessing $\int \func(\omega) \ \ip(\dd\omega) - \int \func(\omega) \ \ip^\scrc(\dd\omega)>0$ so that $\ip\succ\ip^\scrc$. The result follows.
\end{proof}

Now, we prove the indirect cost inherits the information cost's regularity properties.

\begin{proof}[Proof of Lemma~\ref{lemma: Properties of Indirect Cost}]
Because $\icost$ is proper, any constant SCR has finite cost, and so $\cost$ is proper. 
To prove that $\cost$ is convex and weak* lower semicontinuous, recall $\cost(\scr)=\icost(\ip^\scr)$. 

To see lower semicontinuity, consisder any sequence $\sequence{\scr^{n}}$ of SCRs converging to $\scr$. By Lemma~\ref{lem: SCR basic properties from IP basic properties}\eqref{lem: SCR basic properties from IP basic properties-continuity},
\[
\liminf_{n\to\infty} \cost(\scr^{n}) = \liminf_{n\to\infty} \icost(\ip^{\scr^{n}}) \geq \icost(\ip^{\scr}) = \cost(\scr),
\]
where the inequality follows from lower semicontinuity of $\icost$.

Toward convexity, take any $\scr,\scrb \in \SCR$ and $\wt \in (0,1)$. Lemma~\ref{lem: SCR basic properties from IP basic properties}\eqref{lem: SCR basic properties from IP basic properties-convexity} implies 
\begin{equation}\label{costconvex}
\begin{split}
\cost((1-\wt)\scr + \wt \scrb) 
& = \icost(\ip^{(1-\wt)\scr + \wt \scrb})
\\ & \leq \icost((1-\wt)\ip^{\scr} + \wt \ip^{\scrb})
\\ & \leq (1-\wt)\icost(\ip^{\scr}) + \wt \icost(\ip^{\scrb})
=(1-\wt)\cost(\ip^{\scr}) + \wt \cost(\ip^{\scrb}),
\end{split}
\end{equation}
where monotonicity of $\icost$ implies the first inequality, and convexity of $\icost$ delivers the second. 
\end{proof}

\subsection{On the Value Function}
Extend $\cost$ to $\SCRU$ by setting $\cost(\scr) = \infty$ for all $\scr \in \SCRU \setminus \SCR.$ The goal of this section is to prove some results regarding the optimal \textbf{value function} $\maxval:\UT \rightarrow \mathbb{R}$ defined as  
$$\maxval(\ut):=\max_{\scr \in \SCRU} \left[ \bbE[\ut\cdot\scr] - \cost(\scr) \right]=\max_{\scr \in \SCR} \left[ \bbE[\ut\cdot\scr] - \cost(\scr) \right].$$
Our results also pertain to the subdifferential of $\maxval$ at a utility $\ut$,  
\[
\begin{split}
\partial \maxval: \UT & \rightrightarrows \SCR, \\
\ut & \mapsto \left\{\scr \in \SCR:\ \maxval(\utb) \geq \maxval(\ut) + \bbExp{(\utb - \ut)\scr} \text{ for all }\utb \in \UT\right\}.
\end{split}
\]
For any subset $\SCRset \subseteq \SCR$, define the upper and lower inverses of $\partial\maxval$: 
\[
\begin{split}
\partial\maxval^{U}(\SCRset) & := \{\ut\in \UT:\ \partial\maxval(\ut) \subseteq \SCRset\},\\
\partial\maxval^{L}(\SCRset) & := \{\ut\in \UT:\ \partial\maxval(\ut) \cap \SCRset \neq \varnothing\}.
\end{split}
\]
We say $\partial\maxval$ is \textbf{norm-to-norm (resp. norm-to-weak*) upper hemicontinuous} if $\partial\maxval^{U}(\SCRset)$ is norm open whenever $\SCRset$ is norm (resp. weak*) open.

The following lemma establishes useful continuity properties of the optimal value function and its subdifferential.

\begin{lemma}\label{lem: value function continuity}
The value function $\maxval:\UT\to\real$ is convex and norm continuous, and its subdifferential is norm-to-weak* upper hemicontinuous.
\end{lemma}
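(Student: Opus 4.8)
The plan is to handle the three assertions in increasing order of difficulty, leaning on the structure of $\maxval$ as an optimal value of a concave program parametrized by $\ut$. \emph{Convexity} is immediate. Since $\cost$ is proper, convex, and weak* lower semicontinuous (Corollary~\ref{cor: Properties of Indirect Cost}), the objective $\scr\mapsto \bbExp{\ut\cdot\scr}-\cost(\scr)$ is weak*-upper-semicontinuous on the weak*-compact set $\SCR$, so the maximum is attained and $\maxval(\ut)=\sup_{\scr\in\SCR}\left[\bbExp{\ut\cdot\scr}-\cost(\scr)\right]$ is a pointwise supremum of functions that are affine in $\ut$; a supremum of affine functions is convex. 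For \emph{norm continuity}, I would show $\maxval$ is $1$-Lipschitz. Fixing $\ut,\utb\in\UT$ and an $\epsilon$-maximizer $\scr$ for $\ut$, the suboptimality of $\scr$ under $\utb$ gives
\[
\maxval(\ut)-\maxval(\utb)\ \le\ \epsilon+\bbExp{(\ut-\utb)\cdot\scr}\ \le\ \epsilon+\sum_{a\in A}\Vert \ut_a-\utb_a\Vert_1,
\]
where the last step uses $0\le\scr_a\le \mathbf1$ and $\sum_{a}\scr_a=\mathbf1$. Letting $\epsilon\searrow 0$ and swapping $\ut,\utb$ yields $|\maxval(\ut)-\maxval(\utb)|\le \sum_a\Vert\ut_a-\utb_a\Vert_1$. (Finiteness of $\maxval$, which this uses, holds because $\bbExp{\ut\cdot\scr}\le\sum_a\Vert\ut_a\Vert_1$ while $\cost(\scr)\ge\icost(\delta_\prior)>-\infty$ by monotonicity, and $\maxval$ is bounded below by the value of any finite-cost SCR.)

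For the \emph{upper hemicontinuity} of $\partial\maxval$, I would argue by sequential contradiction. Suppose $\SCRset\subseteq\SCR$ is weak* open but $\partial\maxval^{U}(\SCRset)$ fails to be norm open at some $\ut^{0}\in\partial\maxval^{U}(\SCRset)$. Then there exist $\ut^{n}\to\ut^{0}$ in norm together with subgradients $\scr^{n}\in\partial\maxval(\ut^{n})$ satisfying $\scr^{n}\notin\SCRset$. Because the predual $\UT=\lp{1}^{A}$ is separable, the weak* topology on the bounded set $\SCR$ is metrizable, so $\SCR$ is weak*-sequentially compact and the weak*-closed set $\SCR\setminus\SCRset$ is sequentially closed; passing to a subsequence, $\scr^{n}\to\scr^{*}$ weak* with $\scr^{*}\in\SCR\setminus\SCRset$. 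It then suffices to show $\scr^{*}\in\partial\maxval(\ut^{0})$, for this forces $\scr^{*}\in\SCRset$ (as $\partial\maxval(\ut^{0})\subseteq\SCRset$), contradicting $\scr^{*}\notin\SCRset$.

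To establish $\scr^{*}\in\partial\maxval(\ut^{0})$, I would pass to the limit in the subgradient inequality $\maxval(\utb)\ge\maxval(\ut^{n})+\bbExp{(\utb-\ut^{n})\cdot\scr^{n}}$, valid for every $\utb\in\UT$. Norm continuity of $\maxval$ sends $\maxval(\ut^{n})\to\maxval(\ut^{0})$. The main obstacle is the bilinear term $\bbExp{(\utb-\ut^{n})\cdot\scr^{n}}$, in which one factor converges in $\lp{1}$-norm while the other converges only weak* in $\lp{\infty}$, so neither weak* continuity nor norm continuity applies directly. I would split it as
\[
\bbExp{(\utb-\ut^{0})\cdot\scr^{n}}+\bbExp{(\ut^{0}-\ut^{n})\cdot\scr^{n}},
\]
bounding the second summand by $\sum_{a}\Vert\ut^{0}_a-\ut^{n}_a\Vert_1\to 0$ (using $\Vert\scr^{n}_a\Vert_\infty\le 1$) and sending the first to $\bbExp{(\utb-\ut^{0})\cdot\scr^{*}}$ by weak* convergence of $\scr^{n}$ tested against the \emph{fixed} function $\utb-\ut^{0}\in\UT$. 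Taking limits yields $\maxval(\utb)\ge\maxval(\ut^{0})+\bbExp{(\utb-\ut^{0})\cdot\scr^{*}}$ for all $\utb$, i.e. $\scr^{*}\in\partial\maxval(\ut^{0})$, which closes the contradiction. The only genuinely delicate point is this mixed norm/weak* passage to the limit; the remainder is a routine assembly of separability (for sequential compactness), the Lipschitz estimate, and the definition of the subdifferential.
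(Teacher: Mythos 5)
Your proof is correct, but it takes a more self-contained route than the paper, which proves all three claims by citation. For convexity the two arguments coincide (a supremum of affine functions). For norm continuity, the paper bounds $\maxval$ above by $1$ on the unit ball and then invokes the general fact that a convex function bounded above on a ball is continuous \citep[Theorem 5.43]{aliprantis2006infinite}; you instead prove a global $1$-Lipschitz estimate directly, using $\epsilon$-maximizers and the bound $0\le\scr_a\le\mathbf1$, which is both elementary and quantitatively stronger (and your side-argument for finiteness of $\maxval$, via $\cost\ge\icost(\delta_\prior)$ and properness, is sound). For upper hemicontinuity, the paper simply cites Proposition 6.1.1 of \cite{borwein2010convex}, the general norm-to-weak* upper semicontinuity of the subdifferential of a continuous convex function; you reprove this from scratch by a sequential contradiction, and your argument is watertight: the reduction to sequences is legitimate because $\SCR$ is norm-bounded and weak* closed, hence weak* compact by Banach--Alaoglu and weak* metrizable by separability of the predual $\UT$; the values of $\partial\maxval$ lie in the fixed bounded set $\SCR$, so the local boundedness of the subdifferential that drives the general theorem is automatic here; and your splitting of the bilinear term, $\bbExp{(\utb-\ut^{n})\cdot\scr^{n}}=\bbExp{(\utb-\ut^{0})\cdot\scr^{n}}+\bbExp{(\ut^{0}-\ut^{n})\cdot\scr^{n}}$, correctly resolves the only delicate point, namely that one factor converges in norm and the other only weak*. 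What the paper's approach buys is brevity and immediate generality; what yours buys is a self-contained argument that makes explicit exactly which structural features of the problem (boundedness of $\SCR$, separability of $\UT$) are doing the work, plus an explicit Lipschitz constant.
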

\begin{proof}
The value function is real-valued because (as noted in the main text as a consequence of Banach-Alaoglu) each $\ut\in\UT$ admits some $\ut$-rationalizable SCR. It is convex as a supremum of affine functions.

To show $\maxval$ is norm-continuous, we need only show \citep[][Theorem 5.43]{aliprantis2006infinite} it is bounded above on some ball. 
And indeed, any $\ut \in \UT$ with $\Vert\ut\Vert_1\leq 1$ has
$$\maxval(\ut) \leq \max_{\scr \in \SCRU:\ \cost(\scr)<\infty} \bbE[\ut\cdot\scr] 
\leq\Vert \ut \Vert_1 \max_{\scr \in \SCR}\Vert \scr \Vert_\infty
= \Vert \ut \Vert_1
\leq 1.$$ 
Finally, upper hemicontinuity of $\partial \maxval$ then follows from Proposition 6.1.1 of \cite{borwein2010convex}.
\end{proof}

Next, we collect some standard facts from convex analysis, applied directly to our setting. To state them, given $\scr \in \Dom$ and $\scr'\in\SCR$, let $\dcost{\scr}(\scr')$ denote the \textbf{directional derivative} of $\cost$ at $\scr$ in direction $\scr'-\scr$.

\begin{lemma}\label{lem: subdifferential characterization of optimality}
Viewing $\UT$ with its norm topology and $\SCRU$ with its weak* topology, $\maxval$ is the convex conjugate of $\cost$, and $\cost$ is the convex conjugate of $\maxval$. Moreover, the following are equivalent:
\begin{enumerate}[(i)]
\item $\scr \in \argmax_{\scrb \in \SCR} \left[\bbE[\ut \cdot \scrb] - \cost(\scrb)\right].$
\item $\ut \in \partial \cost(\scr).$
\item Every $\scr'\in\Dom$ has $d_{\scr}^{+}\cost (\scr') \geq \bbExp{\ut\cdot(\scr'-\scr)}$.
\item $\scr \in \partial \maxval(\ut).$
\end{enumerate}
\end{lemma}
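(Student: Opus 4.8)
The plan is to read the entire lemma as one instance of Fenchel duality for the dual pair $(\SCRU,\UT)$, where $\UT=\lp{1}^A$ is the predual of $\SCRU=\lp{\infty}^A$ under the pairing $\bbExp{\ut\cdot\scr}$. First I would dispose of the two conjugacy claims. Having extended $\cost$ to all of $\SCRU$ by $+\infty$, the definition of $\maxval$ reads $\maxval(\ut)=\sup_{\scr\in\SCRU}\bigl[\bbExp{\ut\cdot\scr}-\cost(\scr)\bigr]$, which is verbatim the convex conjugate $\cost^*$; so $\maxval=\cost^*$ holds by inspection. For the reverse claim, Corollary~\ref{cor: Properties of Indirect Cost} supplies exactly what is needed: $\cost$ is proper, convex, and weak* (i.e.\ $\sigma(\SCRU,\UT)$) lower semicontinuous. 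Since the topological dual of $\SCRU$ under $\sigma(\SCRU,\UT)$ is precisely $\UT$, the Fenchel--Moreau biconjugation theorem gives $\cost^{**}=\cost$; as $\cost^*=\maxval$, this reads $\maxval^*=\cost$, the second conjugacy claim.

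With conjugacy secured, the equivalences are mechanical rearrangements anchored on the Fenchel--Young equality $\cost(\scr)+\maxval(\ut)=\bbExp{\ut\cdot\scr}$. For (i)$\iff$(ii), I would simply rewrite the inequality $\cost(\scrb)\geq\cost(\scr)+\bbExp{\ut\cdot(\scrb-\scr)}$ defining $\ut\in\partial\cost(\scr)$ as $\bbExp{\ut\cdot\scr}-\cost(\scr)\geq\bbExp{\ut\cdot\scrb}-\cost(\scrb)$ for all $\scrb$, which is the statement that $\scr$ maximizes the objective. For (ii)$\iff$(iv), I would note that (ii) rearranges to $\bbExp{\ut\cdot\scr}-\cost(\scr)=\maxval(\ut)$, while (iv), i.e.\ $\maxval(\utb)\geq\maxval(\ut)+\bbExp{(\utb-\ut)\cdot\scr}$ for all $\utb$, rearranges (using $\maxval^*=\cost$ from the first paragraph) to $\maxval(\ut)-\bbExp{\ut\cdot\scr}=-\cost(\scr)$ — the very same Fenchel--Young equality. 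Hence (ii), (iv), and Fenchel--Young are mutually equivalent.

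The one step requiring genuine (if routine) care is (ii)$\iff$(iii), the passage to directional derivatives restricted to directions into $\Dom$. By convexity of $\cost$, the difference quotient $\epsilon\mapsto\tfrac1\epsilon\bigl[\cost(\scr+\epsilon(\scr'-\scr))-\cost(\scr)\bigr]$ is nondecreasing on $(0,1]$, whence $\dcost{\scr}(\scr')\leq\cost(\scr')-\cost(\scr)$ for every $\scr'$. If (ii) holds, then for each $\scr'\in\Dom$ and $\epsilon\in(0,1]$ the point $\scr+\epsilon(\scr'-\scr)$ lies in $\Dom$ by convexity, and dividing the subgradient inequality there by $\epsilon$ gives $\tfrac1\epsilon\bigl[\cost(\scr+\epsilon(\scr'-\scr))-\cost(\scr)\bigr]\geq\bbExp{\ut\cdot(\scr'-\scr)}$; letting $\epsilon\searrow0$ yields (iii). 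Conversely, if (iii) holds, then for $\scr'\in\Dom$ the chain $\cost(\scr')-\cost(\scr)\geq\dcost{\scr}(\scr')\geq\bbExp{\ut\cdot(\scr'-\scr)}$ recovers the subgradient inequality for all $\scr'\in\Dom$, while for $\scr'\notin\Dom$ it is automatic since $\cost(\scr')=\infty$; together these give (ii).

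The main obstacle, such as it is, sits entirely in the first paragraph: justifying the biconjugation step. It hinges on pairing the \emph{weak*} topology $\sigma(\SCRU,\UT)$ (so that $\UT$ is exactly the continuous dual driving the conjugation, and no larger dual sneaks in) with the weak* lower semicontinuity delivered by Corollary~\ref{cor: Properties of Indirect Cost}. Once $\maxval^*=\cost$ is in hand, every remaining equivalence is a one-line rearrangement of the Fenchel--Young equality, so no further analytic input is needed.
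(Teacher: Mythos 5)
Your proposal is correct and follows essentially the same route as the paper: extend $\cost$ by $+\infty$, note $\maxval=\cost^*$ by definition, invoke Fenchel--Moreau on the dual pair $(\SCRU,\UT)$ to get $\cost=\maxval^*$, and then reduce all four equivalences to the Fenchel--Young equality and the monotonicity of convex difference quotients. The only difference is cosmetic: where you write out the one-line subdifferential/directional-derivative arguments by hand, the paper cites them (Theorem~7.16 of \cite{aliprantis2006infinite} for (ii)$\iff$(iii) and Proposition~4.4.1(a) of \cite{borwein2010convex} for the Fenchel--Young characterization).
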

\begin{proof}
Recall $\UT$ with its norm topology and $\SCRU$ with its weak* topology form a dual pair with the bilinear map $(\ut,\scru)\mapsto\bbExp{\ut\cdot\scru}$. With these respective topologies, $\cost$ is proper, convex, and lower semicontinuous (by Lemma~\ref{lemma: Properties of Indirect Cost}). By definition of $\maxval$, it equals the convex conjugate of the indirect cost function; that is, $\maxval=\cost^*$. Hence, by the Fenchel-Moreau theorem \citep[e.g.,][Proposition 4.4.2]{borwein2010convex}, $\cost$ is the convex conjugate of $\maxval$; that is, $\cost = \maxval^*$. 

Now, we pursue the four-way equivalence. First, that (i) is equivalent to (ii) is immediate (see discussion after the statement of Lemma~\ref{lem: unique prediction with known benefits}). Next, \citeapos{aliprantis2006infinite} Theorem~7.16 directly implies (ii) is equivalent to (iii). Finally, to show (i) is equivalent to (iv), \citeapos{borwein2010convex} Proposition 4.4.1 part (a) delivers that $\scr \in \partial \maxval(\ut)$ holds if and only if
\[
\maxval(\ut) + \cost(\scr) = \bbE[\ut\cdot \scr],
\]
which is equivalent to
\[
\bbE[\ut \cdot \scr] - \cost(\scr) = \maxval(\ut) = \max_{\scrb \in \SCR} \left[\bbE[\ut\cdot \scrb] - \cost(\scrb)\right].
\]
It follows (iv) is equivalent to (i), as desired.
\end{proof}

The following lemma provides sufficient conditions for a set of SCRs to have its rationalizing utilities be an open set.

\begin{lemma}\label{lem: open set of utilities for weak* open set of uniquely rationalized SCRs}
Suppose $\SCRset\subseteq\SCR$ is weak* open, and every $\ut \in \UT$ and $\ut$-rationalizable $\scr\in \SCRset$ are such that $\scr$ is uniquely $\ut$-rationalizable. Then, the set of utilities that rationalize SCRs in $\SCRset$ is open in $\UT$. 
\end{lemma}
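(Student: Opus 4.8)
The plan is to express the set of rationalizing utilities as the \emph{upper} inverse $\partial\maxval^{U}(\SCRset)$ and then invoke the norm-to-weak* upper hemicontinuity of $\partial\maxval$ recorded in Lemma~\ref{lem: value function continuity}. Write $\UTB := \{\ut \in \UT:\ \text{some } \scr \in \SCRset \text{ is } \ut\text{-rationalizable}\}$ for the set in question. By the equivalence of (i) and (iv) in Lemma~\ref{lem: subdifferential characterization of optimality}, an SCR $\scr$ is $\ut$-rationalizable if and only if $\scr \in \partial\maxval(\ut)$. Hence $\UTB = \partial\maxval^{L}(\SCRset) = \{\ut:\ \partial\maxval(\ut) \cap \SCRset \neq \varnothing\}$, the lower inverse of $\SCRset$ under $\partial\maxval$.

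The crux is to upgrade this lower inverse to the upper inverse using the uniqueness hypothesis; that is, to show $\UTB = \partial\maxval^{U}(\SCRset)$. For the inclusion $\UTB \subseteq \partial\maxval^{U}(\SCRset)$: if $\ut \in \UTB$, pick $\scr \in \partial\maxval(\ut) \cap \SCRset$; the hypothesis guarantees $\scr$ is the unique $\ut$-rationalizable SCR, so $\partial\maxval(\ut) = \{\scr\} \subseteq \SCRset$, placing $\ut$ in $\partial\maxval^{U}(\SCRset)$. For the reverse inclusion, I first note that $\partial\maxval(\ut)$ is nonempty for every $\ut$: the program in \eqref{SCRProg} always attains its maximum, so a $\ut$-rationalizable SCR always exists and, by Lemma~\ref{lem: subdifferential characterization of optimality}, lies in $\partial\maxval(\ut)$. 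Thus, if $\partial\maxval(\ut) \subseteq \SCRset$, then $\partial\maxval(\ut)$ contains some $\scr \in \SCRset$, giving $\ut \in \UTB$.

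With $\UTB = \partial\maxval^{U}(\SCRset)$ established and $\SCRset$ weak* open by hypothesis, the norm-to-weak* upper hemicontinuity of $\partial\maxval$ from Lemma~\ref{lem: value function continuity} immediately yields that $\partial\maxval^{U}(\SCRset)$, and hence $\UTB$, is norm open in $\UT$, as desired.

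I expect the only substantive step to be recognizing the role of the uniqueness hypothesis. Upper hemicontinuity controls the upper inverse $\partial\maxval^{U}$, which is a priori smaller than the lower inverse $\partial\maxval^{L}$ that naturally describes $\UTB$; the uniqueness assumption is exactly what forces these two inverses to coincide, since it collapses each relevant subdifferential to a singleton. Without it the argument would break down, as a utility could rationalize both an SCR inside $\SCRset$ and one outside it, landing in the lower but not the upper inverse.
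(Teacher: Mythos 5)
Your proof is correct and follows essentially the same route as the paper's: both identify the rationalizing utilities with the lower inverse $\partial\maxval^{L}(\SCRset)$ via the subdifferential characterization, use the uniqueness hypothesis together with nonempty-valuedness of $\partial\maxval$ to show the lower and upper inverses coincide, and then apply the norm-to-weak* upper hemicontinuity of $\partial\maxval$ from Lemma~\ref{lem: value function continuity}. The only cosmetic difference is that the paper establishes the equality $\partial\maxval^{L}(\SCRset)=\partial\maxval^{U}(\SCRset)$ by first proving it for singletons $\{\scr\}$ and taking unions, whereas you argue both inclusions directly, which amounts to the same reasoning.
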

\begin{proof}
Recall $\ut\in\UT$ rationalizes $\scr\in\SCR$ if and only if $\scr \in \partial \maxval(\ut)$ (Lemma~\ref{lem: subdifferential characterization of optimality}). 
It follows the set of utilities that rationalize the SCRs in $\SCRset$ is given by $\partial\maxval^{L}(\SCRset)$.  
Moreover, because $\partial \maxval$ is norm-to-weak* upper hemicontinuous (Lemma~\ref{lem: value function continuity}) and $\SCRset$ is weak* open, we know $\partial\maxval^{U}(\SCRset)$ is norm open. Hence, the lemma will follow if we can establish that $\partial\maxval^{L}(\SCRset) = \partial\maxval^{U}(\SCRset)$.

Toward showing $\partial\maxval^{L}(\SCRset) = \partial\maxval^{U}(\SCRset)$, we use the fact that
$\ut$ rationalizes $\scr \in \SCRset$ if and only if $\scr$ is uniquely $\ut$-rationalizable. 
Therefore, $\partial\maxval(\ut) \cap\{\scr\} \neq \varnothing $ if and only if $\partial\maxval(\ut)\subseteq\{\scr\}$, meaning $\partial\maxval^{L}(\{\scr\})=\partial\maxval^{U}(\{\scr\})$ must hold for all $\scr\in \SCRset$. As such, 
\[
\partial\maxval^{L}(\SCRset) = \cup_{\scr \in \SCRset} \partial\maxval^{L}(\{\scr\}) = \cup_{\scr \in \SCRset} \partial\maxval^{U}(\{\scr\}) \subseteq \partial\maxval^{U}(\SCRset) \subseteq \partial\maxval^{L}(\SCRset),
\]
where the last containment follows from $\partial\maxval$ being nonempty-valued. It follows $\partial\maxval^{L}(\SCRset)=\partial\maxval^{U}(\SCRset)$.
\end{proof}

\subsection{Section~\ref{sec: knowing costs only} Proofs}

\begin{proof}[Proof of Proposition~\ref{prop: dense set of rationalizable SCRs}]
Recall from Lemma~\ref{lem: subdifferential characterization of optimality} that $\scr\in\SCR$ is rationalizable if and only if $\cost$ is subdifferentiable at $\scr$, that is, if and only if $\partial\cost(\scr)\neq\varnothing$. Also from that lemma, $\cost$ is the convex conjugate of the function $\maxval$ defined on the Banach space $\UT$. Moreover, $\maxval$ is proper, convex, and continuous (Lemma~\ref{lem: value function continuity}). Therefore, the dual version of the Br{\o}ndsted-Rockafellar theorem \citep[][second part of Theorem~2]{brondsted1965subdifferentiability} says $\cost$ is subdifferentiable on a norm-dense subset of its domain $\Dom$, as required.

The second point follows from the fact \citep[Theorem 23.4 in][]{rockafellar1970convex} that a proper convex function on a Euclidean space is subdifferentiable everywhere in the relative interior of its domain.
\end{proof}

The following lemma shows $\cost$ is ``almost'' strictly convex if no marginal information can be acquired for free.

\begin{lemma}\label{lem: cost strictly convex between non-proportional SCRs}
If $\icost$ is strictly monotone, and $\scr,\scrb\in\SCR$ and $a\in A$ are such that $\scr_a$ and $\scrb_a$ are not proportional, then $\cost$ is strictly convex on $\co\{\scr,\scrb\}$. 
\end{lemma}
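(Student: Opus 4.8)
The plan is to deduce the strict convexity from Lemma~\ref{lem: SCR basic properties from IP basic properties}\eqref{lem: SCR basic properties from IP basic properties-strictness} together with strict monotonicity of $\icost$, after first recasting the non-proportionality hypothesis in the language of revealed posteriors. The first step is to verify the equivalence: for any action $a$, the components $\scr_a,\scrb_a$ are non-proportional if and only if $a\in\supp(\scr)\cap\supp(\scrb)$ and $\posterior^\scr_a\neq\posterior^\scrb_a$. This is a direct computation from $\posterior^\scr_a=\tfrac{\scr_a}{\ip^\scr_a}\prior$: if either component vanishes then $a$ leaves the corresponding support and the two are (trivially) proportional, while if both are nonzero then $\posterior^\scr_a=\posterior^\scrb_a$ says exactly that $\scr_a$ and $\scrb_a$ agree up to the positive scalar $\ip^\scr_a/\ip^\scrb_a$. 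Thus the hypothesis of the lemma is precisely the situation in which Lemma~\ref{lem: SCR basic properties from IP basic properties}\eqref{lem: SCR basic properties from IP basic properties-strictness} yields a strict Blackwell ranking.

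Next I would show this non-proportionality propagates to every pair of distinct points on the segment, which is what upgrades a strict chord between the endpoints to genuine strict convexity on all of $\co\{\scr,\scrb\}$. Non-proportionality of $\scr_a,\scrb_a$ is just linear independence of the two functions; writing two distinct points of the segment as $\scr'=(1-t_1)\scr+t_1\scrb$ and $\scrb'=(1-t_2)\scr+t_2\scrb$ with $t_1\neq t_2$, their $a$-components are linear combinations of the independent pair $\scr_a,\scrb_a$ with coefficient matrix whose determinant is $t_2-t_1\neq0$. Hence $\scr'_a,\scrb'_a$ are again linearly independent, i.e.\ non-proportional, so Step~1 applies verbatim to the pair $(\scr',\scrb')$.

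It then remains to run the strict version of the convexity chain~\eqref{costconvex}. Fix distinct $\scr',\scrb'\in\co\{\scr,\scrb\}$ with finite cost and $\wt\in(0,1)$. By Steps~1--2 and Lemma~\ref{lem: SCR basic properties from IP basic properties}\eqref{lem: SCR basic properties from IP basic properties-strictness},
\[
(1-\wt)\ip^{\scr'}+\wt\ip^{\scrb'}\succ\ip^{(1-\wt)\scr'+\wt\scrb'}.
\]
Convexity of $\cost$ places $(1-\wt)\scr'+\wt\scrb'$ in $\Dom$, so $\ip^{(1-\wt)\scr'+\wt\scrb'}\in\Domicost$, and strict monotonicity of $\icost$ makes the first inequality of \eqref{costconvex} strict:
\[
\cost\big((1-\wt)\scr'+\wt\scrb'\big)=\icost\big(\ip^{(1-\wt)\scr'+\wt\scrb'}\big)<\icost\big((1-\wt)\ip^{\scr'}+\wt\ip^{\scrb'}\big)\leq(1-\wt)\cost(\scr')+\wt\cost(\scrb').
\]
As $\scr',\scrb'$ were arbitrary distinct finite-cost points of the segment, this is strict convexity on $\co\{\scr,\scrb\}$.

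The step I expect to be the main obstacle is the propagation in Step~2: a convex function can exhibit a strictly separated chord between two endpoints while still being affine on an interior subsegment, so establishing strictness only for the original pair $(\scr,\scrb)$ would be insufficient to conclude strict convexity on the whole segment. The linear-independence bookkeeping is what rules this out and delivers strictness uniformly. A minor point requiring care is the membership $\ip^{(1-\wt)\scr'+\wt\scrb'}\in\Domicost$ needed to invoke strict monotonicity, which is why I restrict attention to finite-cost points of the segment---exactly the regime relevant for the uniqueness results that invoke this lemma.
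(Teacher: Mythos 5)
Your proof is correct and follows the same route as the paper's: Lemma~\ref{lem: which policies can induce} to identify $\cost$ with $\icost\circ\ip^{(\cdot)}$, the strict Blackwell ranking of Lemma~\ref{lem: SCR basic properties from IP basic properties}\eqref{lem: SCR basic properties from IP basic properties-strictness}, strict monotonicity of $\icost$ for the first inequality of chain~\eqref{costconvex}, and convexity of $\icost$ for the second. The one substantive difference is your Step~2: the paper's proof runs this argument only for convex combinations of the endpoints, i.e., it verifies the strict chord inequality $\cost((1-\wt)\scr+\wt\scrb)<(1-\wt)\cost(\scr)+\wt\cost(\scrb)$ for $\wt\in(0,1)$, and leaves implicit that the non-proportionality hypothesis is inherited by every pair of distinct points of $\co\{\scr,\scrb\}$ --- which is what the stated conclusion (strict convexity on the whole segment) literally requires, since, as you observe, a strict chord between endpoints is compatible with affine behavior on an interior subsegment. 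Your determinant argument supplies exactly this propagation, after which the paper's endpoint computation applies verbatim to each pair; so your write-up is the same proof, made airtight. Your attention to $\ip^{(1-\wt)\scr'+\wt\scrb'}\in\Domicost$ before invoking strict monotonicity is also warranted (strict monotonicity is only assumed against comparison policies with finite cost); the paper elides this too, and the two treatments agree once attention is restricted to $\Dom$, which is the regime in which the lemma is subsequently applied.
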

\begin{proof}
Recall Lemma~\ref{lem: which policies can induce} tells us $\cost(\scrc)=\icost(\ip^\scrc)$ for every $\scrc\in\SCR$. 

Take any $\wt\in(0,1)$, and let $\scrc:=(1-\wt)\scr+\wt\scrb$ and $\ip:=(1-\wt)\ip^{\scr}+\wt\ip^{\scrb}$. 
Part~(iii) from Lemma~\ref{lem: SCR basic properties from IP basic properties} shows $\ip\succ\ip^\scrc$. 
Hence, 
$$\cost(\scrc)=\icost(\ip^\scrc)< \icost(\ip)\leq (1-\wt)\icost(\ip^{\scr})+\wt\icost(\ip^{\scrb})=(1-\wt)\cost(\scr)+\wt\cost(\scrb),$$
where the inequalities come from strict monotonicity and convexity of $\icost$, respectively.
\end{proof}

Next, we show that $\cost$ is strictly convex through any linearly independent SCR whenever $\icost$ is strictly monotone.

\begin{proposition}\label{prop: linear independence implies unique rationalizability}
Suppose $\icost$ is strictly monotone. If $\scr$ is linearly independent, then $\cost$ is strictly convex through $\scr$.\footnote{That is, $\cost$ is strictly convex on any line segment in $\Dom$ that includes $\scr$.} Consequently, if $\scr$ is $\ut$-rationalizable, it is uniquely $\ut$-rationalizable.
\end{proposition}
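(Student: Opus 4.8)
The plan is to derive the strict-convexity claim directly from the two ingredients already assembled in the discussion preceding the statement---the separating-action property of linearly independent SCRs and the strictness half of Lemma~\ref{lem: SCR basic properties from IP basic properties}---and then to obtain unique rationalizability as a routine consequence of strict concavity of the objective in \eqref{SCRProg}.

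First I would pin down the only genuinely new step, the separating-action property. I would argue that an SCR $\scr$ is linearly independent if and only if it uses every action with positive probability and its revealed posteriors $\{\posterior_a^\scr\}_{a\in A}$ are affinely independent. The ``only if'' direction is immediate: if $\ip_a^\scr = \bbE[\scr_a] = 0$ for some $a$, then $\scr_a = \mathbf 0$ and the family is dependent. When all $\ip_a^\scr > 0$, writing $\scr_a = \ip_a^\scr \tfrac{\dd \posterior_a^\scr}{\dd\prior}$ shows $\{\scr_a\}_{a\in A}$ is linearly independent exactly when the densities $\{\tfrac{\dd\posterior_a^\scr}{\dd\prior}\}_{a\in A}$ are; and since each density integrates to $1$ against $\prior$, integrating any vanishing linear combination against $\prior$ shows linear independence of the densities is equivalent to affine independence of the measures $\{\posterior_a^\scr\}_{a\in A}$. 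With this in hand, suppose $\scrb \neq \scr$ reveals $\posterior_a^\scrb = \posterior_a^\scr$ at every $a \in \supp(\scr)\cap\supp(\scrb) = \supp(\scrb)$. Both $(\ip_a^\scr)_{a\in A}$ and $(\ip_a^\scrb)_{a\in A}$ (extended by zero off $\supp(\scrb)$) are then weight vectors summing to $1$ that split $\prior$ as a convex combination of the affinely independent posteriors $\{\posterior_a^\scr\}_{a\in A}$; affine independence forces these weight vectors to coincide, whence $\ip_a^\scr = 0$ for $a \notin \supp(\scrb)$---contradicting $\ip_a^\scr > 0$---unless $\supp(\scrb)=A$ and $\scrb = \scr$. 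This yields, for every $\scrb\neq\scr$, some $a \in \supp(\scr)\cap\supp(\scrb)$ with $\posterior_a^\scr \neq \posterior_a^\scrb$.

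Next I would assemble the strict-convexity conclusion. Fix $\scrb \in \Dom$ with $\scrb \neq \scr$ and $\wt\in(0,1)$. The separating action lets me invoke Lemma~\ref{lem: SCR basic properties from IP basic properties}\eqref{lem: SCR basic properties from IP basic properties-strictness} to get the strict ranking $(1-\wt)\ip^\scr + \wt \ip^\scrb \succ \ip^{(1-\wt)\scr + \wt\scrb}$. Because $\cost$ is convex (Corollary~\ref{cor: Properties of Indirect Cost}) and $\scr,\scrb\in\Dom$, the combination $(1-\wt)\scr+\wt\scrb$ still has finite cost, so $\ip^{(1-\wt)\scr+\wt\scrb}\in\Domicost$ and strict monotonicity of $\icost$ applies to the strict ranking. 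Threading this through the chain \eqref{costconvex}---strict monotonicity at its first inequality and convexity of $\icost$ at its second---turns the weak inequality into $\cost((1-\wt)\scr+\wt\scrb) < (1-\wt)\cost(\scr)+\wt\cost(\scrb)$, which is precisely strict convexity of $\cost$ on every segment in $\Dom$ through $\scr$.

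Finally, for the ``consequently'' clause I would argue by contradiction: if $\scr$ and some $\scrb\neq\scr$ were both $\ut$-rationalizable, both would attain the maximal value $V^\ast$ of program \eqref{SCRProg} and both lie in $\Dom$. Since $\scrb\mapsto \bbE[\ut\cdot\scrb]$ is linear while $\cost$ is strictly convex through $\scr$, the objective is strictly concave along $[\scr,\scrb]$, so its value at $\tfrac12\scr+\tfrac12\scrb$ strictly exceeds $V^\ast$---a contradiction. Hence $\scr$ is the unique maximizer, that is, uniquely $\ut$-rationalizable. I expect the main obstacle to be the first step: the equivalence between linear independence of $\{\scr_a\}_{a\in A}$ and affine independence of the revealed posteriors, together with the uniqueness of the prior's splitting, is exactly what makes the strictness lemma bite, and it is the one place where the linear-independence hypothesis is genuinely used. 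Everything downstream is bookkeeping with the already-established convexity chain.
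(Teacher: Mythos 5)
Your proof is correct and takes essentially the same route as the paper's: both arguments rest on the equivalence between linear independence of $\{\scr_a\}_{a\in A}$ and affine independence of the revealed posteriors (which you verify in detail and the paper asserts in the text), on the uniqueness of the prior's splitting across those posteriors, and on Lemma~\ref{lem: SCR basic properties from IP basic properties}\eqref{lem: SCR basic properties from IP basic properties-strictness} combined with strict monotonicity and convexity of $\icost$ threaded through the chain \eqref{costconvex}, with uniqueness of the maximizer following from strict concavity of the objective. The only difference is organizational: the paper factors the strictness step through an auxiliary lemma on non-proportional components and argues by contrapositive, whereas you establish the separating-action property directly before invoking the strict information ranking---the substance is identical.
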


Proposition \ref{prop: linear independence implies unique rationalizability} identifies the set of linearly independent SCRs as ones that cannot be rationalized using indifference when $\icost$ is strictly monotone. This result generalizes a well-known condition for unique rationalizability under mutual information costs. Specifically, \cite{caplin2013behavioral} and \cite{matejka2015rational} show that, with mutual information costs, $\scr$ is uniquely $\ut$-rationalizable whenever the set
$
\big\{e^{\ut_{a}}: a \in A\big\}
$
consists of $|A|$ affinely independent elements. To see why this result is a specialization of Proposition~\ref{prop: linear independence implies unique rationalizability}, recall that \cite{matejka2015rational} show that,\footnote{See also \cite{Csiszar1974}.} with mutual information costs, $\scr$ is optimal when the utility is $\ut$ only if \[
\scr_{a}(\omega) = \frac{\ip^{\scr}_{a} e^{\ut_{a}(\omega)}}{\sum_{b\in A} \ip^{\scr}_{b} e^{\ut_{b}(\omega)}.}
\]
It is easy to verify that any $\scr$ satisfying the above display equation must be linearly independent whenever $\big\{e^{\ut_{a}}: a \in A\big\}$ consists of $|A|$ affinely independent elements. Unique rationalizability then follows from Proposition~\ref{prop: linear independence implies unique rationalizability}. 

We now turn to proving Proposition~\ref{prop: linear independence implies unique rationalizability}. 

\begin{proof}[Proof of Proposition~\ref{prop: linear independence implies unique rationalizability}]
The second assertion follows immediately from the first, because the expected benefit from a stochastic choice rule is an affine function of the stochastic choice rule, and a strictly concave function can have at most one maximizer. We turn now to the first assertion.

Given that $\cost$ is already known to be weakly convex by Lemma~\ref{lemma: Properties of Indirect Cost}, we need only show, given arbitrary $\scrb\in\SCR\setminus\{\scr\}$, that $\cost$ is strictly convex on $\co\{\scr,\scrb\}$. 

So suppose $\scrb\in\SCR$ is such that $\cost$ is not strictly convex on $\co\{\scr,\scrb\}$.  
Lemma~\ref{lem: cost strictly convex between non-proportional SCRs} then tells us $\scrb_a$ is a scalar multiple of $\scr_a$ (which is assumed to be nonzero) for every $a\in A$. Equivalently, $\posterior_a^{\scrb}=\posterior_a^\scr$ for every $a\in \supp({\scrb})$. Hence, 
$$\sum_{a\in A} \ip_a^{\scrb}\posterior_a^\scr= \sum_{a\in A} \ip_a^{\scrb}\posterior_a^{\scrb}= \prior = \sum_{a\in A} \ip_a^{\scr}\posterior_a^\scr.$$
Affine independence of the $|A|$ beliefs $\{\posterior_a^\scr\}_{a\in A}$ then implies $\ip_a^{\scrb}=\ip_a^{\scr}$ for every $a\in A$, so that $\scrb=\scr$, delivering the result.
\end{proof}

Although not relevant to our subsequent results, we briefly note a stronger uniqueness property (proven in Appendix~\ref{app: aux}) that follows readily for the special case of binary actions.

\begin{corollary}\label{cor: unique information choice for binary actions}
Suppose $|A|=2$ and $\icost$ is strictly monotone, and fix $\ut\in\UT$. Either a unique SCR is $\ut$-rationalizable, or every $\ut$-rationalizable SCR generates state-independent behavior. In particular, all optimal strategies entail the same information policy.
\end{corollary}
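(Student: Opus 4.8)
The plan is to exploit a dichotomy special to the two-action case: every $\scr\in\SCR$ is \emph{either} linearly independent \emph{or} generates state-independent behavior, and never both. First I would record this fact. Write $A=\{a,b\}$, so that $\scr_b=\mathbf{1}-\scr_a$. A relation $c_a\scr_a+c_b\scr_b=\mathbf{0}$ reduces to $(c_a-c_b)\scr_a=-c_b\mathbf{1}$, which for nonconstant $\scr_a$ forces $c_a=c_b=0$ and for $\prior$-almost surely constant $\scr_a$ admits a nontrivial solution; hence $\{\scr_a,\scr_b\}$ is linearly dependent if and only if $\scr_a$ (equivalently $\scr_b$) is $\prior$-almost surely constant. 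Since constancy of $\scr_a$ is exactly the statement that the agent's action is independent of the state, \textbf{linear dependence of $\scr$ is equivalent to state-independent behavior}. Moreover, a state-independent $\scr$ has every revealed posterior equal to $\prior$, so its revealed policy is the uninformative $\ip^{\scr}=\delta_{\prior}$.

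With the dichotomy in hand, the main claim is immediate from Proposition~\ref{prop: linear independence implies unique rationalizability}. Fix $\ut\in\UT$. If every $\ut$-rationalizable SCR is state-independent, the second alternative holds and we are done. Otherwise some $\ut$-rationalizable $\scr^{*}$ is not state-independent, hence---by the dichotomy---linearly independent; as $\icost$ is strictly monotone, Proposition~\ref{prop: linear independence implies unique rationalizability} then shows $\scr^{*}$ is uniquely $\ut$-rationalizable, which is the first alternative.

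It remains to justify the final sentence about information policies. Because the benefit of a strategy depends only on its induced SCR, any $\ut$-optimal strategy inducing a given $\scr$ must use a cost-minimizing policy among those that can induce $\scr$, i.e.\ it must attain $\icost(\ip)=\cost(\scr)=\icost(\ip^{\scr})$. By Lemma~\ref{lem: which policies can induce} such a policy $\ip$ satisfies $\ip\succeq\ip^{\scr}$, and $\ip^{\scr}\in\Domicost$ since $\scr$ is rationalizable; were $\ip\neq\ip^{\scr}$, we would have $\ip\succ\ip^{\scr}$, so strict monotonicity would give $\icost(\ip)>\icost(\ip^{\scr})$, contradicting optimality. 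Thus every $\ut$-optimal strategy uses exactly the revealed policy of the SCR it induces. In the first case all optimal strategies induce the single SCR $\scr^{*}$ and so share the policy $\ip^{\scr^{*}}$; in the second case every induced SCR is state-independent with revealed policy $\delta_{\prior}$, so again all optimal strategies share one policy. Either way, all optimal strategies entail the same information policy.

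The argument is short once the dichotomy is isolated, and I do not expect a genuine obstacle; the one step requiring care is the last, where strict monotonicity is invoked to exclude strictly-more-informative-but-equally-rational policies. This pins the optimal information choice down to $\ip^{\scr}$ even in the state-independent branch, where the induced SCR itself need not be unique---so the conclusion that the information policy is uniquely determined is genuinely stronger there than mere uniqueness of behavior.
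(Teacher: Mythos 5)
Your proof is correct and follows essentially the same route as the paper's: both reduce the binary-action case to the dichotomy that a non-state-independent SCR is linearly independent (the paper phrases this via distinct revealed posteriors, you via non-constancy of $\scr_a$, which are equivalent), then invoke Proposition~\ref{prop: linear independence implies unique rationalizability}, and finally combine Lemma~\ref{lem: which policies can induce} with strict monotonicity to pin down the information policy of every optimal strategy. Your write-up of that last step is in fact slightly more explicit than the paper's.
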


The following lemma formalizes a sense in which the affine independence case of Proposition~\ref{prop: linear independence implies unique rationalizability} is the typical case.

\begin{lemma}\label{lem: linearly independent SCRs are open and dense}
If $|\Omega|\geq|A|$, then $\supp(\ip^\scr)$ consists of $|A|$ affinely independent beliefs---that is, $\scr$ is linearly independent---for a weak*-open and uniformly dense set of $\scr\in\SCR$.
\end{lemma}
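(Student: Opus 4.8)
The plan is to prove the two assertions --- weak*-openness and uniform density of the set of linearly independent SCRs --- separately, working throughout with the purely functional-analytic formulation of the property: $\scr$ is linearly independent iff the functions $\{\scr_a\}_{a\in A}\subseteq\lp{\infty}$ are linearly independent. (The equivalence of this with ``$\supp(\ip^\scr)$ is $|A|$ affinely independent beliefs'' is established in the main text, so I would simply cite it.) Write $n:=|A|$ and fix an enumeration $A=\{a_1,\dots,a_n\}$.

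For openness, the key device is a determinant criterion. Since $\lp{\infty}=(\lp{1})^*$, the elements $\scr_{a_1},\dots,\scr_{a_n}$ are linearly independent iff they are linearly independent as functionals on $\lp{1}$, iff (by existence of a dual basis) there is some $\mathbf g=(g_1,\dots,g_n)\in(\lp{1})^n$ with $\bbExp{g_i\scr_{a_j}}=\mathbf1[i=j]$, which makes $\det\big([\bbExp{g_i\scr_{a_j}}]_{i,j}\big)=1\neq 0$; conversely a nonzero determinant for any $\mathbf g$ forces linear independence. Hence the set of linearly independent SCRs equals $\bigcup_{\mathbf g}U_{\mathbf g}$, where $U_{\mathbf g}:=\{\scr\in\SCR:\det[\bbExp{g_i\scr_{a_j}}]_{i,j}\neq0\}$. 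Each coordinate map $\scr\mapsto\bbExp{g_i\scr_{a_j}}$ is weak*-continuous, being the pairing of $\scr$ with the utility that puts $g_i$ in coordinate $a_j$ and $\mathbf0$ elsewhere; so the determinant is weak*-continuous and each $U_{\mathbf g}$ is weak*-open. A union of open sets is open, which settles openness.

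For density, fix $\scr\in\SCR$ and $\epsilon>0$. First I would use the hypothesis $|\Omega|\geq n$ together with full support of $\prior$: picking $n$ distinct points of $\Omega$ and pairwise disjoint open neighborhoods (Hausdorff), and noting every nonempty open set has positive measure under a full-support $\prior$, I obtain pairwise disjoint Borel sets $E_1,\dots,E_n$ with $\prior(E_i)>0$. For $\delta\in[0,1]$, define $\scrb^\delta$ by $\scrb^\delta_a=(1-\delta)\scr_a+\delta\,\mathbf1[a=a_i]$ on each $E_i$, and $\scrb^\delta_a=\scr_a$ off $\bigcup_i E_i$. Then $\scrb^\delta$ is a valid SCR (nonnegativity and the adding-up constraint survive as convex combinations), and $\Vert\scrb^\delta_a-\scr_a\Vert_\infty\leq\delta$ for every $a$. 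Taking $g_i=\mathbf1_{E_i}$, the matrix $\Psi(\delta):=[\bbExp{\mathbf1_{E_i}\scrb^\delta_{a_j}}]_{i,j}$ equals $(1-\delta)\Theta+\delta D$, where $\Theta_{ij}:=\bbExp{\mathbf1_{E_i}\scr_{a_j}}$ and $D:=\mathrm{diag}(\prior(E_1),\dots,\prior(E_n))$.

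The hard part will be making the perturbation arbitrarily small while still forcing $\Psi(\delta)$ to be invertible --- naive mixing does not preserve independence, and small $\delta$ does not obviously make $\Psi(\delta)$ nonsingular, since $\Psi(0)=\Theta$ may be singular. The resolution I would use is that $\delta\mapsto\det\Psi(\delta)$ is a polynomial with $\det\Psi(1)=\det D=\prod_i\prior(E_i)>0$, hence not identically zero and so possessing at most $n$ roots. I can therefore choose $\delta\in(0,\epsilon]$ avoiding these finitely many roots; for such $\delta$, invertibility of $\Psi(\delta)$ forces $\{\scrb^\delta_a\}$ to be linearly independent (if $\sum_j c_j\scrb^\delta_{a_j}=\mathbf0$ then $\Psi(\delta)c=0$, so $c=0$), while $\Vert\scrb^\delta_a-\scr_a\Vert_\infty\leq\delta\leq\epsilon$. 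This $\scrb^\delta$ is a linearly independent SCR uniformly within $\epsilon$ of $\scr$, which establishes uniform density and completes the proof.
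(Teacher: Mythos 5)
Your proposal is correct, and it reaches the lemma by a genuinely different route. For openness, the paper does not work with the set of all linearly independent SCRs at all: it fixes a Borel partition $\Omega=\bigsqcup_{a\in A}\Omega_a$ into positive-measure cells, maps $\SCR$ into a finite-dimensional space of matrices via $\pi(\scr):=\left[\int_{\Omega_a}\scr_{\tilde a}\,\ddd\prior\right]_{a,\tilde a}$, invokes the open mapping theorem to conclude $\pi$ is a norm-open map, and then exhibits the witnessing set as $\SCR\cap\pi^{-1}(G)$, where $G$ is the (open, dense) set of invertible matrices with strictly positive entries; membership in $\pi^{-1}(G)$ is shown to force affine independence of the revealed posteriors. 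Your argument instead shows that the \emph{entire} set of linearly independent SCRs is weak*-open, using the biorthogonal-system (dual basis) fact for linearly independent functionals on $\lp{1}$ together with weak*-continuity of the determinant maps $\scr\mapsto\det\left[\bbExp{g_i\scr_{a_j}}\right]_{i,j}$; this avoids the open mapping theorem entirely and proves a slightly stronger statement than the lemma asserts (the paper only produces \emph{some} open dense subset of linearly independent SCRs). For density, both arguments ultimately rest on the same root-counting device---a determinant that is polynomial in a mixing weight and nonzero at one endpoint has finitely many roots---but you apply it directly in $\SCR$, perturbing an arbitrary $\scr$ toward a ``diagonal'' SCR on disjoint positive-measure sets $E_1,\dots,E_n$, whereas the paper applies it inside the matrix space $\hat M$ and transfers density back through the norm-open map $\pi$. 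The trade-off: the paper's construction packages openness and density into one finite-dimensional picture, while yours is more elementary, more self-contained, and identifies the natural set (all linearly independent SCRs) as the open dense witness.
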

\begin{proof}
Define $\hat\SCRU$ to be the set of all $\scr\in\SCRU$ such that $\sum_{a\in A}\scr_a=\mathbf1$.

Let $\Omega=\bigsqcup_{a\in A}\Omega_a$ be a Borel partition into sets with positive $\prior$-measure; such a partition exists because $\Omega$ is metrizable with at least $|A|$ distinct points and $\prior$ has full support. Then, define the set $\hat M \subseteq \real^{A\times A}$ as the set of matrices for which the $a$ row's entries sum to $\prior(\Omega_a)$ for each $a\in A$, and the function $\pi:\hat\SCRU\to\hat M$ given by $\pi(\scr):=\left[ \int_{\Omega_a} \scr_{\tilde a}(\omega) \ \prior(\dd\omega) \right]_{a,\tilde a\in A}$. The map $\pi$ is affine, weak* continuous, and hence norm continuous, and, since $\{\Omega_a\}_{a\in A}$ are pairwise disjoint with nonzero measure, surjective. Because both $\hat\SCRU$ and $\hat M$ are closed affine subspaces of Banach spaces, it follows from the open mapping theorem that $\pi$ is a norm-open map: it maps norm-open sets to open sets. Hence, if $G$ is any open subset of $\hat M$ with closure equal to $\pi(\SCR)$, then $\pi^{-1}(G)$ is weak* open in $\hat\SCRU$ (because $\pi$ is weak* continuous) with norm closure containing $\SCR$ (because $\pi$ is norm open).\footnote{To see the latter implication, let $\SCRset$ be any norm-open set in $\hat\SCRU$ that intersects $\SCR$. 
Observe that because $\SCRset$ intersects $\SCR$, the set $\pi(\SCRset)$ intersects $\pi(\SCR)$ too. Note also that $\pi(\SCRset)$ is open in $\hat M$, because $\pi$ is norm open. Therefore, $\pi(\SCR)$ is contained in the norm closure of $G$, and so $G$ intersects $\pi(\SCRset)$. Said differently, $\SCRset$ intersects $\pi^{-1}(G)$, as required.}
It therefore suffices to find some open $G\subseteq\hat M$ with closure equal to $\pi(\SCR)$ such that any given $\scr\in\SCR\cap\pi^{-1}(G)$ has $\supp(\ip^\scr)$ consisting of $|A|$ affinely independent beliefs.

To that end, let $G$ be the convex set of invertible matrices in $\hat M$ with strictly positive entries. 
The set $G$ is open in $\hat M$ because the determinant function is continuous. Moreover, it is contained in $\pi(\SCR)$, which is the set of matrices in $\hat M$ with nonnegative entries. 
Next, to see $G$ is nonempty, define $g_\epsilon:= \left[\epsilon+\left(\prior(\Omega_a)-|A|\epsilon\right)\mathbf1_{a=\tilde a}\right]_{a,\tilde a\in A}\in\hat M$ for $\epsilon\in\left[0,\tfrac{\min_{a\in A}\prior(\Omega_a)}{|A|} \right]$. Because $g_0$ is diagonal with nonzero diagonal entries, its determinant is nonzero. Thus, since a nonzero univariate polynomial has only finitely many roots, $\det(g_\epsilon)$ is nonzero for $\epsilon>0$ sufficiently small.  The matrix $g:=g_\epsilon$ belongs to $G$ for such an $\epsilon$. Now, we observe $G$ is dense in $\pi(\SCR)$. Fixing an arbitrary $h\in\pi(\SCR)$, we want to show $h$ is a limit of matrices from $G$. Define the function $\real\to\real$ via $\gamma\mapsto\det\left[ \gamma g + (1-\gamma) h \right]$ and note it is a polynomial that is not globally zero (by evaluating at $\gamma=1$). Hence, the polynomial has only finitely many roots. In particular, some $\bar\gamma\in(0,1)$ exists such that $\gamma g + (1-\gamma) h\in\hat M$ is invertible for every $\gamma\in (0,\bar\gamma)$---and so $h$ is in the closure of $\hat M$. 

All that remains is to show, for any given $\scr\in\SCR\cap\pi^{-1}(G)$, that $\supp(\ip^\scr)$ consists of $|A|$ affinely independent beliefs. Toward showing this property, observe $\ip^\scr_{a}>0$ for every $a\in A$ because $\pi(\scr)$ has nonnegative entries and no zero columns.
We can now define the matrix $m:= \left[\tfrac1{\ip^\scr_{\tilde a}} \int_{\Omega_a} \scr_{\tilde a} \ddd\prior \right]_{a,\tilde a\in A} = \left[\posterior_{\tilde a}^\scr(\Omega_a) \right]_{a,\tilde a\in A}$. Because $m$ is a product of $\pi(\scr)$ and another invertible matrix, its columns are  linearly independent, and therefore affinely independent. It follows directly that the $|A|$ vectors $\{\posterior_a^\scr\}_{a\in A}$ are affinely independent, as required.
\end{proof}

Next, we prove the following claim about necessity of Assumption \ref{ass: regularity}\eqref{ass: regularity, part cardinality} for unique rationalizability when costs are locally affine.

\begin{claim}\label{claim: necessity of cardinality ranking for affine case}
Suppose $\scr \in \Dom$ is such that $|\supp\ \scr|>|\Omega|$. If $\icost$ is affine in a neighborhood of $\ip^{\scr}$, $\scr$ is not uniquely rationalizable.
\end{claim}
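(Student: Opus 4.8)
\emph{Strategy.} The plan is to show that \emph{any} utility $\ut$ rationalizing $\scr$ also renders some other SCR optimal, so $\scr$ can never be the \emph{unique} $\ut$-rationalizable rule. First I would observe that $|\supp\,\scr|\le|A|<\infty$ together with the hypothesis $|\supp\,\scr|>|\Omega|$ forces $\Omega$ to be finite; I can then identify measures on $\Omega$ with vectors in $\real^{\Omega}$ and argue by elementary linear algebra. Writing $S:=\supp\,\scr$ and recalling $\ip^{\scr}=\sum_{a\in S}\ip^{\scr}_a\delta_{\posterior^{\scr}_a}$ with each $\ip^{\scr}_a>0$ and $\sum_{a\in S}\ip^{\scr}_a\posterior^{\scr}_a=\prior$, I would consider the linear map $\Phi:\real^{S}\to\real^{\Omega}$ given by $\Phi(w)=\sum_{a\in S}w_a\posterior^{\scr}_a$. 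Since $|S|>|\Omega|\ge\dim\Phi(\real^{S})$, its kernel is nontrivial, so I can fix a nonzero $h$ with $\Phi(h)=0$. Writing $\bar w:=(\ip^{\scr}_a)_{a\in S}$, because the entries of $\bar w$ are strictly positive the perturbed vectors $\bar w\pm\epsilon h$ stay positive for small $\epsilon>0$ and still satisfy $\Phi(\cdot)=\prior$.

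\emph{Building the split.} Next I would turn perturbed weights back into SCRs while keeping the atoms $\posterior^{\scr}_a$ fixed. For any strictly positive $w$ with $\Phi(w)=\prior$, the rule $\scrc^{w}_a:=w_a\,\tfrac{\dd\posterior^{\scr}_a}{\dd\prior}$ (and $0$ off $S$) is a valid SCR: it sums to $\mathbf1$ precisely because $\Phi(w)=\prior$, and $\bbExp{\scrc^{w}_a}=w_a$, whence its revealed policy is $\sum_{a\in S}w_a\delta_{\posterior^{\scr}_a}$ with revealed posteriors again $\posterior^{\scr}_a$. Because $w\mapsto\scrc^{w}$ is affine and injective (one recovers $w_a=\bbExp{\scrc^{w}_a}$), the two choices $\scrb:=\scrc^{\,\bar w+\epsilon h}$ and $\tilde\scrb:=\scrc^{\,\bar w-\epsilon h}$ are SCRs distinct from one another and from $\scr=\scrc^{\,\bar w}$, with $\scr=\tfrac12\scrb+\tfrac12\tilde\scrb$. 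The decisive design feature is that $\scrb$ and $\tilde\scrb$ carry the \emph{same} revealed posteriors as $\scr$.

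\emph{Affineness of the cost and conclusion.} That shared-posterior feature is what makes the argument go through. Along the segment $\scrc_\wt:=(1-\wt)\scrb+\wt\tilde\scrb$ only the weights move while the atoms stay fixed, so the explicit formula $\scrc^{w}_a=w_a\,\tfrac{\dd\posterior^{\scr}_a}{\dd\prior}$ (affine in $w$) gives $\ip^{\scrc_\wt}=(1-\wt)\ip^{\scrb}+\wt\ip^{\tilde\scrb}$ \emph{with equality}---this is exactly the non-strict boundary of Lemma~\ref{lem: SCR basic properties from IP basic properties}\eqref{lem: SCR basic properties from IP basic properties-strictness}, there being no action whose posterior differs between $\scrb$ and $\tilde\scrb$. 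Since $\ip^{\scrb},\ip^{\tilde\scrb}\to\ip^{\scr}$ as $\epsilon\to0$, for small $\epsilon$ the whole segment of policies sits inside the neighborhood on which $\icost$ is affine, so by Lemma~\ref{lem: which policies can induce} the map $\wt\mapsto\cost(\scrc_\wt)=\icost\big((1-\wt)\ip^{\scrb}+\wt\ip^{\tilde\scrb}\big)$ is affine. Now fix any $\ut$ rationalizing $\scr$; then $g(\wt):=\bbExp{\ut\cdot\scrc_\wt}-\cost(\scrc_\wt)$ is affine in $\wt$ and attains its maximum over $[0,1]$ at the interior point $\wt=\tfrac12$ (where $\scrc_{1/2}=\scr$). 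An affine function maximized at an interior point is constant, so $\scrb$ and $\tilde\scrb$ are $\ut$-optimal as well; as $\scrb\neq\scr$, the SCR $\scr$ is not uniquely $\ut$-rationalizable, and since $\ut$ was arbitrary, $\scr$ is not uniquely rationalizable. I expect the crux to be this middle step: one must split $\ip^{\scr}$ \emph{without perturbing any revealed posterior}, since only then does Lemma~\ref{lem: SCR basic properties from IP basic properties}\eqref{lem: SCR basic properties from IP basic properties-convexity} hold with equality and let the local affineness of $\icost$ descend to local affineness of $\cost$; a split that moved the posteriors would leave $\cost$ strictly convex on the segment and destroy the argument.
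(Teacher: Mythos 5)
Your proof is correct, and it reaches the conclusion by a genuinely more elementary route than the paper's. The paper argues by contraposition: assuming $\scr$ is uniquely $\ut$-rationalizable, it shows $|\supp(\scr)|\leq|\Omega|$ in two separate steps --- first ruling out proportional components $\scr_a,\scr_{\tilde a}$ (shifting conditional action probabilities leaves $\ip^{\scr}$ literally unchanged, making the objective affine with an interior maximizer), and then showing the resulting distinct revealed posteriors must be affinely independent, which it obtains by citing Winkler's (1988) characterization of $\ext(\Info)$: affine dependence makes $\ip^{\scr}$ a midpoint of two distinct policies, which (after shrinking toward $\ip^{\scr}$ to stay in the affine neighborhood) produce a second optimal SCR. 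You instead construct the splitting by hand: the barycenter map $\Phi$ has nontrivial kernel as soon as $|\supp(\scr)|>|\Omega|$, and perturbing the weights along a kernel direction produces two distinct SCRs with identical revealed posteriors averaging back to $\scr$. This is in effect a direct, finite-dimensional proof of the relevant direction of Winkler's theorem, and it unifies the paper's two cases --- coincident posteriors and affinely dependent distinct posteriors are both just kernel vectors --- so no case analysis is needed. What each approach buys: yours is self-contained and avoids the extreme-point machinery; the paper's reveals a bit more structure en route (unique rationalizability forces distinct, affinely independent revealed posteriors) and ties the claim to the geometry of $\Info$ used elsewhere. The closing mechanisms coincide: both exploit that $\cost(\scrc)=\icost(\ip^{\scrc})$ is affine along a segment of SCRs whose revealed policies move only in their weights, so the $\ut$-objective is affine and, being maximized at the interior point $\scr$, is constant there; and both rely on the same small gloss (which you state and the paper leaves implicit in its ``weighted average'' step) that for small enough perturbations the whole segment of policies sits inside a convex subneighborhood of the set where $\icost$ is affine.
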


\begin{proof}[Proof of Claim~\ref{claim: necessity of cardinality ranking for affine case}]
Take $\ut\in\UT$ and $\scr\in\SCR$. Suppose $\scr$ is the unique stochastic choice rule induced by some optimal strategy given $\ut$. We must show $|\supp(\scr)|\leq|\Omega|$. The result is vacuous (given $|A|<\infty$) when $\Omega$ is infinite, so we focus on the case in which $\Omega$ is finite.

First, let us establish $\scr_a$ and $\scr_{\tilde a}$ cannot be proportional for any two distinct $a,\tilde a\in \supp(\scr)$. For a contradiction, assume they are in fact proportional. For any $\wt\in[0,1]$, then, we can define $\scr^\wt\in\SCR$ via $\scr^\wt_a:=(1-\wt)(\scr_a+\scr_{\tilde a})$, $\scr^\wt_{\tilde a}:=\wt(\scr_a+\scr_{\tilde a})$, and $\scr^\wt_{a'}:=\scr_{a'}$ for every $a'\in A\setminus\{a,\tilde a\}$. By construction, $\ip^{\scr^\wt}$ is the same for every $\wt\in[0,1]$, so that Lemma~\ref{lem: which policies can induce} implies $\cost(\scr^\wt)$ is the same for every $\wt\in[0,1]$. Therefore, the objective $\wt\mapsto \bbExp{\ut\cdot\scr^\wt}-\cost(\scr^\wt)$ is affine because $\wt\mapsto \scr^\wt$ is. It follows this objective cannot be uniquely maximized at $\wt^\scr=\tfrac{ \ip^\scr_{\tilde a}}{\ip^\scr_a + \ip^\scr_{\tilde a}}\in(0,1)$---contradicting unique optimality of $\scr$ because different values of $\wt$ generate different stochastic choice rules.

Having shown $\scr_a$ and $\scr_{\tilde a}$ cannot be proportional for any two distinct $a,\tilde a\in \supp(\scr)$, we know $\{\posterior^\scr_{a}\}_{a\in\supp(\scr)}$ are $|\supp(\scr)|$ distinct elements of $\DO$. No more than $|\Omega|$ affinely independent vectors can exist in a set (like $\DO$) with dimension $|\Omega|-1$, so the claim will follow if we show $\{\posterior^\scr_{a}\}_{a\in\supp(\scr)}$ are affinely independent.

Assume, for a contradiction,  $\supp(\ip^\scr)=\{\posterior^\scr_{a}\}_{a\in\supp(\scr)}$ are affinely dependent. By \citeapos{winkler1988extreme} Theorem 2.1(b), we then have $\ip^\scr\notin\ext(\Info)$. That is, $\ip^\scr$ is the midpoint of two distinct information policies $\ip^1,\ip^2\in\Info$. 
Moreover, replacing each of $\ip^1,\ip^2$ with a weighted average with $\ip^\scr$ if necessary, we may assume $\icost$ is affine on $\co\{\ip^1,\ip^2\}$. 
Optimality of $\scr$ implies 
the agent's optimal value is
$$\bbExp{\sum_{a\in \supp(\scr)}\ut_a\scr_a}  -\icost(\ip^\scr) \\
= \sum_{i=1,2} \tfrac12\left\{ \left[\sum_{a\in \supp(\scr)}\ip^i(\posterior_a^\scr)\int\ut_a\ddd\posterior^\scr_a \right] -\icost(\ip^i) \right\}.$$
Hence, some $i\in\{1,2\}$ has $v:=\left[\sum_{a\in \supp(\scr)}\ip^i(\posterior_a^\scr)\int\ut_a\ddd\posterior^\scr_a \right] -\icost(\ip^i)$ weakly higher than the agent's optimal value. Payoff $v$ is clearly attainable via some strategy that induces stochastic choice rule $\scr^i\in\SCR$ given by $\scr^i_a:=\tfrac{\ip^i(\posterior_a^\scr)}{\ip_a^\scr}\scr_a$ for $a\in \supp(\scr)$ and $\scr^i_a:=0$ for $a\in A\setminus\supp(\scr)$. Hence, $\scr^i$ is rationalizable, in contradiction to $\scr$ being uniquely so. The claim follows.
\end{proof}

\begin{proof}[Proof of Theorem~\ref{thm: dense set of uniquely rationalizable SCRs}]
Let $\SCRrat \subseteq \SCR$ be the set of all rationalizable SCRs if $\Omega$ is infinite, and let it be the relative interior of $\Dom$ (which is the interior of $\Dom$ in $\SCR$ by Assumption~\ref{ass: regularity}\eqref{ass: regularity, part domain}) if $\Omega$ is finite. By Proposition~\ref{prop: dense set of rationalizable SCRs}, $\SCRrat$ is norm dense (i.e., uniformly dense) in $\Dom$, and every SCR in $\SCRrat$ is rationalizable. 
So, $\SCRrat$ is a norm-dense subset of $\Dom$ consisting only of rationalizable SCRs, and $\SCRrat$ is open in $\SCR$ if $\Omega$ is finite. 

Now, Lemma~\ref{lem: linearly independent SCRs are open and dense} delivers a weak*-open (and hence norm-open) and norm-dense subset $G$ of $\SCR$ such that every $\scr\in G$ has $\supp(\ip^\scr)$ consisting of $|A|$ different affinely independent beliefs. By Proposition~\ref{prop: linear independence implies unique rationalizability}, every $\ut\in\UT$ and $\ut$-rationalizable $\scr \in \SCRrat \cap G$ are such that $\scr$ is uniquely $\ut$-rationalizable. Moreover, $\SCRrat\cap G$ is a norm-dense subset (being the intersection of two norm-dense subsets, one of which is open) of $\Dom$ that is also open in $\Dom$ whenever $\SCRrat$ is (which is true whenever $\Omega$ is finite).

It remains to show the set of utilities that rationalize the SCRs in $\SCRrat\cap G$ is open. But these utilities are the ones that rationalize SCRs in $G$ (resp. $\SCRrat\cap G$) if $\Omega$ is infinite (resp. finite)---a weak*-open set in either case---and so Lemma~\ref{lem: open set of utilities for weak* open set of uniquely rationalized SCRs} applies directly.
\end{proof}

\begin{proof}[Proof of Proposition~\ref{prop: continuous choice is impossible}]
Here, we prove the following three claims:
\begin{enumerate}[(i)]
\item Every rationalizable SCR is continuous if and only if every SCR in $\Dom$ is continuous. \label{prop: continuous choice is impossible: feasible is continuous}
\item If the domain $\Dom$ has a nonempty (norm) interior in $\SCR$, then every rationalizable SCR is continuous if and only if $\Omega$ is finite.\label{prop: continuous choice is impossible: interior means finite state}
\item If Assumption~\ref{ass: regularity} holds, then every uniquely rationalizable SCR is continuous if and only if $\Omega$ is finite.
\end{enumerate}
Observe (i) corresponds to the statement of Proposition~\ref{prop: continuous choice is impossible}. We now proceed with the proof.

First, observe the set $\Dom_{\text{cont}}$ of stochastic choice rules in $\Dom$ with a continuous version is $\Vert\cdot\Vert_\infty$-closed in $\Dom$. 
To see it is closed, note that because $\prior$ has full support, the natural quotient taking a continuous function to its $\prior$-a.e. equivalence class is an isometry from the space of of continuous functions $\Omega\to\real$ (with the supremum norm) into $\lp{\infty}$. Because the space of continuous functions is complete, its image under this isometry is necessarily closed in $\lp{\infty}$.\footnote{To explain, take any sequence $(\func_n)_n$ of continuous functions whose equivalence classes converge to some $\funcb$ (and so form a Cauchy sequence). The sequence $(\func_n)_n$ of functions themselves form a Cauchy sequence by isometry. But then, by completeness, $(\func_n)_n$ converges to some continuous function $\func$. Again by isometry, the equivalence class of $\func_n$ converges to that of $\func$ as $n\to\infty$. Because the metric space $\lp{\infty}$ is Hausdorff, $\funcb$ is in fact $\func$'s equivalence class, and so lives in the image of this isometry.} Hence, being the intersection of $\Dom$ with a power of this set, the set $\Dom_{\text{cont}}$ is closed in $\Dom$.

Now, because $\Dom_{\text{cont}}$ is closed, any given dense subset of $\Dom$ is contained in $\Dom_{\text{cont}}$ if and only if $\Dom_{\text{cont}}=\Dom$. In particular, Proposition~\ref{prop: dense set of rationalizable SCRs} (resp. Theorem~\ref{thm: dense set of uniquely rationalizable SCRs} under Assumption~\ref{ass: regularity}) implies every (uniquely) rationalizable stochastic choice rule lives in $\SCR^\cost_{\text{cont}}$ if and only if $\SCR^\cost_{\text{cont}}=\Dom$. This observation delivers the first part of the proposition. It would also deliver the second and third parts if we could show that, assuming $\Dom$ has a nonempty interior, every element of $\Dom$ has a continuous version if and only if $\Omega$ is finite. 

Suppose $\Dom$ has a nonempty interior. 
Our goal is now to show every element of $\Dom$ has a continuous version if and only if $\Omega$ is finite. One direction of this equivalence is trivial, because every map on a finite metric space is continuous. Toward establishing the converse, suppose $\Omega$ is infinite. Let us now observe some $\scr\in\Dom\setminus\Dom_{\text{cont}}$ would exist if we had some Borel $f:\Omega\to[0,1]$ that is not $\prior$-a.e. equal to a continuous function. Indeed, because $\Dom$ has a nonempty interior in $\SCR$, we know some $\scr^0\in\Dom$ and $\epsilon>0$ exist such that $\Dom \supseteq \left\{\scr\in\SCR:\ |\scr_a-\scr^0_a|\leq_{\prior\text{-a.e.}}\epsilon,\forall a\in A\right\}$. 
Shifting $\scr^0$ and reducing $\epsilon$ if necessary, we may further assume $\epsilon \leq_{\prior\text{-a.e.}} \scr_a \leq_{\prior\text{-a.e.}}1-\epsilon$ for every $a\in A$. Then, fix two distinct $a_+,a_-\in A$, and define $\scr^1\in\SCRU$ by letting $\scr^1_{a_+}:=\scr^0_{a_+}+\epsilon f$, $\scr^1_{a_-}:=\scr^0_{a_-}-\epsilon f$, and $\scr^1_{a}:=\scr^0_{a}$ for every $a\in A\setminus\{a_+, a_-\}$. By construction, $\scr^1\in\Dom$ too, and $\scr^1-\scr^0$ has no continuous version. Therefore, at least one of $\{\scr^0,\scr^1\}$ has no continuous version, as desired.

All that remains now is to construct (for infinite $\Omega$) some Borel $f:\Omega\to[0,1]$ that is not $\prior$-a.e. equal to a continuous function. Because $\Omega$ is an infinite compact metrizable space, it has some sequence $\{\omega_n\}_{n=1}^\infty$ without repetition that converges to some $\omega_\infty$. In particular, $\omega_\infty$ is the sole accumulation point of $\{\omega_m\}_{m\in\mathbb N \cup\{\infty\}}$.  Dropping to a subsequence if necessary, we may assume $\omega_n\neq\omega_\infty$ for every $n\in\mathbb N$. 
Now, take some $\rho$ that metrizes the topology on $\Omega$.
 For each $n\in\mathbb N$, let $r_n>0$ be such that the $\rho(\omega_n,\omega_m)>2 r_n$ for every $m\in\mathbb (\mathbb N\cup\{\infty\})\setminus\{n\}$, and let $N_n$ be the $\rho$-ball of radius $r_n$ centered on $\omega_n$. By the triangle inequality, the neighborhoods $\{N_n\}_{n=1}^\infty$ are pairwise disjoint. Moreover, that $\omega_n\to\omega_\infty$ implies $r_n\to0$ as $n\to\infty$. Define now the bounded Borel function $f:=\mathbf1_{\bigcup_{n=1}^\infty N_{2n}}:\Omega\to[0,1]$, which we show has no continuous version.  
To do so, take an arbitrary Borel function $\tilde f:\Omega\to\real$ that is $\prior$-a.e. equal to $f$. To conclude the proof, we show $\tilde f$ cannot be continuous. For any $n\in\mathbb N$, that $\prior$ has full support implies $\prior(N_n)>0$, and so some $\tilde\omega_n\in N_n$ has $\tilde f(\tilde\omega_n)=f(\tilde\omega_n)$. Then, $\rho(\tilde\omega_n,\omega_\infty)\leq r_n+\rho(\omega_n,\omega_\infty)\to0$ as $n\to\infty$. However, $\tilde f(\tilde\omega_n)=f(\tilde\omega_n)$ is equal to $0$ for odd $n$ and $1$ for even $n$, so that $\tilde f$ is discontinuous at $\omega_\infty$.
\end{proof}

\subsection{On Differentiable Costs}

We begin this subsection with some convenient notations. First, given $\ut\in \UT$, let
\begin{equation*}
\begin{split}
\posteriorut_{\ut}: \Delta\Omega & \rightarrow \mathbb{R} \\
\posterior & \mapsto \max_{a \in A} \int_{\Omega}\ut_{a}(\omega) \ \posterior(\dd\omega).
\end{split}
\end{equation*}

\begin{notation}
Let $\feas \ \icost$ be the (convex) set of $\ip \in \Infof\cap \Domicost$ such that every $\ipb \in \Infof$ admits some $\epsilon\in (0,1)$ with $\ip + \epsilon(\ipb - \ip) \in \Domicost$.

Analogously, for any $\icostd\in\icostD$, let $\feas \ \icostd$ be the (convex) set of simply drawn posteriors $\posterior$ such that every simply drawn posterior $\posteriorb$ admits some $\epsilon\in (0,1)$ with $\icostd\left(\posterior + \epsilon(\posteriorb - \posterior) \right)<\infty$.
\end{notation}

Note $\feas \ \icost=\Infof$ if $\icost$ assigns a finite cost to every simple information policy; and analogously, $\feas \ \icostd$ is the set of all simply drawn posteriors whenever $\icostd$ assigns a finite cost to every simply drawn posterior.

The following lemma gives an equivalent optimality condition for an SCR: the agent responds optimally to any revealed posterior, and, assuming the agent responds optimally to any hypothetical posterior,
the information is optimal.

\begin{lemma}\label{lem: characterizing optimal SCRs by optimal information and optimal interim action choice}
Given $\ut\in\UT$ and $\scr\in\SCR$, the following are equivalent:
\begin{enumerate}[(i)]
\item The stochastic choice rule $\scr$ is optimal; that is, $\scr \in \argmax_{\scrb \in \Dom} \left\{\bbExp{\ut\cdot\scrb} -\cost(\scrb) \right\}$.
\item Every $a\in \supp(\scr)$ has $\int \ut_{a}(\omega) \ \posterior^{\scr}_{a}(\dd\omega)=\posteriorut_{\ut}(\posterior^{\scr}_{a})$, and $$\ip^{\scr} \in \argmax_{\ip\in\Domicost} \left[\int \posteriorut_{\ut}(\posterior) \ \ip(\dd\posterior) - \icost(\ip) \right].$$
\end{enumerate}
\end{lemma}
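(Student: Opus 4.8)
The plan is to sandwich the value of an SCR between two quantities whose equality cases correspond exactly to the two clauses of (ii). For $\ip\in\Info$, write $\Phi(\ip):=\int \posteriorut_{\ut}(\posterior)\ \ip(\dd\posterior)-\icost(\ip)$ for the payoff of an agent who buys information $\ip$ and then responds optimally to whatever posterior realizes. I would establish, for every $\scr\in\SCR$, the chain
\[
\bbExp{\ut\cdot\scr}-\cost(\scr)\ \leq\ \Phi(\ip^\scr)\ \leq\ \maxval(\ut),
\]
and then argue that the first inequality binds precisely when the interim-optimality condition in (ii) holds, and the second binds precisely when $\ip^\scr$ maximizes $\Phi$ over $\Domicost$. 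Since $\scr$ is optimal if and only if $\bbExp{\ut\cdot\scr}-\cost(\scr)=\maxval(\ut)$, i.e.\ if and only if both inequalities are equalities, this delivers the equivalence.

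For the first (interim) inequality, I would expand both sides using $\ip^\scr=\sum_{a}\ip^\scr_a\delta_{\posterior^\scr_a}$ and $\cost(\scr)=\icost(\ip^\scr)$ (Lemma~\ref{lem: which policies can induce}), so that the gap $\Phi(\ip^\scr)-[\bbExp{\ut\cdot\scr}-\cost(\scr)]$ becomes $\sum_{a\in\supp(\scr)}\ip^\scr_a\big[\posteriorut_{\ut}(\posterior^\scr_a)-\int\ut_a\ \posterior^\scr_a(\dd\omega)\big]$. Each bracketed term is nonnegative because $\posteriorut_{\ut}(\posterior^\scr_a)=\max_b\int\ut_b\ \posterior^\scr_a(\dd\omega)$, and the sum vanishes exactly when $\int\ut_a\ \posterior^\scr_a(\dd\omega)=\posteriorut_{\ut}(\posterior^\scr_a)$ for every $a\in\supp(\scr)$---the interim-optimality condition. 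Here I would note the revealed posteriors $\posterior^\scr_a$ are simply drawn, so these $\ut$-integrals are well defined.

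The substance lies in the value identity $\maxval(\ut)=\max_{\ip\in\Domicost}\Phi(\ip)$, from which the second inequality and its equality case follow at once. The ``$\leq$'' direction is immediate from the interim inequality. For ``$\geq$'', I would fix $\ip\in\Domicost$, choose a measurable action strategy $\alpha^\ast$ selecting at each $\posterior$ an action attaining $\posteriorut_{\ut}(\posterior)$ (possible since $A$ is finite), and let $\scrb$ be the SCR induced by $(\ip,\alpha^\ast)$. The key link is that integrating $\ut_a$ against the two equal measures appearing in the inducing relation forces the strategy's benefit $\int_\DO\int_A\ut_a\ \alpha^\ast(\dd a|\posterior)\ \ip(\dd\posterior)$ to coincide with $\bbExp{\ut\cdot\scrb}$ on one hand and with $\int\posteriorut_{\ut}(\posterior)\ \ip(\dd\posterior)$ on the other; combined with $\cost(\scrb)=\icost(\ip^{\scrb})\leq\icost(\ip)$ (Lemma~\ref{lem: which policies can induce} plus monotonicity of $\icost$), this yields $\bbExp{\ut\cdot\scrb}-\cost(\scrb)\geq\Phi(\ip)$, hence $\maxval(\ut)\geq\Phi(\ip)$.

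I expect this ``$\geq$'' direction to be the main obstacle: it requires the measurable selection of ex-post optimal actions and, more importantly, the identity equating a strategy's expected benefit with the benefit of its induced SCR, which I would derive from the defining inducing relation. A related subtlety is ensuring $\posteriorut_{\ut}$ and the relevant integrals are well defined for posteriors that need not be absolutely continuous with respect to $\prior$; I would sidestep this by always computing benefits in their SCR form $\bbExp{\ut\cdot\scrb}$, which is a $\prior$-integral and hence always well defined, and by using the fact that revealed posteriors are simply drawn. Assembling the pieces, $\scr$ is optimal iff both inequalities in the chain are equalities, i.e.\ iff the interim-optimality condition holds and $\ip^\scr\in\argmax_{\ip\in\Domicost}\Phi(\ip)$, which is exactly (ii).
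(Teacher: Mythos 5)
Your proof is correct and rests on exactly the same ingredients as the paper's: the interim inequality comparing an SCR's value with the value $\Phi(\ip^{\scr})$ of its revealed policy under ex-post optimal play (with equality precisely at the interim-optimality condition), a measurable selection of posterior-optimal actions to convert an arbitrary $\ip\in\Domicost$ into an SCR, and Lemma~\ref{lem: which policies can induce} plus monotonicity of $\icost$ to bound the induced SCR's cost by $\icost(\ip)$. Packaging these steps as the value identity $\maxval(\ut)=\max_{\ip\in\Domicost}\Phi(\ip)$ with an equality-case (sandwich) analysis is only an organizational difference from the paper, which instead proves (i)$\Rightarrow$(ii) by contrapositive (constructing a strict improvement when either clause of (ii) fails) and (ii)$\Rightarrow$(i) by the same chain of inequalities you use.
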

\begin{proof}
Let $\alpha^{\ut}: \Delta\Omega\rightarrow A$ be a measurable selection of $\posterior\mapsto \argmax_{a\in A} \int \ut_{a}(\omega) \ \posterior(\dd\omega)$, which exists by the measurable maximum theorem \cite[][Theorem 18.19]{aliprantis2006infinite}.

To see (i) implies (ii), we suppose (ii) does not hold. Because $\posteriorut_{\ut}(\posterior) \geq \int \ut_{a}(\omega) \ \posterior(\dd\omega)$ for all $a\in A$ and $\posterior\in \Delta\Omega$, if $\posteriorut_{\ut}(\posterior^{\scr}_{a}) \neq \int \ut_{a}(\omega) \ \posterior^{\scr}_{a}(\dd\omega)$ for some $a\in \supp(\scr)$, it must be that $\posteriorut_{\ut}(\posterior^{\scr}_{a}) > \int \ut_{a}(\omega) \ \posterior^{\scr}_{a}(\dd\omega)$, and so the SCR $\scrb$ induced by $({\ip^{\scr}},\alpha^{\ut})$ is a strict improvement over $\scr$. Similarly, if $\ip^{\scr}\notin \argmax_{\ip\in\Domicost} \left[ \int \posteriorut_{\ut}(\posterior) \ \ip(\dd\posterior)- \icost(\ip)\right]$, one can pick $\ip\in \Domicost$ with $\int \posteriorut_{\ut}(\posterior) \ \ip(\dd\posterior) - \icost(\ip)>\int \posteriorut_{\ut}(\posterior) \ \ip^\scr(\dd\posterior) - \icost(\ip^\scr)$, in which case the SCR $\scrb$ induced by $(\ip,\alpha^{\ut})$ is strictly better than $\scr$. Either way, (i) fails.

To see (ii) implies (i), suppose (ii) holds. Then, any SCR $\scrb$ has 
\begin{eqnarray*}
\bbExp{\ut\cdot\scrb} -\cost(\scrb) &=& \bbExp{\ut\cdot\scrb} -\icost(\ip^{\scrb}) \text{ (by Lemma~\ref{lem: which policies can induce})} \\
&=& \left[\sum_{a\in\supp(\ip^{\scrb})} \ip^{\scrb}_a \int \ut_a(\omega) \ \posterior^{\scrb}_{a}(\dd\omega)\right] -\icost(\ip^{\scrb}) \\
&\leq& \left[\sum_{a\in\supp(\ip^{\scrb})} \ip^{\scrb}_a \posteriorut_\ut(\posterior^{\scrb}_{a})\right] -\icost(\ip^{\scrb}) \\
&=& \int \posteriorut_\ut(\posterior) \ \ip^{\scrb}(\dd\posterior) -\icost(\ip^{\scrb}),
\end{eqnarray*}
where the inequality holds with equality for $\scrb=\scr$ by hypothesis. Then, $\scr$ maximizes $\scrb\mapsto\bbExp{\ut\cdot\scrb} -\cost(\scrb)$ because $\ip^\scr$ maximizes $\ip\mapsto\int \posteriorut_\ut(\posterior) \ \ip(\dd\posterior) -\icost(\ip)$.
\end{proof}

Our definition of a derivative of $\icost$ assumed the derivative $\icostd$ was convex, which made posterior-separable approximation $\icost_\icostd$ a valid cost function by Jensen's inequality. Here, we note that under sufficient regularity, this convexity property is redundant, being implied by monotonicity of $\icost$.  See Appendix~\ref{app: aux} for the proof.

\begin{fact}\label{fact: convexity of derivative is wlog under continuity}
Let $\ip \in \feas \ \icost$, and let $\icostd:\DO\to\real$ be a continuous function. If every $\ip' \in \Domicost$ has
\[ 
d^{+}_{\ip}\icost(\ip')= \int \icostd(\posterior)\ (\ip' - \ip)(\dd\posterior),
\]
then $\icostd$ is convex. In particular, in this case, $\icostd\in\icostD$, so that $\icostd$ is a derivative of $\icost$ at $\ip$.
\end{fact}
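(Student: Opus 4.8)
The plan is to observe first that the substantive content is convexity of $\icostd$: a real-valued continuous function on the compact set $\DO$ is automatically lower semicontinuous and finite at $\prior$, so once convexity is in hand we get $\icostd\in\icostD$, and the displayed derivative formula then says verbatim that $\icostd$ is a derivative of $\icost$ at $\ip$. To prove convexity I would leverage monotonicity of $\icost$ to convert the derivative formula into the statement that the linear functional $\ipb\mapsto\int\icostd\ \dd\ipb$ is monotone with respect to the Blackwell order on $\Domicost$, and then read off convexity of the integrand by testing this monotonicity against mean-preserving splits (the elementary converse companion to the HLPBSSC theorem).

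The first key step is a sub-lemma: if $\ip',\ip''\in\Domicost$ satisfy $\ip'\succeq\ip''$, then $\int\icostd\ \dd\ip'\geq\int\icostd\ \dd\ip''$. This holds because mixing with the common policy $\ip$ preserves the Blackwell order, so $\ip+\lambda(\ip'-\ip)\succeq\ip+\lambda(\ip''-\ip)$ for $\lambda\in(0,1)$; convexity of $\Domicost$ keeps both policies in the domain, monotonicity of $\icost$ gives $\icost(\ip+\lambda(\ip'-\ip))\geq\icost(\ip+\lambda(\ip''-\ip))$, and subtracting $\icost(\ip)$, dividing by $\lambda$, and letting $\lambda\searrow0$ yields $d^{+}_{\ip}\icost(\ip')\geq d^{+}_{\ip}\icost(\ip'')$; the derivative formula then translates this directly into the claimed inequality on integrals.

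Next I would verify convexity of $\icostd$ on each segment, first for simply drawn posteriors. Fix simply drawn $\posterior_0,\posterior_1$ and $\theta\in(0,1)$, and set $\posterior_\theta:=\theta\posterior_0+(1-\theta)\posterior_1$, which is again simply drawn. Choosing $\eta>0$ small enough that $\eta\posterior_\theta\leq\prior$ and putting $\bar\nu:=(\prior-\eta\posterior_\theta)/(1-\eta)\in\DO$, I form two simple, Bayes-plausible policies differing only by a mean-preserving split of one atom: $\ipb_0:=\eta\delta_{\posterior_\theta}+(1-\eta)\delta_{\bar\nu}$ and $\ipb_1:=\eta\theta\delta_{\posterior_0}+\eta(1-\theta)\delta_{\posterior_1}+(1-\eta)\delta_{\bar\nu}$, so that $\ipb_0,\ipb_1\in\Infof$ and $\ipb_1\succeq\ipb_0$. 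Since these need not lie in $\Domicost$, I perturb toward them from $\ip$: because $\ip\in\feas\ \icost$ and $\Domicost$ is convex, both $\ip+\delta(\ipb_0-\ip)$ and $\ip+\delta(\ipb_1-\ip)$ lie in $\Domicost$ for all small $\delta>0$, and they still satisfy $\ip+\delta(\ipb_1-\ip)\succeq\ip+\delta(\ipb_0-\ip)$. Applying the sub-lemma to this pair, the common terms cancel and division by $\delta$ gives $\int\icostd\ \dd\ipb_1\geq\int\icostd\ \dd\ipb_0$, i.e. $\theta\icostd(\posterior_0)+(1-\theta)\icostd(\posterior_1)\geq\icostd(\posterior_\theta)$. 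Since simply drawn posteriors are weak* dense in $\DO$ and $\icostd$ is continuous, this inequality extends to arbitrary $\posterior_0,\posterior_1\in\DO$, proving convexity.

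The main obstacle is reconciling two domain constraints that pull in opposite directions: the derivative formula is available only along directions into $\Domicost$, yet the natural witnesses of non-convexity, the split policies $\ipb_0,\ipb_1$, may carry infinite cost and, for an arbitrary posterior, need not even be Bayes-plausible. The $\feas\ \icost$ hypothesis is precisely what allows the perturbation trick to slide the split into $\Domicost$ without disturbing the Blackwell ordering, while restricting to simply drawn posteriors is exactly what makes the balancing posterior $\bar\nu$ a legitimate probability measure; the passage to the limit then removes that restriction.
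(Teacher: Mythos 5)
Your proof is correct and follows essentially the same route as the paper's: the same construction of a Bayes-plausible simple policy with an atom at the convex combination, split in a mean-preserving way and slid into $\Domicost$ via the $\feas\ \icost$ hypothesis, with monotonicity of $\icost$ plus the derivative formula yielding the Jensen inequality at simply drawn posteriors, and continuity of $\icostd$ extending it to all of $\DO$. The only differences are organizational—you package the monotonicity-plus-derivative step as a Blackwell-monotonicity sub-lemma where the paper computes a liminf of difference quotients inline—and the weak* density of simply drawn posteriors that you assert without proof is exactly the fact the paper attributes to a cited lemma of Lipnowski (2018).
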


The following lemma establishes an equivalence between optimal information choice for the agent's cost function and optimal information choice for a posterior-separable approximation of her cost function.
\begin{lemma}\label{lem: posterior separable approximation characterizes optimal information}
If $\icostd$ is a derivative of $\icost$ at $\ip \in \feas\ \icost$, then any $\ut\in\UT$ has
$$\ip \in \argmax_{\ipb \in \Info} \left[\int \posteriorut_\ut(\posterior) \ \ipb(\dd\posterior) - \icost(\ipb) \right] \ \iff\ \ip \in \argmax_{\ipb \in \Info} \int \left[\posteriorut_\ut(\posterior) - \icostd(\posterior) \right] \ \ipb(\dd\posterior).$$
\end{lemma}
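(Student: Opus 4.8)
The plan is to read both optimization problems through the first-order conditions of convex analysis. Write $\Phi(\ipb):=\int\posteriorut_{\ut}(\posterior)\ \ipb(\dd\posterior)-\icost(\ipb)$ for the value-of-information objective under $\icost$, and $\Psi(\ipb):=\int[\posteriorut_{\ut}(\posterior)-\icostd(\posterior)]\ \ipb(\dd\posterior)=\int\posteriorut_{\ut}\ \dd\ipb-\icost_{\icostd}(\ipb)$ for its posterior-separable counterpart. Since the benefit term is linear in $\ipb$ and $\icost$ is convex, $\Phi$ is concave; since $\icost_{\icostd}$ is linear in $\ipb$, $\Psi$ is affine. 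Recall $\ip\in\feas\ \icost\subseteq\Infof\cap\Domicost$, so $\ip$ is itself a simple policy at which both $\icost$ and $\int\posteriorut_{\ut}\ \dd\ip$ are finite. Because $\Info$ is convex, the basic tool is that a concave (resp.\ affine) function is maximized at $\ip$ over $\Info$ if and only if every feasible directional derivative at $\ip$ is nonpositive.

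The direction from the RHS to the LHS is the easy one, and it follows from a single subgradient inequality. For any $\ipb\in\Domicost$, convexity of $\icost$ gives $d^{+}_{\ip}\icost(\ipb)\le\icost(\ipb)-\icost(\ip)$, and the derivative representation rewrites the left-hand side as $\int\icostd\ (\ipb-\ip)(\dd\posterior)$; rearranging yields $\Psi(\ipb)-\Psi(\ip)\ge\Phi(\ipb)-\Phi(\ip)$, an inequality that holds trivially when $\ipb\notin\Domicost$ as well (where $\Phi(\ipb)=-\infty$ while $\Psi(\ip)$ is finite). Hence if $\ip$ maximizes $\Psi$ over $\Info$, then $\Phi(\ipb)-\Phi(\ip)\le\Psi(\ipb)-\Psi(\ip)\le0$ for every $\ipb\in\Info$, so $\ip$ maximizes $\Phi$.

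For the reverse implication I would differentiate $\Phi$ directly. Because the benefit term is linear, for $\ipb\in\Domicost$ one has $d^{+}_{\ip}\Phi(\ipb)=\int\posteriorut_{\ut}\ (\ipb-\ip)(\dd\posterior)-d^{+}_{\ip}\icost(\ipb)$, and substituting the derivative representation collapses this to exactly $\Psi(\ipb)-\Psi(\ip)$. If $\ip$ maximizes $\Phi$, then $d^{+}_{\ip}\Phi(\ipb)\le0$, giving $\Psi(\ipb)\le\Psi(\ip)$ for every $\ipb\in\Domicost$. To upgrade this to every simple policy I would invoke $\feas\ \icost$: given $\ipb\in\Infof$, choose $\epsilon_0\in(0,1)$ with $\ip+\epsilon_0(\ipb-\ip)\in\Domicost$, apply the inequality at that point, and use affinity of $\Psi$ to write $\Psi(\ip+\epsilon_0(\ipb-\ip))=(1-\epsilon_0)\Psi(\ip)+\epsilon_0\Psi(\ipb)$, whence $\epsilon_0[\Psi(\ipb)-\Psi(\ip)]\le0$ and so $\Psi(\ipb)\le\Psi(\ip)$. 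Thus $\ip$ maximizes $\Psi$ over all of $\Infof$.

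The main obstacle is the last extension, from simple policies to all of $\Info$, since the RHS argmax ranges over $\Info$. One cannot argue by continuity and density alone: the integrand $\posteriorut_{\ut}-\icostd$ is generally not weak* continuous (indeed $\posteriorut_{\ut}$ is only unambiguously defined on posteriors absolutely continuous with respect to $\prior$, and $\icostd$ is merely lower semicontinuous). The resolution I would use is a concavification argument. For the linear functional $\ipb\mapsto\int[\posteriorut_{\ut}-\icostd]\ \dd\ipb$, the value at any $\ipb\in\Info$ is bounded above by the concave envelope of $\posteriorut_{\ut}-\icostd$ evaluated at $\prior$ (via $\int g\ \dd\ipb\le\int\hat g\ \dd\ipb\le\hat g(\prior)$, using Jensen for the concave $\hat g$), and that envelope value is the supremum of $\Psi$ over simple splittings of $\prior$; since $\ip$ is itself simple and already attains this supremum by the previous step, $\Psi(\ipb)\le\Psi(\ip)$ follows for every $\ipb\in\Info$. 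Making this step fully rigorous---fixing Borel representatives of the $\ut_a$ so $\posteriorut_{\ut}$ is defined everywhere, and justifying that the envelope value at $\prior$ coincides with the supremum over simple policies against the merely lower semicontinuous, bounded-below $\icostd$---is where the real care lies.
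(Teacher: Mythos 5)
Your first direction is correct and identical to the paper's, and the overall architecture of your second direction (the derivative representation, the use of $\feas\ \icost$, the reduction to simple policies) also matches the paper, which runs that direction in contrapositive form. However, the step you yourself flag as the crux---passing from $\Infof$ to all of $\Info$---contains a genuine gap: the concavification principle you invoke is false at this level of generality. Your chain $\int g\, \dd\ipb \le \int \hat g\, \dd\ipb \le \hat g(\prior)$ needs Jensen's inequality for the concave envelope $\hat g$, \emph{and} it needs $\hat g(\prior)$ to equal the supremum of your $\Psi$ over simple policies. If $\hat g$ is the algebraic concave envelope (supremum over finite splittings), the second requirement is definitional but Jensen fails; if $\hat g$ is the upper-semicontinuous concave envelope (infimum of continuous affine majorants), Jensen holds but the first requirement fails. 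One counterexample kills both readings: let $\Omega=[0,1]$, let $\prior$ be Lebesgue measure, and let $g=\mathbf{1}_{E}$, where $E\subseteq\DO$ is the set of Dirac measures (a closed set, so $g$ is even upper semicontinuous). No finite splitting of the non-atomic $\prior$ can put mass on $E$, so the algebraic envelope at $\prior$ equals $0$; yet the full-information policy $\ipb\in\Info$ (the pushforward of $\prior$ under $\omega\mapsto\delta_\omega$) has $\int g\, \dd\ipb=1$. Hence $\int \hat g\, \dd\ipb\ge 1>0=\hat g(\prior)$, so Jensen fails for the algebraic envelope, while the usc envelope at $\prior$ is at least $1$, strictly above the supremum over simple policies. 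A generic envelope argument therefore cannot deliver the bound; it must come from the structure of $g=\posteriorut_{\ut}-\icostd$.

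That structure is exactly what the paper exploits, and it is the fix you need. Partition $\DO$ into finitely many measurable cells on which $\posteriorut_{\ut}$ agrees with a single affine function $\posterior\mapsto\int\ut_a\,\dd\posterior$ (one cell per action, ties broken measurably), and collapse an arbitrary $\ipb\in\Info$ to the simple policy that places each cell's mass on that cell's conditional barycenter. The benefit term weakly increases (on each cell it is affine, and $\posteriorut_{\ut}$ at the barycenter dominates that affine value), while the cost term weakly decreases by Jensen's inequality for the lower-semicontinuous convex $\icostd$---which, unlike Jensen for your envelope, is valid on a compact metrizable convex set. This produces a single simple policy that weakly $\Psi$-improves on $\ipb$, and is precisely the content of the paper's ``we may assume without loss that $\tilde\ip\in\Infof$'' step. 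Substituting this cell-collapse for your envelope argument closes the gap and makes your proof essentially the paper's, up to contraposition.
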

\begin{proof}
Suppose first $\ip \in \argmax_{\ipb \in \Info} \int \left[\posteriorut_\ut(\posterior) - \icostd(\posterior)\right] \ \ipb(\dd\posterior).$ Then, for every $\ipb \in \Domicost$,
\[
\int \posteriorut_\ut(\posterior) \ (\ipb - \ip)(\dd\posterior) \leq \int \icostd(\posterior) \ (\ipb - \ip)(\dd\posterior) \leq \icost(\ipb) - \icost(\ip),\]
where the last inequality follows from $\icost\left(\ip+\epsilon(\ipb - \ip)\right)$ being convex in $\epsilon\in[0,1]$. Because $\icost(\ipb) = \infty$ for all $\ipb \in \Info \setminus \Domicost$, the left-hand-side condition follows. 

Conversely, suppose some $\tilde\ip\in \Info$ has $\int \left[\posteriorut_\ut(\posterior) - \icostd(\posterior)\right] \ \tilde\ip(\dd\posterior) > \int \left[\posteriorut_\ut(\posterior) - \icostd(\posterior)\right] \ \ip(\dd\posterior)$. Because $\icostd$ is convex and lower semicontinuous, and $\posteriorut_\ut$ is the maximum of finitely many affine functions, we may assume without loss that $\tilde\ip \in \Infof$. 
Observe every $\epsilon \in (0,1)$ satisfies
$$\int \left[\posteriorut_\ut(\posterior) - \icostd(\posterior)\right] \ \left[\ip+\epsilon(\tilde\ip- \ip)\right](\dd\posterior) > \int \left[\posteriorut_\ut(\posterior) - \icostd(\posterior)\right] \ \ip(\dd\posterior).$$
Thus, because $\ip \in \feas\ \icost$, we may assume without loss some convex combination $\ip'$ of $\tilde\ip$ and $\ip$ has $\ip^{'}\in \Domicost$. But then 
\begin{equation*}
\begin{split}
0 & < \int \posteriorut_{\ut}(\posterior) \ (\ip'- \ip)(\dd\posterior) - \int \icostd(\posterior) \ (\ip'-\ip)(\dd\posterior)
\\ & = \lim_{\epsilon \searrow 0}\frac{1}{\epsilon}\left\{\int \posteriorut_\ut(\posterior) \ \left[\ip+\epsilon(\ip'- \ip)\right](\dd\posterior) - \icost\left(\ip+\epsilon(\ip'-\ip)\right) - \int \posteriorut_{\ut}(\posterior) \ \ip(\dd\posterior) + \icost(\ip)\right\},
\end{split}
\end{equation*}
so that small enough $\epsilon\in(0,1)$ will satisfy
$$\int \posteriorut_\ut(\posterior) \ \left[\ip+\epsilon(\ip'- \ip)\right](\dd\posterior) - \icost\left(\ip+\epsilon(\ip'-\ip)\right) > \int \posteriorut_{\ut}(\posterior) \ \ip(\dd\posterior) - \icost(\ip).$$
The lemma follows.
\end{proof}

The next lemma establishes an equivalence between optimality of an SCR for our agent's cost function and its optimality for a posterior-separable approximation of the same.
\begin{lemma}\label{lem: posterior separable approximation characterizes optimal SCR}
Suppose $\scr\in \SCR$ has $\ip^{\scr} \in \feas\ \icost$ and $\icostd$ is a derivative of $\icost$ at $\ip^{\scr}$. Given $\ut\in\UT$, 
$$\scr \in \argmax_{\scr' \in \SCR} \left[\bbExp{\ut\cdot\scr'} -\cost(\scr') \right] \ \iff \ \scr \in \argmax_{\scr' \in \SCR} \left[\bbExp{\ut\cdot\scr'} -\cost_{\icostd}(\scr') \right].$$
\end{lemma}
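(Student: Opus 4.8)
The plan is to deduce the stated equivalence from the two immediately preceding lemmas, exploiting the fact that optimality of an SCR splits into a part concerning the agent's interim action choice and a part concerning her information choice, where only the latter sees the cost function. Concretely, I would characterize $\cost$-optimality and $\cost_{\icostd}$-optimality separately via Lemma~\ref{lem: characterizing optimal SCRs by optimal information and optimal interim action choice}, observe that the two characterizations share an identical interim-action condition, and then match the two information conditions using Lemma~\ref{lem: posterior separable approximation characterizes optimal information}.

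First I would invoke Lemma~\ref{lem: characterizing optimal SCRs by optimal information and optimal interim action choice} for the cost $\icost$. Since $\ip^{\scr}\in\feas\ \icost\subseteq\Domicost$, we have $\cost(\scr)=\icost(\ip^{\scr})<\infty$ by Lemma~\ref{lem: which policies can induce}, so $\scr\in\Dom$ and maximizing over $\SCR$ coincides with maximizing over $\Dom$. The lemma then says $\scr$ is $\ut$-optimal for $\cost$ exactly when (a) every $a\in\supp(\scr)$ has $\int\ut_a(\omega)\ \posterior^{\scr}_a(\dd\omega)=\posteriorut_\ut(\posterior^{\scr}_a)$, and (b) $\ip^{\scr}\in\argmax_{\ip\in\Domicost}\left[\int\posteriorut_\ut(\posterior)\ \ip(\dd\posterior)-\icost(\ip)\right]$. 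Next I would apply the same lemma to the posterior-separable surrogate $\icost_{\icostd}$. This requires checking $\icost_{\icostd}$ is an admissible cost function: it is linear, hence convex, on $\Info$; it is lower semicontinuous and monotone because $\icostd\in\icostD$ is convex and lower semicontinuous (monotonicity via the HLPBSSC theorem); and it is proper because $\int\icostd(\posterior)\ \ip^{\scr}(\dd\posterior)$ is finite by the definition of a derivative. Hence the lemma applies with $(\icost,\cost)$ replaced by $(\icost_{\icostd},\cost_{\icostd})$, so that $\scr$ is $\ut$-optimal for $\cost_{\icostd}$ exactly when the same condition (a) holds together with (b') $\ip^{\scr}\in\argmax_{\ip}\left[\int\posteriorut_\ut(\posterior)\ \ip(\dd\posterior)-\icost_{\icostd}(\ip)\right]$.

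The interim condition (a) is literally the same in both characterizations, as it involves only $\ut$ and the revealed posteriors $\posterior^{\scr}_a$, not the cost. So everything reduces to showing (b) $\iff$ (b'). Writing $\int\posteriorut_\ut(\posterior)\ \ip(\dd\posterior)-\icost_{\icostd}(\ip)=\int\left[\posteriorut_\ut(\posterior)-\icostd(\posterior)\right]\ \ip(\dd\posterior)$, condition (b') is precisely the assertion that $\ip^{\scr}$ lies in the right-hand argmax of Lemma~\ref{lem: posterior separable approximation characterizes optimal information}, while (b) is the assertion that it lies in the left-hand one; that lemma, applicable because $\icostd$ is a derivative of $\icost$ at $\ip^{\scr}\in\feas\ \icost$, delivers their equivalence. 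In both (b) and (b') the maximization may be taken over all of $\Info$, since the objective is $-\infty$ off the relevant effective domain, matching the two sides of that lemma.

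The step needing the most care is the second: verifying that Lemma~\ref{lem: characterizing optimal SCRs by optimal information and optimal interim action choice} legitimately applies to the surrogate $\icost_{\icostd}$ and that its information-optimality conclusion lines up exactly with the two sides of Lemma~\ref{lem: posterior separable approximation characterizes optimal information}. Once the domains ($\SCR$ versus $\Dom$, and $\Info$ versus the effective domains of $\icost$ and $\icost_{\icostd}$) are reconciled, the remainder is pure bookkeeping, since the interim condition is cost-independent and the two information conditions are handled verbatim by the preceding lemma.
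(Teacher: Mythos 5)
Your proposal is correct and follows essentially the same route as the paper's own proof: apply Lemma~\ref{lem: characterizing optimal SCRs by optimal information and optimal interim action choice} to both $\icost$ and the surrogate $\icost_{\icostd}$ (after noting the latter satisfies the standing hypotheses on cost functions), observe the interim-action condition is cost-independent, and match the two information-optimality conditions via Lemma~\ref{lem: posterior separable approximation characterizes optimal information}. Your extra bookkeeping---verifying admissibility of $\icost_{\icostd}$ explicitly and reconciling the domains $\SCR$ versus $\Dom$ and $\Info$ versus the effective domains---only spells out steps the paper leaves implicit.
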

\begin{proof}
Lemma~\ref{lem: characterizing optimal SCRs by optimal information and optimal interim action choice} says $\scr$ is optimal with cost $\cost$ if and only if $\int \ut_{a}(\omega) \ \posterior^{\scr}_{a}(\dd\omega)=\posteriorut_{\ut}(\posterior^{\scr}_{a})$ for all $a\in \supp(\scr)$ and $\ip^{\scr} \in \argmax_{\ip\in\Domicost} \left[\int \posteriorut_{\ut}(\omega) \ \ip(\dd\omega) - \icost(\ip)\right].$ 
Further, the information cost $\icost_\icostd$ satisfies our standing hypotheses on $\icost$. Lemma~\ref{lem: characterizing optimal SCRs by optimal information and optimal interim action choice} therefore tells us $\scr$ is optimal with cost $\cost_\icostd$ if and only if $\int \ut_{a}(\omega) \ \posterior^{\scr}_{a}(\dd\omega)=\posteriorut_{\ut}(\posterior^{\scr}_{a})$ for all $a\in \supp(\scr)$ and $\ip^{\scr} \in \argmax_{\ipb\in\Domicost} \left[\int \posteriorut_{\ut}(\posterior) \ \ipb(\dd\posterior) - \icost_\icostd(\ipb)\right].$ The equivalence would therefore follow if we knew $\ip^{\scr} \in \argmax_{\ip\in\Domicost} \left[\int \posteriorut_{\ut}(\posterior) \ \ip(\dd\posterior) - \icost(\ip)\right]$ if and only if 
\[
\ip^{\scr} \in \argmax_{\ip\in\Domicost} \left[\int \posteriorut_{\ut}(\posterior) \ \ip(\dd\posterior) - \icost_\icostd(\ip)\right],
\]
which Lemma~\ref{lem: posterior separable approximation characterizes optimal information} guarantees.
\end{proof}

The next lemma gives an explicit formula for the directional derivatives of the indirect cost function when information costs are posterior separable.

\begin{lemma}\label{lem: directional derivative of cost for differentiable case}
If $\icostd\in\icostD$ and 
$\scr,\scr'\in\SCR$ are such that every $\supp(\ip^\scr) \subseteq\feas\ \icostd$, then\footnote{
When $\ip^{\scr'}_{a}=0$, we adopt the convention that $\ip^{\scr'}_{a}\ d^{+}_{\posterior^{\scr}_{a}}\icostd(\posterior_{a}^{\scr'})$ is zero too.} 
\begin{eqnarray*}
d^{+}_{\scr} \cost_{\icostd} (\scr') &=& 
\sum_{a\in \supp(\scr)} \left[ (\ip^{\scr'}_{a} - \ip^{\scr}_{a})\ \icostd(\posterior_{a}^{\scr}) 
+ \ip^{\scr'}_{a} \ d^{+}_{\posterior^{\scr}_{a}}\icostd(\posterior_{a}^{\scr'})\right] \\
&&+ \sum_{a\in \supp(\scr')\setminus\supp(\scr)}  \ip^{\scr'}_{a} \ \icostd(\posterior_{a}^{\scr'}).
\end{eqnarray*}
\end{lemma}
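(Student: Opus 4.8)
The plan is to compute the directional derivative straight from its definition, exploiting that $\cost_\icostd$ is a finite sum over actions. Write $\scr^\epsilon := \scr + \epsilon(\scr'-\scr)$ for $\epsilon\in(0,1)$, so that $\cost_\icostd(\scr^\epsilon)=\icost_\icostd(\ip^{\scr^\epsilon})=\sum_{a\in\supp(\ip^{\scr^\epsilon})}\ip^{\scr^\epsilon}_a\icostd(\posterior^{\scr^\epsilon}_a)$. By direct computation (exactly as in the proof of Lemma~\ref{lem: SCR basic properties from IP basic properties}), $\supp(\ip^{\scr^\epsilon})=\supp(\scr)\cup\supp(\scr')$, with $\ip^{\scr^\epsilon}_a=(1-\epsilon)\ip^\scr_a+\epsilon\ip^{\scr'}_a$ and $\ip^{\scr^\epsilon}_a\posterior^{\scr^\epsilon}_a=(1-\epsilon)\ip^\scr_a\posterior^\scr_a+\epsilon\ip^{\scr'}_a\posterior^{\scr'}_a$. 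Since $\cost_\icostd$ is convex (Corollary~\ref{cor: Properties of Indirect Cost}, as $\icost_\icostd$ is posterior separable, hence monotone and convex) and the sum is finite, the directional derivative exists in the extended reals and equals the sum of the per-action limits of $\tfrac1\epsilon[\ip^{\scr^\epsilon}_a\icostd(\posterior^{\scr^\epsilon}_a)-\ip^\scr_a\icostd(\posterior^\scr_a)]$. First I would partition according to whether $a\in\supp(\scr')\setminus\supp(\scr)$ or $a\in\supp(\scr)$ and match the two groups to the two sums.

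The group $a\in\supp(\scr')\setminus\supp(\scr)$ is immediate: there $\ip^\scr_a=0$, so $\ip^{\scr^\epsilon}_a=\epsilon\ip^{\scr'}_a$ and $\posterior^{\scr^\epsilon}_a=\posterior^{\scr'}_a$, whence the term equals $\epsilon\,\ip^{\scr'}_a\icostd(\posterior^{\scr'}_a)$ and contributes exactly $\ip^{\scr'}_a\icostd(\posterior^{\scr'}_a)$ to the derivative, giving the second sum. For $a\in\supp(\scr)$ I would use the product-rule decomposition
\[
\tfrac1\epsilon\big[\ip^{\scr^\epsilon}_a\icostd(\posterior^{\scr^\epsilon}_a) - \ip^\scr_a\icostd(\posterior^\scr_a)\big]
= (\ip^{\scr'}_a-\ip^\scr_a)\,\icostd(\posterior^{\scr^\epsilon}_a) + \ip^\scr_a\,\tfrac{\icostd(\posterior^{\scr^\epsilon}_a) - \icostd(\posterior^\scr_a)}{\epsilon},
\]
using $\tfrac1\epsilon(\ip^{\scr^\epsilon}_a-\ip^\scr_a)=\ip^{\scr'}_a-\ip^\scr_a$. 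In the first term, $\posterior^{\scr^\epsilon}_a\to\posterior^\scr_a$ along the segment toward $\posterior^{\scr'}_a$, and lower semicontinuity of $\icostd$ together with its finiteness on the interior of that segment (see below) forces $\icostd(\posterior^{\scr^\epsilon}_a)\to\icostd(\posterior^\scr_a)<\infty$; hence the first term tends to $(\ip^{\scr'}_a-\ip^\scr_a)\icostd(\posterior^\scr_a)$, the first piece of the first sum.

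The crux is the second term. Writing $\posterior^{\scr^\epsilon}_a=\posterior^\scr_a+\theta_a(\epsilon)(\posterior^{\scr'}_a-\posterior^\scr_a)$ with $\theta_a(\epsilon)=\tfrac{\epsilon\ip^{\scr'}_a}{(1-\epsilon)\ip^\scr_a+\epsilon\ip^{\scr'}_a}$, the posterior travels along the \emph{fixed} ray from $\posterior^\scr_a$ toward $\posterior^{\scr'}_a$, with $\theta_a(\epsilon)\searrow0$ and $\theta_a(\epsilon)/\epsilon\to\ip^{\scr'}_a/\ip^\scr_a$ as $\epsilon\searrow0$. Thus
\[
\ip^\scr_a\,\tfrac{\icostd(\posterior^{\scr^\epsilon}_a)-\icostd(\posterior^\scr_a)}{\epsilon}
= \ip^\scr_a\,\tfrac{\theta_a(\epsilon)}{\epsilon}\cdot\tfrac{\icostd\big(\posterior^\scr_a+\theta_a(\epsilon)(\posterior^{\scr'}_a-\posterior^\scr_a)\big)-\icostd(\posterior^\scr_a)}{\theta_a(\epsilon)}.
\]
By convexity of $\icostd$, the difference quotient $t\mapsto\tfrac1t[\icostd(\posterior^\scr_a+t(\posterior^{\scr'}_a-\posterior^\scr_a))-\icostd(\posterior^\scr_a)]$ is nondecreasing in $t>0$, hence converges (decreasingly) to $d^+_{\posterior^\scr_a}\icostd(\posterior^{\scr'}_a)$ as $t\searrow0$; evaluating along $t=\theta_a(\epsilon)\searrow0$ gives the same limit, so the second term converges to $\ip^\scr_a\cdot\tfrac{\ip^{\scr'}_a}{\ip^\scr_a}\,d^+_{\posterior^\scr_a}\icostd(\posterior^{\scr'}_a)=\ip^{\scr'}_a\,d^+_{\posterior^\scr_a}\icostd(\posterior^{\scr'}_a)$, the second piece of the first sum. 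Here the hypothesis $\supp(\ip^\scr)\subseteq\feas\ \icostd$ is what I would use: taking $\posteriorb=\posterior^{\scr'}_a$ in the definition of $\feas\ \icostd$ yields a point of finiteness in the interior of the ray, which both supplies the continuity used above and keeps $d^+_{\posterior^\scr_a}\icostd(\posterior^{\scr'}_a)<+\infty$. The stated convention handles $a\in\supp(\scr)\setminus\supp(\scr')$, where $\ip^{\scr'}_a=0$ forces $\theta_a\equiv0$ and the second term vanishes.

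I expect the reparametrization step to be the main obstacle, primarily in keeping the extended-real bookkeeping honest: $d^+_{\posterior^\scr_a}\icostd(\posterior^{\scr'}_a)$ may equal $-\infty$, in which case the strictly positive limit $\theta_a(\epsilon)/\epsilon\to\ip^{\scr'}_a/\ip^\scr_a$ still forces convergence to $-\infty$, and the finiteness of the first-term contribution rules out any $\infty-\infty$ ambiguity within a single ray. Summing the finitely many action terms — where interchange of the limit with the sum is automatic since there are only $|A|$ of them — then reproduces the two displayed sums exactly, completing the proof.
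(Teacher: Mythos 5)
Your proposal is correct and follows essentially the same route as the paper's proof: the same parametrization $\scr^\epsilon=\scr+\epsilon(\scr'-\scr)$ with the Bayes'-rule formulas for $\ip^{\scr^\epsilon}_a$ and $\posterior^{\scr^\epsilon}_a$, the same product-rule split of each action's term, the same reparametrization of the posterior difference quotient (your $\theta_a(\epsilon)$ is exactly the paper's $\epsilon\,\ip^{\scr'}_a/\ip^{\scr^\epsilon}_a$), and the same treatment of the degenerate cases $\ip^\scr_a=0$ and $\ip^{\scr'}_a=0$. Your extra care with the extended-real bookkeeping (ruling out $+\infty$ for the directional derivative via feasibility, and justifying continuity along the segment via lower semicontinuity plus convexity) only makes explicit what the paper leaves terse.
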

\begin{proof}
Let $\hat A:=\supp(\scr) \cup \supp(\scr')$, and define $\scr^\epsilon:=\scr+ \epsilon(\scr'-\scr)$ for each $\epsilon \in (0,1)$. Any $\epsilon\in(0,1)$ has $\supp(\scr^\epsilon)=\hat A$ and, for each $a\in \hat A$, \begin{eqnarray*}
\ip_a^{\scr^\epsilon} &=& \ip_a^{\scr} + \epsilon(\ip_a^{\scr'}-\ip_a^{\scr}) \\
\posterior_a^{\scr^\epsilon} &=& \posterior_a^{\scr} + \epsilon\tfrac{\ip_a^{\scr'}}{\ip_a^{\scr^\epsilon}}(\posterior_a^{\scr'}-\posterior_a^{\scr}).
\end{eqnarray*}
Therefore, every $\epsilon\in(0,1)$ and $a\in\hat A$ have
$$\eta_a(\epsilon):=\tfrac1\epsilon\left[ \ip^{\scr^\epsilon}_a \ \icostd(\posterior^{\scr^\epsilon}_a) - \ip^{\scr}_a \ \icostd(\posterior^{\scr}_a) \right] 
= (\ip^{\scr'}_a-\ip^{\scr}_a) \ \icostd(\posterior^{\scr^\epsilon}_a) + \ip^{\scr}_a\ \tfrac1{\epsilon} \left[ \icostd(\posterior^{\scr^\epsilon}_a)- \icostd(\posterior^{\scr}_a) \right].$$
Hence, $\eta_a(\epsilon)$ is equal to $\ip^{\scr'}_a \icostd(\posterior^{\scr'}_a)$ if $\ip^\scr_a=0$, is equal to $-\ip^{\scr}_a \icostd(\posterior^{\scr}_a)$ if $\ip^{\scr'}_a=0$, 
and is otherwise equal to
$$
(\ip^{\scr'}_a-\ip^{\scr}_a) \ \icostd(\posterior^{\scr^\epsilon}_a) + \ip^{\scr'}_a\ \dfrac{\ip^{\scr}_a}{\ip^{\scr^\epsilon}_a}\ \dfrac1{\epsilon\tfrac{\ip^{\scr'}_a}{\ip^{\scr^\epsilon}_a}} 
\left[ \icostd(\posterior^{\scr^\epsilon}_a)- \icostd(\posterior^{\scr}_a) \right],
$$
which converges as $\epsilon\to0$ (because $\icostd$ is convex, and hence continuous on any open line segment on its domain, and $\supp(\ip^\scr)\subseteq\feas\ \icostd$) to \begin{eqnarray*}
&&(\ip^{\scr'}_a-\ip^{\scr}_a) \ \icostd\left(\lim_{\epsilon\searrow0}\posterior^{\scr^\epsilon}_a\right) + \ip^{\scr'}_a\ \dfrac{\ip^{\scr}_a}{\lim_{\epsilon\searrow0}\ip^{\scr^\epsilon}_a}\ \lim_{\tilde\epsilon\searrow0}
\dfrac1{\tilde\epsilon} 
\left[ 
\icostd(\posterior^{{\scr} + \tilde\epsilon({\scr'}-{\scr})})- \icostd(\posterior^{\scr}_a) 
\right]\\
&=&(\ip^{\scr'}_{a} - \ip^{\scr}_{a})\ \icostd(\posterior_{a}^{\scr}) 
+ \ip^{\scr'}_{a}\ 1 \ d^{+}_{\posterior^{\scr}_{a}}\icostd(\posterior_{a}^{\scr'}).
\end{eqnarray*}
Hence, \begin{eqnarray*}
\lim_{\epsilon\searrow 0} \ \tfrac{1}{\epsilon}\left[\cost_{\icostd}(\scr + \epsilon(\scr'-\scr)) - \cost_{\icostd}(\scr)\right] 
&=& \lim_{\epsilon\searrow 0}\tfrac{1}{\epsilon}\int \icostd \ddd (\ip^{\scr^\epsilon} - \ip^{\scr}) =\lim_{\epsilon\searrow 0}\sum_{a\in\hat A} \eta_a(\epsilon) \\
&=& \sum_{a\in \supp(\scr)} \left[ (\ip^{\scr'}_{a} - \ip^{\scr}_{a})\ \icostd(\posterior_{a}^{\scr}) 
+ \ip^{\scr'}_{a} \ d^{+}_{\posterior^{\scr}_{a}}\icostd(\posterior_{a}^{\scr'})\right] \\
&&+ \sum_{a\in \supp(\scr') \setminus \supp(\scr)}  \ip^{\scr'}_{a} \ \icostd(\posterior_{a}^{\scr'}),
\end{eqnarray*}
as desired.
\end{proof}

The following lemma shows derivatives preserve the finite-cost property of any information policy.

\begin{lemma}\label{lem: derivatives preserve finite cost}
If $\icostd$ is a derivative of $\icost$ at $\ip\in\Domicost$, then $\int\icostd(\posterior)\ \ip'(\dd\posterior)<\infty$ for every $\ip'\in\Domicost$.
\end{lemma}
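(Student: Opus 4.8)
The plan is to extract the conclusion directly from the defining equation of the derivative, using convexity of $\icost$ to control the right-hand side. Fix $\ip'\in\Domicost$. The key observation is that, since $\icost$ is convex, the difference quotient $\epsilon\mapsto\tfrac{1}{\epsilon}\left[\icost(\ip+\epsilon(\ip'-\ip))-\icost(\ip)\right]$ is nondecreasing in $\epsilon>0$, so the limit defining $d^{+}_{\ip}\icost(\ip')$ is in fact an infimum over $\epsilon>0$ and is therefore dominated by the value of the quotient at $\epsilon=1$. That value is exactly $\icost(\ip')-\icost(\ip)$, which is a finite real number because both $\ip$ and $\ip'$ lie in the effective domain $\Domicost$. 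Hence $d^{+}_{\ip}\icost(\ip')\leq\icost(\ip')-\icost(\ip)<\infty$.

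The second ingredient is to make sense of $\int\icostd(\posterior)\ \ip'(\dd\posterior)$ as a stand-alone quantity and to justify splitting the integral against the signed measure $\ip'-\ip$. Since $\icostd\in\icostD$ is lower semicontinuous on the compact metrizable belief space $\DO$ and takes the finite value $\icostd(\prior)$, it attains its infimum there and so is bounded below by some real constant; consequently $\int\icostd(\posterior)\ \ip'(\dd\posterior)$ is a well-defined element of $(-\infty,\infty]$. With $\int\icostd(\posterior)\ \ip(\dd\posterior)$ finite by the definition of a derivative, there is no $\infty-\infty$ ambiguity, and the defining property of $\icostd$ yields
\[
\int\icostd(\posterior)\ \ip'(\dd\posterior)-\int\icostd(\posterior)\ \ip(\dd\posterior)=\int\icostd(\posterior)\ (\ip'-\ip)(\dd\posterior)=d^{+}_{\ip}\icost(\ip')\leq\icost(\ip')-\icost(\ip)<\infty.
\]
Rearranging and using finiteness of $\int\icostd(\posterior)\ \ip(\dd\posterior)$ gives $\int\icostd(\posterior)\ \ip'(\dd\posterior)<\infty$, as claimed.

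The only genuinely delicate point—and the step I would be most careful about—is the well-definedness and additivity of the integral of $\icostd$ against the signed measure $\ip'-\ip$; everything else is a one-line consequence of convexity. Establishing that $\icostd$ is bounded below (via lower semicontinuity on the compact belief simplex $\DO$, together with $\icostd(\prior)<\infty$) is precisely what rules out the $\infty-\infty$ pathology and licenses splitting the directional-derivative identity into two separate integrals. Once that is in place, the proof reduces to the displayed chain of inequalities.
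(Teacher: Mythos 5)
Your proof is correct and takes essentially the same route as the paper's: convexity of $\icost$ (via monotone difference quotients) rules out $d^{+}_{\ip}\icost(\ip')=+\infty$, and the derivative identity together with finiteness of $\int\icostd(\posterior)\ \ip(\dd\posterior)$ then gives $\int\icostd(\posterior)\ \ip'(\dd\posterior)<\infty$ by rearrangement. Your extra step establishing that $\icostd$ is bounded below (lower semicontinuity on the compact space $\DO$), which licenses splitting the integral against the signed measure $\ip'-\ip$, is a measure-theoretic detail the paper leaves implicit but does not change the argument.
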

\begin{proof}
Consider any $\ip'\in\Domicost$. Convexity of $\icost$ implies $d^{+}_{\ip}\icost(\ip')\in\real\cup\{-\infty\}$. Hence, applying the definition of a derivative (including that $\int \icostd(\posterior)\ \ip(\dd\posterior)<\infty$), 
$$\int \icostd(\posterior)\ \ip'(\dd\posterior) = \int \icostd(\posterior)\ \ip(\dd\posterior)+ d^{+}_{\ip}\icost(\ip')<\infty,$$
as required.
\end{proof}

\subsection{On Iteratively Differentiable Costs}\label{app: On Iterative Differentiability}

Although not relevant to our subsequent results, let us briefly note a uniqueness property (proven in Appendix~\ref{app: aux}) for iterated differentiability that justifies the notation $\icostdd_{\posterior}$.

\begin{fact}\label{fact: derivative of derivative is unique}
Any function in $\icostD$ admits at most one derivative at a simply drawn posterior.
\end{fact}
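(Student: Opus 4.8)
The plan is to show that the two defining requirements of a derivative, once tested against the full family of simply drawn posteriors, leave no freedom whatsoever. Fix a simply drawn posterior $\posterior$, and suppose both $\func\in\lp1$ and $\funcb\in\lp1$ are derivatives of $\icostd$ at $\posterior$; the goal is to conclude $\func=\funcb$ in $\lp1$, i.e. $\prior$-almost everywhere. First I would combine the two clauses of the definition. The normalization clause gives $\int\func\ \dd\posterior=\icostd(\posterior)=\int\funcb\ \dd\posterior$, hence $\int(\func-\funcb)\ \dd\posterior=0$. The directional-derivative clause says that, for every simply drawn $\posterior'$, both $\int\func\ (\posterior'-\posterior)(\dd\omega)$ and $\int\funcb\ (\posterior'-\posterior)(\dd\omega)$ equal $d^{+}_{\posterior}\icostd(\posterior')$, so that $\int(\func-\funcb)\ (\posterior'-\posterior)(\dd\omega)=0$. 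Adding the two displays yields the single clean conclusion
\[
\int(\func-\funcb)(\omega)\ \posterior'(\dd\omega)=0\qquad\text{for every simply drawn }\posterior'.
\]

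Next I would convert this into a statement about $\lp\infty$ test functions. Recall a posterior is simply drawn exactly when $\epsilon\posterior'\le\prior$ for some $\epsilon>0$, i.e. when $\posterior'\ll\prior$ with density $\phi:=\tfrac{\dd\posterior'}{\dd\prior}\in\lp\infty$; conversely, such densities are precisely the nonnegative elements of $\lp\infty$ with $\bbE[\phi]=1$. Since $\func-\funcb\in\lp1$ and $\phi\in\lp\infty$, the product is integrable, and the displayed identity reads $\bbExp{(\func-\funcb)\phi}=0$ for every nonnegative $\phi\in\lp\infty$ with $\bbE[\phi]=1$. Any nonnegative $\psi\in\lp\infty$ that is not $\prior$-a.e.\ zero satisfies $\bbE[\psi]>0$ and can be rescaled to such a density, so $\bbExp{(\func-\funcb)\psi}=0$ for all nonnegative $\psi\in\lp\infty$; decomposing a general $\psi\in\lp\infty$ as $\psi^{+}-\psi^{-}$ then extends this to every $\psi\in\lp\infty$.

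Finally I would invoke $\lp1$--$\lp\infty$ duality: since $\func-\funcb\in\lp1$ pairs to zero against every $\psi\in\lp\infty$, taking $\psi=\mathrm{sgn}(\func-\funcb)\in\lp\infty$ gives $\bbExp{|\func-\funcb|}=0$, whence $\func=\funcb$ $\prior$-almost surely, which is the desired uniqueness. I expect the only substantive step to be the middle one—identifying exactly which $\prior$-densities arise from simply drawn posteriors and checking that this family is rich enough to separate points of $\lp1$; the first and last steps are routine bookkeeping. One minor point worth recording is that the family is nonempty, which holds because $\delta_{\prior}\in\Infof$ and hence $\prior$ itself (with density $\mathbf1$) is simply drawn.
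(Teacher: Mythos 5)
Your proof is correct and rests on the same mechanism as the paper's: both arguments exploit the fact that simply drawn posteriors are exactly the probability measures with nonnegative bounded densities with respect to $\prior$ (so that indicators of positive-probability events, suitably rescaled, are admissible test posteriors), and both test the difference of two candidate derivatives against this family to force it to vanish. The only difference is organizational: you fold in the normalization $\int \icostdd_{\posterior}\ \dd\posterior = \icostd(\posterior)$ at the outset and finish with $\lp{1}$--$\lp{\infty}$ duality via the test function $\mathrm{sgn}(\func-\funcb)$, whereas the paper defers the normalization---first showing the difference is $\prior$-a.s.\ constant by testing against the conditional measure on its upper level set, then pinning the constant to zero---a route that also yields its footnoted remark that, absent the normalization, derivatives are unique only up to an additive constant.
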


The following lemma yields a more explicit form (relative to Lemma~\ref{lem: directional derivative of cost for differentiable case}) for the directional derivative of a posterior-separable approximation of the cost function, in the iteratively differentiable case. 

To state a slightly more general version of the result (which will be helpful for one result described in section~\ref{sec: discussion}), we invest in another definition. 
For any $\icostd\in\icostD$ and any simply drawn posterior $\posterior$, define the \textbf{subdifferential} 
$$\partial\icostd(\posterior):=\left\{\func\in\lp1:\ \icostd(\posterior')\geq\icostd(\posterior)+\int \func(\omega) \ (\posterior'-\posterior)(\dd\omega) \ \forall \text{ simply drawn } \posterior'\in\DO\right\}$$
of $\icostd$ at $\posterior$. Clearly, if $\icostd$ is differentiable at $\posterior$, then $\icostdd_\mu\in\partial\icostd(\posterior)$ because $\icostd$ is convex. 

\begin{remark}
Although we adopt the notation $\partial\icostd$ for parsimony, the above definition is best understood as the subdifferential of a function $\tilde\icostd:\lp\infty\to\extreal$, where we identify each element of $\lp\infty$ with the measure on $\Omega$ whose Radon-Nikodym derivative with respect to $\prior$ is given by that element.
\end{remark}

\begin{lemma}\label{lem: directional derivative of cost for iteratively differentiable case, and subgradient result for differentiable case}
Let $\icostd\in\icostD$ and $\scr,\scr'\in\SCR$ have $\supp(\ip^\scr) \subseteq\feas\ \icostd$, and let $\func_a\in\partial\icostd(\posterior^\scr_a)$ have $\int\func_a(\omega)\ \posterior^\scr_a(\dd\omega)$
for each $a\in\supp(\scr)$. 
Let $\ut^{\scr,\scr'}\in\UT$ have $\ut^{\scr,\scr'}_a=\func_a$ for $a\in\supp(\scr)$ and 
$\ut^{\scr,\scr'}_a=\icostd(\posterior_{a}^{\scr'})\mathbf1$ for $a\in \supp(\scr')\setminus\supp(\scr)$. Then, 
$$d^{+}_{\scr} \cost_{\icostd} (\scr')\geq\bbExp{(\scr'-\scr)\cdot\ut},$$
with equality holding if $\func_a=\icostdd_{\posterior^\scr_a}$ for every $a\in\supp(\scr)$.
\end{lemma}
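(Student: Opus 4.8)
The plan is to reduce everything to the exact directional-derivative formula supplied by Lemma~\ref{lem: directional derivative of cost for differentiable case} and then match that expression, term by term, against $\bbExp{(\scr'-\scr)\cdot\ut^{\scr,\scr'}}$ (the vector the statement abbreviates $\ut$). First I would expand the inner product: writing $\hat A:=\supp(\scr)\cup\supp(\scr')$, the summands with $a\notin\hat A$ vanish because $\scr_a=\scr'_a=\mathbf0$ there, so (setting $\ut^{\scr,\scr'}_a:=\mathbf0$ off $\hat A$, which is immaterial) one has $\bbExp{(\scr'-\scr)\cdot\ut^{\scr,\scr'}}=\sum_{a\in\hat A}\bbExp{(\scr'_a-\scr_a)\ut^{\scr,\scr'}_a}$. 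Using $\scr_a=\ip^\scr_a\tfrac{\dd\posterior^\scr_a}{\dd\prior}$ and the analogous identity for $\scr'$, each summand becomes an integral against the revealed posteriors: for $a\in\supp(\scr)$ I can write $\bbExp{\scr_a\func_a}=\ip^\scr_a\int\func_a\,\dd\posterior^\scr_a=\ip^\scr_a\,\icostd(\posterior^\scr_a)$, invoking the normalization hypothesis $\int\func_a\,\dd\posterior^\scr_a=\icostd(\posterior^\scr_a)$, and $\bbExp{\scr'_a\func_a}=\ip^{\scr'}_a\int\func_a\,\dd\posterior^{\scr'}_a$.

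The heart of the argument is a single convexity inequality. Since $\func_a\in\partial\icostd(\posterior^\scr_a)$ and the directional derivative of a convex function dominates its subgradient pairing, I would establish
\[
d^{+}_{\posterior^\scr_a}\icostd(\posterior^{\scr'}_a)\;\geq\;\int\func_a(\omega)\,(\posterior^{\scr'}_a-\posterior^\scr_a)(\dd\omega)\;=\;\int\func_a\,\dd\posterior^{\scr'}_a-\icostd(\posterior^\scr_a),
\]
the last equality again using the normalization. Multiplying by $\ip^{\scr'}_a\geq0$ and adding $(\ip^{\scr'}_a-\ip^\scr_a)\icostd(\posterior^\scr_a)$ to both sides, the $a\in\supp(\scr)$ summand of the Lemma~\ref{lem: directional derivative of cost for differentiable case} formula is bounded below by $\ip^{\scr'}_a\int\func_a\,\dd\posterior^{\scr'}_a-\ip^\scr_a\icostd(\posterior^\scr_a)=\bbExp{(\scr'_a-\scr_a)\func_a}$, which is exactly the corresponding right-hand term. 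For $a\in\supp(\scr')\setminus\supp(\scr)$ the two sides agree on the nose: there $\scr_a=\mathbf0$ and $\ut^{\scr,\scr'}_a=\icostd(\posterior^{\scr'}_a)\mathbf1$, so $\bbExp{(\scr'_a-\scr_a)\ut^{\scr,\scr'}_a}=\ip^{\scr'}_a\icostd(\posterior^{\scr'}_a)$, matching the second sum in the formula. Adding the two groups of terms yields $d^{+}_{\scr}\cost_{\icostd}(\scr')\geq\bbExp{(\scr'-\scr)\cdot\ut^{\scr,\scr'}}$.

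For the equality clause, I would observe that when $\func_a=\icostdd_{\posterior^\scr_a}$ the displayed inequality becomes an equality by the very definition of a derivative of $\icostd$ at $\posterior^\scr_a$, applied with $\posterior'=\posterior^{\scr'}_a$; this is legitimate because $\posterior^{\scr'}_a$ is a simply drawn posterior, its density $\tfrac{\dd\posterior^{\scr'}_a}{\dd\prior}=\tfrac{\scr'_a}{\ip^{\scr'}_a}$ being bounded. I expect the only real obstacle to be bookkeeping rather than analysis: one must respect the degenerate conventions (terms with $\ip^{\scr'}_a=0$, where $\posterior^{\scr'}_a$ is arbitrary but the scaled products are declared zero, so the inequality holds trivially there and is in fact an equality), and verify that the pairings $\int\func_a\,\dd\posterior^{\scr'}_a=\tfrac1{\ip^{\scr'}_a}\bbExp{\func_a\scr'_a}$ are finite—which follows from $\func_a\in\lp1$ and $\scr'_a\in\lp\infty$—so that no step produces an indeterminate $\infty-\infty$. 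Beyond Lemma~\ref{lem: directional derivative of cost for differentiable case} and convexity of $\icostd$, no new estimate is needed.
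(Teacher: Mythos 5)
Your proposal is correct and follows essentially the same route as the paper's proof: both rest on the subgradient inequality $d^{+}_{\posterior^\scr_a}\icostd(\posterior^{\scr'}_a)\geq\int\func_a\,(\posterior^{\scr'}_a-\posterior^\scr_a)(\dd\omega)$ for $a\in\supp(\scr)$, the normalization $\int\func_a\,\dd\posterior^\scr_a=\icostd(\posterior^\scr_a)$, substitution into the formula of Lemma~\ref{lem: directional derivative of cost for differentiable case}, exact matching of the $\supp(\scr')\setminus\supp(\scr)$ terms, and the definition of a derivative of $\icostd$ for the equality clause. The only difference is cosmetic bookkeeping (you add terms to both sides where the paper expands the pairing via Radon--Nikodym densities), so there is nothing substantive to distinguish the two arguments.
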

\begin{proof}
Let $\ut:=\ut^{\scr,\scr'}$. Then, every $a\in \supp(\scr)$ has \[
\begin{split}
\ip^{\scr'}_{a}\ d^{+}_{\posterior^{\scr}_{a}}\icostd(\posterior_{a}^{\scr'})
& \geq\ip^{\scr'}_{a}\int \func_a(\omega) \ (\posterior_{a}^{\scr'}-\posterior^{\scr}_{a})(\dd\omega)
\\ & =\ip^{\scr'}_{a}\int \left[\frac{\scr'_{a}(\omega)}{\ip^{\scr'}_{a}} - \frac{\scr_{a}(\omega)}{\ip^{\scr}_{a}} \right] \func_a(\omega)  \ \prior(\dd\omega) 
\\ & =\int \left[\scr'_{a}(\omega) - \scr_{a}(\omega) \right] \func_a(\omega) \ \prior(\dd\omega)  
+ \int \left({\ip^{\scr}_{a} - {\ip^{\scr'}_{a}}}\right)\frac{\scr_{a}(\omega)}{\ip^{\scr}_{a}}
\func_a(\omega) \ \prior(\dd\omega)  
\\ & = \int \left[\scr'_{a}(\omega) - \scr_{a}(\omega) \right] \ut(\omega) \ \prior(\dd\omega)  
- \left(\ip^{\scr'}_{a} - \ip^{\scr}_{a}\right)\int \func_a(\omega)\ \posterior_{a}^{\scr}(\dd\omega) 
\\ & = \bbExp{ (\scr'_{a} - \scr_{a} )\ut_{a}} - (\ip^{\scr'}_{a} - \ip^{\scr}_{a} ) \ \icostd(\posterior_{a}^{\scr}).
\end{split}
\]
Moreover, the above inequality holds with equality if $\func_a=\icostdd_{\posterior^\scr_a}$ for every $a\in\supp(\scr)$. 
Therefore, Lemma~\ref{lem: directional derivative of cost for differentiable case} implies \begin{eqnarray*}
d^{+}_{\scr} \cost_{\icostd} (\scr') 
&\geq& \sum_{a\in \supp(\scr)} \bbExp{ (\scr'_{a} - \scr_{a} )\ut_{a}}+\sum_{a\in \supp(\scr')\setminus\supp(\scr)}  \ip^{\scr'}_{a} \ \icostd(\posterior_{a}^{\scr'}) \\
&=& \sum_{a\in \supp(\scr)} \bbExp{ (\scr'_{a} - \scr_{a} )\ut_{a}}+\sum_{a\in \supp(\scr')\setminus\supp(\scr)}  \bbExp{(\scr'_{a} - \scr_{a}) \ \icostd(\posterior_{a}^{\scr'})} \\
&=& \bbExp{(\scr'-\scr)\cdot\ut},
\end{eqnarray*}
again with equality if $\func_a=\icostdd_{\posterior^\scr_a}$ for every $a\in\supp(\scr)$. The result follows.
\end{proof}

The next lemma relates the set $\feas\ \icost$ to the corresponding set of beliefs for its derivatives.
\begin{lemma}\label{lem: feas for C to feas for c}
If $\icostd$ is a derivative of $\icost$ at $\ip\in\feas\ \icost$, then $\supp(\ip)\subseteq\feas\ \icostd$.
\end{lemma}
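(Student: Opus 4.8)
The plan is to verify the defining property of $\feas\ \icostd$ directly at each $\posterior\in\supp(\ip)$. Such a $\posterior$ is itself simply drawn (since $\ip\in\Infof$ and $w\,\posterior\leq\prior$ for its weight $w>0$ in $\ip$), so it only remains to produce, for an arbitrary simply drawn $\posteriorb$, some $\epsilon\in(0,1)$ with $\icostd\bigl(\posterior+\epsilon(\posteriorb-\posterior)\bigr)<\infty$. The engine is Lemma~\ref{lem: derivatives preserve finite cost}, which says $\int\icostd\,\dd\ip'<\infty$ for every $\ip'\in\Domicost$; I will manufacture a finite-cost information policy having an atom at the target belief and then read off pointwise finiteness of $\icostd$ there.

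Concretely, I would fix $t:=\tfrac12$ and set $\posterior_t:=(1-t)\posterior+t\posteriorb$. Because $\posterior$ and $\posteriorb$ are simply drawn, some $\delta>0$ has $\delta\posterior\leq\prior$ and $\delta\posteriorb\leq\prior$, whence $\delta\posterior_t\leq\prior$, so $\posterior_t$ is simply drawn as well. Choosing $\lambda\in(0,1)$ small enough that $\lambda\posterior_t\leq\prior$, the measure $\nu:=\tfrac1{1-\lambda}(\prior-\lambda\posterior_t)$ is a well-defined element of $\DO$, and $\ipb:=\lambda\,\delta_{\posterior_t}+(1-\lambda)\,\delta_{\nu}\in\Infof$ splits the prior with $\posterior_t\in\supp(\ipb)$. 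Since $\ip\in\feas\ \icost$, there is $\epsilon\in(0,1)$ with $\ip':=(1-\epsilon)\ip+\epsilon\ipb\in\Domicost$; note $\ip'$ is a genuine information policy (its barycenter is $\prior$, being a convex combination of two such) carrying mass $\epsilon\lambda>0$ on $\posterior_t$.

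Finally, Lemma~\ref{lem: derivatives preserve finite cost} gives $\int\icostd\,\dd\ip'<\infty$. As $\icostd\in\icostD$ is lower semicontinuous on the compact space $\DO$, it attains a (finite) minimum and is thus bounded below by some constant $m\in\real$; isolating the atom yields $\int\icostd\,\dd\ip'\geq \epsilon\lambda\,\icostd(\posterior_t)+m\,(1-\epsilon\lambda)$, forcing $\icostd(\posterior_t)<\infty$. Since $\posterior_t=\posterior+t(\posteriorb-\posterior)$ with $t=\tfrac12\in(0,1)$, this is exactly the required witness, so $\posterior\in\feas\ \icostd$; as $\posterior\in\supp(\ip)$ was arbitrary, $\supp(\ip)\subseteq\feas\ \icostd$.

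I expect the only genuinely delicate points to be (i) checking that $\posterior_t$ is simply drawn so that the complementary split $\nu$ is a legitimate probability measure---this is where I use that simply drawn posteriors have $\prior$-bounded densities---and (ii) the passage from finiteness of the integral to pointwise finiteness at the atom, which relies on $\icostd$ being bounded below (a consequence of lower semicontinuity on a compact domain). Everything else is the routine bookkeeping of convex combinations of information policies.
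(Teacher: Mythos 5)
Your proof is correct, and it rests on the same engine as the paper's: Lemma~\ref{lem: derivatives preserve finite cost} plus the observation that a finite $\icostd$-integral forces finiteness of $\icostd$ at any atom carrying positive mass (your lower-bound display is valid since $\icostd$ is bounded below on the compact set $\DO$, being lower semicontinuous and never $-\infty$). Where you diverge from the paper is in how the atom at a proper convex combination of $\posterior$ and $\posteriorb$ is manufactured. You fix the combination in advance (the midpoint $\posterior_t$), note that convex combinations of simply drawn posteriors are simply drawn, build a binary-support policy $\ipb$ with an atom at $\posterior_t$, and then invoke $\ip\in\feas\ \icost$ to place that atom inside a finite-cost mixture. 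The paper instead mixes $\ip$ with a simple policy having an atom at $\posteriorb$ itself, and then \emph{pools} the two atoms at $\posterior$ and $\posteriorb$ into a single atom at a weighted average $\posterior^\epsilon$; since pooling is a mean-preserving contraction, monotonicity of $\icost$ keeps the pooled policy in $\Domicost$. The trade-off: your construction avoids the garbling-plus-monotonicity step entirely and lets you name the witnessing convex weight ($\epsilon=\tfrac12$, or indeed any fixed $t\in(0,1)$), which is slightly stronger than what the definition of $\feas\ \icostd$ demands; the paper's construction accepts whatever combination weight $\epsilon$, $\wt$, $\wt'$ dictate but needs no auxiliary fact about convex combinations of simply drawn posteriors. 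Both are clean; yours is arguably the more self-contained of the two, at the cost of the extra (easy) verification that $\posterior_t$ is simply drawn so that the complementary posterior $\nu$ is a legitimate element of $\DO$.
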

\begin{proof}
Consider $\posterior\in\supp(\ip)$, and any simply drawn posterior $\posterior'$; we must show some proper convex combination of $\posterior$ and $\posterior'$ belongs to $\icostd^{-1}(\real)$. 

By hypothesis, $\ip$ is simple, $\posterior\in\supp(\ip)$, and $\posterior'$ is simply drawn. Hence, some finite-support $\ipb,\ipb'\in\Delta\DO$ and $\wt,\wt'\in(0,1]$ exist with $\ip=(1-\wt)\ipb+\wt\delta_{\posterior}$ and $\ip':=(1-\wt')\ipb'+\wt'\delta_{\posterior'}\in\Domicost$. Now, that $\ip\in\feas\ \icost$ implies some $\epsilon\in(0,1)$ has $(1-\epsilon)\ip+\epsilon\ip'\in\Domicost$. Define $$\posterior^\epsilon:=\tfrac{(1-\epsilon)\wt}{(1-\epsilon)\wt+\epsilon\wt'}\posterior+\tfrac{\epsilon\wt'}{(1-\epsilon)\wt+\epsilon\wt'}\posterior',$$ and 
$$\ip^\epsilon:=(1-\epsilon)(1-\wt)\ipb+\epsilon(1-\wt')\ipb'+ \left[ (1-\epsilon)\wt+\epsilon\wt' \right]\delta_{\posterior^\epsilon}.
$$
Because $\icost$ is monotone and $\ip^\epsilon\preceq (1-\epsilon)\ip+\epsilon\ip'$ by construction, $\ip^\epsilon\in\Domicost$. Hence, Lemma~\ref{lem: derivatives preserve finite cost} tells us $\int\icostd(\posterior)\ \ip^\epsilon(\dd\posterior)<\infty$, and so too $\icostd(\posterior^\epsilon)<\infty$. The lemma follows.
\end{proof}

The following lemma gives an exact optimality condition for a given SCR when information costs are iteratively differentiable.
\begin{lemma}\label{lem: FOC for iteratively differentiable case}
Let $\scr\in\SCR$ have $\ip^{\scr} \in \feas\ \icost$, suppose $\icost$ is iteratively differentiable at $\ip^{\scr}$ with derivative $\icostd$. For $\ut\in\UT$, the following are equivalent: \begin{enumerate}[(i)]
    \item SCR $\scr$ is $\ut$-rationalizable.
    \item Every $\scr'\in\SCR$ has $\bbExp{(\ut^{\scr,\scr'}-\ut)\cdot(\scr'-\scr)}\geq0$, where $\ut^{\scr,\scr'}\in\UT$ has $\ut^{\scr,\scr'}_a=\icostdd_{\posterior^{\scr}_{a}}$ for $a\in\supp(\scr)$ and $\ut^{\scr,\scr'}_a=\icostd(\posterior_{a}^{\scr'})\mathbf1$ for $a\in \supp(\scr')\setminus\supp(\scr)$
\end{enumerate}
\end{lemma}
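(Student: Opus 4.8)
The plan is to reduce the statement to the posterior-separable approximation $\cost_{\icostd}$ and then read condition (ii) off as the subgradient inequality for that approximation, using the directional-derivative formula of Lemma~\ref{lem: directional derivative of cost for iteratively differentiable case, and subgradient result for differentiable case}. Before doing so I would collect the prerequisites that unlock the auxiliary lemmas. Since $\ip^{\scr}\in\feas\ \icost$ and $\icostd$ is a derivative of $\icost$ at $\ip^{\scr}$, Lemma~\ref{lem: feas for C to feas for c} gives $\supp(\ip^{\scr})\subseteq\feas\ \icostd$, which is the only hypothesis on $\scr$ demanded by the directional-derivative lemma. Iterated differentiability of $\icost$ at $\ip^{\scr}$ with derivative $\icostd$ means $\icostd$ is differentiable at every $\posterior\in\supp\ \ip^{\scr}$, so $\icostdd_{\posterior^{\scr}_{a}}$ exists for each $a\in\supp(\scr)$; by the definition of a derivative of $\icostd$ one has $\int\icostdd_{\posterior^{\scr}_{a}}(\omega)\ \posterior^{\scr}_{a}(\dd\omega)=\icostd(\posterior^{\scr}_{a})$, and convexity of $\icostd$ together with the directional-derivative identity yields $\icostdd_{\posterior^{\scr}_{a}}\in\partial\icostd(\posterior^{\scr}_{a})$. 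These are precisely the conditions under which the equality case of Lemma~\ref{lem: directional derivative of cost for iteratively differentiable case, and subgradient result for differentiable case} applies with $\func_{a}=\icostdd_{\posterior^{\scr}_{a}}$, producing, for the $\ut^{\scr,\scr'}$ of the statement,
\[
d^{+}_{\scr}\cost_{\icostd}(\scr')=\bbExp{(\scr'-\scr)\cdot\ut^{\scr,\scr'}}\qquad\text{for every }\scr'\in\SCR.
\]

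Next I would run the main chain of equivalences. By Lemma~\ref{lem: posterior separable approximation characterizes optimal SCR}, condition (i) holds if and only if $\scr$ maximizes $\scr'\mapsto\bbExp{\ut\cdot\scr'}-\cost_{\icostd}(\scr')$ over $\SCR$, i.e. if and only if $\ut\in\partial\cost_{\icostd}(\scr)$, meaning $\cost_{\icostd}(\scr')-\cost_{\icostd}(\scr)\geq\bbExp{\ut\cdot(\scr'-\scr)}$ for all $\scr'\in\SCR$. Because $\cost_{\icostd}$ is the indirect cost of a posterior-separable cost (which satisfies the standing hypotheses, so $\cost_{\icostd}$ is proper, convex, and weak*-lower-semicontinuous by Corollary~\ref{cor: Properties of Indirect Cost}), this subgradient inequality is equivalent to the directional-derivative inequality $d^{+}_{\scr}\cost_{\icostd}(\scr')\geq\bbExp{\ut\cdot(\scr'-\scr)}$ for all $\scr'\in\SCR$: one direction is the standard convexity bound $\cost_{\icostd}(\scr')-\cost_{\icostd}(\scr)\geq d^{+}_{\scr}\cost_{\icostd}(\scr')$, and the other follows by applying the subgradient inequality along the segment $\scr+\epsilon(\scr'-\scr)$, dividing by $\epsilon$, and letting $\epsilon\searrow0$ (the difference quotients of the convex map $\epsilon\mapsto\cost_{\icostd}(\scr+\epsilon(\scr'-\scr))$ decrease to the directional derivative). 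Substituting the displayed equality then turns this into $\bbExp{(\scr'-\scr)\cdot\ut^{\scr,\scr'}}\geq\bbExp{\ut\cdot(\scr'-\scr)}$ for all $\scr'$, that is $\bbExp{(\ut^{\scr,\scr'}-\ut)\cdot(\scr'-\scr)}\geq0$, which is exactly condition (ii).

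The step I expect to require the most care is keeping the extended-real bookkeeping consistent, since the quantifier in (ii) ranges over all of $\SCR$ rather than only over the SCRs of finite $\cost_{\icostd}$-cost. When $\scr'$ has some $a\in\supp(\scr')\setminus\supp(\scr)$ with $\icostd(\posterior^{\scr'}_{a})=\infty$, both $d^{+}_{\scr}\cost_{\icostd}(\scr')$ and the inner product $\bbExp{(\ut^{\scr,\scr'}-\ut)\cdot(\scr'-\scr)}$ equal $+\infty$ (that coordinate contributes $\bbExp{\icostd(\posterior^{\scr'}_{a})\scr'_{a}}=+\infty$ because $\scr_{a}=\mathbf0$ while $\scr'_{a}\neq\mathbf0$, and no coordinate contributes $-\infty$), so the inequality is vacuous. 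When the only infinite values occur at coordinates $a\in\supp(\scr)\cap\supp(\scr')$, the directional derivative stays finite, since it is computed from $d^{+}_{\posterior^{\scr}_{a}}\icostd(\posterior^{\scr'}_{a})=\int\icostdd_{\posterior^{\scr}_{a}}(\omega)\ (\posterior^{\scr'}_{a}-\posterior^{\scr}_{a})(\dd\omega)$, which is finite by differentiability of $\icostd$ at $\posterior^{\scr}_{a}$; moreover the segment $\scr+\epsilon(\scr'-\scr)$ re-enters the finite-cost region for small $\epsilon$ because $\posterior^{\scr}_{a}\in\feas\ \icostd$, so the limiting argument of the previous paragraph still delivers the inequality. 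Verifying that these conventions make each link of the chain literally correct—rather than correct only on the finite-cost cone—is the sole genuinely delicate point; everything else is an assembly of the cited lemmas.
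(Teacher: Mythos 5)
Your proof is correct and takes essentially the same route as the paper's: reduce to the posterior-separable approximation via Lemma~\ref{lem: posterior separable approximation characterizes optimal SCR}, characterize optimality by the directional-derivative inequality (which the paper simply cites as Lemma~\ref{lem: subdifferential characterization of optimality} rather than re-deriving from convexity as you do), and then substitute the equality case of Lemma~\ref{lem: directional derivative of cost for iteratively differentiable case, and subgradient result for differentiable case}, unlocked by Lemma~\ref{lem: feas for C to feas for c}. Your final paragraph on the extended-real conventions only makes explicit bookkeeping that the paper leaves implicit.
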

\begin{proof}
By Lemma~\ref{lem: posterior separable approximation characterizes optimal SCR}, $\ut$ rationalizes $\scr$ if and only if $\ut$ would rationalize $\scr$ given alternative information cost $\icost_{\icostd}$.
Hence, Lemma~\ref{lem: subdifferential characterization of optimality} (applied to the model with cost $\icost_\icostd$) tells us $\ut$ rationalizes $\scr$ if and only if every $\scr' \in \Dom$ has $d_{\scr}^{+}\cost_\icostd (\scr')\geq\bbExp{\ut\cdot(\scr' - \scr)}$. 
Because Lemma~\ref{lem: feas for C to feas for c} tells us $\supp(\ip^\scr) \subseteq\feas\ \icostd$, Lemma~\ref{lem: directional derivative of cost for iteratively differentiable case, and subgradient result for differentiable case} shows $d^{+}_{\scr} \cost_{\icostd} (\scr')=\bbExp{\ut^{\scr,\scr'}\cdot(\scr'-\scr)}$ for every $\scr'\in\SCR$. 
The equivalence follows.
\end{proof}

The following lemma characterizes the utilities that can rationalize a given full-support SCR when information costs are iteratively differentiable. Letting $\icostd$ denote a derivative of $\icost$ at the SCR, all such utilities are given by the derivative of $\icostd$ at the corresponding revealed posterior, augmented by a nuisance term, potentially with an additional payoff penalty for actions not chosen in a given state.

\begin{lemma}\label{lem: multiplier result for full-support iteratively differentiable case}
Let $\scr\in\SCR$ have $\ip^{\scr} \in \feas\ \icost$ and $\supp(\scr)=A$. 
Suppose $\icost$ is iteratively differentiable at $\ip^{\scr}$ with derivative $\icostd$.  
The following are equivalent for $\ut\in\UT$:
\begin{enumerate}[(i)]
\item SCR $\scr$ is $\ut$-rationalizable.
\item Some $\nuis\in \lp{1}$ and $\mult\in\lp{1}_+^A$ exist such that every $a\in A$ has \begin{eqnarray*}
\ut_a &=& \nuis-\mult_a+\icostdd_{\posterior^{\scr}_{a}}, \\
\scr_a\mult_a &=& 0.
\end{eqnarray*}
\end{enumerate}
\end{lemma}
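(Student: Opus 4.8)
The plan is to obtain the equivalence from the first-order characterization already established in Lemma~\ref{lem: FOC for iteratively differentiable case} and then reduce the resulting condition to a pointwise (state-by-state) linear program over the simplex $\Delta A$. The crucial simplification is that, because $\scr$ has full support, every $\scr'\in\SCR$ satisfies $\supp(\scr')\subseteq A=\supp(\scr)$, so the set $\supp(\scr')\setminus\supp(\scr)$ is always empty and the auxiliary utility $\ut^{\scr,\scr'}$ of Lemma~\ref{lem: FOC for iteratively differentiable case} never invokes its second branch. Hence $\ut^{\scr,\scr'}$ equals, independently of $\scr'$, the fixed function $\ut^\scr\in\UT$ given by $\ut^\scr_a:=\icostdd_{\posterior^\scr_a}$. (This is well defined for all $a\in A$: full support of $\scr$ gives $\ip^\scr_a>0$, so each $\posterior^\scr_a$ lies in $\supp(\ip^\scr)$, a point at which $\icostd$ is differentiable by iterative differentiability, and $\icostdd_{\posterior^\scr_a}\in\lp1$.) Writing $w:=\ut^\scr-\ut\in\UT$, so that $w_a=\icostdd_{\posterior^\scr_a}-\ut_a$, Lemma~\ref{lem: FOC for iteratively differentiable case} says $\scr$ is $\ut$-rationalizable if and only if $\bbExp{w\cdot(\scr'-\scr)}\geq0$ for all $\scr'\in\SCR$, i.e.\ if and only if $\scr$ minimizes the linear functional $\scr'\mapsto\bbExp{w\cdot\scr'}$ over $\SCR$.

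Next I would prove the separable-LP characterization: since each $w_a\in\lp1$, each $\scr'_a\in\lp\infty$, and the constraint set decouples across states (for $\prior$-a.e.\ $\omega$ the vector $(\scr'_a(\omega))_{a\in A}$ ranges freely over $\Delta A$), $\scr$ minimizes $\bbExp{w\cdot\scr'}$ over $\SCR$ if and only if, for $\prior$-a.e.\ $\omega$, $\scr_a(\omega)>0$ implies $w_a(\omega)=\min_{b\in A}w_b(\omega)$. The easy direction integrates the pointwise inequality $\sum_a w_a(\omega)\scr'_a(\omega)\geq\min_b w_b(\omega)=\sum_a w_a(\omega)\scr_a(\omega)$, where the final equality uses that $\scr$ places weight only on pointwise minimizers and $\sum_a\scr_a=\mathbf1$. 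For the converse I would argue by contradiction: if the complementary-slackness condition fails on a positive-measure set, then for some chosen action $a^*$ the set $E:=\{\scr_{a^*}>0,\ w_{a^*}>\min_b w_b\}$ has positive measure; taking a measurable selection of a strictly-cheaper action and using finiteness of $A$ to partition $E$ accordingly, one transports $\scr_{a^*}$'s mass onto that cheaper action on a positive-measure subset, producing an admissible $\scr'\in\SCR$ with $\bbExp{w\cdot\scr'}<\bbExp{w\cdot\scr}$, contradicting optimality.

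Finally I would translate this pointwise condition into the multiplier form (ii). The natural candidates are $\nuis:=\max_{a\in A}\big(\ut_a-\icostdd_{\posterior^\scr_a}\big)=\max_a(-w_a)\in\lp1$ and $\mult_a:=\nuis+w_a\in\lp1_+$. A direct check gives $\nuis-\mult_a+\icostdd_{\posterior^\scr_a}=\icostdd_{\posterior^\scr_a}-w_a=\ut_a$; nonnegativity $\mult_a\geq0$ holds because $\nuis\geq-w_a$; and $\scr_a(\omega)>0$ forces $w_a(\omega)=\min_b w_b(\omega)$, whence $-w_a(\omega)=\max_b(-w_b(\omega))=\nuis(\omega)$ and so $\mult_a(\omega)=0$, yielding $\scr_a\mult_a=0$. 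For the reverse implication, from any $(\nuis,\mult)$ as in (ii) one reads off $w_a=\mult_a-\nuis$, and $\scr_a(\omega)>0\Rightarrow\mult_a(\omega)=0\Rightarrow w_a(\omega)=-\nuis(\omega)\leq\min_b w_b(\omega)$, using $\mult_b\geq0$; since trivially $w_a(\omega)\geq\min_b w_b(\omega)$, equality holds, which is exactly the pointwise optimality condition.

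I expect the main obstacle to be the converse direction of the separable-LP claim, namely constructing the improving deviation rigorously: one must use a measurable selection of an action attaining (or beating) the pointwise minimum and partition the offending set by that action so that the perturbed $\scr'$ is a genuine element of $\SCR$ and the objective strictly decreases on a positive-measure set. Everything else is bookkeeping with $\lp1$/$\lp\infty$ integrability (ensuring each $w_a\scr'_a\in\lp1$ so the pointwise inequalities may be integrated), which is routine.
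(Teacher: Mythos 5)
Your proof is correct and follows essentially the same route as the paper's: invoke Lemma~\ref{lem: FOC for iteratively differentiable case}, use full support to make the auxiliary utility $\ut^{\scr,\scr'}$ independent of $\scr'$, reduce optimality to the pointwise (state-by-state) argmax condition over $\Delta A$, and construct the multipliers via $\nuis=\max_{a\in A}\bigl(\ut_a-\icostdd_{\posterior^\scr_a}\bigr)$, exactly as the paper does. The only difference is one of detail: the paper asserts the pointwise-LP equivalence as immediate, whereas you spell out its converse via a measurable-selection/mass-transport argument, which is a faithful elaboration rather than a different approach.
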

\begin{proof}
Let $\ut^\scr:= \ut- (\icostdd_{\posterior^{\scr}_{a}})_{a\in A} \in \UT$. By Lemma~\ref{lem: FOC for iteratively differentiable case}, $\ut$ rationalizes $\scr$ if and only if every $\scr'\in\SCR$ has $\bbExp{-\ut^\scr\cdot(\scr'-\scr)}\geq0$, or equivalently, $\scr\in\argmax_{\scr'\in\SCR} \bbExp{\ut^{\scr}\cdot\scr'}$. Hence, $\ut$ rationalizes $\scr$ if and only if $\prior$-almost every $\omega$ has $$\{a\in A:\ \scr_a(\omega)>0\}\subseteq\argmax_{a\in A} \ut^\scr_a(\omega).$$
But the latter condition holds if and only if some $\nuis\in\lp{1}$ and $\mult\in\lp{1}_+^A$ have $\ut^\scr=(\nuis-
\mult_a)_{a\in A}$ and $(\mult_a\scr_a)_{a\in A}=0$: The ``if'' direction is immediate, and the ``only if'' direction comes from considering $\nuis:=\max_{a\in A}\ut^s_a$. The lemma follows.
\end{proof}

\subsection{Section~\ref{sec: cross-menu} Proofs}

\begin{proof}[Proof of Lemma~\ref{lem: posterior separable approximation characterizes optimality}]
Because $\icost$ is finite on $\Infof$, every SCR $\scr$ has $\ip^\scr\in\feas \ \icost$. The theorem therefore follows directly from Lemma~\ref{lem: posterior separable approximation characterizes optimal SCR}.
\end{proof}

\begin{proof}[Proof of Proposition~\ref{prop: multiplier result for iteratively differentiable case}]
Because $\icost$ is finite on $\Infof$, every SCR $\scr$ has $\ip^\scr\in\feas \ \icost$. The proposition therefore follows directly from Lemma~\ref{lem: multiplier result for full-support iteratively differentiable case}.
\end{proof}

Let us briefly note the infinite-slope condition, Assumption~\ref{ass: smoothness}\eqref{ass: smoothness, part inada}---which says costs decrease infinitely steeply as one moves from a not-fully-mixed information policy toward providing no information---could be equivalently replaced with a more permissive unbounded steepness condition. See Appendix~\ref{app: aux} for the proof.

\begin{fact}\label{fact: equivalent inada conditions}
For any $\ip\in\Domicost$, the following are equivalent:
\begin{enumerate}[(i)]
\item $d^{+}_{\ip}\icost(\delta_\prior) = -\infty.$
\item $\inf_{\ipb\in\Domicost,\ \epsilon\in(0,1)}\frac{1}{\epsilon} \left[ \icost(\ip + \epsilon(\ipb - \ip)) - \icost(\ip) \right] = -\infty.$
\end{enumerate}
\end{fact}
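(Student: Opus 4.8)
The plan is to prove the stronger statement that the two displayed quantities are in fact \emph{equal}; the claimed equivalence then follows at once, since equal quantities are both $-\infty$ or neither is. The starting observation is that $\delta_\prior\in\Domicost$ with controlled behavior along the segment toward it. Indeed, every $\ipb\in\Info$ averages to $\prior$, so Jensen's inequality gives $\int f\ \ipb(\dd\posterior)\geq f(\prior)=\int f\ \delta_\prior(\dd\posterior)$ for each convex continuous $f:\DO\to\real$, whence the HLPBSSC theorem yields $\ipb\succeq\delta_\prior$; monotonicity then gives $\icost(\delta_\prior)\leq\icost(\ipb)$, and properness makes this finite. Moreover, convexity of $\icost$ gives $\icost\big((1-\epsilon)\ip+\epsilon\delta_\prior\big)\leq(1-\epsilon)\icost(\ip)+\epsilon\icost(\delta_\prior)<\infty$ for every $\epsilon\in(0,1)$, so the difference quotient in the direction of $\delta_\prior$ is a genuine real number for every such $\epsilon$.

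The key step is to show that, for each fixed $\epsilon\in(0,1)$, the direction $\delta_\prior$ \emph{minimizes} the difference quotient over all $\ipb\in\Domicost$. For this I would first note that the Blackwell order is preserved under mixing with a fixed policy: since $\ipb\succeq\delta_\prior$, the HLPBSSC characterization applied termwise gives, for every convex continuous $f$, $\int f\ \dd\big[(1-\epsilon)\ip+\epsilon\ipb\big]=(1-\epsilon)\int f\,\dd\ip+\epsilon\int f\,\dd\ipb\geq(1-\epsilon)\int f\,\dd\ip+\epsilon\int f\,\dd\delta_\prior=\int f\ \dd\big[(1-\epsilon)\ip+\epsilon\delta_\prior\big]$, so $(1-\epsilon)\ip+\epsilon\ipb\succeq(1-\epsilon)\ip+\epsilon\delta_\prior$. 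Monotonicity of $\icost$ then yields $\icost\big((1-\epsilon)\ip+\epsilon\ipb\big)\geq\icost\big((1-\epsilon)\ip+\epsilon\delta_\prior\big)$; subtracting $\icost(\ip)$ and dividing by $\epsilon>0$ shows the quotient toward any $\ipb$ dominates the one toward $\delta_\prior$. As $\delta_\prior$ is itself an admissible choice of $\ipb$, it follows that $\inf_{\ipb\in\Domicost}\tfrac1\epsilon\big[\icost((1-\epsilon)\ip+\epsilon\ipb)-\icost(\ip)\big]$ equals the quotient at $\delta_\prior$.

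Taking the infimum over $\epsilon\in(0,1)$, the double infimum in (ii) therefore equals $\inf_{\epsilon\in(0,1)}\tfrac1\epsilon\big[\icost((1-\epsilon)\ip+\epsilon\delta_\prior)-\icost(\ip)\big]$. Finally, because $\epsilon\mapsto\icost\big((1-\epsilon)\ip+\epsilon\delta_\prior\big)$ is a finite convex function of $\epsilon$ on $[0,1)$, its difference quotient from $0$ is nondecreasing in $\epsilon$, so this last infimum coincides with the limit as $\epsilon\searrow0$, namely $d^{+}_{\ip}\icost(\delta_\prior)$, the quantity in (i). This gives the asserted equality and hence the equivalence. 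The only mildly delicate ingredient is the preservation of the Blackwell order under mixing with a fixed policy, for which the $\int f$ comparison above is the clean route; everything else is routine convexity and monotonicity bookkeeping, so I do not expect a serious obstacle.
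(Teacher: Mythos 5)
Your proposal is correct and follows essentially the same route as the paper's own proof: both reduce the double infimum to the single direction $\delta_\prior$ by observing that every $\ipb\in\Info$ satisfies $\ipb\succeq\delta_\prior$, hence $\ip+\epsilon(\ipb-\ip)\succeq\ip+\epsilon(\delta_\prior-\ip)$, invoking monotonicity of $\icost$, and then use convexity to identify the infimum over $\epsilon$ with the limit defining $d^{+}_{\ip}\icost(\delta_\prior)$. The only difference is that you spell out the Jensen/HLPBSSC justification for $\ipb\succeq\delta_\prior$ and the mixture-preservation step, which the paper leaves implicit.
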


The following corollary follows immediately from Lemma~\ref{lem: multiplier result for full-support iteratively differentiable case} (and the weaker version requiring $\icost$ to be finite on $\Infof$ follows immediately from Proposition~\ref{prop: multiplier result for iteratively differentiable case}).

\begin{corollary}\label{cor: multiplier result for conditionally full-support, iteratively differentiable case}
Let $\scr\in\SCR$ have $\ip^{\scr} \in \feas \ \icost$ and $\scr$ have conditionally full support. If $\icost$ is iteratively differentiable at $\ip^\scr$, and $\ut$ and $\ut'$ both rationalize $\scr$, some $\nuis \in \lp{1}$ exists such that 
\[
\ut_{a} = \ut'_{a} + \nuis \ \forall a\in A.
\]
\end{corollary}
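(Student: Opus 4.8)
The plan is to invoke Lemma~\ref{lem: multiplier result for full-support iteratively differentiable case} separately for the two rationalizing utilities $\ut$ and $\ut'$, and then subtract the resulting optimality conditions. First I would observe that conditionally full support of $\scr$ means $\scr_a$ is $\prior$-almost surely strictly positive for every $a$; in particular $\scr_a\neq\mathbf0$, so $a\in\supp(\scr)$ for every $a$, i.e., $\supp(\scr)=A$. Combined with the standing hypotheses that $\ip^{\scr}\in\feas\ \icost$ and that $\icost$ is iteratively differentiable at $\ip^{\scr}$ with derivative $\icostd$, this means Lemma~\ref{lem: multiplier result for full-support iteratively differentiable case} applies verbatim to $\scr$.

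Applying that lemma to $\ut$ yields some $\nuis\in\lp{1}$ and $\mult\in\lp{1}_+^A$ with $\ut_a=\nuis-\mult_a+\icostdd_{\posterior^{\scr}_{a}}$ and $\scr_a\mult_a=\mathbf0$ for every $a\in A$. Because $\scr_a>0$ holds $\prior$-almost surely and $\mult_a\geq0$, the complementary-slackness identity $\scr_a\mult_a=\mathbf0$ forces $\mult_a=\mathbf0$; hence $\ut_a=\nuis+\icostdd_{\posterior^{\scr}_{a}}$. Running the same argument for $\ut'$ produces some $\nuis'\in\lp{1}$ with $\ut'_a=\nuis'+\icostdd_{\posterior^{\scr}_{a}}$ for every $a$. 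Subtracting the two displays gives $\ut_a-\ut'_a=\nuis-\nuis'$ for all $a$, so the function $\nuis-\nuis'\in\lp{1}$ serves as the required action-independent nuisance term.

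I do not expect a genuine obstacle here, since the corollary is an immediate specialization of the lemma. The one point worth flagging is \emph{why} the subtraction cleanly eliminates the derivative term: the object $\icostdd_{\posterior^{\scr}_{a}}$ is determined solely by $\scr$ (through its revealed posteriors $\posterior^{\scr}_{a}$) and by $\icost$ (through its derivative $\icostd$ at $\ip^{\scr}$), and is therefore identical in the decompositions of $\ut$ and $\ut'$, independent of which utility rationalizes $\scr$; its well-definedness is exactly the uniqueness behind the notation $\icostdd_{\posterior}$ (Fact~\ref{fact: derivative of derivative is unique}). The only thing to verify with any care is that conditionally full support is strong enough to annihilate the inequality multipliers $\mult_a$—which it is, via complementary slackness.
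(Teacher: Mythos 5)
Your proposal is correct and is exactly the paper's argument: the paper states that this corollary ``follows immediately'' from Lemma~\ref{lem: multiplier result for full-support iteratively differentiable case}, and your write-up simply spells out that immediate derivation (conditionally full support gives $\supp(\scr)=A$, complementary slackness kills the multipliers $\mult_a$, and subtracting the two decompositions with a common fixed derivative $\icostd$ leaves the action-independent nuisance $\nuis-\nuis'$).
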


Next, we show Assumption~\ref{ass: smoothness}\eqref{ass: smoothness, part inada} means an optimal action recommendation from $\scr$ never rules out any state.

\begin{lemma}\label{lem: inada and full support implies conditionally full support}
Suppose $\icost$ satisfies Assumption~\ref{ass: smoothness}\eqref{ass: smoothness, part inada}. If $\scr$ is rationalizable, then any $a \in \supp \ \scr$ has $\scr_{a}$ strictly positive $\prior$-almost surely.
\end{lemma}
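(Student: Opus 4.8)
The plan is to reformulate the conclusion in terms of the revealed information policy and then argue by contradiction. First I would observe that, by construction, the revealed posterior $\posterior^\scr_a$ has Radon-Nikodym derivative $\scr_a/\ip^\scr_a$ with respect to $\prior$, so that $\scr_a$ is $\prior$-a.s.\ strictly positive for every $a\in\supp\scr$ precisely when every posterior in $\supp(\ip^\scr)$ has a strictly positive density---that is, precisely when $\ip^\scr$ is fully mixed. Since $\ip^\scr\in\Infof$, it therefore suffices to show that a rationalizable $\scr$ must have $\ip^\scr$ fully mixed; I would assume the contrary and derive a contradiction.

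Next I would build a competing SCR that moves the agent's information all the way to no information. Define $\scr'\in\SCR$ by $\scr'_a:=\ip^\scr_a\mathbf1$; this is a valid SCR since $\sum_a\ip^\scr_a=1$, and it is state-independent, so $\ip^{\scr'}=\delta_\prior$ and hence $\cost(\scr')=\icost(\delta_\prior)<\infty$ by properness and monotonicity of $\icost$. Writing $\scr^\epsilon:=(1-\epsilon)\scr+\epsilon\scr'$, the central inequality I need is $\cost(\scr^\epsilon)\leq \icost\!\left(\ip^\scr+\epsilon(\delta_\prior-\ip^\scr)\right)$. This follows from part (ii) of Lemma~\ref{lem: SCR basic properties from IP basic properties}, which gives $(1-\epsilon)\ip^\scr+\epsilon\delta_\prior\succeq\ip^{\scr^\epsilon}$, together with Lemma~\ref{lem: which policies can induce} (so that $\cost(\scr^\epsilon)=\icost(\ip^{\scr^\epsilon})$) and monotonicity of $\icost$.

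Then I would pass to difference quotients. Subtracting $\cost(\scr)=\icost(\ip^\scr)$ and dividing by $\epsilon$, the displayed inequality yields $\tfrac1\epsilon\!\left[\cost(\scr^\epsilon)-\cost(\scr)\right]\leq \tfrac1\epsilon\!\left[\icost(\ip^\scr+\epsilon(\delta_\prior-\ip^\scr))-\icost(\ip^\scr)\right]$, whose right-hand side decreases to $d^+_{\ip^\scr}\icost(\delta_\prior)=-\infty$ as $\epsilon\searrow0$ by Assumption~\ref{ass: smoothness}\eqref{ass: smoothness, part inada} (applicable because $\ip^\scr\in\Infof$ is assumed not fully mixed). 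Since $\cost$ is convex (Corollary~\ref{cor: Properties of Indirect Cost}), the left-hand side has a limit, so $d^+_\scr\cost(\scr')=-\infty$. But $\scr$ being rationalizable means, by Lemma~\ref{lem: subdifferential characterization of optimality}, that some $\ut\in\UT$ has $d^+_\scr\cost(\scr')\geq\bbExp{\ut\cdot(\scr'-\scr)}$, a finite quantity---a contradiction. This forces $\ip^\scr$ to be fully mixed, which is the desired conclusion.

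The one place that requires care---and the step I expect to be the crux---is the inequality $\cost(\scr^\epsilon)\leq\icost(\ip^\scr+\epsilon(\delta_\prior-\ip^\scr))$. The temptation is to equate $\ip^{\scr^\epsilon}$ with the mixture $(1-\epsilon)\ip^\scr+\epsilon\delta_\prior$, but these differ: mixing the SCRs blends the two recommendations' posteriors into $(1-\epsilon)\posterior^\scr_a+\epsilon\prior$, whereas the mixture of policies keeps $\posterior^\scr_a$ and $\prior$ as separate atoms. Part (ii) of Lemma~\ref{lem: SCR basic properties from IP basic properties} is exactly the tool that reconciles the two via the Blackwell order, and only its weak (not strict) version is needed here. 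Everything else is routine once this is in place.
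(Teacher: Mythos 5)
Your proof is correct and takes essentially the same route as the paper's: both perturb $\scr$ toward a state-independent SCR whose revealed policy is $\delta_{\prior}$, bound the difference quotient of $\cost$ by that of $\icost$ using Lemma~\ref{lem: SCR basic properties from IP basic properties}\eqref{lem: SCR basic properties from IP basic properties-convexity}, Lemma~\ref{lem: which policies can induce}, and monotonicity, and then invoke Assumption~\ref{ass: smoothness}\eqref{ass: smoothness, part inada} to send it to $-\infty$. The only cosmetic difference is the endgame: the paper directly exhibits $\scr^{\epsilon}$ as a strict improvement for small $\epsilon$, whereas you reach a contradiction with the subgradient inequality in Lemma~\ref{lem: subdifferential characterization of optimality}(iii); these are equivalent ways of finishing the same argument.
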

\begin{proof}
Take $\ut\in\UT$, and suppose $\scr\in\SCR$ admits some $a \in \supp \ \scr$ such that $\scr_{a}$ is not $\prior$-almost surely strictly positive. We wish to show $\scr$ is not $\ut$-rationalizable. 

We have nothing to show if $\scr\notin\Dom$, so we focus on the case in which $\scr \in \Dom$. Pick some $a_{0} \in A$ and let $\scrb$ be the unique SCR with $\scrb_{a_{0}} = \mathbf{1}$. Clearly, $\ip^{\scrb} = \delta_{\prior}$, and so $\scrb \in \Dom$ and $\ip^{\scrb} \in \Domicost$. 
For every $\epsilon\in (0,1)$, let $\scrb^{\epsilon}= \scr + \epsilon(\scrb - \scr)$. Then, because $\ip^\scr$ is not fully mixed,
\[
\begin{split}
\frac{1}{\epsilon}[\cost(\scrb^{\epsilon}) - \cost(\scr)]
& = \frac{1}{\epsilon}[\icost(\ip^{\scrb^{\epsilon}}) - \icost(\ip^\scr)]
\\ & \leq \frac{1}{\epsilon}[\icost((1-\epsilon)\ip^{\scr} + \epsilon\ip^{\scrb}) - \icost(\ip^\scr)]
\\ & = \frac{1}{\epsilon}[\icost((1-\epsilon)\ip^{\scr} + \epsilon\delta_{\prior}) - \icost(\ip^\scr)]
\xrightarrow{\epsilon\searrow 0} -\infty,
\end{split}
\]
where the inequality comes from monotonicity of $\icost$ and Lemma~\ref{lem: SCR basic properties from IP basic properties}\eqref{lem: SCR basic properties from IP basic properties-convexity}, and the limit calculation comes from Assumption~\ref{ass: smoothness}\eqref{ass: smoothness, part inada}. Hence, every $\ut \in \UT$ has
\[
\frac{1}{\epsilon}\left\{ \bbExp{\ut \cdot \scrb^{\epsilon}} - \cost(\scrb^{\epsilon}) -\left[\bbExp{\ut \cdot \scr} - \cost(\scr) \right]  \right\}
= \bbExp{\ut \cdot (\scrb - \scr)} - \frac{1}{\epsilon}\left[\cost(\scrb^{\epsilon}) - \cost(\scr) \right] \xrightarrow{\epsilon\searrow 0}-\infty.
\]
Thus, for all sufficiently small $\epsilon\in(0,1)$, the agent's objective must be strictly higher under $\scrb^{\epsilon}$ than under $\scr$. That is, $\scr$ is not rationalizable.
\end{proof}

Now, we prove most utilities give rise a unique SCR.

\begin{proof}[Proof of Lemma~\ref{lem: unique prediction with known benefits}]
Let $\UTB$ denote the set of $\ut\in\UT$ at which $\maxval$ is G\^ateaux differentiable, and recall $\maxval$ is continuous by Lemma~\ref{lem: value function continuity}. For any $\ut\in\UTB$, the function $\maxval$ has a unique subgradient at $\ut$ \citep[][Corollary 4.2.5]{borwein2010convex}, and so a unique SCR is $\ut$-rationalizable.

It therefore remains to show $\UT\setminus\UTB$ is meager and shy. That this set is meager follows from Mazur's theorem \citep[see][Theorem 4.6.3]{borwein2010convex}, which tells us $\UTB$ is dense and $G_\delta$, and hence co-meager. To see $\UT\setminus\UTB$ is shy, it suffices to show it is Haar null.\footnote{Given Fact 2 from \cite{hunt1992prevalence} \citep[resp. Proposition 4.6.1(c) from][]{borwein2010convex}, $\UT\setminus\UTB$ is shy (resp. Haar null) if and only if some compactly supported finite Borel (resp. Radon) measure assigns zero measure to every translation of it. Because every finite Borel measure on the Polish space $\UT$ is a Radon measure, $\UT\setminus\UTB$ is shy if and only if it is Haar null.}

To show $\UT\setminus\UTB$ is Haar null, note Theorem 5.44 from \cite{aliprantis2006infinite} tells us $\maxval$ is locally Lipschitz. Hence, being second countable, $\UT$ can be covered by countably many open balls $\{\UT_n\}_{n=1}^\infty$ with $\maxval|_{\UT_n}$ Lipschitz for each $n\in\mathbb N$. Moreover, for each $n\in\mathbb N$, Theorem 4.6.5 from \cite{borwein2010convex} tells us $\UT_n\setminus\UTB$ is Aronszajn null, and hence \citep[by][Proposition 6.25]{benyamini1998geometric} Haar null. Therefore, $\UT\setminus\UTB=\bigcup_{n=1}^\infty(\UT_n\setminus\UTB)$ is Haar null \citep[by][Proposition 4.6.1(e)]{borwein2010convex}.
\end{proof}

The following lemma shows a norm-dense set of uniquely rationalizable SCRs can be leveraged to find a weak*-dense set of SCRs generating unique subset predictions.

\begin{lemma}\label{lem: dense unique subset predictions from dense uniquely rationalizable}
Suppose Assumption~\ref{ass: smoothness} holds and some norm-dense subset $\SCR_{1}$ of $\Dom$ exists that comprises only uniquely rationalizable SCRs. Then, the set of SCRs yielding unique subset predictions is weak* dense in $\SCR$.
\end{lemma}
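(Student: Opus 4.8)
The plan is to show that every weak*-neighborhood of an arbitrary target SCR contains a conditionally-full-support SCR that is rationalized by a utility lying in a dense set of ``menu-regular'' utilities, and then to argue that any such SCR yields unique subset predictions. I would begin by constructing the menu-regular utilities. For each menu $\menu \in \Menu$, the restricted problem $\max_{\scrb \in \SCR_\menu}\left[\bbExp{\ut\cdot\scrb} - \cost(\scrb)\right]$ is an instance of our model with action set $\menu$, so Proposition~\ref{prop: unique prediction with known benefits} shows the set of utilities admitting multiple $\ut$-rationalizable-over-$\menu$ SCRs is meager and shy; since this set depends on $\ut$ only through its $\menu$-coordinates, pulling it back to $\UT$ preserves meagerness (cylinders over nowhere-dense sets are nowhere dense). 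As $\Menu$ is finite, the union of these bad sets over all $\menu$ is still meager, so the set $\UT_\Menu$ of utilities rationalizing a unique SCR over \emph{every} menu is comeager, hence norm dense.

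Next I would record the key implication: if $\scr$ has conditionally full support and is rationalized by some $\ut \in \UT_\Menu$, then $\scr$ yields unique subset predictions. Here $\ip^{\scr}$ is simple (as $A$ is finite) and fully mixed (conditional full support makes each revealed posterior's density strictly positive), so Assumption~\ref{ass: smoothness} places us in the iteratively differentiable regime, and Corollary~\ref{cor: multiplier result for conditionally full-support, iteratively differentiable case} tells us that every utility rationalizing $\scr$ differs from $\ut$ by an action-independent nuisance $\nuis \in \lp{1}$. Because adding such a $\nuis$ to all coordinates shifts the $\menu$-objective by the constant $\bbExp{\nuis}$ on $\SCR_\menu$ (where $\sum_{a\in\menu}\scrb_a = \mathbf1$), every rationalizer of $\scr$ induces exactly the same rationalizable set over each $\menu$; since $\ut \in \UT_\Menu$, that set is the singleton promised by menu-regularity. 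This is precisely the statement that $\scr$ yields unique subset predictions.

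Finally I would run the approximation, fixing a target $\hat\scr$ and a weak*-neighborhood $W$ (weak* is metrizable on the bounded set $\SCR$). Mixing $\hat\scr$ with the uniform SCR and then invoking the uniform density of $\SCR_{1}$ (recall $\Dom = \SCR$ under Assumption~\ref{ass: smoothness}) produces a uniquely rationalizable $\scr^{*} \in \SCR_{1}$ that is conditionally full support and weak*-close to $\hat\scr$; in particular $\scr^{*}$ lies in the weak*-open set $O := \{\scr \in \SCR:\ \bbExp{\scr_a} > 0 \ \forall a\}$ of full-support SCRs, which is open because each $\scr \mapsto \bbExp{\scr_a}$ is weak* continuous. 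Writing $\scr^{*}$ as the unique $\ut^{*}$-rationalizable SCR gives $\partial\maxval(\ut^{*}) = \{\scr^{*}\}$ by Lemma~\ref{lem: subdifferential characterization of optimality}, so norm-to-weak* upper hemicontinuity of $\partial\maxval$ (Lemma~\ref{lem: value function continuity}) furnishes a norm-neighborhood of $\ut^{*}$ whose subdifferentials all sit inside $W \cap O$; choosing $\ut$ in this neighborhood and in the dense set $\UT_\Menu$ yields a single $\ut$-rationalizable SCR $\scr^{**} \in W \cap O$. Being full support and rationalizable, $\scr^{**}$ has conditionally full support by Lemma~\ref{lem: inada and full support implies conditionally full support}, so the previous paragraph applies and $\scr^{**}$ yields unique subset predictions in $W$.

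I expect the main obstacle to be the support bookkeeping: the nuisance-term characterization requires conditional full support, which is \emph{not} a weak*-open property, whereas the continuity machinery only delivers full support, an open property via the maps $\scr \mapsto \bbExp{\scr_a}$. Bridging the two---upgrading full support to conditional full support for the rationalizable approximant---is the crux, and it is exactly what the infinite-slope condition in Assumption~\ref{ass: smoothness}, through Lemma~\ref{lem: inada and full support implies conditionally full support}, is designed to accomplish. A secondary care point is verifying that the per-menu application of Proposition~\ref{prop: unique prediction with known benefits} survives both the pullback to $\UT$ and the finite union over menus.
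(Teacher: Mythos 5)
Your proposal is correct and takes essentially the same route as the paper's proof: construct the comeager set $\UT_{\Menu}$ by applying Proposition~\ref{prop: unique prediction with known benefits} menu-by-menu, use Corollary~\ref{cor: multiplier result for conditionally full-support, iteratively differentiable case} (nuisance-term characterization) to show that any conditionally-full-support SCR rationalized by a member of $\UT_{\Menu}$ yields unique subset predictions, and combine norm-to-weak* upper hemicontinuity of $\partial\maxval$ (Lemmas~\ref{lem: value function continuity} and~\ref{lem: subdifferential characterization of optimality}) with Lemma~\ref{lem: inada and full support implies conditionally full support} to place such an SCR inside any given weak* neighborhood. Your deviations---mixing the target with the uniform SCR instead of intersecting $\SCR_{1}$ with the weak*-open set of full-support SCRs, and invoking the open-upper-inverse form of upper hemicontinuity rather than the paper's sequential argument $\scr^{n}\xrightarrow{w^*}\scr$---are cosmetic, though your explicit handling of the pullback of the per-menu meager sets to $\UT$ is a point the paper leaves implicit.
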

\begin{proof}
By Lemma~\ref{lem: unique prediction with known benefits}, every $\menu\in \Menu$ admits a co-meager set of utilities $\UT_{\menu}\subseteq \UT$ that generate a unique prediction over $\menu$; that is, each $\ut \in \UT_{\menu}$ admits a unique $\scr$ that is $\ut$-rationalizable over $\menu$. It follows the set $\UT_{\Menu} = \cap_{\menu\in\Menu}\UT_{\menu}$ is a co-meager (and therefore dense) subset of $\UT$ such that, for every $\menu$, every $\ut \in \UT_{\Menu}$ uniquely rationalizes some $\scr\in \SCR^\menu$ over $\menu$.

Observe now, the set $\SCR_{1}$ is norm-dense in $\SCR$ because Assumption~\ref{ass: smoothness}\eqref{ass: smoothness, part finite} implies $\Dom=\SCR$. Let $\SCR_0$ denote the set of full-support SCRs, which is weak* open (hence norm open) and norm dense in $\SCR$. Therefore, $\tilde{\SCR}:=\SCR_0\cap\SCR_1$ is norm dense in $\SCR$ too.  
Because every norm-dense set is weak* dense, it suffices to show $\tilde{\SCR}$ is contained in the weak* closure of the SCRs generating unique subset predictions.
To that end, take any $\scr \in \tilde{\SCR}$, and let $\neigh \subseteq \SCRU$ be a weak* neighborhood of $\scr$; we want to show some SCR in $\neigh$ generates unique subset predictions. 

Let $\ut$ be a utility that uniquely rationalizes $\scr$ (over $A$). Some sequence $(\ut^{n})_{n \in \mathbb{N}}$ from $\UT_{\Menu}$ converges to $\ut$ because $\UT_{\Menu}$ is dense in $\UT$. Let $\scr^{n}$ denote the SCR rationalized by $\ut^{n}$ for each $n\in\mathbb N$. Because $\scr^n\in \partial \maxval(\ut^{n})$ by Lemma~\ref{lem: subdifferential characterization of optimality}, and $\partial \maxval$ norm-to-weak* upper hemicontinuous (Lemma~\ref{lem: value function continuity}) and single valued at $\ut$, it follows that $\scr^{n} \xrightarrow{w^*} \scr$ as $n\to\infty$. In particular, $\scr^{n} \in \neigh$ for all $n$ sufficiently large.

Next, observe $\scr^{n}$ has full support (i.e., belongs to $\SCR_0$) for large enough $n$, because $\SCR_0$ is weak* open in $\SCR$, and the weak*-limit $\scr$ has full support. But Lemma~\ref{lem: inada and full support implies conditionally full support} tells us any rationalizable full-support SCR has conditionally full support in light of Assumption~\ref{ass: smoothness}\eqref{ass: smoothness, part inada}. Therefore, $\scr^{n}$ has conditionally full support for sufficiently large $n$.

Now, fix $n\in\mathbb N$ large enough that $\scr^n\in\neigh$ and $\scr^n$ has conditionally full support; the theorem will follow if we can show $\scr^n$ generates unique subset predictions. To that end, consider any $\hat{\ut}$ that rationalizes $\scr^{n}$. Because $\scr^{n}$ has conditionally full support, $\ip^{\scr^{n}}$ is fully mixed, and so $\icost$ is iteratively differentiable at $\ip^{\scr^{n}}$ by Assumption~\ref{ass: smoothness}\eqref{ass: smoothness, part differentiability}. Moreover, Assumption~\ref{ass: smoothness}\eqref{ass: smoothness, part finite} implies  
$\feas\ \icost \supseteq \Infof \ni \ip^{\scr^{n}}$, and so Corollary~\ref{cor: multiplier result for conditionally full-support, iteratively differentiable case} delivers some $\nuis\in \lp{1}$ for which every $a\in A$ has $\hat{\ut}_{a} = \ut^{n}_{a} + \nuis$. It follows that every $\scrb\in\Dom$ has
\[
\bbExp{\hat{\ut}\cdot\scrb} - \cost(\scrb) = \bbExp{\ut^{n}\cdot\scrb} - \cost(\scrb) + \bbExp{\nuis},
\]
and so every $\menu\in \Menu$ has
\[
\argmax_{\scrb \in \SCR_{\menu}} \left[\bbExp{\hat{\ut}\cdot\scrb} - \cost(\scrb) \right]= \argmax_{\scrb \in \SCR_{\menu}} \left[\bbExp{\ut^{n}\cdot\scrb} - \cost(\scrb)\right].
\] 
Hence, over any $\menu$, the utility $\hat{\ut}$ rationalizes the same set of SCRs as $\ut^{n}$ does. But $\ut^{n} \in \UT_{\Menu}$, meaning it rationalizes a unique SCR over $\menu$. It follows $\scr^{n}$ yields unique subset predictions, as required.
\end{proof}

\begin{proof}[Proof of Theorem~\ref{thm: cross-choice predictions}]
Theorem~\ref{thm: dense set of uniquely rationalizable SCRs} delivers a norm-dense subset of $\Dom$ comprising only uniquely rationalizable SCRs. 
The result then follows directly from Lemma~\ref{lem: dense unique subset predictions from dense uniquely rationalizable}.
\end{proof}

\subsection{Section~\ref{sec: cycle test} Proofs}\label{sec: cycle test proofs}

\begin{proof}[Proof of Corollary~\ref{cor: testable cycles}] 

Let $\dgraphE:=\dgraphE_{\data}$, and let $\func_{a,\menu}:=\func^{\data}_{a,\menu}$ for every $\{a,\menu\}\in\dgraphE$. For any $\cyclev\in\real^\dgraphE$, let $\cyclev\cdot\func:=\sum_{\{a,\menu\}\in\dgraphE} \cyclev_{\{a,\menu\}}\func^{\data}_{a,\menu}$. Observe now that condition~\eqref{cor: testable cycles - all cycles} says $\cyclev^\cycle\cdot\func=\mathbf{0}$ for every testable cycle $\cycle$, whereas condition~\eqref{cor: testable cycles - basis condition} says $\cyclev^\cycle\cdot\func=\mathbf{0}$ for every testable cycle $\cycle$ in the cycle basis. Because $\cyclev\mapsto\cyclev\cdot\func$ is linear, the equivalence of~\eqref{cor: testable cycles - all cycles} and~\eqref{cor: testable cycles - basis condition} follows directly.

Now, consider any $\menu\in\DMenu$ and any $\ut\in\UT$. Because $\dscr$ has conditionally full support, Proposition~\ref{prop: multiplier result for iteratively differentiable case} (applied to the model with action set $\menu$) tells us $\dscr^\menu$ is $\ut$-rationalizable over $\menu$ if and only if some $\nuis_\menu\in \lp{1}$ exists such that every $a\in\menu$ has $\ut_a = \nuis_\menu+\func_{a,\menu}$. Condition~\eqref{cor: testable cycles - consistency} is therefore equivalent to the existence of $(\ut,\nuis)\in\lp1^{A\cup\DMenu}$ such that
\begin{equation}\label{eqn: potential}
\func_{a,\menu} = \ut_a - \nuis_\menu \text{ for every } \{a,\menu\}\in\dgraphE.
\end{equation}
It therefore remains to see that condition~\eqref{cor: testable cycles - all cycles} is equivalent to the existence of $(\ut,\nuis)\in\lp1^{A\cup\DMenu}$  satisfying~\eqref{eqn: potential}.\footnote{The equivalence follows the same reasoning as the characterization of which current functions arise from some voltage function \citep[e.g., see p. 27 of ][]{bollobas2012graph}.  
For ease of exposition, and because that result is not stated for vector-valued currents, we provide here a self-contained proof in present notation.}

First, if $(\ut,\nuis)\in\lp1^{A\cup\DMenu}$  satisfies~\eqref{eqn: potential} and $\cycle=a_0 \menu_1 a_1 \menu_2 a_2 \ldots \menu_N a_N$ is any testable cycle, then 
$$\sum_{n=1}^{N} \left( \func_{a_{n},\menu_{n}}-\func_{a_{n-1},\menu_{n}} \right) 
= \sum_{n=1}^{N} \left[ (\ut_{a_n}-\nuis_{\menu_n})-(\ut_{a_{n-1}}-\nuis_{\menu_n}) \right] 
=\ut_{a_N}-\ut_0= \mathbf{0},$$
verifying condition~\eqref{cor: testable cycles - all cycles}.
Conversely, suppose~\eqref{cor: testable cycles - all cycles} holds.

Toward constructing $\ut\in\lp1^A$ and $\nuis\in\lp1^\DMenu$, first note that every menu $\menu\in\DMenu$ has some action $a^\menu$ such that $\{a,\menu\}\in\dgraphE$, and therefore every connected component $\hat\dgraph$ of the graph has some action $\refact(\hat\dgraph)$ in it. 
Now, for any action $a\in A$, let $\hat\dgraph^a$ denote its connected component, and fix some walk 
$$a_0^a \menu_1^a a_1^a \menu_2^a a_2^a \ldots \menu_{N^a}^a a_{N^a}^a$$
from $\refact(\hat\dgraph^a)$ to $a$---so in particular, $a_0^a=\refact(\hat\dgraph^a)$ and $a_{N^a}^a=a$. Then, let 
$$\ut_a:=\sum_{n=1}^{N_a} \left(\func_{a^a_{n}, \menu_n^a} - \func_{a^a_{n-1}, \menu_n^a}\right).$$
Next, for any menu $\menu\in\DMenu$, let $\nuis_\menu:=\ut_{a^\menu}-\func_{a^\menu,\menu}$. Having constructed $(\ut,\nuis)\in\lp1^{A\cup\DMenu}$, we need only show it satisfies~\eqref{eqn: potential}. To that end, consider any edge $\{a,\menu\}\in\dgraphE$, and let $\tilde a:=a^\menu$. 
Because they share a neighbor by construction, $a$ and $\tilde a$ belong to the same connected component, and so
$$\cycle:= a_0^a \menu_1^a a_1^a \menu_2^a a_2^a \ldots \menu_{N^a}^a a_{N^a}^a \ \menu \ 
a_{N^{\tilde a}}^{\tilde a}
\menu_{N^{\tilde a}}^{\tilde a} 
\ldots
a_2^{\tilde a}
\menu_2^{\tilde a}
a_1^{\tilde a} 
\menu_1^{\tilde a}
a_0^{\tilde a}$$
is a testable cycle. We can therefore apply equation~\eqref{eq: testable cycle equation} to $\cycle$ to learn
\begin{eqnarray*}
\mathbf0&=&
\sum_{n=1}^{N_a} \left(\func_{a^a_{n}, \menu_n^a} - \func_{a^a_{n-1}, \menu_n^a}\right)
+ \left(\func_{\tilde a, \menu} - \func_{a, \menu}\right)
+\sum_{n=1}^{N_{\tilde a}} \left(\func_{a^{\tilde a}_{n-1}, \menu_n^{\tilde a}} - \func_{a^{\tilde a}_{n}, \menu_n^{\tilde a}}\right)
\\
&=& \ut_a+ \left(\func_{\tilde a, \menu} - \func_{a, \menu}\right)
+ (-\ut_{\tilde a}) \\
&=& (\ut_a-\func_{a, \menu}) -\nuis_\menu,
\end{eqnarray*}
confirming~\eqref{eqn: potential}.
\end{proof}

We briefly note that some of the content of Corollary~\ref{cor: testable cycles} is readily recovered beyond the case of conditionally full-support data sets, but not all of it. 

\begin{fact}\label{fact: testable cycles counterexamples}
Suppose $\icost$ satisfies Assumptions~\ref{ass: smoothness}\eqref{ass: smoothness, part finite} and \ref{ass: smoothness}\eqref{ass: smoothness, part differentiability}, and let $\data=(\DMenu,\dscr)$ be a data set such that $\icost$ is iteratively differentiable at $\ip^{\dscr^\menu}$ for each $\menu\in\DMenu$.
\begin{enumerate}
    \item Given any cycle basis, conditions~\eqref{cor: testable cycles - all cycles} and~\eqref{cor: testable cycles - basis condition} of Corollary~\ref{cor: testable cycles}
    are equivalent. 
    \item Suppose $\data$ is fully mixed. If $\data$ is consistent, it satisfies condition~\eqref{cor: testable cycles - all cycles} of Corollary~\ref{cor: testable cycles}.
    \item Suppose $\data$ has full support. If $\data$ satisfies condition~\eqref{cor: testable cycles - all cycles} of Corollary~\ref{cor: testable cycles}, it is consistent.
\end{enumerate}
Moreover, neither of the latter two implications has a converse in general.
\end{fact}
See Appendix~\ref{app: aux} for the proof---which includes an example of a fully-mixed, inconsistent data set satisfying the cycle equation; and an example of a full-support, consistent data set not satisfying the cycle equation.

Our next goal is to prove Proposition~\ref{prop: testable cycles converse}. For this purpose, we first prove that under Assumptions~\ref{ass: regularity} and~\ref{ass: smoothness}, some consistent data set always exists which is linearly independent for every menu. 

\begin{lemma}\label{lem: a consistent conditional full support data exists}
Suppose \ref{ass: regularity}\eqref{ass: regularity, part cardinality} and~\ref{ass: smoothness}\eqref{ass: smoothness, part finite} hold. Then for any $\DMenu \subseteq \Menu$, some consistent data set $\data=(\DMenu,\dscr)$ is such that $\dscr^{\menu}$ is linearly independent (over $\menu$) for all $\menu \in \DMenu$.
\end{lemma}
\begin{proof}
Because $\prior$ has full support and $\Omega$ is metrizable, Assumption~\ref{ass: regularity}\eqref{ass: regularity, part cardinality} (i.e., $|A|\geq |\Omega|$) implies existence of a partition $\{\Omega_{a}\}_{a\in A}$ of $\Omega$ into $|A|$ measurable non-null sets. For $\paramb\in \mathbb{R}_{++}$, define the utility $\ut$ via
\[
\ut^{\paramb}_{a} = \paramb\mathbf{1}_{\Omega_{a}}.
\]
Define the set $\Omega_{\menu}=\cup_{a \in \menu}\Omega_{a}$ for any $\menu\subseteq\Menu$. For any $\menu\in\DMenu$, take $\scrb^{\menu} \in \SCR$ to be\footnote{If we weakened Assumption~\ref{ass: smoothness}\eqref{ass: smoothness, part finite} to assume only that every simple and fully mixed information policy has a finite cost, then the same result would hold with a slightly modified construction.}
\[
\scrb^{\menu}_{a} = \mathbf{1}_{\Omega_{a}} + \frac{1}{|\menu|}\mathbf{1}_{\Omega_{A\setminus\menu}}.
\]

Fix any menu $\menu \in \DMenu$, and for any $\ut\in\UT$ let
\[
\maxval_{\menu}(\ut):=\max_{\scr \in \SCR_{\menu}} \left[ \bbE[\ut \cdot \scr] - \cost(\scr) \right].
\]
Note, 
\[
\maxval_{\menu}(\ut^{\paramb}) \geq \bbE[\ut^{\paramb}\cdot \scrb^{\menu}] - \cost(\scrb^{\menu}) = \paramb \prior(\Omega_{\menu}) - \cost(\scrb^{\menu}).
\]
Because $\cost(\scrb^{\menu})<\infty$ by Assumption~\ref{ass: smoothness}\eqref{ass: smoothness, part finite},  we obtain that
\[
\liminf_{\paramb\rightarrow \infty}\frac{1}{\paramb}\maxval^{\menu}(\ut^{\paramb}) \geq \prior(\Omega_{\menu}).
\]
Hence, for sufficiently large $\paramb$, any SCR $\scr^{\menu}$ that is $\ut^{\paramb}$-rationalizable over $\menu$ has 
\begin{equation*}\label{eq: 02-03-2023a}
\bbE[\mathbf{1}_{\Omega_a}\cdot \scr_{a}] > \frac{1}{2}\prior(\Omega_a) \quad \text{for all }a\in \menu.
\end{equation*}
So fix $\paramb$ large enough that the above inequality holds for all $\menu \in \DMenu$; and for any $\menu \in \DMenu$, set $\dscr^{\menu}$ to be some SCR that is $\ut^{\paramb}$-rationalizable over $\menu$. 

To complete the proof, we argue $\dscr^{\menu}$ is linearly independent for all $\menu$---that is, that $\{\dscr^{\menu}_{a}\}_{a \in \menu}$ consists of $|\menu|$ linearly-independent vectors. For this purpose, define the $|\menu|\times|\menu|$ matrix,
\[
M := (\bbE[\mathbf{1}_{\Omega_{a}} \cdot \dscr^{\menu}_{b}])_{(a,b)\in \menu\times \menu} \in \mathbb{R}^{\menu\times\menu}_{+}.
\]
Notice $M$ is strictly diagonally dominant, and so has full rank. To complete the lemma, observe that if $\paramc:=(\paramc_{a})_{a\in\menu} \in \mathbb{R}^{\menu}$ is such that $\sum_{a}\paramc_{a}\dscr^{\menu}_{a}=\mathbf{0},$ then $M \paramc = \mathbf{0}$, meaning $\paramc = \mathbf{0}$. We conclude $\dscr^{\menu}$ is linearly independent for all $\menu \in \DMenu$. 
\end{proof}

We now proceed to prove Proposition~\ref{prop: testable cycles converse}. For a proof sketch, consider the graph generated by some full-support data set with menu set $\DMenu$, and let $\Cycle$ be a set of testable cycles that does not contain a cycle basis of that graph. The key to the proof is to use the data set generated by Lemma~\ref{lem: a consistent conditional full support data exists} and the utility function $\ut$ that rationalizes it for the construction of a new data set that is inconsistent, but satisfies the cycle equation for all cycles in $\Cycle$. The construction is based on pairing every menu $\menu$ with an SCR that is rationalized over $\menu$ by a slight perturbation of $\ut$, where we choose this perturbation so that it satisfies two properties. First, the perturbation is small enough so that the rationalized SCR is conditionally full support. Second, the perturbation results in a collection of utility functions whose differences across menus satisfy the cycle equation for all cycles in $\Cycle$, but fails that equation for some test cycle outside said set. Using Proposition~\ref{prop: multiplier result for iteratively differentiable case} to connect these utility differences to the cost function's derivative concludes the proof.

\begin{proof}[Proof of Proposition~\ref{prop: testable cycles converse}]
Fix some nonempty $\DMenu \subseteq \Menu$, and let $\dgraph=(\dgraphv,\dgraphE)$ be the graph induced by any full-support data set with menu set $\DMenu$. Take  $\Cycle$ to be a set of testable cycles in $\dgraph$ that does not contain a cycle basis. Our goal is to find a conditionally full-support data set that is inconsistent, but satisfies \eqref{eq: testable cycle equation} for all cycles in $\Cycle$. In what follows, we will invoke various supporting results from our paper, applying them to versions of our model with action set $\menu\in\DMenu$.

We first argue that some $z \in \mathbb{R}^{\dgraphE}$ exists such that $\cyclev^{\cycle}\cdot z = 0$ for all testable cycles $\cycle \in \Cycle$, but $\cyclev^{\check\cycle}\cdot z \neq 0$ for some testable cycle $\check\cycle \notin \Cycle$. To find such a $z$, let $\Cyclev$ be the vector subspace of $\mathbb{R}^{\dgraphE}$ spanned by the set of all cycle vectors, and let $\tilde\Cyclev$ be the span of $\{\cyclev^{\cycle}:\cycle \in \Cycle\}.$ Obviously, $\tilde\Cyclev \subseteq \Cyclev \subseteq \mathbb{R}^{\dgraphE}$. Moreover,  $\Cycle$ not containing a cycle basis implies $\tilde\Cyclev \neq \Cyclev$. Therefore, because a basis for $\tilde\Cyclev$ can be completed to one for $\Cyclev$, a nonzero linear map from $\Cyclev$ to $\mathbb{R}$ exists that is zero on $\tilde\Cyclev$. Then, because a basis for $\Cyclev$ can be completed to a basis for $\mathbb{R}^{\dgraphE}$, this linear map can be extended to a linear map from $\mathbb{R}^{\dgraphE}$ to $\mathbb{R}$. Representing this linear map via $\cyclev \mapsto z \cdot \cyclev$ (because $\dgraphE$ is finite) gives a $z$ as desired. 

By Lemma~\ref{lem: a consistent conditional full support data exists}, a consistent data set $\data=(\DMenu,\dscr)$ such that $\dscr^{\menu}$ is linearly independent (hence full support) for every $\menu \in \DMenu$. Let $\ut^{0}$ be some utility that simultaneously rationalizes $\dscr^{\menu}$ over $\menu$ for all $\menu \in \DMenu$. 

For every $\menu \in \DMenu$ and every $\epsilon>0$, define $\ut^{\menu,\epsilon} \in \UT$ via  

\[
\ut^{\menu,\epsilon}_{a} = 
\begin{cases}
\ut^{0}_{a} + \epsilon z_{a,\menu} & \text{if }a \in \menu, \\
\ut^{0}_{a} & \text{otherwise,}
\end{cases}
\]
and take $\scr^{\menu,\epsilon}$ to be some SCR that is $\ut^{\menu,\epsilon}$-rationalizable over $\menu$. We argue that $\scr^{\menu,\epsilon}$ has conditionally full support (over $\menu$) for sufficiently small $\epsilon$. To that end, define the proper, weak*-lower semicontinuous, and convex (by Lemma~\ref{lemma: Properties of Indirect Cost}) cost function,
\[
\cost_{\menu}(\scr) = \begin{cases}
\cost(\scr) & \text{if }\scr \in \SCR_\menu,\\
\infty & \text{otherwise,}
\end{cases}
\]
and take
\[
\begin{split}
\maxval_{\menu}: \UT & \rightarrow \mathbb{R},\\
\ut & \mapsto \max_{\scr \in \SCR}\bbE [ \ut \cdot \scr ] - \cost_\menu (\scr)
\end{split}
\]
to be the value function with cost function $\cost_{\menu}$. Letting $\partial \maxval_{\menu}$ be the subdifferential of $\maxval_\menu$, Lemma~\ref{lem: subdifferential characterization of optimality} and the definition of $\cost_\menu$ imply
\[
\partial \maxval_\menu (\ut) 
= \argmax_{\scr\in \SCR} \left[ \bbE [ \ut \cdot \scr ] - \cost_\menu (\scr) \right]
= \argmax_{\scr \in \SCR_\menu} \left[ \bbE [ \ut \cdot \scr ] - \cost (\scr) \right].
\]
Thus, $\scr^{\menu,\epsilon} \in \partial \maxval_{\menu}(\ut^{\menu,\epsilon})$. Applying Proposition~\ref{prop: linear independence implies unique rationalizability}, we therefore obtain that $\partial \maxval_\menu (\ut^{0}) = \{\dscr^{\menu}\}$---in particular, $\partial\maxval_\menu$ is singleton valued at $\ut^{0}$. Since $\partial\maxval_\menu$ is also norm-to-weak* upper hemicontinuous (Lemma~\ref{lem: value function continuity}), that $\ut^{\menu,\epsilon} \rightarrow \ut^{0}$ implies $\scr^{\menu,\epsilon}\xrightarrow{w^*}\dscr^{\menu}$. In particular, $\scr^{\menu,\epsilon}$ has full support over $\menu$ for sufficiently small $\epsilon>0$. Applying Lemma~\ref{lem: inada and full support implies conditionally full support}, it follows that $\scr^{\menu,\epsilon}$ is conditionally full support (on $\menu$) for all $\epsilon>0$ small enough.

Pick $\epsilon>0$ so that $\scr^{\menu,\epsilon}$ is conditionally full support on $\menu$ for all $\menu \in \DMenu$. Define the data set $\tilde\data=(\DMenu,\tilde\dscr)$ by setting $\tilde\dscr^{\menu}=\scr^{\menu,\epsilon}$. We now conclude the proof by showing $\tilde\data$ satisfies \eqref{eq: testable cycle equation} for all cycles in $\Cycle$, but violates said equation for the testable cycle $\check\cycle$ identified at the beginning of this proof. For any $\menu \in \Menu$, Proposition~\ref{prop: multiplier result for iteratively differentiable case} delivers a $\nuis_\menu \in \lp{1}$ such that for all $a \in \menu$,
\[
\ut^{\menu,\epsilon}_{a}=\nuis_\menu + \func^{\tilde\data}_{a,\menu},
\]
and so
\[
\func^{\tilde\data}_{a,\menu} = \ut^{0}_a - \nuis_\menu + \epsilon z_{a,\menu}\mathbf{1}.
\]
Thus, given a testable cycle $\cycle=a_0 \menu_1 a_1 \ldots \menu_N a_N$, 
\[
%\begin{split}
\sum_{n=1}^{N} (\func_{a_n,\menu_n} - \func_{a_{n-1},\menu_n}) = 
\left[\sum_{n=1}^{N} (\ut^{0}_{a_n} - \ut^{0}_{a_{n-1}} + \nuis_{\menu_n} - \nuis_{\menu_n})\right] + 
\epsilon \left[\sum_{n=1}^{N}(z_{a_n,\menu_n} - z_{a_{n-1},\menu_n}) \right] 
= \epsilon (\cyclev^{\cycle}\cdot z)\mathbf{1}. 
%\end{split}
\]
By choice of $z$, it follows that \eqref{eq: testable cycle equation} holds for all cycles $\cycle\in\Cycle$ but fails for $\cycle=\check\cycle$, as required.
\end{proof}

\newpage
\section{Supplement to Section~\ref{sec: discussion}}\label{app: discussion proofs}

This appendix provides formal support for any nontrivial claims made in section~\ref{sec: discussion}.

\subsection{Partial Knowledge of Benefits}

In this section, we give a characterization of which SCRs are $\UTB$-rationalizable for certain well-behaved instances of $\UTB\subseteq \UT$. 

Begin with the case in which the relevant set is a singleton, $\UTB=\{\ut\}$. Recall, as explained in the main text, and shown in Lemma~\ref{lem: subdifferential characterization of optimality}, to answer the question of whether or not $\ut$ rationalizes $\scr$ is the same as checking whether or not $\ut$ is a subgradient of $\cost$ at $\scr$. The latter condition admits a test in terms of $\cost$'s directional derivative. Specifically, the SCR $\scr$ is $\ut$-rationalizable if and only if 
\[
\dcost{\scr}(\scr') \geq \bbExp{\ut \cdot (\scr' - \scr)}
\]
for all $\scr' \in \SCR$. One can interpret the above condition as saying the agent does not benefit at the margin from shifting her behavior from $\scr$ in any direction. 

Our next result generalizes the above test for the case in which $\UTB$ is convex and  weakly compact.%
\footnote{
A sequence of utilities $\sequence{\ut^{n}}$ converges weakly to $\ut$ if $\bbExp{\ut^{n} \cdot \scru}\rightarrow \bbExp{\ut \cdot \scru}$ for all $\scru \in \SCRU$. The set of utilities $\UTB$ is weakly compact if it is weakly closed and uniformly integrable; see \cite{bogachev2007measureV1} Definition~4.5.1 and Theorem~4.7.18.
}
Our generalization replaces the change in the agent's expected benefits $\bbExp{\ut \cdot (\scr' - \scr)}$ with the \textbf{support function} of $\UTB$,
\[
\begin{split}
\suppf_{\UTB} : \SCRU & \rightarrow \mathbb{R}\cup\{\pm\infty\},\\
\scru & \mapsto \sup_{\ut \in \UTB} \bbExp{\ut\cdot\scru}.
\end{split}
\]
Given a pair $\scr,\scr'$ of SCRs, the quantity $-\suppf_{\UTB}(\scr - \scr')=\inf_{\ut \in \UTB} \bbExp{\ut\cdot(\scr'-\scr)}$ gives the lowest possible marginal increase in the agent's objective from shifting her behavior from $\scr$ towards $\scr'$.
Geometrically, the support function describes a closed convex set via its supporting hyperplanes.

The next proposition shows one can test whether $\scr$ is $\UTB$-rationalizable by comparing the support function of $\UTB$ to the cost's directional derivative at $\scr$.

\begin{proposition}\label{prop: rationalizable characterization for compact set of utilities}
If $\UTB \subset \UT$ is nonempty, weakly compact, and convex, then $\scr\in\Dom$ is $\UTB$-rationalizable if and only if every $\scr' \in \Dom$ has $$d_{\scr}^{+} \cost (\scr') \geq -\suppf_{\UTB}(\scr - \scr').$$
\end{proposition}
\begin{proof}
First, suppose $\scr$ is $\UTB$-rationalizable, that is, rationalized by some $\ut \in \UTB$. Taking any $\scr' \in \Dom$, 
Lemma~\ref{lem: subdifferential characterization of optimality} then implies $d_{\scr}^{+}\cost (\scr')+\bbExp{\ut\cdot(\scr-\scr')}\geq0$.
But observe $\suppf_{\UTB}(\scr - \scr') \geq \bbExp{\ut\cdot(\scr-\scr')}$ by definition of $\suppf_{\UTB}$. Hence,
$d_{\scr}^{+} \cost (\scr') + \suppf_{\UTB}(\scr - \scr') \geq d_{\scr}^{+} \cost (\scr') + \bbExp{\ut\cdot(\scr-\scr')} \geq 0$.\footnote{As the proof makes clear, convexity and compactness of $\UTB$ play no role in this direction of the equivalence.}

Conversely, suppose every $\scr' \in \Dom$ has $d_{\scr}^{+} \cost (\scr') + \suppf_{\UTB}(\scr - \scr') \geq0$.
Toward showing $\scr$ is $\UTB$-rationalizable, observe convexity of $\cost$ implies every $\epsilon \in [0,1]$ and $\scr' \in \SCR$ have
$\tfrac{1}{\epsilon}\left[ \cost (\scr + \epsilon (\scr' - \scr) ) - \cost (\scr) \right] \leq \cost(\scr') - \cost (\scr)$; and so taking limits yields $d_{\scr}^{+} \cost (\scr') \leq \cost(\scr') - \cost(\scr).$ Using this inequality gives
\begin{equation*}
\begin{split} 
0 & \leq \inf_{\scr' \in \Dom} \left\{d_{\scr}^{+} \cost (\scr') + \suppf_{\UTB}(\scr - \scr') \right\}
\\ & \leq\inf_{\scr' \in \Dom} \left\{\cost(\scr') - \cost(\scr) + \suppf_{\UTB}(\scr - \scr') \right\} 
\\ & =\inf_{\scr' \in \Dom} \left\{\cost(\scr') - \cost(\scr) + \max_{\ut \in \UTB}\bbExp{\ut\cdot(\scr-\scr')} \right\}
\\ & =\inf_{\scr' \in \Dom}\max_{\ut \in \UTB} \left\{\cost(\scr') - \cost(\scr) + \bbExp{\ut\cdot(\scr-\scr')} \right\} 
\\ & = \max_{\ut \in \UTB} \inf_{\scr' \in \Dom} \left\{\cost(\scr') - \cost(\scr) + \bbExp{\ut\cdot(\scr-\scr')} \right\},
\end{split}
\end{equation*}
where the first equality follows from weak compactness of $\UTB$, and the last equality then follows from Sion's minmax theorem \citep{sion1958general}. Therefore, some $\ut \in \UTB$ exists such that $\cost(\scr') - \cost(\scr) + \bbExp{\ut\cdot(\scr-\scr')} \geq 0$ for all $\scr' \in \Dom$---that is, $\scr$ is $\ut$-rationalizable. 
\end{proof}

Proposition~\ref{prop: rationalizable characterization for compact set of utilities} is most useful if one can compute directional derivatives of $\cost$ and the support function of the utility set $\UTB$. Naturally, the tractability of these computations will depend on the form of $\cost$ (hence, $\icost$) and of $\UTB$. 

When $\icost$ is differentiable at the information policy revealed by a given SCR, Lemma~\ref{lem: posterior separable approximation characterizes optimal SCR} tells us we can replace $\cost$ with its posterior-separable approximation, and so one need only compute directional derivatives of the latter:
\begin{corollary}
If $\UTB \subset \UT$ is nonempty, weakly compact, and convex, and $\icostd$ is a derivative of $\icost$ at $\scr\in\Dom$, then $\scr$ is $\UTB$-rationalizable if and only if every $\scr' \in \Dom$ has $$d_{\scr}^{+} \cost_\icostd (\scr') \geq -\suppf_{\UTB}(\scr - \scr').$$
\end{corollary}
\noindent In this case, Lemma~\ref{lem: directional derivative of cost for differentiable case} provides an explicit formula for the relevant directional derivatives, and Lemma~\ref{lem: directional derivative of cost for iteratively differentiable case, and subgradient result for differentiable case} provides an even more explicit formula for the same in the iteratively differentiable case. 

Below, we provide several examples of sets of utilities to which Proposition~\ref{prop: rationalizable characterization for compact set of utilities} can be applied, and explicitly compute the relevant support functions.

\begin{example}[Utility bounds]\label{ex:gross-bound}
Given constant payoff bound $\bar\ut\in\real_{++}$, consider the set\footnote{In a standard abuse, we say an element of $\lp{1}$ is nonnegative if some version of it is nonnegative. We adopt an analogous abuse for other pointwise function inequalities, or (when $\Omega$ is ordered) in calling a member of $\lp{1}$ increasing.} $$\UTB:=\{\ut\in\UT:\ |\ut_a|\leq \bar\ut\mathbf1\ \forall a\in A\}.$$ 
Let us demonstrate that $\UTB$ satisfies the hypotheses of Proposition~\ref{prop: rationalizable characterization for compact set of utilities} and has $\suppf_{\UTB}(\scru)=\bar\ut\Vert\scru \Vert_1$ for every $\scru\in\SCRU$. As our analysis shows, computing support functions can in fact be a useful step in verifying the compactness condition on $\UTB$ required by the theorem. With this work in hand, we can apply Proposition~\ref{prop: rationalizable characterization for compact set of utilities}: A given SCR $\scr\in\Dom$ is rationalized by some objective with payoffs bounded by $\bar\ut$ if and only if 
every $\scr' \in \Dom$ has $d_{\scr}^{+} \cost (\scr') \geq -\bar\ut\Vert\scr - \scr'\Vert_1$.

To see $\UTB$ satisfies the proposition's hypotheses, let $L_+\subseteq\lp{1}$ denote the set of nonnegative functions. 
Because $\UTB=\left[ \left(L_+-\bar\ut\mathbf1\right) \cap \left(\bar\ut\mathbf1-L_+\right)\right]^A$ and
$$
L_+=\left\{\func\in\lp{1}: \ \int \mathbf1_{\hat\Omega}(\omega)\func(\omega) \ \prior(\dd\omega)\geq0 \ \forall\text{ measurable } \hat\Omega\subseteq\Omega \right\},
$$
it follows $\UTB$ is convex and weakly closed. It remains to compute the support function of $\UTB$ and verify $\UTB$ is weakly compact. 
To that end, observe the support function of $\UTB$ evaluated at any $\scru\in\SCRU$ takes the form 
$$\suppf_{{\UTB}}(\scru)= \sup_{\ut \in {\UTB}} \sum_{a\in A} \bbExp{\ut_a\scru_a}= \sum_{a\in A} \bbExp{\bar\ut\ \text{sign}(\scru_a) \scru_a}= \bar\ut\sum_{a\in A} \bbE|\scru_a|=\bar\ut\Vert\scru \Vert_1,$$
and (as noted in the second equality above) the program defining this support function attains a maximum at $(\bar\ut\ \text{sign}(\scru_a))_{a\in A}$. Hence, James' theorem \citep[Theorem 6.36 in][]{aliprantis2006infinite} implies that $\UTB$ is weakly compact.
\end{example}

\begin{example}[Restricted incremental utility]\label{ex:incremental-bound}
Suppose $A=\{a_0,\ldots,a_n\}$ for distinct $a_0, \ldots,a_n$, and let $\Func_i \subseteq \lp{1}$ be nonempty, convex, and weakly compact for each $i\in \{1,\ldots,n\}$. Let $$\UTB:=\left\{\ut\in\UT:\ \ut_{a_i}-\ut_{a_{i-1}}\in\Func_i\ \forall i\in\{1,\ldots,n\}\right\}.$$
Although the unbounded set $\UTB$ is not weakly compact, we can still use Proposition~\ref{prop: rationalizable characterization for compact set of utilities}. To see we can, let $\UTB_0:=\{\ut\in\UTB:\ \ut_{a_0}=0\}$. 
Because every $\ut\in\UTB$ rationalizes the same set of SCRs as utility $\ut^0:=(\ut_a-\ut_{a_0})_{a\in A}\in\UTB_0$, it follows that a given SCR is $\UTB$-rationalizable if and only if it is $\UTB_0$-rationalizable. Moreover, we show below that $\UTB_0$ satisfies Proposition~\ref{prop: rationalizable characterization for compact set of utilities}'s hypotheses and compute the support function of $\UTB_0$ in terms of the support functions of $\{\Func_i\}_{i=1}^n$---demonstrating useful algebraic properties of support functions that make them easier to compute. The upshot is that some $\ut\in\UTB$ rationalizes a given SCR $\scr\in\Dom$ if and only if \begin{equation}\label{ex,eq:incremental-bound}
d_{\scr}^{+} \cost (\scr') \geq - \sum_{i=1}^n \suppf_{\Func_i}\left( \sum_{j=i}^n (\scr_{a_j}-\scr'_{a_j})\right), \ \forall \scr'\in\Dom.
\end{equation}

Let us now verify $\UTB_0$ satisfies Proposition~\ref{prop: rationalizable characterization for compact set of utilities}'s hypotheses.
First, observe $\UTB_0$ is nonempty (containing the zero utility) and obviously convex. To see it is compact, define \begin{eqnarray*}
\Phi: \lp{1}^n &\to& \lp{1}^A = \UT \\
\func=(\func_i)_{i=1}^n &\mapsto& \left( \sum_{i=1}^j \func_i \right)_{a=a_j\in A}.
\end{eqnarray*}
The linear map $\Phi$ is clearly norm-to-norm continuous, hence weak-to-weak continuous. It follows that the image $\UTB_0=\Phi(\prod_{i=1}^n\Func_i)$ is weakly compact, and Proposition~\ref{prop: rationalizable characterization for compact set of utilities} applies.

Finally, let us see how basic algebraic properties of support functions can be used to make the computation of $\suppf_{\UTB_0}$ more explicit. Recall the adjoint (also known as the transpose) $\Phi^*:\SCRU=\lp{\infty}^A\to\lp{\infty}^n$ is the unique linear norm-continuous map such that every $\scru\in\SCRU$ and $\func\in\prod_{i=1}^n \Func_i$ satisfy $\sum_{i=1}^n \bbE\left[ (\Phi^*\scru)_i \func_i \right] = \bbExp{\scru\cdot\Phi \func} $. Then, each $\scru\in\SCRU$ has 
$$
\suppf_{\UTB_0}(\scru)
= \suppf_{\Phi(\prod_{i=1}^n\Func_i)}(\scru)
= \suppf_{\prod_{i=1}^n\Func_i}(\Phi^*\scru)
= \sum_{i=1}^n \suppf_{\Func_i}\left( (\Phi^*\scru)_i\right),
$$
where the second equality follows directly from the definition of the support function and the adjoint, and the third equality follows from additive separability of linear functions on product spaces. 
Meanwhile, by direct computation, the functional form $\Phi^*\scru = \left(\sum_{j=i}^n \scru_{a_j}\right)_{i=1}^n$ satisfies the equation defining the adjoint, and
so
$$\suppf_{\UTB_0}(\scru)= \sum_{i=1}^n \suppf_{\Func_i}\left( \sum_{j=i}^n \scru_{a_j}\right).$$
Therefore, $\scr$ is $\UTB$-rationalizable if and only if \eqref{ex,eq:incremental-bound} holds.
\end{example}

\begin{example}[Bounded increasing differences]\label{ex:increasing-diff}
Suppose $A=\{a_0,\ldots,a_n\}\subset\real$ for $a_0<\cdots<a_n$ and $\Omega\subset\real$. Let $\UTB$ denote the set of utilities with (i) weakly increasing differences, and (ii) payoffs Lipschitz of constant $\xi$ with respect to the chosen action. That is, $\UTB$ is the set of all utilities $\ut\in\UT$ such that every $i\in\{1,\ldots,n\}$ has $\ut_{a_i}-\ut_{a_{i-1}}$ weakly increasing with magnitude bounded by $\bar\ut_i:=\xi({a_i}-{a_{i-1}})$. 
Below, we verify that this example is an instance of Example~\ref{ex:incremental-bound} with some appropriate specification of the $\{\Func_i\}_{i=1}^n$, and the support function $\suppf_{\Func_i}$ takes the form 
\begin{equation}\label{ex,eq:increasing-diff}
\suppf_{\Func_i}(\func)=\bar\ut_i\max_{\omega^*\in\Omega}\left\{ -\int_{(-\infty,\omega^*)} \func(\omega)\ \prior(\dd\omega) + |\func(\omega^*)|\prior(\omega^*)+ \int_{(\omega^*,\infty)} \func(\omega) \ \prior(\dd\omega)
\right\}.
\end{equation}
A key component of the argument, which is useful for applying Proposition~\ref{prop: rationalizable characterization for compact set of utilities} more generally, is that one can compute a support function of a compact convex set by optimizing only over its extreme points. 
Substituting \eqref{ex,eq:increasing-diff} into equation \eqref{ex,eq:incremental-bound} yields an explicit characterization of whether a given SCR is $\UTB$-rationalizable.

For example, consider the further specialization to the case in which $A=\Omega=\{0,1\}$ and $\xi=1$.  In this case, the expression that needs to be nonnegative according to equation \eqref{ex,eq:incremental-bound} is
$$
d_{\scr}^{+} \cost (\scr') + \begin{cases}
\bbE\left|\scr_1-\scr'_1\right| &: \  \scr'_1(0)>\scr_1(0) \text{ and } \scr'_1(1)<\scr_1(1),\\
\left|\ip^\scr_1-\ip^{\scr'}_1\right| &: \  \text{otherwise.}
\end{cases}
$$
In particular, in this case, the magnitude of the ``marginal benefit'' component is a distance between $\scr$ and $\scr'$. If $\scr'$ matches the state less than $\scr$ does in both states, then it is the probability that $\scr$ and $\scr'$ choose different actions; and otherwise it is the difference in wholesale action frequencies between $\scr$ and $\scr'$. 

Let us now formally verify the relevant compactness property and the support function computations. 
First, in Example~\ref{ex:gross-bound}, we established that $\{\func_i\in\lp{1}:\ |\func_i|\leq \bar\ut_i\}$ was weakly compact. Moreover, the set of $\func_i\in\lp{1}$ with weakly increasing version is 
$$\bigcap_{\Omega_L,\Omega_H\subseteq\Omega \text{ measurable}: \ \sup\Omega_L\leq\inf\Omega_H} \left\{ \func_i\in\lp{1}:\ \prior(\Omega_H)\bbE\left[\mathbf1_{\Omega_L}\func_i\right]\leq \prior(\Omega_L)\bbE\left[\mathbf1_{\Omega_H}\func_i\right] \right\},
$$
a weakly closed set. Hence, the nonempty convex set $$\Func_i:=\{\func_i\in\lp{1}:\ \func_i \text{ weakly increasing, } |\func_i|\leq \bar\ut_i\},$$ 
is weakly compact for every $i\in\{1,\ldots,n\}$. Therefore, the computations of Example~\ref{ex:incremental-bound} apply directly.
Toward a more explicit characterization, let us compute $\suppf_{\Func_i}(\func)$ for each $i\in\{1,\ldots,n\}$ and $\func\in\lp{\infty}$. Because $\Func_i$ is convex and compact, the Bauer maximum principle says that every continuous linear function evaluated over $\Func_i$ attains its maximum at the extreme points, $\ext\ \Func_i=\{\func_i\in\lp{1}:\ \func_i \text{ weakly increasing, } |\func_i|= \bar\ut_i\}$.\footnote{The proof that the extreme points of $\Func_i$ takes this form is essentially identical to that of Lemma 2.7 from \cite{borgers2015introduction}.} Each extreme point therefore takes a simple cutoff form, taking value $-\bar\ut_i$ to the left of the cutoff and value $\bar\ut_i$ to the right of the cutoff, and one of these values at the cutoff (if it is a $\prior$-atom). Hence, $$\suppf_{\Func_i}(\func)=\bar\ut_i\max_{\omega^*\in\Omega}\left\{ -\int_{(-\infty,\omega^*)} \func(\omega) \ \prior(\dd\omega) + |\func(\omega^*)|\prior(\omega^*)+ \int_{(\omega^*,\infty)} \func(\omega) \ \prior(\dd\omega)
\right\}.$$

Finally, consider the specialization to the case in which $A=\Omega=\{0,1\}$ and $\xi=1$.  Direct computation then shows the relevant support function takes the form \begin{eqnarray*}
\suppf_{\UTB_0}(\scru)&=& \suppf_{\Func_1}(\scru_1)\\
&=&\max\{-\bbE[\scru_1],\ 
-\prior(0)\scru_1(0)+\prior(1)\scru_1(1),\ 
\bbE[\scru_1]\} \\
&=& \max\left\{\left|\bbE[\scru_1]\right|,\ 
-\prior(0)\scru_1(0)+\prior(1)\scru_1(1)\right\} \\
&=& \begin{cases}
\bbE\left|\scru_1\right| &: \  \scru_1(0)<0<\scru_1(1) \\
\left|\bbE[\scru_1]\right| &: \  \text{otherwise,}
\end{cases}
\end{eqnarray*}
as described above.
\end{example}

\begin{example}[Action payments]\label{ex:action-pay}
Given $\underline\ut\in\UT$ and payment bound $\bar w\geq0$, let $$
\UTB:=\{\underline\ut+(w_a\mathbf1)_{a\in A}:\ w\in[0,\bar w]^A\},
$$
which is (being the convex hull of finitely many points) obviously nonempty, convex, and weakly compact. 
We can interpret this example as settling a question of implementability with transfers: If agent has latent utility $\underline\ut$, and a principal has the ability to offer action-dependent payments (with preferences quasilinear in money) bounded by $\bar w$, can SCR $\scr\in\Dom$ be incentivized? 
As we show below, each $\scr,\scr'\in\SCR$ have\footnote{Observe, the last term is proportional to the total-variation distance between the induced marginal action distributions.} $$\suppf_{\UTB}(\scr-\scr')=\bbExp{\underline\ut\cdot(\scr-\scr')}+ \tfrac{\bar w}2\sum_{a\in A}\left|\ip^\scr_a-\ip^{\scr'}_a \right|.$$ 
So for a given SCR $\scr\in\Dom$, let $$
\bar w^*:= \inf_{\scr'\in\Dom:\ \ip^{\scr'}_a\neq\ip^{\scr}_a \ \exists a\in A} \tfrac2{\sum_{a\in A}\left|\ip^\scr_a-\ip^{\scr'}_a \right|}\left[\bbExp{\underline\ut\cdot(\scr'-\scr)}-d_{\scr}^{+} \cost (\scr')\right].
$$
Then, Proposition~\ref{prop: rationalizable characterization for compact set of utilities} implies $\scr\in\Dom$ is rationalizable with monetary incentives in $[0,\bar w]$ if and only if $\bar w\geq\bar w^*$.

Now, let us verify the described computation. To that end, note each $\scru\in\SCRU$ has
\begin{eqnarray*}
\suppf_{\UTB}(\scru) 
&=& \bbExp{\underline\ut\cdot\scru} +\suppf_{\UTB-\underline\ut}(\scru) \\
&=&\bbExp{\underline\ut\cdot\scru} + \suppf_{\bar w\left[\co\{\mathbf0, \mathbf1\}\right]^A}(\scru)\\
&=&\bbExp{\underline\ut\cdot\scru}+\bar w\sum_{a\in A}\suppf_{\co\{\mathbf0, \mathbf1\}}(\scru_a) \\
&=&\bbExp{\underline\ut\cdot\scru}+ \bar w\sum_{a\in A}\left(\bbExp{\scru_a} \right)_+;
\end{eqnarray*}
whereas any $\scr,\scr'\in \SCR$ have 
\begin{eqnarray*}
\sum_{a\in A}\left(\bbExp{\scr_a-\scr_a'} \right)_+ 
&=& \sum_{a\in A}\left(\ip^\scr_a-\ip^{\scr'}_a \right)_+ \\
&=& \sum_{a\in A}\left[\tfrac12\left(\ip^\scr_a-\ip^{\scr'}_a \right) + \tfrac12\left|\ip^\scr_a-\ip^{\scr'}_a \right|\right] \\
&=& \tfrac12\left[\sum_{a\in A}\left(\ip^\scr_a-\ip^{\scr'}_a \right)\right] 
+ \tfrac12\sum_{a\in A}\left|\ip^\scr_a-\ip^{\scr'}_a \right| \\
&=&\tfrac12\sum_{a\in A}\left|\ip^\scr_a-\ip^{\scr'}_a \right|.
\end{eqnarray*}
The aforementioned form for the support function follows directly.
\end{example}

\subsection{Partial Knowledge of Costs}\label{app: sets of costs}

For the LLR cost function, direct computation shows that any data set data set $\data = (\DMenu,\dscr)$ and state $\omega$ have
\[
\func^{\data}_{a,\menu}(\omega) = \sum_{\omega'} \left\{\frac{\mathbf{\param}_{\omega,\omega'}}{\prior(\omega)}\left[ \posterior_{a}^{\dscr^{\menu}}(\omega')+ \log \frac{\posterior_{a}^{\dscr^{\menu}}(\omega)}{\posterior_{a}^{\dscr^{\menu}}(\omega')}  \right] - \frac{\mathbf{\param}_{\omega',\omega}}{{\prior(\omega')}}  \posterior_{a}^{\dscr^{\menu}}(\omega') \right\}.
\]

We can express Corollary~\ref{cor: testable cycles} more succinctly in the language of vectors and matrices. Specifically, view $\boldsymbol{\param}$ as a matrix in $\real^{\Omega\times\Omega}$, view $\prior$ as a vector in $\real^\Omega$, view $\func^{\data}=[\func^{\data}_{a,\menu}(\omega)]_{\omega,\{a,\menu\}}$ and $\boldsymbol{\posterior}=[\posterior_{a,\menu}(\omega)]_{\omega,\{a,\menu\}}$
as a matrices in $\real^{\Omega\times\dgraphE_\data}$, and let $\log \boldsymbol{\posterior}$ denote the matrix whose entries are the logarithms of the corresponding entries of $\boldsymbol{\posterior}$. Finally, we find it convenient to reparameterize the cost function, by letting\footnote{Here, ${\rm{diag}}:\real^\Omega\to\real^{\Omega\times\Omega}$ is the natural embedding taking each vector to a diagonal matrix.} $$\boldsymbol{\paramb}:=\left[{\rm{diag}}(\prior)\right]^{-1}\left[\boldsymbol{\param}-{\rm{diag}}(\boldsymbol{\param}\mathbf1_\Omega)\right],$$
a matrix each of whose rows sums to zero.\footnote{These matrices are in bijective correspondence with zero-diagonal $\boldsymbol{\param}\in\real^{\Omega\times\Omega}$. 
Indeed, direct computation shows 
 $
\boldsymbol{\param}=\left[{\rm{diag}}(\prior)\right]\left(\boldsymbol{\paramb}-\boldsymbol\iota\odot\boldsymbol{\paramb}\right)$, where $\boldsymbol\iota\in\real^{\Omega\times\Omega}$ is the identity matrix, and $\odot$ is the Hadamard product.} 
It follows readily that $$\func^{\data}=(\boldsymbol{\paramb}-\boldsymbol{\paramb}^\top) \boldsymbol{\posterior}-
\boldsymbol{\paramb}(\log \boldsymbol\posterior).$$
Corollary~\ref{cor: testable cycles} then tells us that $\data$ is consistent if and only if every testable cycle $\cycle$ has
$$(\boldsymbol{\paramb}-\boldsymbol{\paramb}^\top) \boldsymbol{\posterior} \cyclev^\cycle
=
\boldsymbol{\paramb}(\log \boldsymbol\posterior)
 \cyclev^\cycle,$$
a condition amounting to $|\Omega|\xi$ linear restrictions on $\boldsymbol{\paramb}$ (hence on $\boldsymbol{\param}$), where $\xi$ is the cardinality of a cycle basis of $\data$ as calculated in \eqref{eq: cycle basis cardinality}.

Now, let us specialize to the example data set described by~\eqref{eq: example dataset}. The computations that show this data set is consistent with mutual information costs in fact show $(\log \boldsymbol\posterior) \cyclev^\cycle$ is zero for every testable cycle in a cycle basis, and the uniform prior makes $\boldsymbol{\paramb}-\boldsymbol{\paramb}^\top$ proportional to $\boldsymbol{\param}-\boldsymbol{\param}^\top$. Hence, the data set is consistent with LLR costs of parameter $\boldsymbol{\param}$ if and only if  $(\boldsymbol{\param}-\boldsymbol{\param}^\top) \boldsymbol{\posterior} \cyclev^\cycle$ is zero for both testable cycles $\cycle\in\{\cycle_1,\cycle_2\}$ in our cycle basis. This condition clearly holds for every symmetric matrix $\boldsymbol{\param}$, and an easy computation shows no nonzero skew-symmetric matrix has zero product with $\boldsymbol{\posterior} \cyclev^\cycle$ for both of these testable cycles $\cycle$. Because every square matrix is the sum of a symmetric and skew-symmetric matrix, it follows that the data set is consistent if and only if $\boldsymbol{\param}$ is symmetric.

\subsection{Unique Rationalizability and Strict Convexity}

The following lemma shows that, when $\icost$ is strictly convex, the agent's optimal information choice is unique, and the only scope for multiple best responses is the willingness to mix over her action conditional on a realized signal.

\begin{lemma}\label{lem: indirect cost is almost strictly convex if icost is strictly convex}
Suppose $\icost$ is strictly convex, and let $\ut\in\UT$. If $\scr,\scrb\in\SCR$ are $\ut$-rationalizable, then $\ip^{\scr}=\ip^{\scrb}$ and, for every $a\in A$, the functions $\scr_a$ and $\scrb_a$ are proportional. 
\end{lemma}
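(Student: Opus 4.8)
The plan is to isolate where strict convexity of $\icost$ actually bites—on the agent's \emph{information} choice—and then to extract proportionality of the action responses from the equality case of Lemma~\ref{lem: SCR basic properties from IP basic properties}\eqref{lem: SCR basic properties from IP basic properties-strictness}. First I would establish that strict convexity forces a unique optimal information policy. By Lemma~\ref{lem: characterizing optimal SCRs by optimal information and optimal interim action choice}, any $\ut$-rationalizable $\scr$ has $\ip^\scr$ maximizing the information-choice objective $J(\ip):=\int \posteriorut_\ut(\posterior)\ \ip(\dd\posterior)-\icost(\ip)$ over $\Info$. Since $\int\posteriorut_\ut\ \dd\ip$ is affine in $\ip$ while $-\icost$ is strictly concave on the convex set $\Domicost$, the function $J$ is strictly concave on $\Domicost$ (and equals $-\infty$ off it), so it has at most one maximizer. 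As $\scr$ and $\scrb$ are both $\ut$-rationalizable, $\ip^\scr$ and $\ip^\scrb$ both maximize $J$ and both lie in $\Domicost$ (finite cost, by feasibility); hence $\ip^\scr=\ip^\scrb=:\ip^*$, which is the first assertion.

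Next I would pass to the midpoint. Because the objective $\scrc\mapsto\bbExp{\ut\cdot\scrc}-\cost(\scrc)$ is concave (Corollary~\ref{cor: Properties of Indirect Cost}), the SCR $\scrc:=\tfrac12\scr+\tfrac12\scrb$ is again $\ut$-rationalizable, so the previous paragraph gives $\ip^{\scrc}=\ip^*$ as well. Now Lemma~\ref{lem: SCR basic properties from IP basic properties}\eqref{lem: SCR basic properties from IP basic properties-convexity} yields $\tfrac12\ip^\scr+\tfrac12\ip^\scrb\succeq\ip^{\scrc}$, and I would observe that both sides equal $\ip^*$. Thus this Blackwell ranking holds with \emph{equality}, not strictly, and the contrapositive of Lemma~\ref{lem: SCR basic properties from IP basic properties}\eqref{lem: SCR basic properties from IP basic properties-strictness} rules out any $a\in\supp(\scr)\cap\supp(\scrb)$ with $\posterior_a^\scr\neq\posterior_a^\scrb$. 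Hence $\posterior_a^\scr=\posterior_a^\scrb$ for every action in the common support.

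Finally I would translate this into proportionality. For $a\in\supp(\scr)\cap\supp(\scrb)$, the revealed-posterior identities give $\scr_a=\ip^\scr_a\,\tfrac{\dd\posterior_a^\scr}{\dd\prior}$ and $\scrb_a=\ip^\scrb_a\,\tfrac{\dd\posterior_a^\scrb}{\dd\prior}=\ip^\scrb_a\,\tfrac{\dd\posterior_a^\scr}{\dd\prior}$, exhibiting $\scr_a$ and $\scrb_a$ as scalar multiples of a common density, hence proportional; for any remaining $a$ at least one of $\scr_a,\scrb_a$ is the zero function, so proportionality is immediate. This completes the argument.

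I expect the main obstacle to be conceptual rather than computational. A direct attack through the convexity chain \eqref{costconvex} stalls: strict convexity of $\icost$, unlike strict \emph{monotonicity}, does not upgrade the ranking $\tfrac12\ip^\scr+\tfrac12\ip^\scrb\succeq\ip^{\scrc}$ into a strict cost inequality, so one cannot conclude proportionality by that route. The device that unlocks the proof is to use strict convexity only to pin down a unique optimal information policy, and then to apply that uniqueness to the midpoint $\scrc$—this is precisely what forces the Blackwell ranking to be an equality and lets the equality case of the strictness lemma deliver action-by-action agreement of revealed posteriors.
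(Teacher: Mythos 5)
Your proof is correct, and its second half (midpoint rationalizability, equality of the Blackwell ranking, contrapositive of Lemma~\ref{lem: SCR basic properties from IP basic properties}\eqref{lem: SCR basic properties from IP basic properties-strictness}, then proportionality via the revealed-posterior densities) is exactly the paper's argument. Where you genuinely diverge is the first assertion, $\ip^{\scr}=\ip^{\scrb}$. The paper never leaves SCR space: supposing $\ip^{\scr^1}\neq\ip^{\scr^2}$, it runs the chain \eqref{costconvex} on the midpoint SCR, where monotonicity gives $\icost(\ip^{\scrc})\leq\icost\bigl(\tfrac12\ip^{\scr^1}+\tfrac12\ip^{\scr^2}\bigr)$ and strict convexity makes the second inequality strict, so the midpoint strictly beats the worse of the two and at least one cannot be optimal. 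You instead lift to information-policy space via Lemma~\ref{lem: characterizing optimal SCRs by optimal information and optimal interim action choice}: both $\ip^{\scr}$ and $\ip^{\scrb}$ maximize the strictly concave objective $\ip\mapsto\int\posteriorut_\ut\,\dd\ip-\icost(\ip)$ over the convex set $\Domicost$, hence coincide. (Incidentally, your closing remark that "a direct attack through the chain stalls" is accurate only for the proportionality claim; for the equal-information claim the direct attack is precisely what the paper does, since strict convexity does bite there when $\ip^\scr\neq\ip^\scrb$.) The trade-off: the paper's route is self-contained, needing only Lemma~\ref{lem: which policies can induce}, Lemma~\ref{lem: SCR basic properties from IP basic properties}\eqref{lem: SCR basic properties from IP basic properties-convexity}, and Corollary~\ref{cor: Properties of Indirect Cost}, whereas yours imports the heavier decomposition machinery of Lemma~\ref{lem: characterizing optimal SCRs by optimal information and optimal interim action choice} (measurable selection and all); in exchange, your version makes the conceptual content transparent---strict convexity pins down a unique optimal information policy, full stop---and cleanly separates the information-choice margin from the action-choice margin, which is the same separation the paper exploits only implicitly.
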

\begin{proof}
First, we show no two $\ut$-rationalizable SCRs can generate different information policies. 
To that end, suppose $\scr^1,\scr^2\in\Dom$ have $\ip^{\scr^1}\neq\ip^{\scr^2}$; we show they cannot both be $\ut$-rationalizable. 
Letting $\scr:=\tfrac12\sum_{i=1}^2\scr^i$, Lemma~\ref{lem: SCR basic properties from IP basic properties}\eqref{lem: SCR basic properties from IP basic properties-convexity} tells us $\tfrac12\sum_{i=1}^2\ip^{\scr^i} \succeq \ip^{\scr}$, so that 
$$\cost\left(\scr\right)=\icost\left(\ip^{\scr}\right)\leq\icost\left(\tfrac12\sum_{i=1}^2\ip^{\scr^i}\right)<\tfrac12\sum_{i=1}^2\icost\left(\ip^{\scr^i}\right)=\tfrac12\sum_{i=1}^2\cost\left(\scr^i\right).$$
Hence, $\bbExp{\ut\cdot\scr}-\cost(\scr)>\tfrac12\sum_{i=1}^2\left[ \bbExp{\ut\cdot\scr^i}-\cost(\scr^i) \right]\geq\min_{i=1,2}\left[ \bbExp{\ut\cdot\scr^i}-\cost(\scr^i) \right]$. 
Therefore, $\scr$ witnesses that at least one of $\{\scr^1,\scr^2\}$ is not $\ut$-rationalizable.

Now, let $\scr,\scrb\in\Dom$ be $\ut$-rationalizable. Because $\scrc\mapsto\bbExp{\ut\cdot\scrc}-\cost(\scrc)$ is concave (by Lemma~\ref{lemma: Properties of Indirect Cost}), it follows that $\scrc:=\tfrac12(\scr+\scrb)$ is $\ut$-rationalizable as well. By the above analysis,  $\ip^\scr=\ip^\scrb=\ip^\scrc$; in particular, $\ip^\scrc\nprec\tfrac12\ip^\scr+\tfrac12\ip^\scrb.$ 
Lemma~\ref{lem: SCR basic properties from IP basic properties}\eqref{lem: SCR basic properties from IP basic properties-strictness} then implies no $a \in \supp (\scr) \cap \supp (\scrb)$ exists such that $\posterior_{a}^{\scr} \neq \posterior_{a}^{\scrb}$. Said differently, $\scr_a$ and $\scrb_a$ are proportional for every $a\in A$. 
\end{proof}

\begin{proof}[Proof of Proposition~\ref{prop: dense set of uniquely rationalizable SCRs under strict convexity}]
Let $\SCR^1$ denote the set of SCRs $\scr\in\Dom$ with $\supp(\scr)=A$ and the $|A|$ beliefs $\{\posterior^\scr_a\}_{a\in A}$ all distinct. Below, we show $\SCR^1$ is weak* open and norm dense in $\Dom$. Before doing so, let us see this result would deliver the proposition. 
To that end, let $\SCRrat$ denote the set of rationalizable SCRs if $\Omega$ is infinite, and let it denote the relative interior of $\Dom$ if $\Omega$ is finite. Given Proposition~\ref{prop: dense set of rationalizable SCRs}, every SCR in $\SCRrat$ is rationalizable, and $\SCRrat$ is norm dense in $\Dom$. Moreover, by construction, $\SCRrat$ is open in $\Dom$ if $\Omega$ is finite. Hence, the intersection $\SCRrat\cap\SCR^1$ is norm dense in $\Dom$ and open in it if $\Omega$ is finite.
Moreover, if we establish that any $\ut\in\UT$ and $\ut$-rationalizable $\scr\in\SCRrat\cap\SCR^1$ are such that $\scr$ is uniquely $\ut$-rationalizable, Lemma~\ref{lem: open set of utilities for weak* open set of uniquely rationalized SCRs} would tell us the SCRs in $\SCRrat\cap\SCR^1$ are rationalized by an open set of utilities.\footnote{Just as in the proof of Theorem~\ref{thm: dense set of uniquely rationalizable SCRs}, we can apply Lemma~\ref{lem: open set of utilities for weak* open set of uniquely rationalized SCRs} to the  weak*-open set $G$ (resp. $\SCRrat\cap G$) if $\Omega$ is infinite (resp. finite).} 
Finally, let us see that any $\ut\in\UT$ and $\ut$-rationalizable $\scr\in\SCRrat\cap\SCR^1$ are such that $\scr$ is uniquely $\ut$-rationalizable. To do so, take any $\ut$-rationalizable SCR $\scrb$. That $\scr\in\SCR^1$ implies $\ip^\scr$ has support size $|A|$. 
By Lemma~\ref{lem: indirect cost is almost strictly convex if icost is strictly convex}, we know $\ip^\scrb=\ip^\scr$, and so $\ip^\scrb$ has support size $|A|$ too. Hence, $\supp(\scr)=\supp(\scrb)=A$. Lemma~\ref{lem: indirect cost is almost strictly convex if icost is strictly convex} then also tells us $\posterior^\scr_a=\posterior^\scrb_a$ for every $a\in A$. But then, because $\{\posterior^\scr_a\}_{a\in A}$ are $|A|$ distinct beliefs, it follows from $\ip^\scr=\ip^\scrb$ that $\ip_a^\scr=\ip_a^\scrb$ for every $a\in A$. Hence, $\scrb=\scr$, as desired.

So all that remains is to show $\SCR^1$ is weak* open and norm dense in $\Dom$. 
To that end, define the set $$\SCR^+:=\bigcap_{a\in A}\left\{\scr\in\Dom:\ \bbExp{\scr_a}>0 \right\}$$
and, for each distinct $a,b\in A$, the set
$$\SCR^{a,b}:=
\bigcup_{\hat\Omega\subseteq\Omega\text{ Borel}
}
\left\{\scr\in\Dom:\ \bbExp{\scr_a}\bbExp{\mathbf1_{\hat\Omega}\scr_b}\neq\bbExp{\scr_b}\bbExp{\mathbf1_{\hat\Omega}\scr_a}\right\},
$$
By construction, these sets are all weak* open (hence, norm open) in $\Dom$, and so too is the finite intersection $\SCR^1=\SCR^+\cap\bigcap_{a,b\in A \text{ distinct}} \SCR^{a,b}$. Moreover, because a finite intersection of open and dense sets is dense, norm denseness of $\SCR^1$ will follow if we can show $\SCR^+$ is norm dense in $\Dom$ and each pair of distinct $a,b\in A$ has $\SCR^+\cap\SCR^{a,b}$ norm dense in $\SCR^+$. 
To see $\SCR^+$ is norm dense, note $\scr^+:=(\tfrac1{|A|}\mathbf1)_{a\in A}\in\Dom$ because $\ip^{\scr^+}=\delta_\prior$ and $\icost(\delta_{\prior})<\infty$. Because $\Dom$ is convex by Lemma~\ref{lemma: Properties of Indirect Cost}, it follows that every $\scr\in\Dom$ and $\epsilon\in(0,1)$ have $(1-\epsilon)\scr+\epsilon\scr^+\in\Dom$, which witnesses (taking $\epsilon\to0$) the SCR $\scr$ as a norm limit from $\SCR^+$.

Now, fix any pair of distinct $a,b\in A$. It remains to show $\SCR^+\cap\SCR^{a,b}$ is norm dense in $\SCR^+$. To that end, let $\scr\in\SCR^+$ be arbitrary; we want to show $\scr$ is in the norm closure of $\SCR^+\cap\SCR^{a,b}$. If $\scr\in\SCR^{a,b}$, we have nothing to show, so we focus on the complementary case in which  $\posterior^\scr_a=\posterior^\scr_b=:\posterior$. Below, we locate an SCR $\scr^*\in\Dom$ such that any proper convex combination of $\scr$ and $\scr^*$ lies in $\SCR^{a,b}$. Observe such proper convex combinations necessarily live in $\Dom$ (because $\cost$ is convex) and so in $\SCR^+$ (because); and they can approximate $\scr$ arbitrarily well by choosing sufficiently skewed weights. Hence, finding such an $\scr^*$ will yield the required denseness property. To locate such an $\scr^*$, we separately address the case in which $\posterior\neq\prior$ and the case in which $\posterior=\prior$.

First, suppose $\posterior\neq\prior$, and let $\scr^*$ denote the unique SCR with $\scr^{*}_{a}=\mathbf1$. That $\icost(\delta_{\prior})<\infty$ implies $\scr^*\in\Dom$. Moreover, any proper convex combination $\scrb$ of $\scr$ and $\scr^*$ is in $\SCR^{a,b}$ because it has $\scrb_b=\posterior$, whereas $\scrb_a$ is a proper convex combination of $\posterior$ and $\prior$. Thus, $\scr^*$ is as required.

Finally, suppose $\posterior=\prior$. By hypothesis, some $\ip\in\Domicost$ exists such that $\ip\neq\delta_\prior$. Because $\ip$ has barycenter $\prior$, the distribution $\ip$ must be nondegenerate. Pooling different posteriors generated by $\ip$ if necessary (which will weakly reduce costs and so remain in $\Domicost$ because $\icost$ is monotone), we may assume without loss $\ip$ has binary support $\{\posterior^*_a,\posterior^*_b\}$. Note $\prior$ lies strictly between $\posterior^*_a$ and $\posterior^*_b$.  Then, define $\scr^*$ to be the unique SCR with $\posterior^{\scr^*}_a=\posterior^*_a$ and $\posterior^{\scr^*}_b=\posterior^*_b$. By construction, $\cost(\scr^*)=\icost(\ip^{\scr^*})=\icost(\ip)<\infty$; that is, $\scr^*\in\Dom$. Finally, any proper convex combination $\scrb$ of $\scr$ and $\scr^*$ is in $\SCR^{a,b}$ because it has $\scrb_a$ and $\scrb_b$ being proper convex combinations of $\prior$ with $\posterior^*_a$ and $\posterior^*_b$, respectively. Thus, $\scr^*$ is as required.
\end{proof}

\begin{proof}[Proof of Proposition~\ref{prop: cross-choice predictions under strict convexity}]
Assumption~\ref{ass: smoothness}\eqref{ass: smoothness, part finite} implies, given $|\Omega|>1$, that $\Domicost\neq\{\delta_\prior\}$. Hence, Proposition~\ref{prop: dense set of uniquely rationalizable SCRs under strict convexity} delivers a norm-dense subset of $\Dom$ comprising only uniquely rationalizable SCRs. 
The result then follows directly from Lemma~\ref{lem: dense unique subset predictions from dense uniquely rationalizable}.
\end{proof}

\subsection{Subdifferentiability, Rationalizability, and Posterior Separable Costs}

\begin{proposition}\label{SubdiffDerivToCost}
Fix some $\scr \in \SCR$ such that $\ip^{\scr} \in \feas \ \icost$, and suppose $\icostd$ is a derivative of $\icost$ at $\ip^{\scr}$. If $\partial \icostd (\posterior_{a}^{\scr})$ is nonempty for all $a \in \supp \ \scr$, then $\partial \cost(\scr)$ is nonempty, and so $\scr$ is rationalizable.
\end{proposition}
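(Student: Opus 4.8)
The plan is to exhibit a single utility $\ut\in\UT$ that rationalizes $\scr$; by the equivalence of (i) and (ii) in Lemma~\ref{lem: subdifferential characterization of optimality} this is precisely the assertion $\ut\in\partial\cost(\scr)$, which gives both $\partial\cost(\scr)\neq\varnothing$ and rationalizability of $\scr$. Since $\ip^{\scr}\in\feas\ \icost$ and $\icostd$ is a derivative of $\icost$ at $\ip^{\scr}$, Lemma~\ref{lem: posterior separable approximation characterizes optimal SCR} lets me pass to the posterior-separable approximation: it suffices to show $\scr\in\argmax_{\scr'\in\SCR}[\bbExp{\ut\cdot\scr'}-\cost_{\icostd}(\scr')]$. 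Applying the version of Lemma~\ref{lem: subdifferential characterization of optimality} for the cost $\icost_{\icostd}$ (which satisfies our standing hypotheses), and noting $\cost_{\icostd}(\scr)=\sum_{a\in\supp(\scr)}\ip^{\scr}_{a}\icostd(\posterior^{\scr}_{a})<\infty$ because each $\partial\icostd(\posterior^{\scr}_{a})\neq\varnothing$ forces $\icostd(\posterior^{\scr}_{a})<\infty$, this reduces to verifying $d^{+}_{\scr}\cost_{\icostd}(\scr')\geq\bbExp{\ut\cdot(\scr'-\scr)}$ for every $\scr'\in\SCR$. I also record that Lemma~\ref{lem: feas for C to feas for c} yields $\supp(\ip^{\scr})\subseteq\feas\ \icostd$, so the directional-derivative machinery applies.

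The utility is built from normalized subgradients. For each $a\in\supp(\scr)$ choose $\func_a\in\partial\icostd(\posterior^{\scr}_{a})$ (nonempty by hypothesis, and $\func_a\in\lp1$ by definition of the subdifferential); replacing $\func_a$ by $\func_a+\big[\icostd(\posterior^{\scr}_{a})-\int\func_a\ \ddd\posterior^{\scr}_{a}\big]\mathbf1$, I may assume $\int\func_a\ \ddd\posterior^{\scr}_{a}=\icostd(\posterior^{\scr}_{a})$. This shift preserves membership in $\partial\icostd(\posterior^{\scr}_{a})$, since every simply drawn posterior integrates $\mathbf1$ to the same value and the added constant cancels in the subgradient inequality. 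Fix any $a_0\in\supp(\scr)$ (the support is nonempty because $\sum_a\scr_a=\mathbf1$), and set $\ut_a:=\func_a$ for $a\in\supp(\scr)$ and $\ut_a:=\func_{a_0}$ for $a\in A\setminus\supp(\scr)$, so that $\ut\in\UT$. The key consequence of the normalization is that the subgradient inequality for $\func_{a_0}$ becomes a global affine minorant, $\int\func_{a_0}\ \ddd\posterior\leq\icostd(\posterior)$ at every simply drawn posterior $\posterior$.

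It remains to verify the directional-derivative inequality. With these $\func_a$, the inequality half of Lemma~\ref{lem: directional derivative of cost for iteratively differentiable case, and subgradient result for differentiable case} gives $d^{+}_{\scr}\cost_{\icostd}(\scr')\geq\bbExp{\ut^{\scr,\scr'}\cdot(\scr'-\scr)}$, where $\ut^{\scr,\scr'}$ agrees with $\ut$ on $\supp(\scr)$ and equals $\icostd(\posterior^{\scr'}_{a})\mathbf1$ on $\supp(\scr')\setminus\supp(\scr)$. Because $\scr_a=\mathbf0$ off $\supp(\scr)$, subtracting the two linear functionals leaves only the newly activated actions:
\[
\bbExp{(\ut^{\scr,\scr'}-\ut)\cdot(\scr'-\scr)}=\sum_{a\in\supp(\scr')\setminus\supp(\scr)}\ip^{\scr'}_{a}\left[\icostd(\posterior^{\scr'}_{a})-\int\func_{a_0}\ \ddd\posterior^{\scr'}_{a}\right]\geq0,
\]
each summand being nonnegative because $\posterior^{\scr'}_{a}$ is simply drawn and the affine minorant applies. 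Hence $d^{+}_{\scr}\cost_{\icostd}(\scr')\geq\bbExp{\ut^{\scr,\scr'}\cdot(\scr'-\scr)}\geq\bbExp{\ut\cdot(\scr'-\scr)}$ for every $\scr'\in\SCR$, completing the argument. The main obstacle is exactly the treatment of actions outside $\supp(\scr)$: the lower bound from Lemma~\ref{lem: directional derivative of cost for iteratively differentiable case, and subgradient result for differentiable case} assigns those actions an $\scr'$-dependent payoff $\icostd(\posterior^{\scr'}_{a})$, so a fixed utility must discourage deviating to them simultaneously for all $\scr'$, and this is precisely what the global affine minorant $\int\func_{a_0}\ \ddd\posterior\leq\icostd(\posterior)$ secures.
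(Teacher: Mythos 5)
Your proof is correct and follows essentially the same route as the paper's: pass to the posterior-separable approximation via Lemma~\ref{lem: posterior separable approximation characterizes optimal SCR}, reduce rationalizability to the directional-derivative inequality via Lemma~\ref{lem: subdifferential characterization of optimality}, build $\ut$ from normalized subgradients $\func_a\in\partial\icostd(\posterior^{\scr}_{a})$, and invoke the inequality half of Lemma~\ref{lem: directional derivative of cost for iteratively differentiable case, and subgradient result for differentiable case}. The only difference is cosmetic: for actions outside $\supp(\scr)$ the paper assigns the constant payoff $\min\icostd(\DO)\mathbf1$ while you assign the affine minorant $\func_{a_0}$ of a fixed support action; both are global lower bounds for $\icostd$ at simply drawn posteriors, which is exactly what the final comparison with $\icostd(\posterior^{\scr'}_{a})$ requires.
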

\begin{proof}
By Lemma~\ref{lem: posterior separable approximation characterizes optimal SCR}, it suffices to show $\scr$ is rationalizable according to $\cost_{\icostd},$ which is (by Lemma \ref{lem: subdifferential characterization of optimality}) equivalent to some $\ut\in\UT$ satisfying, for every $\scr'\in\Dom$, the inequality $d_{\scr}^{+}\cost_\icostd (\scr') \geq \bbExp{\ut\cdot(\scr'-\scr)}$.

For each $a\in\supp(\scr)$, take some $\func_a\in\partial \icostd (\posterior_{a}^{\scr})$. Shifting $\func_a$ by a constant if necessary---which clearly preserves $\func_a\in\partial \icostd (\posterior_{a}^{\scr})$---assume without loss that $\int\func_a(\omega)\ \posterior_a^\scr(\dd\omega)=\icostd(\posterior_a^\scr)$. 
Let us see $\ut\in\UT$ given by $$\ut_a=\begin{cases}
\func_a &:\ a\in\supp(\scr) \\
\min\icostd(\DO)\mathbf1& :\ a\in A\setminus\supp(\scr)
\end{cases}$$
yields the desired payoff ranking. Indeed, for any $\scr'\in\SCR$, Lemma~\ref{lem: directional derivative of cost for iteratively differentiable case, and subgradient result for differentiable case} tells us 
\begin{eqnarray*}
d^{+}_{\scr} \cost_{\icostd} (\scr') 
&\geq& \sum_{a\in \supp(\scr)} \bbExp{ (\scr'_{a} - \scr_{a} )\ut_{a}}+\sum_{a\in \supp(\scr')\setminus\supp(\scr)}  \bbExp{(\scr'_{a} - \scr_{a}) \ \icostd(\posterior_{a}^{\scr'})} \\
&\geq& \bbExp{ (\scr' - \scr )\cdot\ut},
\end{eqnarray*}
as required.
\end{proof}

\subsection{Convexity and Monotonicity}

The following lemma constructs a ``convex monotone envelope'' of an information cost function $\hat\icost$ and establishes some regularity properties of the same.

\begin{lemma}\label{lem: convex monotone envelope}
Suppose $\hat\icost: \Info \to \ereal$ is proper and lower semicontinuous. Then, \begin{eqnarray*}
\icost: \Info &\to& \ereal \\
\ip &\mapsto& \min_{\mip\in\Delta\Info:\ \int\ipb\ \mip(\dd\ipb)\succeq\ip} \int\hat\icost(\ipb)\ \mip(\dd\ipb)
\end{eqnarray*}
is well defined (i.e., the defining minimum exists), proper, lower semicontinuous, convex, and monotone.
\end{lemma}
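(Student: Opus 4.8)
The plan is to verify the five properties in turn, using throughout that $\Info$ is a compact, convex, metrizable space (it is a weak*-closed subset of the compact space $\Delta\DO$, and is convex because the constraint $\int\posterior\,\ip(\dd\posterior)=\prior$ defining $\Info$ is affine in $\ip$), so that $\Delta\Info$ is compact metrizable as well. Two structural facts about the Blackwell order do most of the work. First, by the HLPBSSC theorem, $\ip\succeq\ipb$ holds if and only if $\int f\,\dd\ip\geq\int f\,\dd\ipb$ for every continuous convex $f:\DO\to\real$; since each such inequality is preserved under weak* limits, $\succeq$ is a closed subset of $\Info\times\Info$, and since each is preserved under convex combinations, $\succeq$ respects mixtures, i.e. $(1-t)\nu_0+t\nu_1\succeq(1-t)\ip_0+t\ip_1$ whenever $\nu_0\succeq\ip_0$ and $\nu_1\succeq\ip_1$. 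Second, the barycenter map $b:\Delta\Info\to\Info$, $b(\mip):=\int\ipb\,\mip(\dd\ipb)$ (well defined into $\Info$ since $\Info$ is compact and convex), is weak* continuous and affine, because for each continuous $g:\DO\to\real$ the map $\ipb\mapsto\int g\,\dd\ipb$ is continuous and bounded on $\Info$, so $\mip\mapsto\int g\,\dd(b(\mip))$ is weak* continuous.

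First I would record the basic regularity of the objective. Being lower semicontinuous, proper, and $\ereal$-valued on the compact set $\Info$, the function $\hat{\icost}$ attains a finite minimum and is therefore bounded below; consequently $J(\mip):=\int\hat{\icost}\,\dd\mip$ is a well-defined, $\ereal$-valued, weak* lower semicontinuous functional on $\Delta\Info$ (write $\hat{\icost}$ as the increasing pointwise limit of continuous functions and apply monotone convergence to express $J$ as a supremum of weak*-continuous functionals). The constraint correspondence $\Phi(\ip):=\{\mip\in\Delta\Info:\ b(\mip)\succeq\ip\}$ is nonempty, as it contains $\delta_\ip$, and it is closed---hence compact---because $b$ is continuous and $\succeq$ is closed. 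Minimizing the lower semicontinuous $J$ over the nonempty compact set $\Phi(\ip)$, the infimum defining $\icost(\ip)$ is attained, which is the well-definedness claim; and feasibility of $\delta_\ip$ gives $\icost(\ip)\leq\hat{\icost}(\ip)$, so $\icost$ is finite wherever $\hat{\icost}$ is, while the lower bound on $\hat{\icost}$ forces $\icost>-\infty$ everywhere, giving properness.

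Monotonicity and convexity then follow from the two facts about $\succeq$. If $\ip\succeq\ip'$, transitivity of $\succeq$ gives $\Phi(\ip)\subseteq\Phi(\ip')$, and minimizing over the larger set yields $\icost(\ip')\leq\icost(\ip)$, i.e. monotonicity. For convexity, fix $t\in(0,1)$ and take minimizers $\mip_0\in\Phi(\ip_0)$ and $\mip_1\in\Phi(\ip_1)$ (the inequality is trivial if either value is infinite). Setting $\mip_t:=(1-t)\mip_0+t\mip_1$, affinity of $b$ and the mixture property of $\succeq$ give $b(\mip_t)=(1-t)b(\mip_0)+t\,b(\mip_1)\succeq(1-t)\ip_0+t\ip_1$, so $\mip_t$ is feasible for the mixed policy; hence $\icost\big((1-t)\ip_0+t\ip_1\big)\leq J(\mip_t)=(1-t)\icost(\ip_0)+t\,\icost(\ip_1)$.

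The remaining and most delicate step is lower semicontinuity of $\icost$, which is where I expect the real work to lie. I would argue sequentially, using the compactness and closed-graph facts already assembled: given $\ip_n\to\ip$, choose minimizers $\mip_n\in\Phi(\ip_n)$ with $J(\mip_n)=\icost(\ip_n)$, then pass by compactness of $\Delta\Info$ to a subsequence along which $\mip_n\to\mip$ and $\icost(\ip_n)\to\liminf_n\icost(\ip_n)$. Continuity of $b$ and closedness of $\succeq$ give $b(\mip)\succeq\ip$, so $\mip\in\Phi(\ip)$, and lower semicontinuity of $J$ then yields $\icost(\ip)\leq J(\mip)\leq\liminf_n J(\mip_n)=\liminf_n\icost(\ip_n)$. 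This is precisely the lower-semicontinuity conclusion one obtains from Berge's maximum theorem applied to the data $(J,\Phi)$, whose only inputs are the closed graph of $\Phi$ (from closedness of $\succeq$ and continuity of $b$), compactness of $\Delta\Info$, and lower semicontinuity of the objective. The main obstacle across all five parts is packaging the closedness, continuity, and mixture properties of the Blackwell order and the barycenter map so that they feed cleanly into both the attainment argument and this Berge-type limiting argument.
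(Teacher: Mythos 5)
Your proposal is correct and follows essentially the same route as the paper's proof: closedness and convexity of $\succeq$ via the HLPBSSC theorem, continuity of the barycenter map, a Berge-type maximum-theorem argument (which the paper cites as Lemma 17.30 of Aliprantis--Border and you unpack sequentially) for attainment and lower semicontinuity, transitivity for monotonicity, and the convex-graph/mixture argument for convexity. The extra details you supply---lower semicontinuity of $\mip\mapsto\int\hat\icost\,\dd\mip$ and the boundedness-below of $\hat\icost$---are exactly the facts the paper leaves implicit.
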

\begin{proof}
Given the HLPBSSC theorem, the relation $\succeq$ is continuous (i.e., a closed subset of $\Info^2$). We can therefore apply a version of the maximum theorem \cite[e.g., Lemma 17.30 from][]{aliprantis2006infinite} because the barycenter map is continuous and $\hat\icost$ is lower semicontinuous: the map $\icost: \Info \to \ereal$ is well defined and lower semicontinuous.\footnote{The map takes values in $\ereal$ rather than $\real$, but the cited lemma can be applied because $\ereal$ is homeomorphic to a subset of $\real$ via a strictly increasing transformation.} Moreover, $\icost$ is monotone because $\succeq$ is transitive, and it is proper because $\hat\icost$ is proper and $\icost\leq\hat\icost$.

Finally, we turn to convexity. The HLPBSSC theorem implies $\succeq$ is a convex subset of $\Info^2$, and so the correspondence $\Info\rightrightarrows\Delta\Info$ given by $\ip\mapsto\{\mip\in\Delta\Info:\ \int\ipb\ \mip(\dd\ipb)\succeq\ip\}$ has a convex graph.  Hence, $\icost$ is a (weakly) convex function. 
\end{proof}

\begin{proof}[Proof of Proposition~\ref{prop: monotonicity and convexity are wlog}]
Let $\icost$ be the function defined in the statement of Lemma~\ref{lem: convex monotone envelope}, which (by that lemma) satisfies all of our standing assumptions on information cost functions.

Take any SCR $\scr$. Lemma~\ref{lem: which policies can induce} implies a given $\mip\in\Delta\Info$ can induce $\scr$ if and only if $\int\ip\ \mip(\dd\ip)\succeq\ip^\scr$. Hence, by Lemma~\ref{lem: convex monotone envelope}, some $\mip\in\Delta\Info$ of minimum average cost $\int\hat\icost(\ip)\ \mip(\dd\ip)$ can induce $\scr$, and this cost is exactly equal to $\icost(\ip^\scr)$. The result follows.
\end{proof}

\subsection{Continuous Choice with Bounded Utilities}\label{app: Continuous Choice with Bounded Utilities}

We begin with an abstract result on $\UTB$-rationalizability for the case in which $\UTB$ is a well-behaved linear subspace of $\UT$.

\begin{lemma}\label{lem: rationalizability by utilities in a subspace}
Suppose $\UTB\subseteq\UT$ is a linear subspace, $(\tilde\SCRU, \Vert\cdot\Vert)$ is a normed space with $\tilde\SCRU\supseteq\SCRU$, and 
$\{\varphi|_\SCR:\ \varphi:\tilde\SCRU\to\real \text{ linear continuous}\}$
is the set of maps $\SCR\to\real$ given by $\scr\mapsto\bbExp{\ut\cdot\scr}$ for $\ut\in\UTB$. 
Then, an SCR $\scr$ is $\UTB$-rationalizable if and only if $$\inf_{\scr'\in\Dom:\ \scr'\neq\scr}\frac{ \cost(\scr')-\cost(\scr) }{\Vert\scr'-\scr\Vert}>-\infty.$$
\end{lemma}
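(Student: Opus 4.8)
The plan is to translate the statement into a claim about subdifferentiability of the convex function $\cost$ at $\scr$ and then resolve it with a Hahn--Banach argument of the type underlying \cite{gale1967geometric}. First I would record the reduction. By Lemma~\ref{lem: subdifferential characterization of optimality}, $\scr$ is $\ut$-rationalizable exactly when $\ut\in\partial\cost(\scr)$, i.e.\ when $\cost(\scr')\ge\cost(\scr)+\bbExp{\ut\cdot(\scr'-\scr)}$ for every $\scr'\in\Dom$. The hypothesis on $\UTB$ says the maps $\scr'\mapsto\bbExp{\ut\cdot\scr'}$ with $\ut\in\UTB$ are exactly the restrictions to $\SCR$ of $\Vert\cdot\Vert$-continuous linear functionals on $\tilde\SCRU$. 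Combining these, $\scr$ is $\UTB$-rationalizable if and only if some $\Vert\cdot\Vert$-continuous linear functional $\varphi$ on $\tilde\SCRU$ satisfies $\cost(\scr')\ge\cost(\scr)+\varphi(\scr'-\scr)$ for all $\scr'\in\Dom$. I would dispose of the degenerate case $\scr\notin\Dom$ at once (there $\scr$ is not rationalizable and the displayed infimum is $-\infty$, so both sides are false), and otherwise assume $\scr\in\Dom$, extending $\cost$ to all of $\tilde\SCRU$ by $+\infty$ off $\SCRU$; since $\SCRU$ is a subspace this preserves convexity.

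The easy direction is then immediate: if $\varphi$ as above exists, its continuity gives a dual norm $\Vert\varphi\Vert_*<\infty$, so $\cost(\scr')-\cost(\scr)\ge\varphi(\scr'-\scr)\ge-\Vert\varphi\Vert_*\Vert\scr'-\scr\Vert$, whence the infimum in the statement is at least $-\Vert\varphi\Vert_*>-\infty$. The content is the converse. Suppose the infimum is bounded below, say by $-L$ with $L\ge0$ finite, so that $\cost(\scr')\ge\cost(\scr)-L\Vert\scr'-\scr\Vert$ for all $\scr'$. Consider the directional-derivative functional $h(d):=d^{+}_{\scr}\cost(\scr+d)=\lim_{\epsilon\searrow0}\tfrac1\epsilon[\cost(\scr+\epsilon d)-\cost(\scr)]$, which is sublinear with values in $(-\infty,+\infty]$ and, by the steepness bound, satisfies $h(d)\ge-L\Vert d\Vert$ for every $d$.

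To produce a \emph{continuous} supporting functional I would pass to the inf-convolution $\tilde h(d):=\inf_{d'\in\tilde\SCRU}\big[h(d')+L\Vert d-d'\Vert\big]$. This $\tilde h$ is again sublinear; taking $d'=d$ gives $\tilde h\le h$, taking $d'=0$ gives $\tilde h(d)\le L\Vert d\Vert$, and the triangle inequality $\Vert d-d'\Vert\ge\Vert d'\Vert-\Vert d\Vert$ together with $h(d')\ge-L\Vert d'\Vert$ gives $\tilde h(d)\ge-L\Vert d\Vert$; hence $\tilde h$ is finite-valued. Applying the Hahn--Banach dominated-extension theorem to the zero functional on $\{0\}$ (with $0=\varphi(0)\le\tilde h(0)=0$) yields a linear $\varphi\le\tilde h$ on $\tilde\SCRU$. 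Then $\varphi(d)\le\tilde h(d)\le L\Vert d\Vert$ and $-\varphi(d)=\varphi(-d)\le L\Vert d\Vert$ give $\Vert\varphi\Vert_*\le L$, so $\varphi$ is $\Vert\cdot\Vert$-continuous, while $\varphi\le\tilde h\le h$ combined with the convex inequality $h(\scr'-\scr)\le\cost(\scr')-\cost(\scr)$ yields the supporting inequality. The hypothesis on $\UTB$ finally realizes $\varphi|_\SCR$ as $\scr'\mapsto\bbExp{\ut\cdot\scr'}$ for some $\ut\in\UTB$, so $\ut\in\partial\cost(\scr)$ and $\scr$ is $\UTB$-rationalizable.

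The main obstacle I anticipate is precisely the norm-boundedness of the extended functional. A direct Hahn--Banach domination by $h$ produces a subgradient but not necessarily a continuous one, because $h$ is bounded below by $-L\Vert\cdot\Vert$ yet typically unbounded above and $+\infty$ in directions leaving $\Dom$. The inf-convolution with $L\Vert\cdot\Vert$ is the key device: it symmetrizes the estimate, simultaneously rendering the dominating function finite everywhere (so the elementary finite-valued Hahn--Banach suffices and the extended-real subtleties of $h$ are sidestepped) and forcing $\Vert\varphi\Vert_*\le L$. Verifying that $\tilde h$ stays sublinear and that the lower bound $\tilde h\ge-L\Vert\cdot\Vert$ holds via the triangle inequality is the one genuinely delicate computation in the argument.
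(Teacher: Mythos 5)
Your proof is correct, and it shares the paper's first step exactly: extend $\cost$ to $\tilde\SCRU$ by $+\infty$, and use Lemma~\ref{lem: subdifferential characterization of optimality} together with the hypothesis on $\UTB$ to identify $\UTB$-rationalizability with nonemptiness of $\partial_{\tilde\SCRU}\cost(\scr)$. The difference lies in what comes next: the paper stops there and invokes \citeapos{gale1967geometric} duality theorem as a black box (bounded steepness of a convex function at a point is equivalent to the existence of a norm-continuous subgradient), whereas you prove that equivalence from scratch. Your argument---pass to the directional-derivative functional $h=d^{+}_{\scr}\cost(\scr+\cdot)$, note the steepness bound gives $h\geq -L\Vert\cdot\Vert$, regularize by inf-convolution with $L\Vert\cdot\Vert$ to obtain a finite sublinear majorant-minorant $\tilde h$ with $-L\Vert\cdot\Vert\leq\tilde h\leq\min\{h,L\Vert\cdot\Vert\}$, and extract a linear $\varphi\leq\tilde h$ by Hahn--Banach, which is automatically $\Vert\cdot\Vert$-continuous with $\Vert\varphi\Vert_*\leq L$---is a clean, self-contained reconstruction of the substance of Gale's theorem; the inf-convolution is indeed the standard device for converting a one-sided steepness bound into two-sided norm control of the supporting functional. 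Each approach has its virtue: the paper's is two lines because it defers the analytic content to the literature, while yours makes the mechanism explicit, verifies the sublinearity and sandwich estimates that the citation hides, and also handles the degenerate case $\scr\notin\Dom$ (where both sides of the equivalence fail) explicitly rather than implicitly. Your checks all go through: $h(0)=0$ gives $\tilde h(0)=0$; the triangle-inequality bound $\tilde h\geq-L\Vert\cdot\Vert$ is right; and the final chain $\varphi\leq\tilde h\leq h$ together with $h(\scr'-\scr)\leq\cost(\scr')-\cost(\scr)$ (monotonicity of convex difference quotients) yields the subgradient inequality on all of $\Dom$, hence on all of $\tilde\SCRU$.
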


\begin{proof}
Fix $\scr\in\SCR$. Viewing $\cost$ as a function on $\tilde\SCRU$ (by letting it take value $\infty$ on $\tilde\SCRU\setminus\SCR$), the subdifferential of $\cost$ at $\scr$ takes the form
$$\partial_{\tilde\SCRU}\cost(\scr)=\left\{\varphi\in\tilde\SCRU^*:\ \cost(\tilde\scr)\geq \cost(\scr) + \varphi(\tilde\scr-\scr) \ \forall \tilde\scr\in\tilde\SCRU \right\}.$$
By hypothesis, $\{\varphi|_\SCR:\ \varphi\in\tilde\SCRU^*\}$ is the set of functions $\SCR\to\real$ that are given by $\scr\mapsto\bbExp{\ut\cdot\scr}$ for some $\ut\in\UTB$. Hence, 
\begin{eqnarray*}
&&\partial_{\tilde\SCRU}\cost(\scr) \neq\varnothing \\
&\iff& \{\varphi|_\SCR:\ \varphi\in\partial_{\tilde\SCRU}\cost(\scr)\}\neq\varnothing \\
&\iff& \exists\ut\in\UTB \text{ such that } \forall \scr'\in\SCR, \  \cost(\scr')\geq \cost(\scr) + \bbExp{\ut\cdot(\scr'-\scr)}\\
&\iff& \scr \text{ is $\UTB$-rationalizable.}
\end{eqnarray*}
Therefore, the lemma follows immediately from \citeapos{gale1967geometric} duality theorem.
\end{proof}

\begin{proof}[Proof of Proposition~\ref{prop: continous choice with bounded utilities}]
Letting $\tilde\SCRU=\lp{1}^A$, whose dual space is naturally identified with $\UTB=\lp{\infty}^A$, the result follows directly from the first part of Lemma~\ref{lem: rationalizability by utilities in a subspace}.
\end{proof}

Although we have applied Lemma~\ref{lem: rationalizability by utilities in a subspace} to determine which SCRs are rationalized by some bounded utility, one can vary the norm (generating distinct bounded-steepness conditions) to characterize rationalizability with respect to different classes of utilities. For example, $\cost$ exhibits bounded steepness with respect to the $\lp{2}$ norm at exactly the SCRs that can be rationalized by a finite-variance utility. Similarly, bounded steepness of $\cost$ with respect to the Kantorovich-Rubinstein norm characterizes rationalizability by a Lipschitz utility.

In contrast to the previous examples, Lemma~\ref{lem: rationalizability by utilities in a subspace} cannot be applied directly to the case in which $\UTB$ is the space of continuous functions $\Omega\to\real$. The reason is that, whereas the above examples relied on viewing $\SCRU$ as a subset of some $\tilde\SCRU$ whose dual was naturally identified with $\UTB$, the space of continuous functions is typically not a dual. Indeed, if $\Omega$ is infinite with finitely many connected components (e.g., if it is $[0,1]$), one can easily show (via the Banach-Alaoglu theorem) the space of continuous functions is not the dual of any normed space.

\subsection{Costly Stochastic Choice}

Let us state the analogue of Theorem~\ref{thm: cross-choice predictions} that we reported in section~\ref{sec: discussion}.

\begin{samepage}
\begin{proposition}\label{prop: cross-choice predictions for unfounded costs}
Suppose $\tilde\cost$ is strictly convex on its domain $\SCR^{\tilde\cost}$ and is finite and differentiable at every conditionally full-support SCR. Further, suppose every full-support $\scr\in \SCR^{\tilde\cost}$ that does not have conditionally full support admits some $\scr'\in\SCR$ such that $d^+_\scr\tilde\cost(\scr')= -\infty.$ 
Then, SCRs yielding unique subset predictions are weak* dense in $\SCR^{\tilde\cost}$.
\end{proposition}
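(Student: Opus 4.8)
The plan is to mirror the proof of Theorem~\ref{thm: cross-choice predictions}, replacing the information-cost machinery (Assumptions~\ref{ass: regularity}--\ref{ass: smoothness} and the iteratively-differentiable lemmas) with direct analogues for $\tilde\cost$. First I would note that, since $\tilde\cost$ is convex, proper, and weak* lower semicontinuous (the standing assumption of this subsection), every abstract convex-analysis result of the value-function subsection applies verbatim with $\tilde\cost$ in place of $\cost$: the value function $\maxval(\ut)=\max_{\scr\in\SCR}[\bbExp{\ut\cdot\scr}-\tilde\cost(\scr)]$ is real-valued, convex, norm continuous, with norm-to-weak* upper hemicontinuous subdifferential (Lemma~\ref{lem: value function continuity}); optimality of $\scr$ given $\ut$ is equivalent to $\ut\in\partial\tilde\cost(\scr)$ and to the inequality $d^+_\scr\tilde\cost(\scr')\geq\bbExp{\ut\cdot(\scr'-\scr)}$ for all $\scr'\in\SCR^{\tilde\cost}$ (Lemma~\ref{lem: subdifferential characterization of optimality}); and the Br{\o}ndsted--Rockafellar argument of Proposition~\ref{prop: dense set of rationalizable SCRs} yields a norm-dense set of rationalizable SCRs in $\SCR^{\tilde\cost}$.

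Next I would upgrade this to a dense set of \emph{uniquely} rationalizable SCRs and establish two smoothness analogues. Strict convexity of $\tilde\cost$ makes $\scr\mapsto\bbExp{\ut\cdot\scr}-\tilde\cost(\scr)$ strictly concave, so every rationalizable $\scr$ is uniquely rationalizable; intersecting with the weak*-open, norm-dense set of full-support SCRs gives a norm-dense family $\SCR_1$ of full-support uniquely rationalizable SCRs. The first analogue (of Lemma~\ref{lem: inada and full support implies conditionally full support}) is that a full-support rationalizable $\scr$ must have conditionally full support: otherwise the infinite-slope hypothesis supplies $\scr'\in\SCR$ with $d^+_\scr\tilde\cost(\scr')=-\infty$, contradicting the directional-derivative optimality inequality, whose right-hand side $\bbExp{\ut\cdot(\scr'-\scr)}$ is finite because $\ut\in\lp1^A$ and $\scr,\scr'$ are bounded. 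The second analogue (of Corollary~\ref{cor: multiplier result for conditionally full-support, iteratively differentiable case}) is that if $\scr$ has conditionally full support and $\tilde\cost$ is differentiable at $\scr$ with derivative $\ut$, then $\utb$ rationalizes $\scr$ if and only if $\utb_a=\ut_a+\nuis$ for a common $\nuis\in\lp1$: by differentiability, optimality of $\scr$ for $\utb$ reduces to $\bbExp{(\ut-\utb)\cdot(\scr'-\scr)}\geq0$ for all $\scr'$, i.e.\ $\scr\in\argmax_{\scr'\in\SCR}\bbExp{(\ut-\utb)\cdot\scr'}$, and conditional full support forces $\ut-\utb$ to be $\prior$-a.s.\ constant across actions.

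Finally I would run the argument of Lemma~\ref{lem: dense unique subset predictions from dense uniquely rationalizable}. Using Proposition~\ref{prop: unique prediction with known benefits} (which also applies to $\tilde\cost$), take the dense set $\UT_\Menu=\bigcap_{\menu\in\Menu}\UT_\menu$ of utilities yielding a unique optimum over every menu. Given a weak* neighborhood of some $\scr\in\SCR_1$ with $\ut$ uniquely rationalizing it, approximate $\ut$ by $\ut^n\in\UT_\Menu$ and set $\scr^n\in\partial\maxval(\ut^n)$; upper hemicontinuity of $\partial\maxval$ together with uniqueness at $\ut$ gives $\scr^n\to\scr$ weak*, so $\scr^n$ eventually lies in the neighborhood and, being full-support, has conditionally full support by the first analogue. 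The second analogue then shows every utility rationalizing $\scr^n$ differs from $\ut^n$ by a nuisance term, hence generates the same (unique) optimum as $\ut^n$ over every submenu, so $\scr^n$ yields unique subset predictions. I expect the main obstacle to be the second analogue: converting the abstract differentiability of $\tilde\cost$ into the ``utility $=$ derivative $+$ nuisance'' characterization through the subdifferential/directional-derivative calculus, and checking that the argmax condition over all of $\SCR$ (not merely over the domain) yields the pointwise constancy of $\ut-\utb$; the remaining steps are bookkeeping adaptations of the existing proofs.
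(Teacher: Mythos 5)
Your proposal is correct and follows essentially the same route as the paper's own proof: the same observation that the abstract convex-analysis results (Lemmas~\ref{lem: value function continuity} and~\ref{lem: subdifferential characterization of optimality}, Propositions~\ref{prop: unique prediction with known benefits} and~\ref{prop: dense set of rationalizable SCRs}) apply verbatim to $\tilde\cost$, the same use of strict convexity to upgrade the dense rationalizable set to a dense uniquely rationalizable set, the same two analogues (infinite slope implies non-rationalizability; differentiability plus conditional full support forces rationalizing utilities to agree up to a nuisance term), and the same final assembly via the dense set $\UT_{\Menu}$ and upper hemicontinuity of $\partial\maxval$. The only blemishes are cosmetic: a sign slip in the argmax condition (which should involve $\utb-\ut$ rather than $\ut-\utb$), and the infinite-slope contradiction is cleanest when derived directly from the definition of optimality rather than from Lemma~\ref{lem: subdifferential characterization of optimality}(iii), whose quantification is over $\Dom$---exactly as the paper's Lemma~\ref{lem: inada and full support implies conditionally full support for unfounded costs} does.
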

\end{samepage}

We now build up to the proof of the above proposition. 

\begin{remark}
In the analysis of this section, we apply Lemmas~\ref{lem: unique prediction with known benefits}, \ref{lem: value function continuity},  and~\ref{lem: subdifferential characterization of optimality} and Proposition~\ref{prop: dense set of rationalizable SCRs} to $\tilde\cost$. All of these results were proven under the hypothesis that $\cost$ is proper, convex, and weak* lower semicontinuous (established in Lemma~\ref{lemma: Properties of Indirect Cost} for $\cost$ and directly assumed for $\tilde\cost$). In particular, inspection of the proofs of these results shows they do not use the fact that $\cost$ is derived from $\icost$, and so the results can be applied to $\tilde\cost$ without change.
\end{remark}

The following lemma adapts Lemma~\ref{lem: multiplier result for full-support iteratively differentiable case} to the simpler setting in which $\tilde\cost$ is assumed differentiable.

\begin{lemma}\label{lem: multiplier result for differentiable SCR costs}
Let $\scr\in\SCR^{\tilde\cost}$ have $\supp(\scr)=A$, and suppose $\ut$ is a derivative of $\tilde\cost$ at $\scr$.
The following are equivalent for $\utb\in\UT$:
\begin{enumerate}[(i)]
\item SCR $\scr$ is $\utb$-rationalizable; that is,  $\scr\in\argmax_{\scrb\in\SCR}\left[ \bbExp{\utb\cdot\scrb}-\tilde\cost(\scrb) \right]$.
\item Some $\nuis\in \lp{1}$ and $\mult\in\lp{1}_+^A$ exist such that every $a\in A$ has \begin{eqnarray*}
\utb_a &=& \nuis-\mult_a+\ut_a, \\
\scr_a\mult_a &=& 0.
\end{eqnarray*}
\end{enumerate}
\end{lemma}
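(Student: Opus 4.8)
The plan is to reduce the rationalizability condition to a pointwise, state-by-state optimization of a linear functional, exactly as in the proof of Lemma~\ref{lem: multiplier result for full-support iteratively differentiable case}, with the derivative $\ut$ of $\tilde\cost$ playing the role that $(\icostdd_{\posterior^\scr_a})_{a\in A}$ plays there. Writing $w:=\utb-\ut\in\UT$, the first step is to show that $\scr$ is $\utb$-rationalizable if and only if $\scr\in\argmax_{\scr'\in\SCR}\bbExp{w\cdot\scr'}$. For this I would invoke Lemma~\ref{lem: subdifferential characterization of optimality}, which applies verbatim to $\tilde\cost$ since it is proper, convex, and weak* lower semicontinuous: rationalizability of $\scr$ is equivalent to $d^+_\scr\tilde\cost(\scr')\geq\bbExp{\utb\cdot(\scr'-\scr)}$ for every $\scr'$ in the domain $\SCR^{\tilde\cost}$. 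Substituting the derivative hypothesis $d^+_\scr\tilde\cost(\scr')=\bbExp{\ut\cdot(\scr'-\scr)}$ turns this into $\bbExp{w\cdot(\scr'-\scr)}\leq0$ for every $\scr'\in\SCR^{\tilde\cost}$.

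The one genuinely non-routine point, and the step I expect to be the main obstacle, is upgrading this inequality from the effective domain $\SCR^{\tilde\cost}$ to all of $\SCR$, since $\SCR^{\tilde\cost}$ may be a proper subset of $\SCR$. Here I would exploit that the derivative hypothesis furnishes a finite directional derivative in every direction: for any $\scr'\in\SCR$, convexity of $\epsilon\mapsto\tilde\cost(\scr+\epsilon(\scr'-\scr))$ together with finiteness of $d^+_\scr\tilde\cost(\scr')$ forces $\scr'':=\scr+\epsilon_0(\scr'-\scr)\in\SCR^{\tilde\cost}$ for some small $\epsilon_0\in(0,1)$, since otherwise every difference quotient, and hence the directional derivative, would equal $+\infty$. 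Applying the domain-restricted inequality to $\scr''$ and using $\scr''-\scr=\epsilon_0(\scr'-\scr)$ yields $\bbExp{w\cdot(\scr'-\scr)}\leq0$ for arbitrary $\scr'\in\SCR$. Thus $\scr\in\argmax_{\scr'\in\SCR}\bbExp{w\cdot\scr'}$; the converse direction is immediate by restricting the maximization back to $\SCR^{\tilde\cost}$, so the two conditions are equivalent.

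The remaining steps are the pointwise argument, which I would carry out as in Lemma~\ref{lem: multiplier result for full-support iteratively differentiable case}. Because $\bbExp{w\cdot\scr'}=\int\sum_{a\in A}w_a(\omega)\scr'_a(\omega)\ \prior(\dd\omega)$ is maximized over $\SCR$ state by state over the simplex, $\scr\in\argmax_{\scr'\in\SCR}\bbExp{w\cdot\scr'}$ holds if and only if $\prior$-almost every $\omega$ has $\{a\in A:\ \scr_a(\omega)>0\}\subseteq\argmax_{a\in A}w_a(\omega)$, with the pointwise maximum attained by a measurable selection via the measurable maximum theorem and $\max_{a\in A}w_a\in\lp{1}$ since it is dominated by $\sum_{a\in A}|w_a|$. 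Finally, this support condition is equivalent to the existence of $\nuis\in\lp{1}$ and $\mult\in\lp{1}_+^A$ with $w_a=\nuis-\mult_a$ and $\scr_a\mult_a=0$ for every $a$: the ``if'' direction is immediate, and for ``only if'' I would take $\nuis:=\max_{a\in A}w_a$ and $\mult_a:=\nuis-w_a\geq0$, noting that $\scr_a(\omega)>0$ forces $a\in\argmax_{b}w_b(\omega)$, hence $w_a(\omega)=\nuis(\omega)$ and $\mult_a(\omega)=0$. Rewriting $w_a=\utb_a-\ut_a$ gives $\utb_a=\nuis-\mult_a+\ut_a$, which is precisely condition (ii), completing the equivalence.
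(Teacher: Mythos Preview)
Your proof is correct and follows essentially the same route as the paper: invoke Lemma~\ref{lem: subdifferential characterization of optimality} to reduce $\utb$-rationalizability to the directional-derivative inequality, substitute the derivative hypothesis $d^+_\scr\tilde\cost(\scr')=\bbExp{\ut\cdot(\scr'-\scr)}$ to obtain $\scr\in\argmax_{\scr'\in\SCR}\bbExp{(\utb-\ut)\cdot\scr'}$, and then run the pointwise multiplier argument exactly as in Lemma~\ref{lem: multiplier result for full-support iteratively differentiable case}. The only difference is that you are more explicit about the passage from $\scr'\in\SCR^{\tilde\cost}$ to $\scr'\in\SCR$, which the paper handles tacitly; your ``finite directional derivative forces a nearby point in the domain'' argument is a clean way to close that gap.
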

\begin{proof}
Let $\ut^\scr:= \utb-\ut \in \UT$. Lemma~\ref{lem: subdifferential characterization of optimality} tells us $\scr$ is $\utb$-rationalizable if and only if every $\scr'\in\Dom$ has $d_{\scr}^{+}\tilde\cost (\scr') \geq \bbExp{\utb\cdot(\scr'-\scr)}$. But, that $\ut$ is a derivative of $\tilde\cost$ at $\scr$ implies $d_{\scr}^{+}\tilde\cost (\scr')=\bbExp{\ut\cdot(\scr'-\scr)}$. We can therefore write the optimality condition as $\scr\in\argmax_{\scr'\in\SCR} \bbExp{\ut^{\scr}\cdot\scr'}$. As in the proof of Lemma~\ref{lem: multiplier result for full-support iteratively differentiable case}, this condition is equivalent to requiring some $\nuis\in\lp{1}$ and $\mult\in\lp{1}_+^A$ to have $\ut^\scr=(\nuis-
\mult_a)_{a\in A}$ and $(\mult_a\scr_a)_{a\in A}=0$.
\end{proof}

The following analogue of Corollary~\ref{cor: multiplier result for conditionally full-support, iteratively differentiable case} 
is an immediate consequence of Lemma~\ref{lem: multiplier result for differentiable SCR costs}.

\begin{corollary}\label{cor: multiplier result for conditionally full-support case with differentiable SCR costs}
Let the SCR $\scr$ have conditionally full support. 
If $\tilde\cost$ is differentiable at $\scr$, and $\ut$ and $\ut'$ both rationalize $\scr$, some $\nuis \in \lp{1}$ exists such that 
\[
\ut_{a} = \ut'_{a} + \nuis \ \forall a\in A.
\]
\end{corollary}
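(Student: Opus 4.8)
The plan is to obtain the corollary directly from Lemma~\ref{lem: multiplier result for differentiable SCR costs}, invoking it once for each of the two rationalizing utilities and then subtracting the resulting representations. Since $\scr$ is assumed to have conditionally full support, in particular each $\scr_a$ is nonzero, so $\supp(\scr)=A$ and the standing hypothesis of the lemma is met; and since $\tilde\cost$ is differentiable at $\scr$ (which presupposes $\scr$ lies in $\tilde\cost^{-1}(\real)$), it admits a derivative there, which I will call $\hat\ut\in\UT$. To avoid a clash with the symbol $\ut$ appearing in the corollary as a \emph{rationalizing} utility, I use $\hat\ut$ for the object that plays the role of the derivative ``$\ut$'' in the lemma's statement.

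First I would apply Lemma~\ref{lem: multiplier result for differentiable SCR costs} to the utility $\ut$. Because $\ut$ rationalizes $\scr$ and $\hat\ut$ is a derivative of $\tilde\cost$ at $\scr$, part (ii) of the lemma supplies some $\nuis_1\in\lp{1}$ and $\mult\in\lp{1}_+^A$ with $\ut_a=\nuis_1-\mult_a+\hat\ut_a$ and $\scr_a\mult_a=0$ for every $a\in A$. The key step is then to use conditional full support rather than mere full support: because each $\scr_a$ is $\prior$-almost surely strictly positive, the complementary-slackness identity $\scr_a\mult_a=0$ combined with $\mult_a\ge0$ forces $\mult_a=0$ $\prior$-almost surely. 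Thus the nonnegative multipliers vanish, leaving $\ut_a=\nuis_1+\hat\ut_a$ for every $a\in A$.

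Running the identical argument for $\ut'$ produces some $\nuis_2\in\lp{1}$ with $\ut'_a=\nuis_2+\hat\ut_a$ for every $a$. Subtracting the two representations eliminates the common derivative term $\hat\ut_a$ and yields $\ut_a-\ut'_a=\nuis_1-\nuis_2$, an action-independent element of $\lp{1}$; setting $\nuis:=\nuis_1-\nuis_2$ then gives the claim. There is no genuine obstacle here, as the entire content is packaged in Lemma~\ref{lem: multiplier result for differentiable SCR costs}; the only point requiring care is recognizing that \emph{conditional} full support is exactly what is needed to annihilate the slack multipliers $\mult_a$, so that the two rationalizing utilities can differ only through their respective nuisance shifts and not through additional penalties on off-support actions.
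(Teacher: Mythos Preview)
Your argument is correct and matches the paper's approach: the paper records this corollary as an immediate consequence of Lemma~\ref{lem: multiplier result for differentiable SCR costs}, and your derivation (applying the lemma to each of $\ut$ and $\ut'$, using conditional full support to force $\mult_a=0$, then subtracting) is precisely the intended deduction.
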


Next, we record an analogue of Lemma~\ref{lem: inada and full support implies conditionally full support}.

\begin{lemma}\label{lem: inada and full support implies conditionally full support for unfounded costs}
If $\scr\in \SCR^{\tilde\cost}$ admits a $\scr'\in\SCR$ such that $d^+_\scr\tilde\cost(\scr')= -\infty,$
then $\scr$ is not rationalizable.
\end{lemma}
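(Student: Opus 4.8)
The plan is to reproduce, in this unfounded-cost setting, the argument used for Lemma~\ref{lem: inada and full support implies conditionally full support}, which is in fact simpler here because the infinite-slope property is assumed directly rather than derived from Assumption~\ref{ass: smoothness}\eqref{ass: smoothness, part inada}. Fix an arbitrary $\ut\in\UT$; I will show $\scr$ is not $\ut$-rationalizable, and since $\ut$ is arbitrary this yields that $\scr$ is not rationalizable.

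For each $\epsilon\in(0,1)$, set $\scr^{\epsilon}:=\scr+\epsilon(\scr'-\scr)=(1-\epsilon)\scr+\epsilon\scr'$, which lies in $\SCR$ because $\SCR$ is convex and $\scr,\scr'\in\SCR$. Since $\scr\in\SCR^{\tilde\cost}$ has finite cost and, by convexity of $\tilde\cost$, the difference quotient $\tfrac1\epsilon[\tilde\cost(\scr^{\epsilon})-\tilde\cost(\scr)]$ is nondecreasing in $\epsilon$ with infimal value $d^{+}_{\scr}\tilde\cost(\scr')=-\infty$, it follows that $\tilde\cost(\scr^{\epsilon})$ is finite for all sufficiently small $\epsilon$ and that these quotients diverge to $-\infty$ as $\epsilon\searrow0$.

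The key computation compares the agent's objective at $\scr^{\epsilon}$ with that at $\scr$. Dividing the difference by $\epsilon$,
\[
\tfrac1\epsilon\Big\{\big[\bbExp{\ut\cdot\scr^{\epsilon}}-\tilde\cost(\scr^{\epsilon})\big]-\big[\bbExp{\ut\cdot\scr}-\tilde\cost(\scr)\big]\Big\}=\bbExp{\ut\cdot(\scr'-\scr)}-\tfrac1\epsilon\big[\tilde\cost(\scr^{\epsilon})-\tilde\cost(\scr)\big].
\]
The benefit term $\bbExp{\ut\cdot(\scr'-\scr)}$ is a finite constant because $\ut\in\UT=\lp{1}^{A}$ and $\scr,\scr'$ are essentially bounded, while the cost term $-\tfrac1\epsilon[\tilde\cost(\scr^{\epsilon})-\tilde\cost(\scr)]$ tends to $+\infty$ as $\epsilon\searrow0$ by the hypothesis $d^{+}_{\scr}\tilde\cost(\scr')=-\infty$. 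Hence the whole expression tends to $+\infty$, so for all sufficiently small $\epsilon$ the objective value at $\scr^{\epsilon}$ strictly exceeds that at $\scr$. Therefore $\scr$ does not maximize $\scrb\mapsto\bbExp{\ut\cdot\scrb}-\tilde\cost(\scrb)$; that is, $\scr$ is not $\ut$-rationalizable.

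I do not anticipate a genuine obstacle: the result is a direct application of the definition of the directional derivative together with the linearity of the benefit term in the direction $\scr'-\scr$. The only points requiring a word of care are that $\scr^{\epsilon}$ is a legitimate finite-cost SCR for small $\epsilon$ (ensured by convexity of $\SCR$ and of $\tilde\cost$) and that the benefit difference is finite (ensured by $\ut\in\lp{1}^{A}$ and essential boundedness of SCRs); both are immediate.
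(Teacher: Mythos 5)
Your proof is correct and follows essentially the same route as the paper's: perturb toward $\scr'$ via $\scr^{\epsilon}=\scr+\epsilon(\scr'-\scr)$, divide the objective difference by $\epsilon$, and note that the finite benefit term $\bbExp{\ut\cdot(\scr'-\scr)}$ is dominated by the exploding cost term, so $\scr^{\epsilon}$ strictly improves on $\scr$ for small $\epsilon$ under any $\ut$. Incidentally, your sign bookkeeping is the right one: the difference quotient of objectives tends to $+\infty$ (the paper's displayed limit of $-\infty$ is a typo, as its own stated conclusion that the objective is strictly higher under $\scr^{\epsilon}$ confirms).
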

\begin{proof}
For every $\epsilon\in (0,1)$, let $\scrb^{\epsilon}= \scr + \epsilon(\scr' - \scr)$. Every $\ut \in \UT$ then has
\[
\frac{1}{\epsilon}\left\{ \bbExp{\ut \cdot \scrb^{\epsilon}} - \tilde\cost(\scrb^{\epsilon}) -\left[\bbExp{\ut \cdot \scr} - \tilde\cost(\scr) \right]  \right\}
= \bbExp{\ut \cdot (\scr' - \scr)} - \frac{1}{\epsilon}\left[\tilde\cost(\scrb^{\epsilon}) - \tilde\cost(\scr) \right] \xrightarrow{\epsilon\searrow 0}-\infty.
\]
Thus, for all sufficiently small $\epsilon\in(0,1)$, the agent's objective must be strictly higher under $\scrb^{\epsilon}$ than under $\scr$. 
\end{proof}

Next, we provide an analogue of Lemma~\ref{lem: dense unique subset predictions from dense uniquely rationalizable}.

\begin{lemma}\label{lem: dense unique subset predictions from dense uniquely rationalizable with unfounded costs}
Suppose $\tilde\cost$ is finite and differentiable at every conditionally full-support SCR, every full-support $\scr\in \SCR^{\tilde\cost}$ that does not have conditionally full support admits some $\scr'\in\SCR$ such that $d^+_\scr\tilde\cost(\scr')= -\infty$, 
and some norm-dense subset $\SCR_{1}$ of $\SCR^{\tilde\cost}$ exists that comprises only uniquely rationalizable SCRs. Then, the set of SCRs yielding unique subset predictions is weak* dense in $\SCR$.
\end{lemma}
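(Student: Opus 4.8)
The plan is to mirror, essentially verbatim, the proof of Lemma~\ref{lem: dense unique subset predictions from dense uniquely rationalizable}, substituting the three cost-specific ingredients by their $\tilde\cost$ analogues. First I would invoke the Remark preceding this lemma to carry Lemmas~\ref{lem: value function continuity} and~\ref{lem: subdifferential characterization of optimality}, and Propositions~\ref{prop: unique prediction with known benefits} and~\ref{prop: dense set of rationalizable SCRs}, over to the directly specified cost $\tilde\cost$: each of these was established using only properness, convexity, and weak* lower semicontinuity, all of which are hypothesized of $\tilde\cost$. Applying Proposition~\ref{prop: unique prediction with known benefits} separately to each menu, I obtain for every $\menu\in\Menu$ a co-meager set $\UT_\menu\subseteq\UT$ of utilities admitting a unique SCR that is $\ut$-rationalizable over $\menu$. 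Since $\Menu$ is finite, the intersection $\UT_\Menu:=\cap_{\menu\in\Menu}\UT_\menu$ is co-meager, hence dense in $\UT$, and every $\ut\in\UT_\Menu$ pins down a unique optimizer over each $\menu$.

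Next I would set up the relevant density. The conditionally full-support SCRs are $\tilde\cost$-finite (by hypothesis) and norm dense in $\SCR$ (mix any SCR with the uniform rule), so $\SCR^{\tilde\cost}$ is norm dense in $\SCR$; by transitivity of closure, the norm-dense subset $\SCR_1$ of $\SCR^{\tilde\cost}$ is then norm dense in $\SCR$. Letting $\SCR_0$ denote the weak*-open, norm-dense set of full-support SCRs, the intersection $\tilde\SCR:=\SCR_0\cap\SCR_1$ is norm dense, hence weak* dense. It therefore suffices, for an arbitrary $\scr\in\tilde\SCR$ and weak* neighborhood $\neigh$ of $\scr$, to exhibit some SCR in $\neigh$ that yields unique subset predictions.

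For the approximation step, let $\ut$ uniquely rationalize $\scr$ and choose $\ut^n\to\ut$ from $\UT_\Menu$, with $\scr^n$ the SCR rationalized by $\ut^n$. Norm-to-weak* upper hemicontinuity of $\partial\maxval$ (Lemma~\ref{lem: value function continuity} for $\tilde\cost$) together with $\partial\maxval$ being single-valued at $\ut$ forces $\scr^n\xrightarrow{w^*}\scr$, so $\scr^n\in\neigh$ for large $n$; and since $\SCR_0$ is weak* open with $\scr\in\SCR_0$, also $\scr^n\in\SCR_0$ for large $n$. Here I would apply Lemma~\ref{lem: inada and full support implies conditionally full support for unfounded costs} with the infinite-slope hypothesis: a rationalizable full-support SCR failing conditional full support would admit a direction of $-\infty$ directional derivative and so could not be rationalizable, whence $\scr^n$ has conditionally full support. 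Then, using differentiability of $\tilde\cost$ at such SCRs, Corollary~\ref{cor: multiplier result for conditionally full-support case with differentiable SCR costs} shows every $\hat\ut$ rationalizing $\scr^n$ differs from $\ut^n$ by a nuisance $\nuis\in\lp1$, so $\hat\ut$ and $\ut^n$ induce the same optimal SCR in every submenu; as $\ut^n\in\UT_\Menu$, this SCR is unique over each $\menu$, and hence $\scr^n$ yields unique subset predictions.

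The main obstacle is not any single hard estimate but the verification that the earlier machinery legitimately transfers to $\tilde\cost$ and that the two substituted ingredients slot in exactly where Lemma~\ref{lem: inada and full support implies conditionally full support} and Corollary~\ref{cor: multiplier result for conditionally full-support, iteratively differentiable case} did in the founded case. The one genuinely delicate point, as before, is the weak* convergence $\scr^n\to\scr$, which rests on single-valuedness of the subdifferential at $\ut$ (a consequence of unique rationalizability of $\scr$) combined with upper hemicontinuity; once that is in hand, the remainder is bookkeeping.
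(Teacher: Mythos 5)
Your proposal is correct and follows exactly the paper's own proof, which likewise applies the argument of Lemma~\ref{lem: dense unique subset predictions from dense uniquely rationalizable} nearly verbatim with the same three substitutions: the hypothesis that $\SCR^{\tilde\cost}$ contains the (dense) conditionally full-support SCRs in place of Assumption~\ref{ass: smoothness}\eqref{ass: smoothness, part finite}, Lemma~\ref{lem: inada and full support implies conditionally full support for unfounded costs} in place of Lemma~\ref{lem: inada and full support implies conditionally full support}, and Corollary~\ref{cor: multiplier result for conditionally full-support case with differentiable SCR costs} in place of Corollary~\ref{cor: multiplier result for conditionally full-support, iteratively differentiable case}. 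You have simply spelled out the details that the paper leaves implicit, including the key step that single-valuedness of $\partial\maxval$ at $\ut$ plus norm-to-weak* upper hemicontinuity yields $\scr^n\xrightarrow{w^*}\scr$.
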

\begin{proof}
The proof of Lemma~\ref{lem: dense unique subset predictions from dense uniquely rationalizable} (which invokes Lemmas~\ref{lem: unique prediction with known benefits}, \ref{lem: value function continuity}, and~\ref{lem: subdifferential characterization of optimality} and Proposition~\ref{prop: dense set of rationalizable SCRs}) can be applied nearly verbatim, with three minor differences. First, the step invoking Assumption~\ref{ass: smoothness}\eqref{ass: smoothness, part finite} is replaced with the hypothesis that the set $\SCR^{\tilde\cost}$ contains all conditionally full-support SCRs, because the latter set is dense in $\SCR$. Second, we invoke Lemma~\ref{lem: inada and full support implies conditionally full support for unfounded costs} instead of Lemma~\ref{lem: inada and full support implies conditionally full support}. Third, we invoke Corollary~\ref{cor: multiplier result for conditionally full-support case with differentiable SCR costs} instead of Corollary~\ref{cor: multiplier result for conditionally full-support, iteratively differentiable case}.
\end{proof}

Finally, we can prove the main result of this section.

\begin{proof}[Proof of Proposition~\ref{prop: cross-choice predictions for unfounded costs}]
By Proposition~\ref{prop: dense set of rationalizable SCRs} and strict convexity of $\tilde\cost$, the set $\SCR_{1}$ of uniquely rationalizable SCRs is norm dense in $\SCR^{\tilde\cost}$. The proposition then follows from Lemma~\ref{lem: dense unique subset predictions from dense uniquely rationalizable with unfounded costs}.
\end{proof}

\newpage
\section{Auxiliary Material}\label{app: aux}

\subsection{Omitted Proofs for Auxiliary Results}

This section provides proofs of some additional results that were stated in the previous appendices without proof.

\begin{proof}[Proof of Corollary~\ref{cor: unique information choice for binary actions}]
Say $A=\{0,1\}$ without loss, and note $\scr\in\SCR$ generates state-independent behavior if and only if $\ip^\scr=\delta_{\prior}$.
We therefore require that if $\scr\in\SCR$ is optimal and $\ip^\scr\neq\delta_{\prior}$, then $\scr$ is uniquely optimal. But $\ip^\scr\neq\delta_{\prior}$ for $\ip^\scr\in\Info$ implies both that $\ip^\scr_0,\ip^\scr_1>0$ and that $\posterior^\scr_0,\posterior^\scr_1$ are distinct, and hence affinely independent. Proposition~\ref{prop: linear independence implies unique rationalizability} therefore applies directly. 

Letting $\ip^*:=\ip^{\scr^*}$ for some $\ut$-rationalizable SCR $\scr^*$, the above argument tells us $\ip^\scr=\ip^*$ for every $\ut$-rationalizable SCR $\scr$. Therefore (in light of Lemma~\ref{lem: which policies can induce}), $\icost$ being strictly monotone means every optimal strategy entails information policy exactly $\ip^*$, as required.
\end{proof}

\begin{proof}[Proof of Fact~\ref{fact: convexity of derivative is wlog under continuity}]
Toward establishing convexity of $\icostd$, fix any $\posterior_1,\posterior_2 \in \Delta \Omega$ and $\wt \in (0,1)$, and let $\posterior = (1-\wt)\posterior_1 + \wt\posterior_2.$ Let $(\posterior_1^n,\posterior_2^n)_{n\in \mathbb{N}}$ be a sequence of pairs of simply drawn posteriors converging to $(\posterior_1,\posterior_2)$, which exists by \citeapos{lipnowski2018disclosure} Lemma~2. By definition of a simply drawn posterior, some sequence $(\zeta_1^n,\zeta_2^n)_{n\in \mathbb{N}}$ of pairs from $\real_{++}$ exists such that $\zeta_i^n\posterior_i^n \leq \prior$ for all $i\in\{1,2\}$ and $n\in\mathbb N$. 

Now, consider any $n\in\mathbb N$. Take $\zeta^n := \tfrac12\min \{\zeta_1^n,\zeta_2^n\}$ and $\posterior^n := (1-\wt)\posterior_1^n + \wt \posterior_2^n$ for $n\in\mathbb N$. Observe $2\zeta^n \posterior^n \leq (1-\wt)\zeta_1^n\posterior_1^n + \wt \zeta_2^n \posterior_2^n \leq \prior$, and so $\hat\posterior^n = \tfrac{\prior - \zeta^n\posterior^{n}}{1-\zeta^n}$ is a well-defined simply drawn posterior. Therefore, both 
$$\ip_{n} :=  \zeta^n\delta_{\posterior^n} + (1-\zeta^n)\delta_{\hat\posterior^n} \text{ and } \tilde{\ip}_{n} := \zeta^n \left[ (1-\wt)\delta_{\posterior_1^n}+\wt\delta_{\posterior_2^n} \right] + (1-\zeta^n)\delta_{\hat\posterior^n}$$
are in $\Infof.$ Because $\ip\in\feas \ \icost$, it follows that $\ip+\epsilon(\ip_{n} - \ip)$ and $\ip+\epsilon(\tilde{\ip}_{n}- \ip )$ are both in $\Domicost$ for all sufficiently small $\epsilon>0$. 
Hence, because $\tilde{\ip}_{n}\succeq {\ip}_{n}$ by construction, we have (because $\icost$ is monotone) 
\begin{equation*}
\begin{split}
0 & \leq \liminf_{\epsilon\searrow 0}\frac{1}{\epsilon}\left[ \icost(\ip+\epsilon(\tilde{\ip}_{n} - \ip)) - \icost(\ip+\epsilon(\ip_{n} - \ip)) \right]
\\ & =\liminf_{\epsilon\searrow 0}\frac{1}{\epsilon}\bigg\{ \big[ \icost(\ip+\epsilon(\tilde{\ip}_{n} - \ip)) - \icost(\ip)\big] - \big[ \icost(\ip+\epsilon(\ip_{n} - \ip))-\icost(\ip) \big]  \bigg\}
\\ & = \int \icostd(\posterior) \ (\tilde{\ip}_{n} - \ip)(\dd\posterior) - \int \icostd(\posterior) \ ({\ip}_{n} - \ip)(\dd\posterior)
\\ & = \zeta^n \big[ (1-\wt) \icostd(\posterior_1^n) + \wt \icostd(\posterior_2^n) - \icostd(\posterior^n) \big].  
\end{split}
\end{equation*}

Finally, that $(1-\wt) \icostd(\posterior_1^n) + \wt \icostd(\posterior_2^n) \geq \icostd(\posterior^n)$ for every $n\in\mathbb N$ yields $(1-\wt) \icostd(\posterior_1) + \wt \icostd(\posterior_2) \geq \icostd(\posterior)$ because $\icostd$ is continuous. The lemma follows.
\end{proof}

\begin{proof}[Proof of Fact~\ref{fact: derivative of derivative is unique}]
Let $\func\in\lp{1}$ be the difference of two derivatives of $\icostd\in\icostD$ at the simply drawn $\posterior$.\footnote{Note that the proof relies only on the ``Newton quotient'' property of $\icostdd_{\posterior}$, together with its average value normalization, and so does not require that $\icostd\in\icostD$.} By hypothesis, every simple $\posteriorb \in \DO$ has $\int \func(\omega)\ (\posteriorb - \posterior)(\dd\omega)=0$, and hence, $\int \func(\omega)\ \posteriorb(\dd\omega)=\bar\func:=\int \func(\omega)\ \posterior(\dd\omega)$. 

Because $\int \func(\omega)\ \prior(\dd\omega)=\bar\func$, the event $\hat\Omega:=\{\func\geq\bar\func\}\subseteq\Omega$ has $\prior(\hat\Omega)>0$. Let $\posteriorb:=(\prior|\hat\Omega)$ denote the conditional measure, which is a simply drawn posterior. 
Therefore, 
$$0 = \prior(\hat\Omega)\int \left[\func(\omega)-\bar\func\right]\ \posteriorb (\dd\omega) = \bbExp{\mathbf1_{\hat\Omega} (\func-\bar\func)}.$$
An expectation of a nonnegative random variable can be zero only if the random variable is almost surely zero, yielding two consequences. First, $\mathbf1_{\hat\Omega} (\func-\bar\func)=0\in\lp{1}$, so that $\func\leq\bar\func$. Second, that $\func\leq\bar\func$ and $\bbExp{\func}=\bar\func$ implies $\func=\bar\func\in\lp{1}$, a constant.

Finally, that every derivative of $\icostd$ at $\posterior$ has $\posterior$-expectation $\icostd(\posterior)$ then implies $\bar\func=0$.\footnote{If we did not require the normalization that $\int \icostdd_{\posterior}(\omega) \ \posterior(\dd\omega)=\icostd(\posterior)$---which would not affect the definition of differentiability---the derivative would be unique only up to addition of a constant function.}
\end{proof}

\begin{proof}[Proof of Fact~\ref{fact: equivalent inada conditions}]
Every $\ipb\in\Info$ has $\ipb\succeq\delta_\prior$, and so $\ip + \epsilon(\ipb - \ip) \succeq \ip + \epsilon(\delta_\prior - \ip)$ for any $\epsilon\in(0,1)$. Hence, \begin{eqnarray*}
&\inf_{\ipb\in\Domicost,\ \epsilon\in(0,1)}&\frac{1}{\epsilon} \left[
\icost(\ip + \epsilon(\ipb - \ip)) - \icost(\ip) \right] \\
=&\inf_{\epsilon\in(0,1)}&\frac{1}{\epsilon} \left[ \icost(\ip + \epsilon(\delta_\prior - \ip)) - \icost(\ip) \right] \text{ (by monotonicity)}\\
=&\lim_{\epsilon\searrow0}&\frac{1}{\epsilon} \left[ \icost(\ip + \epsilon(\delta_\prior - \ip)) - \icost(\ip) \right] \text{ (by convexity)},
\end{eqnarray*}
which is equal to $d^{+}_{\ip}\icost(\delta_\prior)$ by definition.
\end{proof}

\begin{proof}[Proof of Fact~\ref{fact: testable cycles counterexamples}]
To see the first numbered fact, note that the proof of the equivalence of conditions~\eqref{cor: testable cycles - all cycles} and~\eqref{cor: testable cycles - basis condition} in Corollary~\ref{cor: testable cycles} makes no use of conditional full support. 

To see the second fact, observe that the proof of Corollary~\ref{cor: testable cycles} applies directly (mutatis mutandis) to a generalized version of a data set in which the set of menus is a \emph{multiset} of menus. 
In particular, it applies to the generalized data set $\data'$ in which each menu $\menu\in\DMenu$ is replaced with the smaller menu $\supp\dscr^\menu$; note that (relabeling vertices in $\DMenu$ in the obvious way) the graph $\dgraph_{\data'}$ is the same  as the graph $\dgraph_{\data}$, giving rise to the same testable cycles. 
Because the data set $\data$ is consistent, so is $\data'$. That $\data$ is fully mixed implies $\data'$ has conditionally full support, and so Corollary~\ref{cor: testable cycles} tells us condition~\eqref{cor: testable cycles - all cycles} is satisfied for $\data'$, hence (because the graphs are the same) for $\data$. 

To see the third fact, note Proposition~\ref{prop: multiplier result for iteratively differentiable case} tells us a sufficient condition for the support-$\menu$ SCR $\dscr^\menu$ to be $\ut$-rationalizable over $\menu\in\DMenu$ is that some $\nuis_\menu\in\lp1$ has $\ut_a=\nuis_\menu+\func^\data_{a,\menu}$ for every $a\in A$. 
Otherwise following verbatim the proof that condition~\eqref{cor: testable cycles - all cycles} implies condition~\eqref{cor: testable cycles - consistency} in Corollary~\ref{cor: testable cycles} delivers the third fact.

Now, we pursue the ``moreover'' part. We will provide examples showing the second and third facts do not generally have converses.

Consider the case in which $A=\{1,2,3\}$, $\DMenu=\{\{1,2\},A\}$, $\Omega=\{0,1\}$, $\dscr^{\{1,2\}}$ has conditionally full support with $\dscr^{\{1,2\}}(0)\neq \dscr^{\{1,2\}}(1)$, $\dscr^{A}_1(0)=\dscr^{A}_1(1)=1$, and $\icost$ is strictly monotone---e.g., $\icost$ could be mutual information as in Example~\ref{ex: Mutual Info Costs}. This data set is fully mixed, and it vacuously satisfies condition~\eqref{cor: testable cycles - all cycles} of Corollary~\ref{cor: testable cycles} because the graph has no nontrivial cycles. Let us observe the data set is not consistent. The reason is that, if it were $\ut$-rationalizable for some $\ut\in\UT$, then $\dscr^{\{1,2\}}$ and $\dscr^{A}$ would both be $\ut$-rationalizable over $\{1,2\}$. This conclusion would contradict Proposition~\ref{prop: linear independence implies unique rationalizability} (applied to the model with menu $\{1,2\}$), which would tell us $\dscr^{\{1,2\}}$ is uniquely $\ut$-rationalizable over $\{1,2\}$.

Consider now the case in which $A=\{0,\tfrac12,1\}$, $\DMenu=\{\{0,\tfrac12\},\{0,1\},\{\tfrac12,1\}\}$, $\Omega=\{0,1\}$,  $\dscr_{\min \menu}^\menu(0)=\dscr_{\max \menu}^\menu(1)=1$ for every $\menu\in\DMenu$, and $\icost$ has derivative $\icostd$ at full information such that $\icostd:\DO\to\real$ is globally differentiable and not affine---e.g., $\icost$ could be given by $\icost(\ip) =\int\posterior(1)^2\ \ip(\dd\posterior)$. The data set clearly has full support. Identifying $\DO$ with $[0,1]$ in the obvious way, direct computation shows every $\posterior\in[0,1]$ has $\icostdd_\posterior(\omega)=\icostd(\posterior)+(\omega-\posterior)\icostd'(\mu)$ for $\omega\in\Omega$. Because $\posterior^\menu_a=\mathbf1_{a=\max\menu}$ for $a\in\menu\in\DMenu$, we can then write $\func^\data_{a,\menu}=\icostd(\mathbf1_{a=\max\menu})+(\omega-\mathbf1_{a=\max\menu})\icostd'(\mathbf1_{a=\max\menu})$, which in particular implies $\func^\data_{a,\menu}(1)-\func^\data_{a,\menu}(0)=\icostd'(\mathbf1_{a=\max\menu})$. The testable cycle $\cycle= 0\ \{0,\tfrac12\}\ \tfrac12\ \{\tfrac12,1\}\ 1\ \{0,1\}\ 0$ therefore has 
$$
\cyclev^\cycle\cdot\func^\data(1)-\cyclev^\cycle\cdot\func^\data(0)
= [\icostd'(1)-\icostd'(0)]+[\icostd'(1)-\icostd'(0)]+[\icostd'(0)-\icostd'(1)] = \icostd'(1)-\icostd'(0) \neq 0.
$$
In particular, $\cyclev^\cycle\cdot\func^\data\neq\mathbf0$, so that condition~\eqref{cor: testable cycles - all cycles} of Corollary~\ref{cor: testable cycles} does not hold. To conclude, we observe that $\data$ is consistent. Indeed, because $\icostd$ is Lipschitz, direct computation shows $\data$ is rationalized by $\ut^\param\in\UT$ given by $\ut^\param_a(\omega):=-\param(a-\omega)^2$ for large enough $\param\in\real$.
\end{proof}

\subsection{On Inducible Belief Distributions}\label{app: sec: bayes plausibility}

In this section, we validate footnote~\ref{footnote: splitting lemma}'s claim that belief distributions and signal structures are equivalent formalisms. To state the equivalence, let us invest in some terminology.

\newcommand{\signal}{m}
\newcommand{\Signal}{M}
\newcommand{\sigmap}{\tau}
\newcommand{\bmap}{\pi}
\newcommand{\Posterior}{B}
\newcommand{\bigP}{\mathbb P}

\begin{definition}
Given a Polish space $\Signal$, a \textbf{signal} is a measurable map $\sigmap:\Omega\to\Delta\Signal$, and a \textbf{belief map} is a measurable map $\bmap:\Signal\to\DO$. For a given pair $(\sigmap,\bmap)$ of such maps, \begin{itemize}
    \item Say the pair $(\sigmap,\bmap)$ is \textbf{Bayes consistent} if every $\hat\Omega\subseteq\Omega$ and $\hat\Signal\subseteq\Signal$ have
        \[
        \int_{\hat\Omega} \sigmap(\hat\Signal|\omega) \ \prior(\dd\omega)
        =\int\int_{\hat\Signal}\bmap(\hat\Omega|\signal) \ \sigmap(\dd\signal|\omega) \ \prior(\dd\omega).
        \]
    \item Say the pair $(\sigmap,\bmap)$ \textbf{generates} $\ip\in\Delta\DO$ if every $\Posterior\subseteq\DO$ has 
    \[\int \sigmap\left(\bmap^{-1}(\Posterior)\mid\omega\right) \ \prior(\dd\omega)=\ip(\Posterior).
    \]
\end{itemize}
\end{definition}

Note every signal admits some Bayes-consistent belief map, and that updated beliefs are almost surely unique, and hence unique in distribution. These observations amount to recording a standard disintegration result in present notation.

\begin{fact}\label{fact: conditional belief existence and uniqueness}
Suppose $\Signal$ is Polish and that $\sigmap$ is a signal. Define $\bigP\in\Delta(\Omega\times\Signal)$ by letting 
\[
\bigP(\hat\Omega\times\hat\Signal):=\int_{\hat\Omega} \sigmap(\hat\Signal|\omega)\ \prior(\dd\omega)
\]
for every measurable $\hat\Omega\subseteq\Omega$ and $\hat\Signal\subseteq\Signal$. Define $\bigP_\Signal:=\marg_\Signal\bigP$. 
\begin{enumerate}[(i)]
    \item Some belief map $\bmap$ exists such that the pair $(\sigmap,\bmap)$ is Bayes consistent.
    \item If belief maps $\bmap,\tilde\bmap$ are such that both $(\sigmap,\bmap)$ and $(\sigmap,\tilde\bmap)$ are Bayes consistent,  $\bmap(\hat\Omega|\signal)=\tilde\bmap(\hat\Omega|\signal)$ for $\bigP_\Signal$-almost every $\signal$. 
    \item Suppose belief maps $\bmap,\tilde\bmap$ are such both $(\sigmap,\bmap)$ and $(\sigmap,\tilde\bmap)$ are Bayes consistent; and suppose $\ip,\tilde\ip\in\Delta\DO$ are such that $(\sigmap,\bmap)$ and $(\sigmap,\tilde\bmap)$ generate $\ip$ and $\tilde\ip$, respectively. Then, $\ip=\tilde\ip$.
\end{enumerate}
\end{fact}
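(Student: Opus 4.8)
The plan is to recognize Fact~\ref{fact: conditional belief existence and uniqueness} as a restatement, in the paper's signal/belief-map language, of the existence and essential uniqueness of a disintegration (regular conditional distribution) of the joint law $\bigP$ on the Polish space $\Omega\times\Signal$. The whole argument runs through a single translation: rewriting Bayes consistency of $(\sigmap,\bmap)$ as the assertion that $\bmap$ is a disintegration kernel of $\bigP$ over its $\Signal$-marginal $\bigP_\Signal$.

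First I would carry out this translation. By the defining formula for $\bigP$, the left-hand side of the Bayes-consistency equation equals $\bigP(\hat\Omega\times\hat\Signal)$. For the right-hand side, since the integrand $\bmap(\hat\Omega|\signal)$ depends only on $\signal$, Tonelli's theorem gives
\[
\int\int_{\hat\Signal}\bmap(\hat\Omega|\signal)\ \sigmap(\dd\signal|\omega)\ \prior(\dd\omega)
=\int_{\Omega\times\hat\Signal}\bmap(\hat\Omega|\signal)\ \bigP(\dd\omega,\dd\signal)
=\int_{\hat\Signal}\bmap(\hat\Omega|\signal)\ \bigP_\Signal(\dd\signal).
\]
Hence $(\sigmap,\bmap)$ is Bayes consistent if and only if
\[
\bigP(\hat\Omega\times\hat\Signal)=\int_{\hat\Signal}\bmap(\hat\Omega|\signal)\ \bigP_\Signal(\dd\signal)
\quad\text{for all measurable }\hat\Omega\subseteq\Omega,\ \hat\Signal\subseteq\Signal,
\]
which is exactly the statement that $\bmap$ disintegrates $\bigP$ over $\bigP_\Signal$. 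Part (i) is then immediate: because $\Omega$ is Polish (being compact metric), the existence of regular conditional distributions \citep{aliprantis2006infinite} furnishes a Borel kernel $\bmap:\Signal\to\DO$ satisfying the displayed identity, and this is the desired Bayes-consistent belief map.

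For part (ii) I would fix a measurable $\hat\Omega\subseteq\Omega$ and subtract the two disintegration identities for $\bmap$ and $\tilde\bmap$, obtaining $\int_{\hat\Signal}\big[\bmap(\hat\Omega|\signal)-\tilde\bmap(\hat\Omega|\signal)\big]\ \bigP_\Signal(\dd\signal)=0$ for every measurable $\hat\Signal$. The function $\signal\mapsto\bmap(\hat\Omega|\signal)-\tilde\bmap(\hat\Omega|\signal)$ is bounded and measurable, since for fixed Borel $\hat\Omega$ the evaluation $\posterior\mapsto\posterior(\hat\Omega)$ is Borel on $\DO$ and $\bmap,\tilde\bmap$ are measurable. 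Plugging in its positive and negative sign sets for $\hat\Signal$ forces it to vanish $\bigP_\Signal$-almost everywhere, which is the claim.

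For part (iii) I would first observe that ``$(\sigmap,\bmap)$ generates $\ip$'' means $\ip$ is the pushforward $\bmap_*\bigP_\Signal$, because $\int \sigmap(\bmap^{-1}(\Posterior)|\omega)\ \prior(\dd\omega)=\bigP\big(\Omega\times\bmap^{-1}(\Posterior)\big)=\bigP_\Signal\big(\bmap^{-1}(\Posterior)\big)$. It therefore suffices to show $\bmap=\tilde\bmap$ holds $\bigP_\Signal$-almost everywhere as $\DO$-valued maps, after which $\ip=\bmap_*\bigP_\Signal=\tilde\bmap_*\bigP_\Signal=\tilde\ip$. The main obstacle is that part (ii) only yields, for each fixed $\hat\Omega$, an exceptional null set that may depend on $\hat\Omega$; I need a single null set valid for the measures as a whole. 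I would resolve this using second countability of $\Omega$: choosing a countable $\pi$-system $\mathcal G$ that generates the Borel $\sigma$-algebra of $\Omega$ (e.g.\ finite intersections of a countable base), applying part (ii) to each $\hat\Omega\in\mathcal G$, and taking the union of the countably many resulting null sets. Off this union the two probability measures $\bmap(\cdot|\signal)$ and $\tilde\bmap(\cdot|\signal)$ agree on $\mathcal G$ and hence coincide by Dynkin's $\pi$--$\lambda$ theorem, giving the required almost-everywhere equality and thus $\ip=\tilde\ip$.
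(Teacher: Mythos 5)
Your proof is correct, and its skeleton matches the paper's: both arguments hinge on the same translation, namely that Bayes consistency of $(\sigmap,\bmap)$ is equivalent to the disintegration identity $\bigP(\hat\Omega\times\hat\Signal)=\int_{\hat\Signal}\bmap(\hat\Omega|\signal)\ \bigP_\Signal(\dd\signal)$, and both obtain part (i) by citing existence of a disintegration (you via regular conditional distributions in Aliprantis--Border, the paper via Kallenberg's Theorem 1.23). Where you genuinely diverge is in the uniqueness parts. The paper extracts (ii) directly from the uniqueness clause of the disintegration theorem---which delivers the \emph{strong} form, a single $\bigP_\Signal$-null set off which the kernels $\bmap(\cdot|\signal)$ and $\tilde\bmap(\cdot|\signal)$ coincide as measures---so that (iii) reduces to the one-line observation that almost-surely equal $\DO$-valued random variables share a distribution. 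You instead prove (ii) by hand (the sign-set argument), which only yields a null set depending on the fixed $\hat\Omega$, and then do the extra work in (iii) of upgrading to a uniform null set via a countable generating $\pi$-system and Dynkin's theorem before passing to pushforwards. The trade-off: the paper's route is shorter because the citation does the heavy lifting, while yours is self-contained and, notably, makes explicit the per-set versus uniform null-set distinction that the paper's terse ``(iii) follows directly from (ii)'' glosses over---if one reads the Fact's statement of (ii) in your weaker per-set sense, that gloss actually has a gap, which your $\pi$--$\lambda$ step is precisely what fills. Your identification of ``generates'' with the pushforward $\bmap_*\bigP_\Signal$ is also the right mechanism and matches what the paper's one-liner implicitly uses.
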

\begin{proof}
Given a belief map $\bmap$, note the pair $(\sigmap,\bmap)$ is Bayes consistent if and only if $\bigP(\hat\Omega\times\hat\Signal)= \int_{\hat\Signal} \bmap(\hat\Omega|\signal) \ \bigP_{\Signal}(\dd\signal)$ for every measurable $\hat\Omega\subseteq\Omega$ and $\hat\Signal\subseteq\Signal$. Hence, (i) and (ii) both follow directly from the disintegration theorem \citep[][Theorem 1.23]{kallenberg2017random}.\footnote{By \cite[][Theorem 19.7]{aliprantis2006infinite}, probability kernels as in \cite{kallenberg2017random} coincide with measurable maps into the space of measures.} Finally, any two ($\DO$-valued) random variables on a probability space that agree almost surely must have the same distribution; hence, (iii) follows directly from (ii).
\end{proof}

The following result shows every signal generates (when pairing it with belief updating to make it Bayes consistent) a unique belief distribution, and that a belief distribution can be generated in this way if and only if it averages to the prior. This classic result---whose proof we include for the sake of completeness---is again a straightforward consequence of the disintegration theorem.

\begin{fact}
Let $\Signal$ be uncountable and Polish, and let $\ip\in\Delta\DO$. The following are equivalent:\footnote{As the proof demonstrates, (i) and (ii) are equivalent and imply (iii), even if $\Signal$ is not assumed uncountable. Moreover, the proof (along with the proof of Fact~\ref{fact: conditional belief existence and uniqueness}) applies verbatim to Polish, non-compact $\Omega$.} \begin{enumerate}[(i)]
    \item Some Bayes-consistent pair $(\sigmap,\bmap)$ generates $\ip$.
    \item Some signal $\sigmap$ is such that every Bayes-consistent $(\sigmap,\bmap)$ generates $\ip$.
    \item The distribution $\ip$ averages to $\prior$, that is, $\ip\in\Info$.\footnote{Recall elements of $\Info$ are those $\ip\in\Delta\DO$ with barycenter $\prior$ or, equivalently, with $\int\posterior(\hat\Omega)\ \ip(\dd\posterior)=\prior(\hat\Omega)$ for every measurable $\hat\Omega\subseteq\Omega$.}
\end{enumerate}
\end{fact}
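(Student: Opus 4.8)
The plan is to prove the cycle $\text{(iii)}\Rightarrow\text{(ii)}\Rightarrow\text{(i)}\Rightarrow\text{(iii)}$, delegating the measure-theoretic work to Fact~\ref{fact: conditional belief existence and uniqueness}. Two of the three implications are short, so the substance will lie in $\text{(iii)}\Rightarrow\text{(ii)}$: producing, for a given Bayes-plausible $\ip$, a \emph{single} signal every Bayes-consistent belief map of which reproduces $\ip$.

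For the easy links, $\text{(ii)}\Rightarrow\text{(i)}$ is immediate: part~(i) of Fact~\ref{fact: conditional belief existence and uniqueness} says the signal $\sigmap$ in (ii) admits at least one Bayes-consistent belief map $\bmap$, and (ii) then says this $(\sigmap,\bmap)$ generates $\ip$, which is exactly (i). For $\text{(i)}\Rightarrow\text{(iii)}$, I would take a Bayes-consistent $(\sigmap,\bmap)$ generating $\ip$, fix a measurable $\hat\Omega\subseteq\Omega$, and note that since $\ip$ is the law of $\bmap(\signal)$ under $\bigP_\Signal$, a change of variables gives $\int\posterior(\hat\Omega)\ \ip(\dd\posterior)=\int\bmap(\hat\Omega|\signal)\ \bigP_\Signal(\dd\signal)$; specializing Bayes consistency to $\hat\Signal=\Signal$ (and using $\sigmap(\Signal|\omega)=1$) identifies the right-hand side with $\prior(\hat\Omega)$. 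As $\hat\Omega$ is arbitrary, $\ip$ averages to $\prior$, i.e.\ $\ip\in\Info$.

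The main work is $\text{(iii)}\Rightarrow\text{(ii)}$. Given $\ip\in\Info$, I would first build the ``canonical'' joint law $\bigP^{\ip}\in\Delta(\Omega\times\DO)$ pinned down on rectangles by $\bigP^{\ip}(\hat\Omega\times\Posterior)=\int_\Posterior\posterior(\hat\Omega)\ \ip(\dd\posterior)$: Bayes plausibility (iii) is precisely what makes its $\Omega$-marginal equal $\prior$, while its $\DO$-marginal is $\ip$ and the identity map serves as a version of the conditional law of $\omega$ given the belief coordinate. Because $\Signal$ is uncountable Polish and $\DO$ is Polish (hence standard Borel), Kuratowski's isomorphism theorem supplies a Borel isomorphism $\iota$ of $\DO$ onto a Borel subset of $\Signal$; pushing $\bigP^{\ip}$ forward along $(\mathrm{id}_\Omega,\iota)$ gives a law on $\Omega\times\Signal$ whose $\Omega$-marginal is still $\prior$. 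Disintegrating this law over $\Omega$ (valid since $\Omega$ is Polish) yields the signal $\sigmap$. Setting $\bmap_0:=\iota^{-1}$ on $\iota(\DO)$ and arbitrary off it, a direct check gives that $(\sigmap,\bmap_0)$ is Bayes consistent and generates $\ip$. Finally, part~(iii) of Fact~\ref{fact: conditional belief existence and uniqueness} upgrades ``one'' to ``every'': any other Bayes-consistent $(\sigmap,\bmap)$ generates some $\tilde\ip$, and the uniqueness statement forces $\tilde\ip=\ip$, which is (ii).

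The step I expect to be most delicate is the bookkeeping inside $\text{(iii)}\Rightarrow\text{(ii)}$: checking that $\bigP^{\ip}$ is a well-defined, countably additive probability measure with the claimed conditional structure, that the Borel embedding $\iota$ of $\DO$ into the abstract signal space $\Signal$ exists and is measurably invertible on its image, and that the disintegrated kernel $\sigmap$ paired with $\bmap_0$ satisfies the Bayes-consistency identity verbatim. Once these are in hand, passing from the single good belief map to all of them via Fact~\ref{fact: conditional belief existence and uniqueness}(iii) is immediate, and the other two implications reduce to the short computations sketched above.
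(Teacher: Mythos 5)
Your proof is correct and takes essentially the same approach as the paper's: the two easy implications are the same short computations, and the substantive direction rests on the identical construction---the canonical joint law of state and posterior, a Borel-isomorphism identification of $\DO$ with (a Borel subset of) $\Signal$, disintegration over $\Omega$, and the uniqueness part of Fact~\ref{fact: conditional belief existence and uniqueness} to pass from one Bayes-consistent belief map to all of them. The only cosmetic differences are the orientation of the cycle of implications and that you embed $\DO$ into $\Signal$ rather than using a bimeasurable surjection $\Signal\to\DO$ as the paper does.
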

\begin{proof}

Given any signal $\sigmap$, Fact~\ref{fact: conditional belief existence and uniqueness} tells us some Bayes-consistent pair includes $\sigmap$ and that no two such pairs generate distinct belief distributions. Hence, (i) is equivalent to (ii). In what follows, we show (i) is equivalent to (iii).

To see (i) implies (iii), suppose $(\sigmap,\bmap)$ is Bayes consistent and generates $\ip$. Toward (iii), consider any measurable $\hat\Omega\subseteq\Omega$. Applying Bayes consistency with $\hat\Signal=\Signal$ yields
$\prior(\hat\Omega)=\int\int\bmap(\hat\Omega|\signal) \ \sigmap(\dd\signal|\omega) \ \prior(\dd\omega)$; and because $(\sigmap,\bmap)$ generates $\ip$, any bounded integrable $\func:\DO\to\real$ has $\int\int \func\left( \bmap(\signal)\right) \ \sigmap(\dd\signal|\omega)\ \prior(\dd\omega)=\int \func(\posterior)\ \ip(\dd\posterior).$ Hence, using $\func(\posterior):=\posterior(\hat\Omega)$ gives $\prior(\hat\Omega)=\int \posterior(\hat\Omega)\ \ip(\dd\posterior)$, delivering (iii).

Conversely, suppose (iii) holds, from which we establish (i). By the Borel isomorphism theorem \citep[][Theorem 3.3.13]{srivastava2008course}, some bimeasurable surjection $\bmap:\Signal\to\DO$ exists.\footnote{Moreover, in the special case in which $\Signal\supseteq\DO$, we can take $\bmap$ with $\bmap(\posterior)=\posterior$ for each $\posterior\in\DO$, generating a witnessing $(\sigmap,\bmap)$ in which each signal realization from $\DO$ leads to itself as the updated belief.} Now, define the measure $\bigP\in\Delta(\Omega\times\Signal)$ by letting $\bigP(\hat\Omega\times\hat\Signal):= \int_{\bmap(\hat M)} \posterior(\hat\Omega) \ \ip(\dd\posterior)$
for every measurable $\hat\Omega\subseteq\Omega$ and $\hat\Signal\subseteq\Signal$. By (iii) and because $\bmap$ is surjective, the marginal of $\bigP$ on its first coordinate is $\prior$. Hence, the disintegration theorem \citep[][Theorem 1.23]{kallenberg2017random} delivers some measurable $\sigmap:\Omega\to\Delta\Signal$ such that $\bigP(\hat\Omega\times\hat\Signal)= \int_{\hat\Omega} \sigmap(\hat\Signal|\omega) \ \prior(\dd\omega)$ for every measurable $\hat\Omega\subseteq\Omega$ and $\hat\Signal\subseteq\Signal$. Let us see that $(\sigmap,\bmap)$ witnesses (i). Indeed, for any measurable $\hat\Omega\subseteq\Omega$ and $\hat\Signal\subseteq\Signal$, combining the disintegration property defining $\sigmap$ with the definition of $\bigP$ implies \begin{equation}\label{eqn: formula for constructed signal}
    \int_{\hat\Omega} \sigmap(\hat\Signal|\omega) \ \prior(\dd\omega)
        =\int_{\bmap(\hat M)} \posterior(\hat\Omega) \ \ip(\dd\posterior).
\end{equation}
Specializing equation \eqref{eqn: formula for constructed signal} to the case of $\hat\Omega=\Omega$ and $\hat\Signal=\bmap^{-1}(\Posterior)$ for some measurable $\Posterior\subseteq\DO$ tells us, because $\bmap$ is surjective,  that $(\sigmap,\bmap)$ generates $\ip$. All that remains now is to verify $(\sigmap,\bmap)$ is Bayes consistent. To show it is, define the map $\bigP_\signal\in\Delta\Signal$ by letting $\bigP_\Signal\in\Delta\Signal:=\int\sigmap(\hat\Signal|\omega) \ \prior(\dd\omega)$ for every measurable $\hat\Signal\subseteq\Signal$. We know $(\sigmap,\bmap)$ generates $\ip$; that is, $\bigP_\Signal\circ\bmap^{-1}=\ip$. Hence, every measurable $\hat\Omega\subseteq\Omega$ and $\hat\Signal\subseteq\Signal$ have 
\[
\int\int_{\hat\Signal}\bmap(\hat\Omega|\signal) \ \sigmap(\dd\signal|\omega) \ \prior(\dd\omega)
=\int_{\hat\Signal}\bmap(\hat\Omega|\signal) \ \bigP_\Signal(\dd\signal)
=\int_{\bmap(\hat\Signal)} \posterior(\hat\Omega) \ \ip(\dd\posterior).
\]
Bayes consistency then follows from equation \eqref{eqn: formula for constructed signal}.
\end{proof}

\end{spacing}

\end{document}